\documentclass[a4paper]{article}
\usepackage[a4paper]{geometry}
\usepackage[utf8]{inputenc}

\usepackage{amsmath,amsfonts,amssymb,amsthm}
\usepackage{thmtools}
\usepackage{stmaryrd}
\usepackage{natbib}
\usepackage{url}
\usepackage{array}
\usepackage{arydshln}
\usepackage{ifthen}
\usepackage{ifpdf}
\usepackage{verbatim}
\usepackage{mathpartir}
\usepackage{hyperref}
\usepackage{listings}
\lstset{
  basicstyle=\ttfamily,
  columns=fullflexible,
  keepspaces=true,
  mathescape
}

\usepackage{rotating}

\usepackage{mathtools}

\DeclarePairedDelimiter\floor{\lfloor}{\rfloor}

\usepackage{appendix}

\usepackage{enumitem}
\newlist{enumproof}{enumerate}{10}
\setlist[enumproof]{label*=\arabic*.}

\usepackage{tikz}
\usepackage{multirow,bigdelim}


\declaretheorem[numbered=yes,name=Lemma,qed=$\blacksquare$]{lemma}
\declaretheorem[numbered=yes,name=Theorem,qed=$\blacksquare$]{theorem}
\declaretheorem[numbered=yes,name=Definition,qed=$\blacksquare$]{definition}
\declaretheorem[numbered=yes,name=Specification,qed=$\blacksquare$]{specification}

\newcommand{\forcenewline}{$\phantom{v}$\\}

\newcommand{\update}[2]{[#1 \mapsto #2]}
\newcommand{\sem}[1]{\left\llbracket #1 \right\rrbracket}


\newcommand{\parfun}{\rightharpoonup}
\newcommand{\finparfun}{\xrightharpoonup{\textit{\tiny{fin}}}}
\newcommand{\monnefun}{\xrightarrow{\textit{\tiny{mon, ne}}}}

\newcommand{\nefun}{\xrightarrow{\textit{\tiny{ne}}}}
\newcommand{\fun}{\rightarrow}
\newcommand{\defeq}{\stackrel{\textit{\tiny{def}}}{=}}
\newcommand{\nequal}[1][n]{\stackrel{\tiny{#1}}{=}}
\renewcommand{\nsim}[1][n]{\stackrel{\tiny{#1}}{\simeq}}

\newcommand\subsetsim{\mathrel{\ooalign{\raise.2ex\hbox{$\subset$}\cr
      \hidewidth\lower.8ex\hbox{\scalebox{0.9}{$\sim$}}\hidewidth\cr}}}
\newcommand\supsetsim{\mathrel{\ooalign{\raise.2ex\hbox{$\supset$}\cr
      \hidewidth\lower.8ex\hbox{\scalebox{0.9}{$\sim$}}\hidewidth\cr}}}
\newcommand{\nsubsim}[1][n]{\stackrel{\tiny{#1}}{\subsetsim}}
\newcommand{\nsupsim}[1][n]{\stackrel{\tiny{#1}}{\supsetsim}}

\newcommand{\nsubeq}[1][n]{\stackrel{\tiny{#1}}{\subseteq}}
\newcommand{\nsupeq}[1][n]{\stackrel{\tiny{#1}}{\supseteq}}

\newcommand{\union}{\mathbin{\cup}}
\DeclareMathOperator{\dom}{dom}
\newcommand{\blater}{\mathop{\blacktriangleright}}

\newcommand{\powerset}[1]{\mathcal{P}(#1)}

\newcommand{\false}{\mathit{false}}
\newcommand{\true}{\mathit{true}}

\newcommand{\cofe}{c.o.f.e.}
\newcommand{\cofes}{\cofe{}'s}

\newcommand\lau[1]{{\color{purple} \sf \footnotesize {LS: #1}}\\}
\newcommand\dominique[1]{{\color{purple} \sf \footnotesize {DD: #1}}\\}
\newcommand\lars[1]{{\color{purple} \sf \footnotesize {LB: #1}}\\}

\renewcommand\lau[1]{}
\renewcommand\dominique[1]{}
\renewcommand\lars[1]{}

\newcommand{\var}[1]{\mathit{#1}}
\newcommand{\hs}{\var{ms}}
\newcommand{\ms}{\hs}
\newcommand{\hv}{\var{hv}}
\newcommand{\rv}{\var{rv}}
\newcommand{\lv}{\var{lv}}
\newcommand{\gl}{\var{g}}
\newcommand{\pc}{\mathit{pc}}
\newcommand{\pcreg}{\mathrm{pc}}
\newcommand{\addr}{\var{a}}

\newcommand{\word}{\var{w}}
\newcommand{\start}{\var{base}}
\newcommand{\addrend}{\var{end}}
\newcommand{\inftyend}{-42}

\newcommand{\mem}{\var{mem}}
\newcommand{\reg}{\var{reg}}

\newcommand{\heap}{\var{mem}}

\newcommand{\perm}{\var{perm}}
\newcommand{\permp}{\var{permPair}}

\newcommand{\stdcap}[1][(\perm,\gl)]{\left(#1,\start,\addrend,\addr \right)}
\newcommand{\adv}{\var{adv}}

\newcommand{\link}{\var{link}}
\newcommand{\stk}{\var{stk}}
\newcommand{\flag}{\var{flag}}
\newcommand{\nwl}{\var{nwl}}
\newcommand{\pwl}{\var{pwl}}
\newcommand{\sta}{\var{sta}}
\newcommand{\cnst}{\var{cnst}}
\newcommand{\olf}{\var{offsetLinkFlag}}
\newcommand{\prp}{\var{prp}}
\newcommand{\env}{\var{env}}
\newcommand{\cls}{\var{cls}}
\newcommand{\unused}{\var{unused}}
\newcommand{\act}{\var{act}}

\newcommand{\plainproj}[1]{\mathrm{#1}}
\newcommand{\memheap}[1][\Phi]{#1.\plainproj{mem}}
\newcommand{\memreg}[1][\Phi]{#1.\plainproj{reg}}

\newcommand{\updateHeap}[3][\Phi]{#1\update{\plainproj{mem}.#2}{#3}}
\newcommand{\updateReg}[3][\Phi]{#1\update{\plainproj{reg}.#2}{#3}}

\newcommand{\failed}{\textsl{failed}}
\newcommand{\halted}{\textsl{halted}}

\newcommand{\plainfun}[2]{
  \ifthenelse{\equal{#2}{}}
  {\mathit{#1}}
  {\mathit{#1}(#2)}
}
\newcommand{\decode}{\plainfun{decode}{}}

\newcommand{\encodePerm}{\mathit{encodePerm}}
\newcommand{\encodePermPair}{\plainfun{encodePermPair}{}}
\newcommand{\encodeLoc}{\mathit{encodeLoc}{}}
\newcommand{\decodePermPair}{\plainfun{decodePermPair}}
\newcommand{\decodePerm}[1]{\plainfun{decodePerm}{#1}}
\newcommand{\updatePcPerm}[1]{\plainfun{updatePcPerm}{#1}}

\newcommand{\nonZero}[1]{\plainfun{nonZero}{#1}}
\newcommand{\readAllowed}[1]{\plainfun{readAllowed}{#1}}
\newcommand{\writeAllowed}[1]{\plainfun{writeAllowed}{#1}}
\newcommand{\withinBounds}[1]{\plainfun{withinBounds}{#1}}
\newcommand{\stdUpdatePc}[1]{\plainfun{updatePc}{#1}}

\newcommand{\readCond}[1]{\plainfun{readCondition}{#1}}
\newcommand{\writeCond}[1]{\plainfun{writeCondition}{#1}}
\newcommand{\execCond}[1]{\plainfun{executeCondition}{#1}}
\newcommand{\entryCond}[1]{\plainfun{enterCondition}{#1}}

\newcommand{\revokeTemp}[1]{\plainfun{revokeTemp}{#1}}
\newcommand{\erase}[2]{\floor*{#1}_{\{#2\}}}
\newcommand{\activeReg}[1]{\plainfun{active}{#1}}

\newcommand{\future}{\mathbin{\sqsupseteq}}
\newcommand{\pub}{\var{pub}}
\newcommand{\priv}{\var{priv}}
\newcommand{\futurewk}{\mathbin{\sqsupseteq}^{\var{pub}}}
\newcommand{\futurestr}{\mathbin{\sqsupseteq}^{\var{priv}}}
\newcommand{\heapSat}[3][\heap]{#1 :_{#2} #3}
\newcommand{\memSat}[3][n]{\heapSat[#2]{#1}{#3}}
\newcommand{\memSatPar}[4][n]{\heapSat[#2]{#1 , #4}{#3}}

\newcommand{\monwknefun}{\xrightarrow[\text{\tiny{$\futurewk$}}]{\textit{\tiny{mon, ne}}}}
\newcommand{\monstrnefun}{\xrightarrow[\text{\tiny{$\futurestr$}}]{\textit{\tiny{mon, ne}}}}

\newcommand{\codelabel}[1]{\mathit{#1}}

\newcommand{\malloc}{\codelabel{malloc}}


\newcommand{\asmType}{\plaindom{AsmType}}

\newcommand{\plaindom}[1]{\mathrm{#1}}
\newcommand{\Caps}{\plaindom{Cap}}
\newcommand{\Words}{\plaindom{Word}}
\newcommand{\Addrs}{\plaindom{Addr}}
\newcommand{\ExecConfs}{\plaindom{ExecConf}}
\newcommand{\RegName}{\plaindom{RegisterName}}
\newcommand{\Regs}{\plaindom{Reg}}
\newcommand{\Heaps}{\plaindom{Mem}}
\newcommand{\Mems}{\Heaps}
\newcommand{\HeapSegments}{\plaindom{MemSegment}}
\newcommand{\MemSegments}{\HeapSegments}
\newcommand{\Confs}{\plaindom{Conf}}
\newcommand{\Instrs}{\plaindom{Instructions}}
\newcommand{\nats}{\mathbb{N}}
\newcommand{\ints}{\mathbb{Z}}
\newcommand{\Perms}{\plaindom{Perm}}
\newcommand{\Globals}{\plaindom{Global}}

\newcommand{\Rels}{\plaindom{Rels}}
\newcommand{\States}{\plaindom{State}}
\newcommand{\RegionNames}{\plaindom{RegionName}}
\newcommand{\Regions}{\plaindom{Region}}

\newcommand{\Worlds}{\plaindom{World}}
\newcommand{\Wor}{\plaindom{Wor}}
\newcommand{\Worwk}{\Wor_{\futurewk}}
\newcommand{\Worstr}{\Wor_{\futurestr}}

\newcommand{\UPred}[1]{\plaindom{UPred}(#1)}

\newcommand{\intr}[2]{\mathcal{#1}}
\newcommand{\valueintr}[1]{\intr{V}{#1}}
\newcommand{\exprintr}[1]{\intr{E}{#1}}

\newcommand{\regintr}[1]{\intr{R}{#1}}
\newcommand{\stdvr}{\valueintr{\asmType}}
\newcommand{\stder}{\exprintr{\asmType}}
\newcommand{\stdrr}{\regintr{\asmType}}

\newcommand{\observations}{\mathcal{O}}
\newcommand{\npair}[2][n]{\left(#1,#2 \right)}

\newcommand{\refreg}[1]{\lfloor #1 \rfloor}
\newcommand{\refheap}[1]{\langle #1 \rangle_m}

\newcommand{\zinstr}[1]{\mathtt{#1}}
\newcommand{\fail}{\zinstr{fail}}
\newcommand{\halt}{\zinstr{halt}}
\newcommand{\oneinstr}[2]{\zinstr{#1} \; #2}
\newcommand{\jmp}[1]{\oneinstr{jmp}{#1}}
\newcommand{\twoinstr}[3]{\zinstr{#1} \; #2 \; #3}
\newcommand{\restricttwo}[2]{\twoinstr{restrict}{#1}{#2}}
\newcommand{\jnz}[2]{\twoinstr{jnz}{#1}{#2}}
\newcommand{\isptr}[2]{\twoinstr{isptr}{#1}{#2}}
\newcommand{\geta}[2]{\twoinstr{geta}{#1}{#2}}
\newcommand{\getb}[2]{\twoinstr{getb}{#1}{#2}}
\newcommand{\gete}[2]{\twoinstr{gete}{#1}{#2}}
\newcommand{\getp}[2]{\twoinstr{getp}{#1}{#2}}
\newcommand{\getl}[2]{\twoinstr{getl}{#1}{#2}}
\newcommand{\move}[2]{\twoinstr{move}{#1}{#2}}
\newcommand{\store}[2]{\twoinstr{store}{#1}{#2}}
\newcommand{\load}[2]{\twoinstr{load}{#1}{#2}}
\newcommand{\lea}[2]{\twoinstr{lea}{#1}{#2}}
\newcommand{\threeinstr}[4]{\zinstr{#1} \; #2 \; #3 \; #4}
\newcommand{\restrict}[3]{\threeinstr{restrict}{#1}{#2}{#3}}
\newcommand{\subseg}[3]{\threeinstr{subseg}{#1}{#2}{#3}}
\newcommand{\plus}[3]{\threeinstr{plus}{#1}{#2}{#3}}
\newcommand{\minus}[3]{\threeinstr{minus}{#1}{#2}{#3}}
\newcommand{\lt}[3]{\threeinstr{lt}{#1}{#2}{#3}}

\newcommand{\plainperm}[1]{\textsc{#1}}
\newcommand{\noperm}{\plainperm{o}}
\newcommand{\readonly}{\plainperm{ro}}
\newcommand{\readwrite}{\plainperm{rw}}
\newcommand{\exec}{\plainperm{rx}}
\newcommand{\entry}{\plainperm{e}}
\newcommand{\rwx}{\plainperm{rwx}}
\newcommand{\readwritel}{\plainperm{rwl}}
\newcommand{\rwl}{\readwritel}
\newcommand{\rwlx}{\plainperm{rwlx}}

\newcommand{\local}{\plainperm{local}}
\newcommand{\glob}{\plainperm{global}}

\newcommand{\localityReg}{\var{localityReg}}

\newcommand{\plainview}[1]{\mathrm{#1}}
\newcommand{\perma}{\plainview{perm}}
\newcommand{\temp}{\plainview{temp}}
\newcommand{\revoked}{\plainview{revoked}}


\newcommand{\step}[1][]{\rightarrow_{#1}}

\newcommand{\lookingat}[3]{\ensuremath{#1} \text{ is looking at } \ensuremath{#2} \text{ followed by } \ensuremath{#3}}
\newcommand{\pointstostack}[3]{\ensuremath{#1} \text{ points to stack with } \ensuremath{#2} \text{ used and } \ensuremath{#3} \text{ unused}}
\newcommand{\linksto}[4]{\ensuremath{#1} \text{ links } \ensuremath{#2} \text{ as } \ensuremath{#3} \text{ to } \ensuremath{#4}}
\newcommand{\nonlocal}[1]{\ensuremath{#1} \text{ is non-local}}

\newcommand{\scall}[3]{\mathtt{scall} \; #1([#2],[#3])}

\newcommand{\isdef}{\mathrel{\overset{\makebox[0pt]{\mbox{\normalfont\tiny\sffamily def}}}{=}}}
\newcommand\bnfdef{\mathrel{::=}}
\title{Reasoning About a Machine with Local Capabilities\\
 Provably Safe Stack and Return Pointer Management\\
Technical Appendix Including Proofs and Details}

\author{Lau Skorstengaard\\Aarhus University\\\texttt{lau@cs.au.dk}
        \and 
        Dominique Devriese\\Vrije~Universiteit~Brussel\\\texttt{dominique.devriese@vub.be}
        \and 
        Lars Birkedal\\Aarhus University\\\texttt{birkedal@cs.au.dk}}

\begin{document}
\maketitle
\tableofcontents

\section{Capability Machine Definition and Operational Semantics}
\subsection{Domains and Notation}

\begin{align*}
  \Addrs &\isdef \nats\\
  \Words &\isdef \Caps + \ints \\
  \Regs  &\isdef \RegName \rightarrow \Words\\
  \Heaps &\isdef \Addrs \rightarrow \Words \\
  \Perms &\bnfdef  \noperm\mid \readonly\mid \readwrite\mid \readwritel\mid \exec\mid \entry\mid \rwx\mid \rwlx\\
  \ExecConfs  &\isdef \Regs \times \Heaps \\
  \Globals & \bnfdef \glob\mid \local \\
  \Caps  &\isdef (\Perms \times \Globals) \times \Addrs \times (\Addrs + \{ \infty \}) \times \Addrs\\
  \Confs &\isdef \ExecConfs + \{\failed \} + \{\halted\} \times \Heaps \\
  \HeapSegments &\isdef \Addrs \parfun \Words
\end{align*}
Local capabilities have been added by adding a new domain $\Globals$ which represents whether a capability is local or global. There are two new permissions $\readwritel$ and $\rwlx$ that permits writing local capabilities. They are otherwise the same as their non-''permit write local'' counterparts.

As we have $\infty$ as a possible address, but our words cannot express $\infty$. We pick $\inftyend$ as a representative for $\infty$ when it is in memory (we could have picked any negative noumber). Note that $\inftyend$ is not an address, so for address operations $\inftyend$ only represents $\infty$. It is the responsible of the programmer to keep track of what represents addresses (and take necessary precautions).

Define the following predicate:
\begin{definition}
  \label{def:non-local-cap}
  We say word $w$ "\nonlocal{w}" iff either
  \begin{itemize}
  \item $w = \stdcap{(\perm,\glob)}$ for some $\perm$, $\addr$, $\start$, and $\addrend$; or
  \item $w \in \ints$
  \end{itemize}
\end{definition}

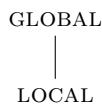
\begin{figure}[!h]
  \centering
  \begin{tikzpicture}[main node/.style={}]
    \node[main node] (1) {$\glob$};
    \node[main node] (2) [below of=1] {$\local$};

    \path[every node/.style={font=\sffamily\small}]
    (1) edge (2);
  \end{tikzpicture}
  \caption{Locality hierarchy}
  \label{fig:glob-hier}
\end{figure}

Things to note:
\begin{itemize}
\item $\RegName$ contains $\pcreg$, but is otherwise a sufficiently
  large finite set.
\item Table~\ref{tab:permission-list} describes what all the permissions grant access to.
\item Figure~\ref{fig:perm-hier} shows the ordering of the permissions, i.e, the elements of $\Perms$.
\item Figure~\ref{fig:glob-hier} shows the ordering of $\local$ and $\glob$, i.e., the elements of $\Globals$.
\item The ordering of $\Perms \times \Globals$ is pointwise.
\end{itemize}

\begin{table}[!h]
  \centering
  \begin{tabular}[!h]{r |  p{7cm} }
    $\noperm$ & No permissions. Grants no permissions\\
    \hline
    $\readonly$ & Read only. Grants read permission \\
    \hline
    $\readwrite$ & Read-write. Grants read and write permission. Storage of local capabilities prohibited. \\
    \hline
    $\readwritel$ & Read-write, permit write local. Grants read and write permission. Storage of local capabilities possible. \\
    \hline
    $\exec$ & Execute permission. Grants execute and read permissions.\\
    \hline
    $\entry$ & Enter permission. This permission grants no access, but when jumped to, it will turn into an $\exec$ permission.\\
    \hline
    $\rwx$ & Read-write-execute permission. Grants read, write, and execute permissions. Storage of local capabilities prohibited. \\
    \hline
    $\rwlx$ & Read-write-execute, permit write local. Grants read, write, and execute permissions. Storage of local capabilities possible.
  \end{tabular}

  \caption{The permissions in this capability system}
  \label{tab:permission-list}
\end{table}
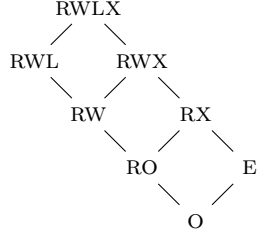
\begin{figure}[!h]
  \centering
  \begin{tikzpicture}[main node/.style={}]
    \node[main node] (7) {$\rwlx$};
    \node[main node] (8) [below left of=7] {$\readwritel$};
    \node[main node] (1) [below right of=7] {$\rwx$};
    \node[main node] (2) [below right of=1] {$\exec$};
    \node[main node] (3) [below right of=2] {$\entry$};
    \node[main node] (4) [below left of=1] {$\readwrite$};
    \node[main node] (5) [below right of=4] {$\readonly$};
    \node[main node] (6) [below right of=5] {$\noperm$};

    \path[every node/.style={font=\sffamily\small}]
    (7) edge (8)
    (7) edge (1)
    (8) edge (4)
    (1) edge (2)
    (2) edge (3)
    (2) edge (5)
    (3) edge (6)
    (1) edge (4)
    (4) edge (5)
    (5) edge (6);
  \end{tikzpicture}

  \caption{Permission hierarchy}
  \label{fig:perm-hier}
\end{figure}
Notation:
\[
  \begin{array}{rcl}
    i       &\in& \Instrs \\
    r       &\in& \RegName\\
    \pc     &\in& \Caps \\
    \pcreg  &\in& \RegName \\
    \Phi    &\in& \ExecConfs \\
    m, \memheap&\in& \Heaps \\
    \memreg &\in& \Regs \\
    \addr   &\in& \Addrs\\
    \perm   &\in& \Perms\\
    ((\perm,\gl),\start,\addrend,\addr) &\in& \Caps \\
    n       &\in& \ints\\
    \ms     &\in& \MemSegments
  \end{array}
\]
Words and instructions:
\[
  \begin{array}{rcl}
    \lv    &::=& \refreg{r} \\
    \hv    &::=& \refheap{r}\\
    \rv    &::=& n \mid \lv \\
    i      &::=& 
                 \jmp{\lv} \mid 
                 \jnz{\lv}{\lv} \mid
                 \move{\lv}{\rv} \mid 
                 \load{\lv}{\hv} \mid 
                 \store{\hv}{\rv} \mid  \\
           &   & \plus{\lv}{\rv}{\rv} \mid 
                 \minus{\lv}{\rv}{\rv} \mid 
                 \lt{\lv}{\rv}{\rv} \mid 
                 \lea{\lv}{\rv} \mid 
                 \restricttwo{\lv}{\rv} \mid 
                 \subseg{\lv}{\rv}{\rv} \mid  \\
           &   & \isptr{\lv}{\rv} \mid 
                 \getp{\lv}{\lv} \mid 

                 \getl{\lv}{\lv} \mid 
                 \getb{\lv}{\lv} \mid
                 \gete{\lv}{\lv} \mid
                 \geta{\lv}{\lv} \mid \\
           &   & \fail \mid
                 \halt 
  \end{array}
\]
Further define $\reg_0 \in \Regs$ such that
\[
  \forall r \in \RegName \ldotp \reg_0(r) = 0
\]

\subsection{Operational Semantics}
Assume a $\decode$ function that decodes words to instructions:
\begin{align*}
  \decode &:\Words \fun \Instrs
\end{align*}
\dominique{mention that it is a simplification to take decode total?}
Assume an $\encodePerm$, $\encodeLoc$, and $\encodePermPair$ function that encodes a permissions, locality, and permission pair, respectively, as an integer:
\begin{align*}
  \encodePerm &: \Perms \fun \ints \\
  \encodeLoc &: \Globals \fun \ints \\
  \encodePermPair &: (\Perms \times \Globals) \fun \ints \\
\end{align*}
Further, assume a left inverse function, $\decodePermPair{}$, that decodes permissions
\[
  \decodePermPair{} : \ints \fun (\Perms \times \Globals)
\]

We define the operational semantics as follows:
\begin{align*}
  \Phi & \step \sem{\decode(\memheap(\addr))}(\Phi) & &                                   
                                                              \arraycolsep=0pt
                                                              \begin{array}{l}
                                                                \text{if $\memreg(\pcreg) = \stdcap$}\\
                                                                \quad\text{and $\start \leq \addr \leq \addrend$}\\
                                                                \quad\text{and $\perm \in \{ \exec,\rwx, \rwlx \}$ }
                                                              \end{array}\\
  \Phi & \rightarrow \failed                                 & & \text{otherwise}
\end{align*}
\lau{With respect to our talk about whether the upper bound should be included in the range of authority or not, I have found one example where it would work better when the upper-bound is not included. If we have a stack and call and want to pass the empty part on, then if the unused part of the stack is 0 cells, then we cannot pass anything that ``looks like a stack''.}
A number of functions and predicates used in the definition of $\sem{-}$ (defined later). Notice all of them are total.
\begin{align*}
  \readAllowed{\perm} &=
                        \begin{cases}
                          \true & \text{if } \perm \in \{ \rwx, \rwlx, \exec, \readwrite, \readwritel, \readonly \} \\
                          \false & \text{otherwise}
                        \end{cases} \\
  \writeAllowed{\perm} &=
                         \begin{cases}
                           \true &
                           \text{if } \perm \in \{ \rwx, \rwlx, \readwrite, \readwritel\} \\
                           \false & \text{otherwise}
                         \end{cases} \\
  \updatePcPerm{w} &=
                     \begin{cases}
                       ((\exec,\gl),\start,\addrend,\addr) & \text{if $w = ((\entry,\gl),\start,\addrend,\addr)$}\\
                       w & \text{otherwise} 
                     \end{cases} \\
  \nonZero{w} &=
                \begin{cases}
                  \true & \text{if $w\in \Caps$ or $w\in \ints$ and $w \neq 0$}\\
                  \false & \text{otherwise}
                \end{cases} \\
  \withinBounds{(\_,\start,\addrend,\addr)} &=
                                              \begin{cases}
                                                \true  & \text{if $\start \leq \addr \leq \addrend$} \\
                                                \false & \text{otherwise}
                                              \end{cases} \\
  \stdUpdatePc{\Phi} &=
                       \begin{cases}
                         \updateReg{\pcreg}{\var{newPc}} & 
                         \arraycolsep=0pt
                         \begin{array}[t]{l}
                           \text{if $\memreg(\pcreg) = \stdcap$}\\
                           \quad\text{and $\var{newPc} = ((\perm,\gl),\start,\addrend,\addr + 1)$}\\
                         \end{array} \\
                         \failed & \text{otherwise}
                       \end{cases} \\
\end{align*}
\begin{align*}
  \sem{\fail}(\Phi)                        & = \failed \\
  \sem{\halt}(\Phi)                        & = (\halted,\memheap) \\
  \sem{\jmp{\lv}}(\Phi)                    & = \updateReg{\pcreg}{\updatePcPerm{\memreg(\lv)}} \\
  \sem{\jnz{\lv}{\rv}}(\Phi)               & = 
                                             \begin{cases}
                                               \updateReg{\pcreg}{\updatePcPerm{\memreg(\lv)}} &
                                               \arraycolsep=0pt
                                               \begin{array}[t]{l}
                                                 \text{if $\nonZero{\memreg(\rv)}$} 
                                               \end{array}\\
                                               \stdUpdatePc{\Phi} & \text{if not $\nonZero{\memreg(\rv)}$}\\
                                               \failed & \text{otherwise }
                                             \end{cases} \\
  \sem{\load{\refreg{r_1}}{\refheap{r_2}}}(\Phi)  & = 
                                              \begin{cases}
                                                \stdUpdatePc{\updateReg{r_1}{\var{w}}} &
                                                \arraycolsep=0pt
                                                \begin{array}[t]{l}
                                                  \text{if }\memreg(r_2) = \stdcap = \var{c} \\
                                                  \quad\text{and }\readAllowed{\perm} \text{ and } \withinBounds{\var{c}} \\
                                                  \quad\text{and }\var{w} = \memheap(\addr)
                                                \end{array}\\
                                                \failed & \text{otherwise }
                                              \end{cases}\\
  \sem{\store{\refheap{r_1}}{\refreg{r_2}}}(\Phi) & = 
                                              \begin{cases}
                                                \stdUpdatePc{\updateHeap{\addr}{\var{w}}} &
                                                \arraycolsep=0pt
                                                \begin{array}[t]{l}
                                                  \text{if }\memreg(r_1) = \stdcap = \var{c} \\
                                                  \quad\text{and }\writeAllowed{\perm} \text{ and } \withinBounds{\var{c}} \\
                                                  \quad\text{and }\var{w} = \memreg(r_2)\\
                                                  \quad\text{and if } \var{w} = ((\_,\local),\_,\_,\_) \text{,} \\
                                                  \quad\text{ then } \perm \in \{\rwlx,\readwritel \}
                                                \end{array}\\
                                                \failed & \text{otherwise }
                                              \end{cases}\\
  \sem{\move{\refreg{r_1}}{\rv}}(\Phi)            & = 
                                                    \begin{cases}
                                                      \stdUpdatePc{\updateReg{r_1}{\rv}} & \rv \in \ints \\
                                                      \stdUpdatePc{\updateReg{r_1}{\memreg(\rv)}} & \text{otherwise}
                                                    \end{cases}
  \\
  \sem{\lea{\refreg{r_1}}{\rv}}(\Phi)            & =
                                             \begin{cases}
                                               \stdUpdatePc{\updateReg{r_1}{\var{c}}} &
                                               \arraycolsep=0pt
                                               \begin{array}[t]{l}
                                                 \text{if either $n = \rv$ or $\rv = \refreg{r_2}$ and $n = \memreg(r_2)$} \\
                                                 \quad\text{and in either case $n \in \ints $} \\
                                                 \quad\text{and $\memreg(r_1) = \stdcap$}\\
                                                 \quad\text{and $\perm \neq \entry$}\\
                                                 \quad\text{and $\var{c} = ((\perm,\gl),\start,\addrend,\addr + n)$}
                                               \end{array}\\
                                               \failed               & \text{otherwise}
                                             \end{cases} 
  \\
  \sem{\restricttwo{\refreg{r}}{\rv}}(\Phi)           & =
                                                  \begin{cases}
                                                    \stdUpdatePc{\updateReg{r}{\var{c}}}  &
                                                    \arraycolsep=0pt
                                                    \begin{array}[t]{l}
                                                      \text{if $\memreg(r) = \stdcap[\permp]$}\\
                                                      \quad\text{and either $\rv = n$ or $\memreg(\rv) = n$}\\
                                                      \quad\text{and in either case $n \in \ints$}\\
                                                      \quad\text{and $\decodePermPair{n}\sqsubseteq \permp$}\\ 
                                                      \quad\text{and $c = (\decodePermPair{n},\start,\addrend,\addr)$}
                                                    \end{array}\\
                                                    \failed                   & \text{otherwise}
                                                  \end{cases} 
\end{align*}
\begin{align*}
  \sem{\plus{\refreg{r_1}}{\rv_1}{\rv_2}}(\Phi)               & =
                                                          \begin{cases}
                                                            \stdUpdatePc{\updateReg{r_1}{n_1+n_2}} &
                                                            \arraycolsep=0pt
                                                            \begin{array}[t]{l}
                                                              \text{if for $i \in \{1,2\}$}\\
                                                              \quad\text{$n_i = \rv_i$ or $n_i = \memreg(\rv_i)$}\\
                                                              \quad\text{and in either case $n_i \in \ints$}
                                                            \end{array}\\
                                                            \failed & \text{otherwise}
                                                          \end{cases}\\
  \sem{\minus{\refreg{r_1}}{\rv_1}{\rv_2}}(\Phi)               & =
                                                          \begin{cases}
                                                            \stdUpdatePc{\updateReg{r_1}{n_1-n_2}} &
                                                            \arraycolsep=0pt
                                                            \begin{array}[t]{l}
                                                              \text{if for $i \in \{1,2\}$}\\
                                                              \quad\text{$n_i = \rv_i$ or $n_i = \memreg(\rv_i)$}\\
                                                              \quad\text{and in either case $n_i \in \ints$}
                                                            \end{array}\\
                                                            \failed & \text{otherwise}
                                                          \end{cases}\\
  \sem{\lt{\refreg{r_1}}{\rv_1}{\rv_2}}(\Phi)               & =
                                                          \begin{cases}
                                                            \stdUpdatePc{\updateReg{r_1}{1}} &
                                                            \arraycolsep=0pt
                                                            \begin{array}[t]{l}
                                                              \text{if for $i \in \{1,2\}$}\\
                                                              \quad\text{$n_i = \rv_i$ or $n_i = \memreg(\rv_i)$}\\
                                                              \quad\text{and in either case $n_i \in \ints$}\\
                                                              \quad\text{and $n_1 < n_2$}\\                                                                                                            \end{array}\\
                                                            \stdUpdatePc{\updateReg{r_1}{0}} &
                                                            \arraycolsep=0pt
                                                            \begin{array}[t]{l}
                                                              \text{if for $i \in \{1,2\}$}\\
                                                              \quad\text{$n_i = \rv_i$ or $n_i = \memreg(\rv_i)$}\\
                                                              \quad\text{and in either case $n_i \in \ints$}\\
                                                              \quad\text{and $n_1 \not< n_2$}\\                                                                                                            \end{array}\\
                                                            \failed & \text{otherwise}
                                                          \end{cases}
\\
  \sem{\subseg{\refreg{r}}{\rv_1}{\rv_2}}(\Phi) & = 
                                            \begin{cases}
                                              \stdUpdatePc{\updateReg{r}{\var{c}}} &
                                              \arraycolsep=0pt
                                              \begin{array}[t]{l}
                                                \text{if $\memreg(r) = \stdcap$} \\
                                                \quad\text{and for $i \in \{1,2\}$}\\
                                                \quad\text{$n_i = \rv_i$ or $n_i = \memreg(\rv_i)$}\\
                                                \quad\text{and in either case $n_1 \in \nats$}\\
                                                \quad\text{and $\start \leq n_1$}\\
                                                \quad\text{and $n_2 \leq \addrend$ where $n_2 \in \nats$}\\
                                                \quad\quad\text{or $n_2=\inftyend$ and $\addrend = \infty$}\\
                                                \quad\text{and $\perm \neq \entry$}\\
                                                \quad\text{and $c = ((\perm,\gl),n_1,n_2,\addr)$}
                                              \end{array} \\
                                              \failed & \text{otherwise}
                                            \end{cases}
  \\
  \sem{\geta{\refreg{r_1}}{\refreg{r_2}}}(\Phi) & = 
                                            \begin{cases}
                                              \stdUpdatePc{\updateReg{r_1}{\addr}} &
                                              \arraycolsep=0pt
                                              \begin{array}[t]{l}
                                                \text{if $\memreg(r_2) = ((\_,\_),\_,\_,\addr)$}
                                              \end{array} \\
                                              \failed & \text{otherwise}
                                            \end{cases}
  \\
  \sem{\getb{\refreg{r_1}}{\refreg{r_2}}}(\Phi) & = 
                                            \begin{cases}
                                              \stdUpdatePc{\updateReg{r_1}{\start}} &
                                              \arraycolsep=0pt
                                              \begin{array}[t]{l}
                                                \text{if $\memreg(r_2) = ((\_,\_),\start,\_,\_)$}
                                              \end{array} \\
                                              \failed & \text{otherwise}
                                            \end{cases}
  \\
  \sem{\gete{\refreg{r_1}}{\refreg{r_2}}}(\Phi) & = 
                                            \begin{cases}
                                              \stdUpdatePc{\updateReg{r_1}{\addrend}} &
                                              \arraycolsep=0pt
                                              \begin{array}[t]{l}
                                                \text{if $\memreg(r_2) = ((\_,\_),\_,\addrend,\_)$ and $\addrend \neq \infty$}
                                              \end{array} \\
                                              \stdUpdatePc{\updateReg{r_1}{\inftyend}} &
                                              \arraycolsep=0pt
                                              \begin{array}[t]{l}
                                                \text{if $\memreg(r_2) = ((\_,\_),\_,\infty,\_)$}
                                              \end{array} \\
                                              \failed & \text{otherwise}
                                            \end{cases}
  \\
  \sem{\getp{\refreg{r_1}}{\refreg{r_2}}}(\Phi) & = 
                                            \begin{cases}
                                              \stdUpdatePc{\updateReg{r_1}{\encodePerm(\perm)}} &
                                              \arraycolsep=0pt
                                              \begin{array}[t]{l}
                                                \text{if $\memreg(r_2) = ((\perm,\_),\_,\_,\_)$}
                                              \end{array} \\
                                              \failed & \text{otherwise}
                                            \end{cases}
  \\
  \sem{\getl{\refreg{r_1}}{\refreg{r_2}}}(\Phi) & = 
                                            \begin{cases}
                                              \stdUpdatePc{\updateReg{r_1}{\encodeLoc(\gl)}} &
                                              \arraycolsep=0pt
                                              \begin{array}[t]{l}
                                                \text{if $\memreg(r_2) = ((\_,\gl),\_,\_,\_)$}
                                              \end{array} \\
                                              \failed & \text{otherwise}
                                            \end{cases}
  \\
  \sem{\isptr{\refreg{r}}{\rv}}(\Phi) & =  
                                  \begin{cases}
                                    \stdUpdatePc{\updateReg{r_1}{1}} & \text{if $\memreg(\rv) \in \Caps$ } \\
                                    \stdUpdatePc{\updateReg{r_1}{0}} & \text{otherwise} 
                                  \end{cases}
\end{align*}
\lau{Our instruction set could use a less than operator.}
\lau{Dominique, apparently there was a reason I had not added the new condition in store to $\writeAllowed{}$. The new condition depends on a case distinction on $w$. It is only if $w$ is a capability that we look into whether it is local or not. But even if $w$ is just an integer, we still need the capability we write through to have some kind of write permission.}
\dominique{I see, but maybe writeAllowed could take $w$ and $c$ as an argument, ratheppppr than just $\perm$? It is now weird that some conditions for writing are in $\writeAllowed{}$, but not all.}

\dominique{factor out reoccurring conditions of the form $n_i = \rv_i$ or $n_i =
  \memreg(\rv_i)$ into a function that evaluates an $\rv$ using a system state?
  This would generalise nicely to the case where we would have additional
  addressing modes.}

\dominique{why doesn't store take an $\rv$?}

Define the following macros: $\mathtt{restrict}$, $\mathtt{subseg}$, and $\mathtt{lea}$ that does not overwrite the source register. A $\mathtt{store}$ that allows integers to be stored directly. $\mathtt{store}$ requires a register $r_t$ for storage of temporary values to be available.
\begin{align*}
  \restrict{r_1}{r_2}{r_3} \; r_4 &\defeq
                                    \begin{aligned}[t]
                                      & \move{r_1}{r_2} \\
                                      & \restrict{r_1}{r_3}{r_4}
                                    \end{aligned} \\
  \subseg{r_1}{r_2}{r_3} \; r_4   &\defeq
                                    \begin{aligned}[t]
                                      & \move{r_1}{r_2} \\
                                      & \subseg{r_1}{r_3}{r_4}
                                    \end{aligned}\\
  \lea{r_1}{r_2} \; r_3           &\defeq
                                    \begin{aligned}[t]
                                      & \move{r_1}{r_2} \\
                                      & \lea{r_1}{r_3}
                                    \end{aligned}\\
  \store{r}{n} &\defeq
                 \begin{aligned}[t]
                   & \move{r_t}{n} \\
                   & \store{r}{r_t}
                 \end{aligned}
\end{align*}

\begin{lemma}[Determinacy]
  \label{lem:determinacy}
  If $\Phi \step \Phi'$ and $\Phi \step \Phi''$, then $\Phi' = \Phi''$.
  If $\Phi \step[n] \Phi'$ and $\Phi \step[n] \Phi''$, then $\Phi' = \Phi''$.
  If $\Phi \step[n] \Phi'$ and $\Phi \step[n'] (\halted,\mem'')$, then $n \leq n'$
  and $\Phi' \step[n'-n] (\halted,\mem'')$.
\end{lemma}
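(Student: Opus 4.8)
The plan is to exploit that the single-step relation $\step$ is, by its very definition, a total function from $\ExecConfs$ to $\Confs$; once that is in hand, the two multi-step claims are routine inductions, and the third additionally uses that $\failed$ and $(\halted,\mem)$ are terminal (they admit no outgoing transitions, since $\step$ is only defined on $\ExecConfs$).

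For the first statement, observe that the two defining clauses of $\step$ are guarded by a single condition on $\Phi$ — whether $\memreg(\pcreg) = \stdcap$ with $\start \leq \addr \leq \addrend$ and $\perm \in \{\exec,\rwx,\rwlx\}$ — which is mutually exclusive and exhaustive. In the positive case the successor is $\sem{\decode(\memheap(\addr))}(\Phi)$ and in the negative case it is $\failed$. Since $\decode$ is typed as a total function $\Words \fun \Instrs$ and every instruction denotation $\sem{i}$ is total (as remarked where the auxiliary functions are introduced), each clause yields a uniquely determined element of $\Confs$, and the two guards cannot hold simultaneously. Hence the successor of $\Phi$ is unique and $\Phi' = \Phi''$. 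This is the only place the concrete structure of the machine matters, and the single (very mild) obstacle is confirming that no instruction denotation is left partial or nondeterministic — which the totality remarks already secure.

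For the second statement I would induct on $n$, reading $\Phi \step[n] \Phi'$ as $n$-fold composition (so $\Phi \step[0] \Phi'$ forces $\Phi = \Phi'$, and $\Phi \step[n+1] \Phi'$ means $\Phi \step \Phi_1 \step[n] \Phi'$ for some $\Phi_1$). The base case is immediate. In the inductive step, both runs begin with a transition out of $\Phi$; by the first statement these agree on a common successor $\Phi_1$, and the induction hypothesis applied to $\Phi_1 \step[n] \Phi'$ and $\Phi_1 \step[n] \Phi''$ yields $\Phi' = \Phi''$.

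For the third statement the decisive fact is that $(\halted,\mem'')$ has no successor. I would first exclude $n > n'$: decomposing $\Phi \step[n] \Phi'$ as $\Phi \step[n'] \Psi \step[n-n'] \Phi'$ and comparing with $\Phi \step[n'] (\halted,\mem'')$ via the second statement forces $\Psi = (\halted,\mem'')$, so $(\halted,\mem'') \step[n-n'] \Phi'$ with $n-n' \geq 1$ would demand a transition out of a terminal state, a contradiction; hence $n \leq n'$. Then decomposing the halting run as $\Phi \step[n] \Psi \step[n'-n] (\halted,\mem'')$ and comparing $\Phi \step[n] \Psi$ with $\Phi \step[n] \Phi'$ via the second statement gives $\Psi = \Phi'$, so $\Phi' \step[n'-n] (\halted,\mem'')$, as required. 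I anticipate no real difficulty here beyond keeping the decompositions of $\step[n]$ straight and invoking terminality at the right moment.
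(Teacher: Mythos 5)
Your proposal is correct and is essentially the argument the paper intends: its entire proof is ``by easy inspection of the definition of the operational semantics,'' and your elaboration---the step relation is a function because its two clauses are guarded by mutually exclusive, exhaustive conditions and each instruction denotation $\sem{-}$ is a total function, followed by routine induction for the multi-step claims and terminality of $(\halted,\mem)$ for the third---is exactly what that inspection amounts to.
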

\begin{proof}
  By easy inspection of the definition of the operational semantics.
\end{proof}

\section{Malloc specification}
\newcommand{\hsfoot}{\hs_\var{footprint}}
\newcommand{\hsframe}{\hs_\var{frame}}
\newcommand{\size}{\var{size}}
\newcommand{\rio}{r_{io}}
\newcommand{\advb}{\var{adv_{base}}}
\newcommand{\adve}{\var{adv_{end}}}
\newcommand{\initb}{\var{init}_{base}}
\newcommand{\inite}{\var{init}_{end}}
\newcommand{\mrlen}{5cm}
\newcommand{\retm}{\var{ret}_{\malloc}}
\newcommand{\reta}{\var{ret}_{\adv}}
\newcommand{\base}{\var{base}}
\newcommand{\eend}{\var{end}}
\newcommand{\bracket}[1]{\multirow{#1}{*}{\ensuremath{
      \left . \vphantom{\begin{array}{l}
                          \ifthenelse{\equal{#1}{1}}{3\\}{
                          \ifthenelse{\equal{#1}{2}}{3\\3\\}{
                          \ifthenelse{\equal{#1}{3}}{3\\3\\3\\}{
                          \ifthenelse{\equal{#1}{4}}{3\\3\\3\\3\\}{
                          \ifthenelse{\equal{#1}{5}}{3\\3\\3\\3\\3\\}{
                          \ifthenelse{\equal{#1}{6}}{3\\3\\3\\3\\3\\3\\}{
                          3\\3\\3\\3\\3\\3\\3\\ 
                          }}}}}}
                        \end{array}} \right \}}}
              }
\newcommand{\annotate}[2]{\multirow{#1}{\mrlen}{\scriptsize #2}}

              \begin{specification}[Malloc Specification]
                \label{spec:malloc}
                $c_\malloc$ satisfies the specification for malloc iff
                \[  
                  \begin{aligned}
                    &c_\malloc = ((\entry,\glob),\_,\_,\_) \land\\
                    &\exists \iota_{\malloc,0} \ldotp \\
                    &\quad (\forall \iota' \futurestr \iota_{\malloc,0} \ldotp \forall W,i \ldotp W(i)=\iota' \Rightarrow \iota'.H (\iota'.s) (\xi^{-1}(W)) = \iota'.H (\iota'.s) (\xi^{-1}([i \mapsto W(i)])) ) \; \land \\
                    &\quad \iota_{\malloc,0}.v = \perma \; \land \\
                    &\quad (\forall \Phi \in \ExecConfs \ldotp \forall \ms_{\var{footprint}}, \hsframe \in \HeapSegments \ldotp \\
                    &\qquad \forall i, n, \size \in \nats \ldotp \forall
                    w_{\var{ret}} \in \Words \ldotp \\
                    &\qquad \quad \forall \iota_\malloc \futurestr \iota_{\malloc,0} \land \\
                    &\qquad \quad \memheap = \ms_{\var{footprint}} \uplus \hsframe \land \heapSat[\ms_{\var{footprint}}]{n}{[i \mapsto \iota_\malloc]} \land \\
                    &\qquad \quad \memreg(r_1) = \size \land \size \geq 0 \land  \memreg(r_0) = w_{\var{ret}} \land \\
                    &\qquad \quad \memreg(\pcreg) = \updatePcPerm{c_\malloc} \\ 
                    &\qquad \quad \Rightarrow \\
                    &\qquad \qquad\exists \Phi' \in \ExecConfs \ldotp \exists \ms_{\var{footprint}}', \ms_{\var{alloc}} \in \HeapSegments\ldotp\\
                    &\qquad \qquad \quad \exists j \in \nats \ldotp j > 0 \land \exists b',e'\in \Addrs \ldotp \exists \iota_\malloc' \in \Regions \ldotp \\
                    &\qquad \qquad \qquad \Phi \step[j] \Phi' \land \\
                    &\qquad \qquad \qquad \memheap[\Phi']=\ms_{\var{footprint}}' \uplus \hs_{\var{alloc}} \uplus \hsframe \land\\
                    &\qquad \qquad \qquad \iota_{\malloc}' \futurewk \iota_\malloc \land \\
                    &\qquad \qquad \qquad \heapSat[\ms_{\var{footprint}}']{n-j}{[i \mapsto \iota_\malloc']} \land \\
                    &\qquad \qquad \qquad \dom(\hs_{\var{alloc}}) = [b',e'] \land \forall a \in [b',e']\ldotp \hs_{\var{alloc}}(a) = 0  \land \\
                    &\qquad \qquad \qquad \memreg[\Phi'] = \memreg[\Phi]\update{\pcreg}{\updatePcPerm{w_{\var{ret}}}}\update{r_1}{((\rwx,\glob),b',e',b')} \land \\
                    &\qquad \qquad \qquad \size - 1 = e'-b' ) \land\\
                    &\quad (\forall \Phi \in \ExecConfs \ldotp (\memreg(r_1)
                    \not\in \ints \vee \memreg(r_1) < 0) \land \memreg(\pcreg) = \updatePcPerm{c_\malloc} \Rightarrow \exists j \in \nats \ldotp \Phi \step[j] \failed)
                  \end{aligned}
                \]
              \end{specification}
              In the specification above $\iota_\malloc'$ is a future region of the initial region that governs malloc.
              \lau{What about $c_\malloc$ and the value relation? Do we add this to the specification or try to prove it based on the specification?}

              \section{Macros}
              In order to write readable example programs, we provide macros (macro-instructions) that can be implemented in terms of the instruction set given in the formalisation.
              \dominique{Similarly: how is malloc invoked?  Do we trust malloc enough to not encapsulate ourselves from it, i.e. provide an rx return capability and use callee-save registers?}
              \lau{ I think this is a conceptual question as we can make it work with either. We already trust malloc to give us a fresh piece of memory and not reuse it later on, so we already assume malloc to be somewhat trusted, so why not go all the way? }
              \lau{Agreed.}

              \dominique{what does ``fetch the capability ...'' mean?}
              \lau{ We don't know where the capability resides, but if it is in memory, then it will be loaded into a register. }
              \dominique{Wouldn't it be more clear to provide call with two explicit lists of registers: those which need to be stored, and those which are provided as arguments (i.e. which do not need to be erased)?}
              \dominique{Perhaps you could also provide an explicit syntax for ``undefined symbols'' that should be filled in by a linker?}

              In order to compute offsets and the like, the macros need registers to keep temporary computations in. We assume such a small set of registers $\RegName_t \subseteq \RegName$ is available and that $\RegName_t$ does not contain registers explicitely named in a program nor $r_0$, $r_\stk,$ or $\pcreg$ (but clearing all registers still clears the temporary registers).

              \subsection{Linking and ABI}
              In order to make capabilities to trusted code (and possibly untrusted code) available, we assume that some sort of linker has made these available. This is done in the following way: For every function, the first memory cell the capability for that function governs contains a capability for the linking table. Each function name in a program corresponds to an offset in the table, e.g., \texttt{malloc} could be at offset 0. When a name is used in a program, it indicates what entry from the linking table to pick. The table should always be accessible by taking a copy of the capability in the $\pcreg$-register and adjusting it to point to the first cell it governs.

              The capability linking table can be shared between multiple functions that are linked to the same capabilities as it is accessed through read-only capabilities.

              \subsection{Flag table}
              A function may use flags to signal failure. We use the convention that a flag table is available in the second memory cell of a functions code (so just after the linking table). The flag table is accessed through a read-write capability and initially it contains all zero. Like the linking table, each entry is associated with a name which may appear in the macros.

              The flag table should never be shared between distrusting parties.

              We will often want to make room in memory for a linking-table capability and a flag-table capability. We therefore define a constant that represents the offset of the actual code of a function caused by these two capabilities:
              \[
                \olf \defeq 2
              \]

              \subsection{Macro definitions}
In the following, we describe each of the macros. The descriptions are so detailed that it should be a simple matter to implement the macros. We provide a proposed implementation for each of the macros in order to install some confidence in the fact that it is possible to implement each of the macro.
              \begin{description}
              \item[\texttt{fetch} $r$ $f$] load the entry of the linking table corresponding to $f$ to register $r$.\\
One possible fetch implementation (r\_t1 and r\_t2 are registers in RegName\_t).
\begin{lstlisting}
move r pc
getb r_t1 r
geta r_t2 r
minus r_t1 r_t1 r_t2 // Offset to first address, i.e., linking table (b-a)
lea r r_t1
load r r
lea r ... // ... replaced with offset to f in the linking table
move r_t1 0
move r_t2 0
load r r // f capability loaded to register r
\end{lstlisting}
              \item[\texttt{call} $r(\bar{r}_{\var{args}},\bar{r}_{\var{priv}})$] \forcenewline
                $\bar{r}_{\var{args}}$ and $\bar{r}_{\var{priv}}$ are lists of registers. An overview of this call:
                \begin{itemize}
                \item Set up activation record
                \item Create local enter capability for activation (protected return pointer)
                \item Clear unused registers
                \item Jump
                \item Upon return: Run activation code
                \end{itemize}
                A more detailed description of each of the above steps:
                \begin{description}
                \item [Set up activation record]\forcenewline
                  \begin{itemize}
                  \item Run malloc to get a piece of memory with space for:
                    \begin{itemize}
                    \item Words in $\bar{r}_{\var{priv}}$
                    \item Code return capability (opc)
                    \item Activation code
                    \end{itemize}
                  \item Store the words in $\bar{r}_{\var{priv}}$ to the activation record. 
                  \item Adjust a copy of the current pc to point to the return address in code and save it to the activation record.
                  \item Write the activation code to the activation record.
                  \end{itemize}
                \item [Create local enter capability for activation] Adjust the capability for the activation record to point to the beginning of the activation record and restrict it to a local enter-capability. Place this capability in $r_0$.
                \item [Clear unused registers]
                  Clear all the register that are not $\pcreg$, $r$, $r_0$ or in $\bar{r}_{\var{args}}$.
                \item [Jump] Jump to register $r$
                \item [Activation code] The activation code does the following:
                  \begin{itemize}
                  \item Move the stored ``private'' words in to their respective $\bar{r}_{\var{priv}}$ registers.
                  \item Load the return capability to $\pcreg$
                  \end{itemize}
                \end{description} 
Possible implementation. We will use $\texttt{malloc $r$ $n$}$ and \texttt{rclear $\bar{r}$} (defined below). Assume $\bar{r_{\var{priv}}} = r_{\var{priv},1}, \dots, r_{\var{priv},n}$
\begin{lstlisting}
  malloc r_t ... // ... is the size of activation record
// store private state in activation record
  store r_t r_priv,1
  lea r_t 1
  store r_t r_priv,2
  lea r_t 1
  ...
  lea r_t 1
  store r_t r_priv,n
  lea r_t 1
// store old pc
  move r_t1 pc
  lea r_t1 ... // ... is the offset to return address
  store r_t r_t1
  lea r_t1 1
// store activation record
  store r_t encode(i_1)
  lea r_t1 1
  ...
  lea r_t1 1  
  store r_t encode(i_m)
  lea r_t1 k //  k is m-1, i.e. the offset to the first instruction of the activation code.
  restrict r_t1 encodePermPair((Local,e))
  move r_0 r_t1
  rclear R // R = RegisterName - {r,pc,r_0,r_args}
  jmp r
\end{lstlisting}
Activation record. The instructions correspond to $i_1,\dots,i_m$ in the above.
\begin{lstlisting}
  move r_t pc
  getb r_t1 r_t
  geta r_t2 r_t
  minus r_t1 r_t1 r_t2
// load private state
  lea r_t r_t1
  load r_priv,1 r_t
  lea r_t 1
  load r_priv,2 r_t
  lea r_t 1
  ...
  lea r_t 1
  load r_priv,n r_t
  lea r_t 1
// load old pc
  load pc r_t
\end{lstlisting}
 
              \item[\texttt{malloc $r$ $n$}] Calls malloc to allocates a piece of memory of size $n$. The capability will be stored in register $r$. 
One possible malloc implementation (r\_t1 is a register in RegName\_t) and r\_1 is the register from the malloc specification.
\begin{lstlisting}
fetch r malloc
move r_1 n 
// save return pointer
move r_t1 r_0 
// setup new return pointer
move r_0 pc
lea r_0 4 // 4 is the offset to just after jmp r
restrict r_0 encodePerm(e)
jmp r
move r r_1
move r_0 r_t1 // restore return pointer
move r_1 0
move r_t1 0
\end{lstlisting}
              \item[\texttt{assert$_{\var{flag}}$ $r_1$ $r_2$}] Compares the words in register $r_1$ and $r_2$ (if one of them is an integer, then use that in the comparison). If they are equal, then execution continues. If they are unequal, then the assertion flag named $\var{flag}$ in the flag list is set to 1 and execution halts (if no flag is specified, then the first flag in the list is set to 1).\\
There are four different asserts based on whether $r_1$ and $r_2$ are registers or numbers. If $r_1$ and $r_2$ are registers:
\begin{lstlisting}
          // setup pointer to fail.
          move r_t3 pc
          lea r_t3 ... // ... is the offset to fail
          // make sure both registers contain either capability or integer
          isptr r_t1 r_1
          isptr r_t2 r_2
          minus r_t1 r_t1 r_t2
          jnz   r_t3 r_t1
          // set up capability for cap case:
          move r_t4 pc
          lea r_t4 ... // ... is the offset to caps
          jnz r_t4 r_t2 // jump to caps if r_t2 contains a capability
          // the two registers contain an integer
          minus r_t1 r_1 r_2
          jnz r_t3 r_t1
          // the two integers in the registers are equal
          move r_t4 pc
          lea r_t4 ... // .. offset to success
caps:
          geta r_t1 r_1
          geta r_t2 r_2
          minus r_t1 r_t1 r_t2
          jnz   r_t3 r_t1          
          getb r_t1 r_1
          getb r_t2 r_2
          minus r_t1 r_t1 r_t2
          jnz   r_t3 r_t1          
          gete r_t1 r_1
          gete r_t2 r_2
          minus r_t1 r_t1 r_t2
          jnz   r_t3 r_t1          
          getp r_t1 r_1
          getp r_t2 r_2
          minus r_t1 r_t1 r_t2
          jnz   r_t3 r_t1          
          getl r_t1 r_1
          getl r_t2 r_2
          minus r_t1 r_t1 r_t2
          jnz   r_t3 r_t1          
          // the two capabilities in the registers are equal
          move r_t4 pc
          lea r_t4 ... // .. offset to success
fail:
          // get the flag capability
          move r_t3 pc
          getb r_t1 pc
          geta r_t2 pc
          minus r_t1 r_t1 r_t2
          lea r_t3 r_t1
          lea r_t3 1 // the flag table capability is at the second address of cap.
          load r_t1 r_t3
          lea r_t1 ... // ... is the offset of flag in the table
          store r_t1 1
          halt
success:                   
          // clean up
          move r_t1 0
          move r_t2 0
          move r_t3 0
          move r_t4 0
\end{lstlisting}
If $r_1$ is a register, but $r_2$ is a constant:
\begin{lstlisting}
          // setup pointer to fail.
          move r_t3 pc
          lea r_t3 ... // ... is the offset to fail
          // make sure both registers contain either capability or integer
          isptr r_t1 r_1
          jnz   r_t3 r_t1
          minus r_t1 r_1 r_2
          jnz r_t3 r_t1
          // the two integers in the registers are equal
          move r_t3 pc
          lea r_t3 ... // .. offset to success
fail:
          // get the flag capability
          move r_t3 pc
          getb r_t1 pc
          geta r_t2 pc
          minus r_t1 r_t1 r_t2
          lea r_t3 r_t1
          lea r_t3 1 // the flag table capability is at the second address of cap.
          load r_t1 r_t3
          lea r_t1 ... // ... is the offset of flag in the table
          store r_t1 1
          halt
success:                   
          // clean up
          move r_t1 0
          move r_t3 0
\end{lstlisting}
The case where $r_1$ is a constant and $r_2$ is a register is omitted. The case where both are constant is also omitted - if the constants are the same, then the macro is nothing. If they are different, then it corresponds to the failed part of both of the above implementations.
              \item[\texttt{mclear $r$}] Stores 0 to all the memory cells the capability $r$ governs.\footnote{This may in some cases seem like an unreasonable slow instruction. In a real system it would probably be implemented as a vector operation which allows modification of continuous segments of memory rather fast.} \\
Possible implementation:
\begin{lstlisting}
        move r_t r
        getb r_t1 r_t
        geta r_t2 r_t
        minus r_t2 r_t1 r_t2
        lea r_t r_t2
        gete r_t2
        minus r_t1 r_t2 r_t1
        plus r_t1 r_t1 1
        move r_t2 pc
        lea r_t2 ... // ... is the offset to end
        move r_t3 pc
        lea r_t3 ... // ... is the offset to iter
iter:
        jnz r_t2 r_t1
        store r_t 0
        lea r_t 1
        plus r_t1 r_t1 1
        jmp r_t3
end:
        move r_t 0
        move r_t1 0
        move r_t2 0
        move r_t3 0
\end{lstlisting}
              \item[\texttt{rclear $\bar{r}$}] Moves 0 to all the registers in the list $\bar{r}$.\\
Possible implementation: Say $\bar{r} = r_1,\dots, r_n$
\begin{lstlisting}
move r_1 0
move r_2 0
// ...
move r_n 0
\end{lstlisting}

\end{description}
Note:
\begin{itemize}
\item \texttt{call} will fail if we have local capabilities in one of the registers of the ``private'' register list as it relies on a capability returned by malloc which will not be permit-write-local. This severely limits how \texttt{scall} can be used and it provides very little in terms of control-flow integrety when nested. Below, we introduce \texttt{scall} which can handle local capabilities in the ``private'' state. 
\end{itemize}
\subsection{Stack}
Some programs will assume access to a stack which will be in part indicated by the program macros but also in the correctness lemma. The stack is accessed through a local $\rwlx$-capability. Programs will assume that the stack resides in some register, say $r_{\var{stk}}$.

The stack resides entirely in memory. There is no separation between the memory and the stack, so when we talk about the stack it is as a conceptual thing. %

Even though the memory is infinite, we will only use a finite part for the stack. If we have allocated too little memory for the stack, and we try to push something anyway, then the execution will fail. As we consider failing admissible, we are okay with this. 

When not in the middle of a push or a pop, the stack capability points to the top word of the stack. For an empty stack, the stack capability points to the address just below of the range of authority for the stack capability.

The stack grows upwards

\begin{description}
\item[\texttt{push $r$}] Pushes the word in register \texttt{r} to the stack by incrementing the address of the stack capability by one and storing the word through the stack capability.\\
Possible implementation:
\begin{lstlisting}
  lea r_stk 1
  store r_stk r
\end{lstlisting}

\item[\texttt{pop $r$}] Pops the top word of the stack by loading it to register $r$, and decrementing the address of the stack capability.
\begin{lstlisting}
  load r r_stk
  minus r_t1 0 1
  lea r_stk r_t1
\end{lstlisting}

\item[\texttt{scall} $r(\bar{r}_{\var{args}},\bar{r}_{\var{priv}})$] \forcenewline$\bar{r}_{\var{args}}$ and $\bar{r}_{\var{priv}}$ are lists of registers. This call assumes $r_{\var{stk}}$ contains a stack capability. An overview of this call:
  \begin{itemize}
  \item Push ``private'' registers to the stack.
  \item Push the restore code to the stack.
  \item Push return address capability 
  \item Push stack capability
  \item Create protected return pointer
  \item Restrict stack capability to unused part 
  \item Clear the part of the stack we release control over
  \item Clear unused registers
  \item Jump
  \item Upon return: Run the on stack restore code
  \item Return address in caller-code:  Restore ``private'' state
  \end{itemize}
  A more detailed description of the above steps:
  \begin{description}
  \item [Push ``private'' registers to the stack]
    Push all the words in the registers in  $\bar{r}_{\var{priv}}$ to the stack.
  \item [Push the restore code to the stack]
    Push the restore code to the stack (described later). This code needs to be on the stack to make sure the stack capability can be restored. We keep the restore code on the stack minimal. The caller code does the rest of the restoration.
  \item [Push return address capability]
    Push a capability for the return address (in the memory) to the stack.
  \item [Push stack capability]
    Push the full stack capability to the stack.
  \item [Create protected return pointer]
    Make a new version of the stack pointer that points to the beginning of the restoration code. Restrict it to a local enter-capability and put it in $r_0$. \lau{Do we want to use ``protected return pointer'' to mean a local enter capability?} \lau{It is also used when we want to call code for the first time, so calling it a protected return pointer is more of a conceptual thing.}
  \item [Restrict stack capability to unused part]
    Make the stack capability only govern the unused part.
  \item [Clear the part of the stack we release control over]
    Store 0 to all the memory cells the restricted stack pointer has authority over.
  \item [Clear unused registers]
    Clear all registers but $\pcreg$, $r$, $r_0$, $r_{\var{stk}}$, and $\bar{r}_{\var{args}}$.
  \item [Jump] Jump to register $r$.
    \\ \lau{scall used to be dependent on linking, but now this part has been extracted to the fetch macro to make scall independent of linking.}
  \item [Run the on stack restore code]
    Load the stack capability to $r_{\var{stk}}$. Pop the old program counter (the return address in caller-code) from the stack to $\pcreg$.
  \item [Return address in caller-code: Restore ``private'' state] \forcenewline
    \begin{itemize}
    \item Pop the restore code of the stack
    \item Pop the private state on the stack into their respective $\bar{r}_{\var{priv}}$ registers.
    \end{itemize}
  \end{description}
\end{description}
Possible implementation, say $\bar{r}_{\var{args}} = r_{\var{args},1},\dots, r_{\var{args},m}$ and $\bar{r}_{\var{priv}} = r_{\var{priv},1},\dots, r_{\var{priv},n}$:
\begin{lstlisting}
// push private state
  push r_priv,1
  ...
  push r_priv,n
// push activation code
  push encode(i_1)
  ...
  push encode(i_4)
// push old pc
  move r_t1 pc
  lea r_t1 ... // ... is the offset to after
  push r_t1
// push stack pointer
  push r_stk
// set up protected return pointer
  move r_0 r_stk
  lea r_0 -5 // -5 is the offset to the first instruction of the activation code
  restrict r_0 encodePermPair((Local,e))
// restrict stack capability
  geta r_t1 r_stk
  plus r_t1 r_t1 1
  getb r_t2 r_stk
  subseg r_stk r_t1 r_t2
// clear unused part of the stack
  mclear r_stk
// clear non-argument registers
  rclear R // where R = RegisterName - {pc,r_stk,r_0,r,r_args}
  jmp r
after:
// pop the restore code
  pop r_t1
  pop r_t1
  pop r_t1
  pop r_t1
// pop the private state into approriate registers
  pop r_priv,1
  ...
  pop r_priv,n
  
\end{lstlisting}
where the restore code is as follows:
\begin{lstlisting}
 i_1 = move r_t1 pc
 i_2 = lea r_t1 5 // 5 is the offset to the address where the old stack pointer is located
 i_3 = load r_stk r_t1
 i_4 = pop pc
\end{lstlisting}
Note:
\begin{itemize}
\item If we want to have local capabilities as part of our private state, then we need to have a stack and use \texttt{scall}. If we do not have any local capabilities we want to keep around, then we can use \texttt{call}, but it will incur a small memory leak as the activation records cannot be recycled! It is also possible to use a combination of \texttt{scall} and \texttt{call}, but when \texttt{call} is used, then we have no way to store the stack, so we cannot use \texttt{scall} after that.
\item As a rule of thumb: If you have provided an untrusted entity access to part of the stack, then it needs to be cleared before it is passed to an untrusted party.
\item As a rule of thumb: If you receive a stack from an untrusted source, then you need to check that it is a local $\rwlx$-capability and clear it! If any callbacks are provided, then they need to be global.
\end{itemize}
\begin{description}
\item[\texttt{crtcls $[(x_1,r_1),\dots(x_n,r_n)]$ $r_{\var{code}}$}] \forcenewline
  $[(x_1,r_1),\dots(x_n,r_n)]$ is a list of variable bindings. If an instruction refers to a variable, then it will assume that an environment is available in a designated register (say $r_{\var{env}}$). The register $r_{\var{code}}$ should contain a capability governs the code of the closure and that is executable when jumped to.
  \begin{description}
  \item[Allocate memory for variable environment]
  \item[Store register contents to environment]
  \item[Allocate memory for record with environment capability, code capability, and activation code]
  \item[Store capabilities and activation code to record]
  \item[Restrict the capability for the ``closure pair'' to an enter capability]
  \item[Activation code:] \forcenewline
    \begin{itemize}
    \item Load the environment capability to a designated register
    \item Load the code capability.
    \item Jump to the code.
    \end{itemize}
  \end{description}
  A more detailed description of each step:
  \begin{description}
  \item[Allocate memory for variable environment] Have malloc allocate a piece of memory of size $n$ (the size of the variable environment). 
  \item[Store register contents to environment] Store the contents of each of the registers $r_1,\dots,r_n$ to the newly allocated memory.
  \item[Allocate memory for record with environment capability, code capability, and activation code] Allocate a new piece of memory with room for a capability for the environment.
  \item[Store capabilities and activation code to record] Store the environment capability and code capability in the record followed by the activation code. 
  \item[Restrict the capability for the ``closure pair'' to an enter capability] Adjust the capability to point to the start of the activation code and restrict it to a global enter-capability.
  \item[Activation code:] \forcenewline
    \begin{itemize}
    \item Load the environment capability to a designated register.
    \item Load the code capability.
    \item Jump to the code.
    \end{itemize}
  \end{description}
Possible implementation of \texttt{crtcls $\overline{(x,r_v)}$ $r_{\var{code}}$} where $|\overline{(x,r_v)}| = n$ ($i_1$,...,$i_6$, i.e. the activation code, is defined later):
\begin{lstlisting}
malloc r_t1 n
store r_t1 r_v1
lea r_t1 1
store r_t1 r_v2
lea r_t1 1
...
lea r_t1 1
store r_t1 r_vn
lea r_t1 -n
restrict r_t1 encodePermPair((Global,rw))
malloc r_1 8 //length of activation record
store r_1 r_code // code capability
lea r_1 1
store r_1 r_t1 // environment capability
move r_t1 0
lea r_1 1
store r_1 encode(i_1)
lea r_1 1
store r_1 encode(i_2)
lea r_1 1
...
lea r_1 1
store r_1 encode(i_6)
lea r_1 -5 //offset to first instruction
restrict r_1 encodePerm(e)
\end{lstlisting}
Activation code ($i_1$,...,$i_6$):
\begin{lstlisting}
i_1 = move r_t1 pc
i_2 = lea r_t1 -2
i_3 = load r_env r_t1
i_4 = lea r_t1 1
i_5 = load r_t1 r_t1
i_6 = jmp r_t1
\end{lstlisting}

\item[\texttt{load $r$ $x$}] Assumes environment capability available in register $r_{\var{env}}$. Loads the word at the index associated with $x$ in the environment list. Loads from this capability into $r$.\\
Possible implementation:
\begin{lstlisting}
  move r_t1 r_env
  lea r_t1 ... // ... corresponds to offset of x in environment
  load r r_t1
  move r_t1 0
\end{lstlisting}
\item[\texttt{store $x$ $r$}] Assumes environment capability available in register $r_{\var{env}}$. Loads the word at the index associated with $x$ in the environment list. Stores the contents of register $r$ through this capability.
\begin{lstlisting}
  move r_t1 r_env
  lea r_t1 ... // ... corresponds to offset of x in environment
  store r_t1 r
  move r_t1 0
\end{lstlisting}
\item[\texttt{reqglob $r$}] Tests if register $r$ contains a $\glob$ capability. If not fail, otherwise continue execution.\\
Possible implementation:
\begin{lstlisting}
  getl r_t1 r
  minus r_t1 r_t1 encodeLoc(Global)
  move r_t2 pc
  lea r_t2 4 // 4 is the offset to just after fail
  jnz r_t1 r_t2
  fail
  move r_t1 0
  move r_t2 0
\end{lstlisting}
\item[\texttt{reqperm $r$ $n$}] Tests if register $r$ contains a capability with permission $\decodePerm{n}$. If not fail, otherwise continue execution.\\
Possible implementation:
\begin{lstlisting}
  getp r_t1 r
  minus r_t1 r_t1 n
  move r_t2 pc
  lea r_t2 4 // 4 is the offset to just after fail
  jnz r_t1 r_t2
  fail
  move r_t1 0
  move r_t2 0
\end{lstlisting}

\item[\texttt{prepstack} $r$] Tests if register $r$ contains a capability with permission $\rwlx$. If not fail, otherwise assume $r$ points to $((\rwlx,\gl),\start,\addrend,\addr)$ adjust it to $((\rwlx,\gl),\start,\addrend,\start - 1)$.\\
Possible implementation
\begin{lstlisting}
  reqperm r encodePerm(rwlx)
  getb r_t1 r
  geta r_t2 r
  minus r_t1 r_t1 r_t2
  lea r r_t1
  minus r_t1 0 1
  lea r r_t1
  move r_t1 0
  move r_t2 0
\end{lstlisting}
  \lau{What if the stack starts at address 0? Changing the stack convention to always point at the first free cell will get rid of this problem (because our memory us uncapped.)}\lau{Seeing as we did not change this: It does not give a problem as such because this would cause the machine to fail which is considered acceptable. It would also be possible to make a more ``sofisticated'' implementation that special cases on this and simply throws the first address away.}
\end{description}
Note:
\begin{itemize}
\item In a real setting due to a limited number of registers, some of the arguments might be spilled to the stack. It would be possible to do something similar here, but to keep
  matters simple, we opt not to do so.
\item \texttt{reqperm} can be used to test whether something can pass as a stack.
\item \texttt{reqglob} can be used to test whether a callback is admissible in the presence of a stack.
\item The code of a closure will often be found in conjunction with the code that creates it.
\item \texttt{prepstack} as ``prepare stack''. This ensures that the register contains something that looks like a stack and it is prepared for our stack convention.
\end{itemize}
\begin{figure}
  \label{fig:stack-before-call}
  \centering
  \begin{tabular}[!h]{r | >{\raggedright\arraybackslash}p{3cm} |}
    \multicolumn{2}{l}{Stack} \\
    \cline{2-2}
 & \\
 & $\vdots$\\
    \cline{2-2}
 & 0 \\
    \cline{2-2}
    $c_{\var{stk}} \rightarrow$   & local stack\\
 & $\vdots$\\
    \cline{2-2}
  \end{tabular}
  \hspace{1cm}
  \begin{tabular}{r | >{\centering\arraybackslash}p{0.75cm} |}
    \multicolumn{2}{r}{Register file} \\
    \cline{2-2}
    $\pcreg$ & $c_{\pc}$\\
    \cline{2-2}
    $r_0$  & $c_0$ \\
    \cline{2-2}
    $r_{\var{stk}}$  & $c_{\var{stk}}$ \\
    \cline{2-2}
    $r_{\var{args},1}$ & $w_{a,1}$ \\
    \cline{2-2}
             & $\vdots$ \\
    \cline{2-2}
    $r_{\var{args},n}$ & $w_{a,n}$\\
    \cline{2-2}
    $r_{\var{priv},1}$ & $w_{p,1}$\\
    \cline{2-2}
             & $\vdots$ \\
    \cline{2-2}
    $r_{\var{priv},m}$ & $w_{p,m}$\\
    \cline{2-2}
             & $\vdots$ \\
    \cline{2-2}
  \end{tabular}
  \caption{This is the first figure of 6 that illustrates how \texttt{scall} works. In this example, the call \texttt{scall $r([r_{\var{args},1},\dots,r_{\var{args},n}],[r_0,r_{\var{priv},1},\dots,r_{\var{priv},m}])$. In this example the two lists of registers are disjoint even though that does not have to be the case.}}
\end{figure}

\begin{figure}
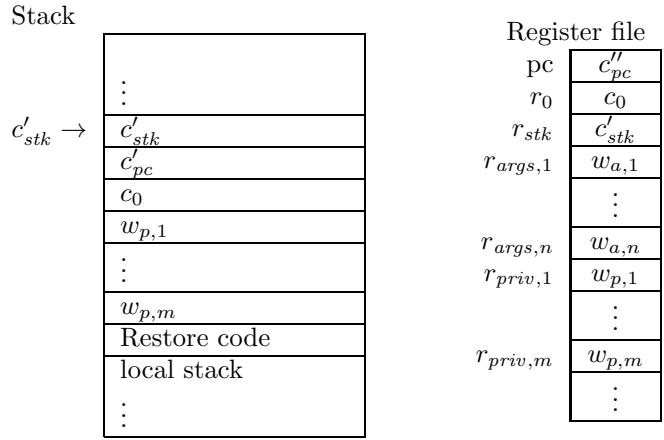

  \label{fig:stack-after-push}
  \centering
  \begin{tabular}[!h]{r | >{\raggedright\arraybackslash}p{3cm} |}
    \multicolumn{2}{l}{Stack} \\
    \cline{2-2}
 & \\
 & $\vdots$\\
    \cline{2-2}
    $c_{\var{stk}}' \rightarrow$  & $c_{\var{stk}}'$ \\
    \cline{2-2}
 & $c_\pc'$ \\
    \cline{2-2}
 & $c_0$ \\
    \cline{2-2}
 & $w_{p,1}$ \\
    \cline{2-2}
 & $\vdots$ \\
    \cline{2-2}
 & $w_{p,m}$ \\
    \cline{2-2}
 & Restore code \\
    \cline{2-2}
 & local stack\\
 & $\vdots$ \\
    \cline{2-2}
  \end{tabular}
  \hspace{1cm}
  \begin{tabular}{r | >{\centering\arraybackslash}p{0.75cm} |}
    \multicolumn{2}{r}{Register file} \\
    \cline{2-2}
    $\pcreg$ & $c_{\pc}''$\\
    \cline{2-2}
    $r_0$  & $c_0$ \\
    \cline{2-2}
    $r_{\var{stk}}$  & $c_{\var{stk}}'$ \\
    \cline{2-2}
    $r_{\var{args},1}$ & $w_{a,1}$ \\
    \cline{2-2}
             & $\vdots$ \\
    \cline{2-2}
    $r_{\var{args},n}$ & $w_{a,n}$\\
    \cline{2-2}
    $r_{\var{priv},1}$ & $w_{p,1}$\\
    \cline{2-2}
             & $\vdots$ \\
    \cline{2-2}
    $r_{\var{priv},m}$ & $w_{p,m}$\\
    \cline{2-2}
             & $\vdots$ \\
    \cline{2-2}
  \end{tabular}
  \caption{Stack and register-file after the restore code, ``private'' registers (remember $r_0$ is here private.), return address ($c_\pc'$), and stack capability ($c_{\var{stk}}'$) have been pushed to the stack.}
\end{figure}

\begin{figure}
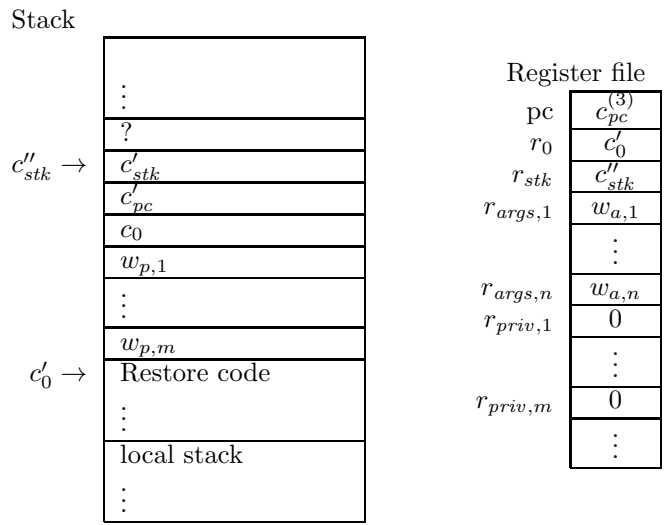

  \label{fig:stack-after-restrict-and-zero}
  \centering
  \begin{tabular}[!h]{r | >{\raggedright\arraybackslash}p{3cm} |}
    \multicolumn{2}{l}{Stack} \\
    \cline{2-2}
 & \\
 & $\vdots$\\
    \cline{2-2}
 & $?$\\
    \cline{2-2}
    $c_{\var{stk}}'' \rightarrow$  & $c_{\var{stk}}'$ \\
    \cline{2-2}
 & $c_\pc'$ \\
    \cline{2-2}
 & $c_0$ \\
    \cline{2-2}
 & $w_{p,1}$ \\
    \cline{2-2}
 & $\vdots$ \\
    \cline{2-2}
 & $w_{p,m}$ \\
    \cline{2-2}
    $c_0' \rightarrow$   & Restore code \\
 & $\vdots$\\
    \cline{2-2}
 & local stack\\
 & $\vdots$\\
    \cline{2-2}
  \end{tabular}
  \hspace{1cm}
  \begin{tabular}{r | >{\centering\arraybackslash}p{0.75cm} |}
    \multicolumn{2}{r}{Register file} \\
    \cline{2-2}
    $\pcreg$ & $c_{\pc}^{(3)}$\\
    \cline{2-2}
    $r_0$  & $c_0'$ \\
    \cline{2-2}
    $r_{\var{stk}}$  & $c_{\var{stk}}''$ \\
    \cline{2-2}
    $r_{\var{args},1}$ & $w_{a,1}$ \\
    \cline{2-2}
             & $\vdots$ \\
    \cline{2-2}
    $r_{\var{args},n}$ & $w_{a,n}$\\
    \cline{2-2}
    $r_{\var{priv},1}$ & 0\\
    \cline{2-2}
             & $\vdots$ \\
    \cline{2-2}
    $r_{\var{priv},m}$ & 0 \\
    \cline{2-2}
             & $\vdots$ \\
    \cline{2-2}
  \end{tabular}
  \caption{ Stack and register-file after the $c_{\var{stk}}'$ has been limited to only give authority over the empty part of the stack (the new capability is $c_{\var{stk}}''$). The empty part of the stack has been cleared. $c_0'$ is made from $c_{\var{stk}}'$ by setting it to point to the restore code and restricting it to a local enter-capability. The ``private'' registers have been cleared.}
\end{figure}

\begin{figure}
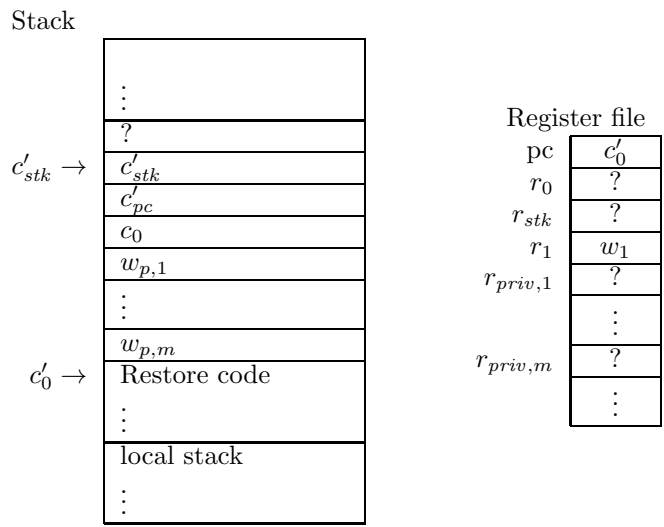

  \label{fig:stack-upon-return}
  \centering
  \begin{tabular}[!h]{r | >{\raggedright\arraybackslash}p{3cm} |}
    \multicolumn{2}{l}{Stack} \\
    \cline{2-2}
 & \\
 & $\vdots$\\
    \cline{2-2} 
 & $?$\\
    \cline{2-2} 
    $c_{\var{stk}}' \rightarrow$  & $c_{\var{stk}}'$ \\
    \cline{2-2}
 & $c_\pc'$ \\
    \cline{2-2}
 & $c_0$ \\
    \cline{2-2}
 & $w_{p,1}$ \\
    \cline{2-2}
 & $\vdots$ \\
    \cline{2-2}
 & $w_{p,m}$ \\
    \cline{2-2}
    $c_0' \rightarrow$   & Restore code \\
 & $\vdots$\\
    \cline{2-2}
 & local stack\\
 & $\vdots$\\
    \cline{2-2}
  \end{tabular}
  \hspace{1cm}
  \begin{tabular}{r | >{\centering\arraybackslash}p{0.75cm} |}
    \multicolumn{2}{r}{Register file} \\
    \cline{2-2}
    $\pcreg$ & $c_0'$\\
    \cline{2-2}
    $r_0$  &  ? \\
    \cline{2-2}
    $r_{\var{stk}}$  & ? \\
    \cline{2-2}
    $r_1$ & $w_1$ \\
    \cline{2-2}
    $r_{\var{priv},1}$ & ?\\
    \cline{2-2}
             & $\vdots$ \\
    \cline{2-2}
    $r_{\var{priv},m}$ & ? \\
    \cline{2-2}
             & $\vdots$ \\
    \cline{2-2}
  \end{tabular}
  \caption{ Stack and register-file upon return from $f$. At this point we have no idea what is in the register-file apart from the $\pcreg$ which we know points to the restore code. The contents of the stack we released access to is also unknown. (Notice that we have changed the order of the registers as we are no longer interested in the argument registers. By convention we expect a return value to be in $r_1$, which is why we have named that word, but the words in the remaining non-special-purpose registers could also be considered return values.)}
\end{figure}

\begin{figure}
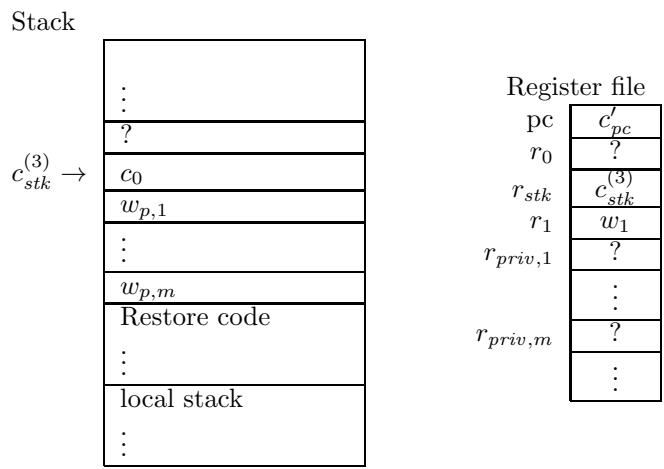

  \label{fig:stack-after-restore-code}
  \centering
  \begin{tabular}[!h]{r | >{\raggedright\arraybackslash}p{3cm} |}
    \multicolumn{2}{l}{Stack} \\
    \cline{2-2}
 & \\
 & $\vdots$\\
    \cline{2-2}
 & ? \\
    \cline{2-2}
    $c_{\var{stk}}^{(3)} \rightarrow$  & $c_0$ \\
    \cline{2-2}
 & $w_{p,1}$ \\
    \cline{2-2}
 & $\vdots$ \\
    \cline{2-2}
 & $w_{p,m}$ \\
    \cline{2-2}
 & Restore code \\
 & $\vdots$\\
    \cline{2-2}
 & local stack\\
 & $\vdots$\\
    \cline{2-2}
  \end{tabular}
  \hspace{1cm}
  \begin{tabular}{r | >{\centering\arraybackslash}p{0.75cm} |}
    \multicolumn{2}{r}{Register file} \\
    \cline{2-2}
    $\pcreg$ & $c_\pc'$\\
    \cline{2-2}
    $r_0$  &  ? \\
    \cline{2-2}
    $r_{\var{stk}}$  & $c_{\var{stk}}^{(3)}$ \\
    \cline{2-2}
    $r_1$ & $w_1$ \\
    \cline{2-2}
    $r_{\var{priv},1}$ & ?\\
    \cline{2-2}
             & $\vdots$ \\
    \cline{2-2}
    $r_{\var{priv},m}$ & ? \\
    \cline{2-2}
             & $\vdots$ \\
    \cline{2-2}
  \end{tabular}
  \caption{ Stack and register-file after executing the restore code. The old stack capability has been restored and the  $\pcreg$-register now points to the return address in memory. }
\end{figure}

\begin{figure}
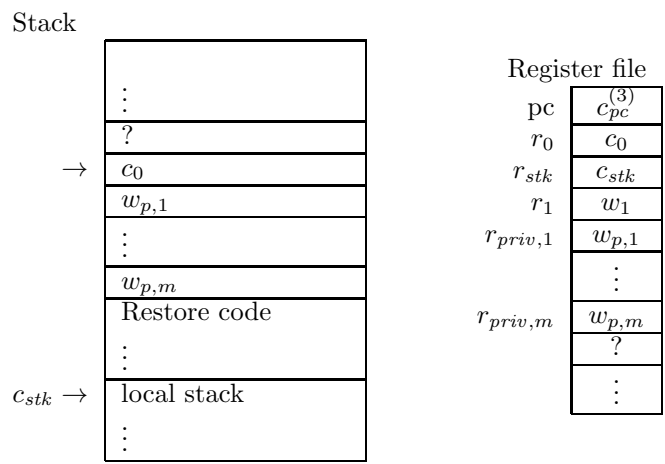

  \label{fig:stack-after-restore-code}
  \centering
  \begin{tabular}[!h]{r | >{\raggedright\arraybackslash}p{3cm} |}
    \multicolumn{2}{l}{Stack} \\
    \cline{2-2}
 & \\
 & $\vdots$\\
    \cline{2-2}
 & ? \\
    \cline{2-2}
    $ \rightarrow$  & $c_0$ \\
    \cline{2-2}
 & $w_{p,1}$ \\
    \cline{2-2}
 & $\vdots$ \\
    \cline{2-2}
 & $w_{p,m}$ \\
    \cline{2-2}
 & Restore code \\
 & $\vdots$\\
    \cline{2-2}
    $c_\stk \rightarrow$  & local stack\\
 & $\vdots$\\
    \cline{2-2}
  \end{tabular}
  \hspace{1cm}
  \begin{tabular}{r | >{\centering\arraybackslash}p{0.75cm} |}
    \multicolumn{2}{r}{Register file} \\
    \cline{2-2}
    $\pcreg$ & $c_\pc^{(3)}$\\
    \cline{2-2}
    $r_0$  &  $c_0$ \\
    \cline{2-2}
    $r_{\var{stk}}$  & $c_{\var{stk}}$ \\
    \cline{2-2}
    $r_1$ & $w_1$ \\
    \cline{2-2}
    $r_{\var{priv},1}$ & $w_{p,1}$\\
    \cline{2-2}
             & $\vdots$ \\
    \cline{2-2}
    $r_{\var{priv},m}$ & $w_{p,m}$ \\
    \cline{2-2}
             & $?$ \\
    \cline{2-2}
             & $\vdots$ \\
    \cline{2-2}
  \end{tabular}
  \caption{ Stack and register-register file after the clean up code has been run. The ``private'' words have been popped to their respective registers. The restore code has been popped off the stack. }
\end{figure}

\subsection{Labels}
\texttt{l:} is a meta level label that can be used to refer to a specific address. When placed on the line of a macro, it refers to the first instruction of this macro.
\clearpage
\section{Examples}
\label{sec:examples}


\subsection{Encapsulation of Local State}
Assembly program not using stack. Assume that $\mathtt{r_l} \not\in \{\pcreg,r_0 \}$ is a register.
\begin{verbatim}
f1: malloc r_l 1
    store r_l 1
    fetch r_adv adv
    call r_adv([],[r_l])
    assert r_l 1
1f: halt
\end{verbatim}
For \texttt{f1} to work, its local state needs to be encapsulated. 
\begin{lemma}[Correctness lemma for \texttt{f1}] \forcenewline
  \label{lem:correctness-f1}
  For all $n \in \nats$
  let
  \begin{align*}
    c_{\var{adv}} & \defeq ((\entry,\glob),\start_{\adv},\addrend_{\adv},\start_{\adv}+\olf) \\
    c_{f1} & \defeq ((\rwx,\glob),\mathtt{f1}-\olf,\mathtt{1f},\mathtt{f1}) \\
    c_\malloc & \defeq ((\entry,\glob),\start_\malloc,\addrend_\malloc,\start_\malloc+\olf) \\
    m & \defeq \hs_{f1} \uplus 
        \hs_\flag \uplus                
        \ms_{\var{link}} \uplus 
        \hs_\adv \uplus 
        \ms_{\malloc} \uplus 
        \hs_{\var{frame}} 
  \end{align*}
  and
  \begin{itemize}
  \item $c_\malloc$ satisfies the specification for malloc and $\iota_{\malloc,0}$ is the region from the specification.
  \end{itemize}
  where 
  \begin{align*}
    &\dom(\hs_{f1}) = [\mathtt{f1}-\olf,\mathtt{1f}] \\
    &\dom(\hs_\flag) = [\flag,\flag] \\
    &\dom(\ms_\link) = [\link,\link+1]\\
    &\dom(\hs_{\adv}) = [\start_\adv,\addrend_\adv] \\
    &\heapSat[\hs_{\malloc}]{n}{[0 \mapsto \iota_{\malloc,0}]}
  \end{align*}
  and
  \begin{itemize}
  \item $\ms_{f1}(\mathtt{f1}-\olf) = ((\readonly,\glob),\link,\link+1,\link)$, $\ms_{f1}(\mathtt{f1}-\olf+1) = ((\readwrite,\glob),\flag,\flag,\flag)$, the rest of $\hs_{f1}$ contains the code of $f1$.
  \item $\ms_\flag = [\flag \mapsto 0]$
  \item $\ms_{\var{link}} = [\var{link} \mapsto c_\malloc, \var{link} + 1 \mapsto c_\adv]$
  \item $\hs_\adv$ contains a global read-only capability for $\hs_\link$ on its first address. The remaining cells of the memory segment only contain instructions.
  \end{itemize}
  if 
  \[
    (\reg\update{\pcreg}{c_{f1}},m) \step[n] (\halted,m'),
  \]
  then
  \[
    m'(\flag) = 0
  \]  
\end{lemma}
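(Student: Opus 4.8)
The plan is to reason with the Kripke logical relation that underlies the whole development, instantiating it with a world $W$ that encodes every invariant of the run, and then to combine a forward symbolic execution of the trusted parts of \texttt{f1} with an appeal to the safety of the adversary at the moment of the \texttt{call}. Concretely, I would build $W$ out of: a static region for the code segment $\hs_{f1}$, a static region for the adversary segment $\hs_\adv$, a read-only region for the linking table $\ms_\link$, the malloc region $\iota_{\malloc,0}$ supplied by Specification~\ref{spec:malloc}, a region pinning the flag cell $\flag$ to $0$, and --- the crucial one --- a permanent region \emph{owned} by \texttt{f1} whose invariant states that the freshly allocated cell holds the integer $1$ and that the activation record built by \texttt{call} holds a faithful copy of $r_l$ and of the trusted return address. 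The adversary capability $c_\adv$ is a \emph{global} $\entry$-capability into a segment containing only instructions plus one read-only link capability, so it lies in the value relation $\valueintr{\asmType}$ at $W$; this is what the Fundamental Theorem of the logical relation provides, and it is what lets the adversary be treated as an arbitrary safe program.

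First I would symbolically execute the trusted prefix. Using the malloc specification (Specification~\ref{spec:malloc}) to reason about the \texttt{malloc} macro's call into $c_\malloc$, after \texttt{malloc r\_l 1} the register $r_l$ holds $((\rwx,\glob),b,e,b)$ with $e-b = 0$ over a fresh cell disjoint from everything else, the memory splits as footprint $\uplus$ frame with the malloc region advanced to a $\futurewk$-future $\iota_\malloc'$, and the step budget decreases as promised. Then \texttt{store r\_l 1} writes $1$ into that cell --- the write is permitted since $\rwx$ grants write and the stored word is an integer, so the permit-write-local side condition is vacuous --- establishing the invariant of the private region. Next \texttt{fetch r\_adv adv} loads $c_\adv$ from the link table into $r_{\adv}$. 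Finally the \texttt{call} macro mallocs the activation record, stores $r_l$, the trusted return pointer and the activation code into it, forms the \emph{local} $\entry$-capability $c_0$ pointing at the activation code, places it in $r_0$, clears every other register (in particular $r_l$ itself and all temporaries) with \texttt{rclear}, and jumps to $r_{\adv}$.

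The call itself is discharged by showing that the configuration reached at \texttt{jmp r\_adv} lies in the expression relation $\exprintr{\asmType}$ at the current world. By the semantics of $\jmp{}$ and the definition of $\exprintr{\asmType}$ this reduces to two obligations: that $c_\adv \in \valueintr{\asmType}$ (already noted) and that the register file lies in the register relation $\regintr{\asmType}$. Every register other than $r_0$ holds $0$ and is trivially safe, so the content of the argument is that $c_0 \in \valueintr{\asmType}$. Since $c_0$ is an $\entry$-capability, its membership unfolds to a statement about jumping back to it at an arbitrary future world $W''$: I must show that executing the activation restore code, followed by the check \texttt{assert r\_l 1} and \texttt{halt}, is a safe continuation from any such $W''$. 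Here I would symbolically execute the restore code, which reloads $r_l$ from the activation record and the return address into $\pcreg$; because the private region is preserved in $W''$ (see below), the protected state that \texttt{f1} reads back is exactly what it stored, so \texttt{assert} succeeds and never branches into the code path that sets the flag, and control reaches \texttt{1f:\ halt} with the flag region still asserting $0$. Anti-reduction closes this part.

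The main obstacle is justifying that the private region survives the adversary's arbitrary execution, i.e.\ that in every future world the adversary can produce both the protected cell and the activation record are untouched. This is an encapsulation argument about reachability: the only capabilities the adversary ever holds are $c_\adv$ (read/execute over its own code), the read-only link table, the $\rwx$ capabilities it may obtain from \texttt{malloc} (fresh, hence disjoint from \texttt{f1}'s cell and from the activation record), and the local $\entry$ return pointer $c_0$ (which grants no read or write until jumped to, and even then only to the trusted restore code). Thus the adversary can neither read nor write \texttt{f1}'s cell nor the activation record, and --- since the private region is permanent and owned by \texttt{f1} and never transferred --- its invariant is carried unchanged along the future-world relation the value relation forces the adversary to respect. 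A secondary point, needed so the adversary cannot stash and later misuse $c_0$, is that $c_0$ is \emph{local}: the only writable capabilities the adversary can acquire are the $\rwx$ outputs of \texttt{malloc}, which by the $\store{}{}$ side condition cannot hold local capabilities, so $c_0$ can never be committed to memory. Formally these are the monotonicity and ownership properties of the regions in $W$. Once they are in place, combining the bounded-step guarantee of Specification~\ref{spec:malloc}, the safety obtained from $\exprintr{\asmType}$, and Determinacy (Lemma~\ref{lem:determinacy}) shows that the unique halting trace reaches the final \texttt{halt} with $m'(\flag)=0$.
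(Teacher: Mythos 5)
Your proposal follows essentially the same route as the paper's proof: symbolic execution of the trusted prefix through the two mallocs via the malloc specification, the FTLR applied to $c_\adv$ in a world of permanent regions, membership of the protected return pointer in the value relation established by running the activation/assert/halt continuation in an arbitrary future world, and preservation of the flag via a permanent region. The only differences are cosmetic ones of region granularity --- the paper lumps $\ms_{f1}$, $\ms_{\var{ar}}$, $\ms_l$ and $\ms_\flag$ into a single static $\iota^\sta$ region and, importantly, governs the adversary segment by $\iota^{\nwl,p}$ rather than a plain static region, since the read condition required to invoke the FTLR on $c_\adv$ needs a region that is an $n$-subset of $\iota^\pwl$, which $\iota^\sta$ is not.
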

\begin{proof}[Proof of Lemma~\ref{lem:correctness-f1}]
  Let $n$ be given and assume the premises in the lemma.  Consider the following part of the execution:
  \[
    (\reg\update{\pcreg}{c_{f1}},m) \step[i] (\reg_0\update{\pcreg}{c_\malloc}\update{r_0}{c_{f1}'}\update{r_1}{1},m)
  \]
  Where $c_{f1}'$ is the return address. Use the malloc specification with
  \begin{align*}
    \iota_\malloc &= \iota_{\malloc,0} \\
    \ms_{\var{footprint}} & = \ms_\malloc \\
    \memreg(r_1) & = \size = 1
  \end{align*}
  to get 
  \[
    (\reg_0\update{\pcreg}{c_\malloc}\update{r_0}{c_{f1}'}\update{r_1}{1},m) \step[j] (\reg_0\update{\pcreg}{c_{f1}'}\update{r_0}{c_{f1}'}\update{r_1}{c_l},m')
  \]
  for some $j$ where for some $\iota_{\malloc}' \futurewk \iota_{\malloc,0}$
  \begin{enumerate}
  \item $m' = \hs_{f1} \uplus 
    \hs_\flag \uplus                
    \ms_{\var{link}} \uplus 
    \hs_\adv \uplus 
    \ms_l \uplus
    \ms_{\malloc}' \uplus 
    \hs_{\var{frame}} $
  \item $\heapSat[\ms_{\malloc}']{n-j}{[0 \mapsto \iota_{\malloc}]}$ \label{f1:mallocsat}
  \item $\dom(\hs_l) = [l,l]$
  \item $c_l = ((\rwx,\glob),l,l,l)$ \label{test}
  \item $\ms_l(l) = 0$
  \end{enumerate}
  Continue the execution to the next malloc hidden in \texttt{call}.
  \[
    (\reg_0\update{\pcreg}{c_{f1}'}\update{r_0}{c_{f1}'}\update{r_1}{c_l},m')
    \step[k]
    (\reg_0\update{\pcreg}{c_\malloc}\update{r_0}{c_{f1}''}\update{r_1}{\var{len}_{\var{ar}}}\update{r_l}{c_l},m'')
  \]
  where
  \begin{enumerate}[resume]
  \item $m'' = m'[l\mapsto 1]$
  \end{enumerate}
  Use the malloc specification notice:
  \begin{itemize}
  \item $\var{len}_{\var{ar}}$ is the needed size for the activation record. 
  \item \ref{f1:mallocsat}.\ and (downwards closure) gives us the needed memory segment satisfaction.
  \item $\ms_{\var{footprint}} =  \ms_\malloc' $
  \end{itemize}
  Get:
  \[
    (\reg_0\update{\pcreg}{c_\malloc}\update{r_0}{c_{f1}''}\update{r_1}{\var{len}_{\var{ar}}}\update{r_l}{c_l},m'')
    \step[j']
    (\reg_0\update{\pcreg}{c_{f1}''}\update{r_0}{c_{f1}''}\update{r_1}{c_{\var{ar}}}\update{r_l}{c_l},m^{(3)})
  \]
  for some $j'$ where for some $[0 \mapsto \iota_{\malloc}'] \futurewk [0 \mapsto \iota_{\malloc}]$
  \begin{enumerate}[resume]
  \item $m'' = \hs_{f1} \uplus 
    \hs_\flag \uplus                
    \ms_{\var{link}} \uplus 
    \hs_\adv \uplus 
    \ms_{l} \uplus
    \ms_{\var{ar}} \uplus
    \ms_{\malloc}'' \uplus 
    \hs_{\var{frame}} $
  \item $\heapSat[\ms_{\malloc}'']{n-j-j'}{[0 \mapsto \iota_{\malloc}']}$ \label{f1:mallocsat}
  \item $\dom(\hs_{\var{ar}}) = [b,e]$, and $e-b = \var{len}_{\var{ar}}$
  \item $c_l = ((\rwx,\glob),b,e,b)$ \label{test}
  \item $\forall a \in [b,e] \ldotp \ms_{\var{ar}}(a) = 0$
  \end{enumerate}
  Continue execution until just after the jump to $\var{adv}$.
  \[
    (\reg_0\update{\pcreg}{c_{f1}''}\update{r_0}{c_{f1}''}\update{r_1}{c_{\var{ar}}}\update{r_l}{c_l},m^{(3)})
    \step[k']
    (\reg_0\update{\pcreg}{\updatePcPerm{c_{\var{adv}}}}\update{r_1}{c_{\var{adv}}}\update{r_0}{c_{ar}'},m^{(3)})
  \]
  for some $k'$ where
  \begin{itemize}
  \item $m^{(3)} = \hs_{f1} \uplus 
    \hs_\flag \uplus                
    \ms_{\var{link}} \uplus 
    \hs_\adv \uplus 
    \ms_{l} \uplus
    \ms_{\var{ar}}' \uplus
    \ms_{\malloc}'' \uplus 
    \hs_{\var{frame}} $
  \item $\ms_{\var{ar}}'$ contains the activation record, i.e., $c_l$, $c_{f1}^{(3)}$ (the return address in f1), and activation code.
  \item $c_{\var{ar}}' = ((\entry,\local)b,e,b+\var{offset})$ where $b+\var{offset}$ is the first address of the activation code.
  \end{itemize}
  Define
  \begin{itemize}
  \item $W = [0 \mapsto \iota_\malloc']
    [1 \mapsto \iota^{\nwl,p}_{\start_\adv,\addrend_\adv}]
    [2 \mapsto \iota^\sta (\perm,\ms_{f1} \uplus \ms_{\var{ar}} \uplus \ms_l \uplus \ms_\flag)]
    [3 \mapsto \iota^{\sta,u}(\perm,\ms_\link)]$
  \end{itemize}
  define
  \begin{enumerate}
  \item $\ms = \hs_{f1} \uplus 
    \hs_\flag \uplus                
    \ms_{\var{link}} \uplus 
    \hs_\adv \uplus 
    \ms_{l} \uplus
    \ms_{\var{ar}}' \uplus
    \ms_{\malloc}'' $ \label{f1:ms-disj-union}
  \end{enumerate}
  Use the FTLR on $\updatePcPerm{c_{\var{adv}}}$ using world $W$, so show
  \begin{itemize}
  \item $\npair{(\start_{\adv},\addrend_{\adv})} \in \readCond{}(\glob)(W)$
    \begin{itemize}
    \item Show: $\iota^{\nwl,p}_{\start_\adv,\addrend_\adv} \nsubsim[n] \iota^\pwl_{\start_\adv,\addrend_\adv}$: Follows from Lemma~\ref{lem:nwlp-subset-pwl}.
    \end{itemize}
  \end{itemize}
  Have
  \begin{enumerate}[resume]
  \item $\npair{\updatePcPerm{c_{\var{adv}}}} \in \stder(W)$
  \end{enumerate}
  Let $n' = n - j - j'-k-k'$ and show
  \begin{enumproof}[resume]
  \item $\heapSat[\ms]{n'}{W}$
    \begin{enumproof}
    \item Split the memory into the disjoint unions of \ref{f1:ms-disj-union} and show:
      \begin{enumproof}
      \item case: $\npair[n']{\ms_\malloc} \in \iota_\malloc'.H (\iota_\malloc'.s) (W)$ 
        \begin{enumproof}
        \item Use $\heapSat[\ms_\malloc]{n'}{[0 \mapsto \iota_\malloc']}$ with malloc specification context independence property.
        \end{enumproof}
      \item case: $\npair[n']{\ms_\adv} \in H^\nwl_{\start_\adv,\addrend_\adv} 1 W$ \label{lem:f1-adv-mem-sat}
        \begin{enumproof}
        \item Show $\forall a \in [\start_\adv,\addrend_\adv] \ldotp (\npair[n'-1]{\ms(a)} \in \stdvr(W) \land \nonlocal{\ms(a)})$
        \end{enumproof}
        \begin{enumproof}
        \item $a \neq \start_\adv$ : trivial, contains instruction only and they are non-local.
        \item $a = \start_\adv$: show $((\readonly,\glob),\link,\link+1,\link) \in \stdvr(W)$\\
          $\glob$ capabilities are non-local.\\
          SFTS $\iota^{\sta,u}(\perm,\ms_\link) \nsubsim[n'] \iota^\pwl_{\link,\link+1}$ which follows from Lemma~\ref{lem:stau-subset-pwl}.
        \end{enumproof}
      \item $\npair[n']{\ms_\link} \in H^{\sta,u}(1)(W)$:\\
        This boils down to showing:
        \begin{enumproof}
        \item $\npair[n'-1]{c_\malloc} \in \stdvr(W)$: Follows from Lemma~\ref{lem:malloc-in-vr}.

        \item $\npair[n'-1]{c_\adv} \in \stdvr(W)$: for $n'' < n'-1$ and $W'
          \futurestr W$ show: \\$\npair[n'']{\updatePcPerm{c_\adv}} \in
          \stder(W')$. Follows from Lemma~\ref{lem:safe-values-safe-invoke},
          together with Lemma~\ref{lem:stdvr-glob-priv-mono} and the fact that
          $c_\adv$ is non-local.  \label{f1:adv}
        \end{enumproof}

      \item The last case follows from Lemma~\ref{lem:mem-sat-static}
      \end{enumproof}
    \end{enumproof}
  \item $\npair[n']{\reg_0\update{\pcreg}{\updatePcPerm{c_{\var{adv}}}}\update{r_1}{c_{\var{adv}}}\update{r_0}{c_{ar}'}} \in \stdrr(W)$
    \begin{enumproof}
    \item case: $\npair[n']{c_{\var{adv}}} \in \stdvr(W)$
      \begin{enumproof}
      \item Similar to \ref{f1:adv}
      \end{enumproof}
    \item case: $\npair[n']{c_{\var{ar}}'} \in \stdvr(W)$.
      \begin{enumproof}
      \item Let $n'' < n'$ and $W' \futurewk W$ be given and show $\npair[n'']{\updatePcPerm{c_{\var{ar}}'}} \in \stder(W')$\\
        Let $n^{(3)} \leq n''$, $\heapSat[\ms']{n^{(3)}}{W'}$, and $\npair[n^{(3)}]{\reg}$ be given\\
        Show: $\npair[n^{(3)}]{(\reg\update{\pcreg}{\updatePcPerm{c_{\var{ar}}'}},\ms')} \in \observations(W')$\\
        Assume $(\reg\update{\pcreg}{\updatePcPerm{c_{\var{ar}}'}},\ms' \uplus \ms_{\var{frame}}) \step[k''] (\halted, m')$, for some $k'' \leq n^{(3)}$, $m'$ and $\ms_{\var{frame}}$. Due to $\heapSat[\ms']{n^{(3)}}{W'}$, $\ms_{f1}$, $\ms_\flag$, $\ms_{\var{ar}}'$, and $\ms_l$ are unchanged. \\
        The execution loads $c_l$ to $r_l$ and jumps to $c_{f1}^{(3)}$ (the point just before the assertion). As $\ms_l = 1$, the assertion is successful and the execution halts. In other words, there were no changes to the memory.\\
        Use $W'$, $\ms_r = \emptyset$, and $\ms'$ to get the desired result, i.e., $m' = \ms' \uplus \ms_{\var{frame}}$ and $\heapSat[\ms']{n^{(3)}-k''}{W'}$ (using downwards closure of memory satisfaction).          
      \end{enumproof}
    \item case: $\npair[n']{0} \in \stdvr(W)$ (the contents remaining registers)\\
      Trivial to show.
    \end{enumproof}
  \end{enumproof}
  Get
  \[
    \npair[n']{(\reg_0\update{\pcreg}{\updatePcPerm{c_{\var{adv}}}}\update{r_1}{c_{\var{adv}}}\update{r_0}{c_{ar}'},m^{(3)})} \in \observations(W)
  \]
  By initial assumption of the lemma, the execution halts. Use $\ms_{\var{frame}}$, $m'$ and the number of steps it takes to halt to get:
  $W' \futurestr W$, $\ms_r$ and $\ms'$ s.t. $m' = \ms_r \uplus \ms' \uplus \ms_{\var{frame}}$ and $\heapSat[\ms']{n}{W'}$. As $\iota_\flag$ is a permanent region, we know it is still in $W'$, so $m'(\flag) = 0$.
\end{proof}

\subsection{Encapsulation of Local State Using Local Capabilities and \texttt{scall}}
\label{subsec:example-loc-cap}

Assembly program using the stack. This program assumes a $r_\stk \not\in \{\pcreg,r_0\}$ register that contains a stack capability (a local $\rwlx$-capability):
\begin{verbatim}
f2: push 1
    fetch r1 adv
    scall r1([],[])
    pop r1
    assert r1 1
2f: halt
\end{verbatim}
              
\begin{lemma}[Correctness lemma for \texttt{f2}]
  \label{lem:correctness-f2}
  let
  \begin{align*}
    c_{\var{adv}} & \defeq ((\entry,\glob),\start_{\adv},\addrend_{\adv},\start_{\adv}+\olf) \\
    c_{f2} & \defeq ((\rwx,\glob),\mathtt{f2}-\olf,\mathtt{2f},\mathtt{f2}) \\
    c_\malloc & \defeq ((\entry,\glob),\start_\malloc,\addrend_\malloc,\start_\malloc+\olf) \\
    c_{\var{stk}} & \defeq ((\rwlx,\local),\start_\stk,\addrend_\stk,\start_\stk-1) \\
    c_\link & \defeq ((\readonly,\glob),\link,\link+1,\link)\\
    \reg & \in \Regs \\
    m & \defeq \hs_{f2} \uplus 
        \hs_\flag \uplus                
        \ms_{\var{link}} \uplus 
        \hs_\adv \uplus 
        \ms_{\malloc} \uplus 
        \ms_{\var{stk}} \uplus
        \ms_{\var{frame}} 
  \end{align*}
  and
  \begin{itemize}
  \item $c_\malloc$ satisfies the specification for malloc and $\iota_{\malloc,0}$ is the region from the specification.
  \end{itemize}
  where 
  \begin{align*}
    &\dom(\hs_{f2}) = [\mathtt{f2}-\olf,\mathtt{2f}] \\
    &\dom(\hs_\flag) = [\flag,\flag] \\
    &\dom(\ms_\link) = [\link,\link+1]\\
    &\dom(\ms_\stk) = [\start_\stk, \addrend_\stk]\\
    &\dom(\hs_{\adv}) = [\start_\adv,\addrend_\adv] \\
    &\heapSat[\hs_{\malloc}]{n}{[0 \mapsto \iota_{\malloc,0}]} \qquad \text{ for all $n \in \nats$}
  \end{align*}
  and
  \begin{itemize}
  \item $\ms_{f2}(\mathtt{f2}-\olf) = ((\readonly,\glob),\link,\link+1,\link)$, $\ms_{f2}(\mathtt{f2}-\olf+1) = ((\readwrite,\glob),\flag,\flag,\flag)$, the rest of $\hs_{f2}$ contains the code of $f2$.
  \item $\ms_\flag = [\flag \mapsto 0]$
  \item $\ms_{\var{link}} = [\var{link} \mapsto c_\malloc, \var{link} + 1 \mapsto c_\adv]$
  \item $\hs_\adv(\start_\adv) = c_\link$ and $\forall \addr \in [\start_\adv+1,\addrend]\ldotp \ms_\adv(a) \in \ints$
  \end{itemize}
  if 
  \[
    (\reg\update{\pcreg}{c_{f2}}\update{r_\stk}{c_\stk},m) \step[n] (\halted,m'),
  \]
  then
  \[
    m'(\flag) = 0
  \]  
\end{lemma}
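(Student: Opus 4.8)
The plan is to mirror the proof of Lemma~\ref{lem:correctness-f1}: run the concrete execution of \texttt{f2} up to the point where control is handed to the adversary, build a world $W$ capturing the invariants that must survive the call, apply the FTLR to the adversary capability so that the whole configuration lands in $\observations(W)$, and then feed the halting hypothesis through $\observations$ to conclude $m'(\flag)=0$. Concretely, I would first step through $\texttt{push } 1$, $\texttt{fetch } r_1\ \adv$, and the body of $\scall{r_1}{}{}$ until just after $\jmp{r_1}$. The first \texttt{push} stores $1$ at $\start_\stk$, and the \texttt{scall} prologue then pushes the restore code, the in-code return pointer, and a copy of the stack capability, so that the frame $[\start_\stk,\start_\stk+6]$ (value $1$, restore code, saved return pointer, saved stack capability) is exactly the part we must protect. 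At the jump the register file is $\pcreg \mapsto \updatePcPerm{c_\adv}$, $r_1 \mapsto c_\adv$, $r_\stk \mapsto ((\rwlx,\local),\start_\stk+7,\addrend_\stk,\start_\stk+6)$ (a \emph{local} capability for only the cleared, unused tail of the stack), $r_0$ a \emph{local} enter capability $((\entry,\local),\start_\stk,\addrend_\stk,\start_\stk+1)$ pointing at the on-stack restore code (the protected return pointer), and every other register $0$.

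Next I would construct the world
\[
  W = [0 \mapsto \iota_{\malloc,0}][1 \mapsto \iota^{\nwl,p}_{\start_\adv,\addrend_\adv}][2 \mapsto \iota^\sta(\perm,\ms_{f2}\uplus\ms_\flag\uplus\ms_{\mathrm{frame}})][3 \mapsto \iota^{\sta,u}(\perm,\ms_\link)][4 \mapsto \iota^{\pwl}_{\start_\stk+7,\addrend_\stk}],
\]
where $\ms_{\mathrm{frame}}$ is the protected frame $[\start_\stk,\start_\stk+6]$ and region $4$ --- the $\pwl$ region governing the tail handed to the adversary --- is kept \emph{temporary}. Its temporariness is the essential structural point distinguishing this proof from that of Lemma~\ref{lem:correctness-f1}: under any public future world $\futurewk$ (the only moves the adversary can make) region $4$ survives and the static frame region $2$ is untouched, whereas when we regain control we are entitled to a \emph{private} move $\futurestr$ that revokes region $4$ and reclaims the stack. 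I would then apply the FTLR to $\updatePcPerm{c_\adv}$ in $W$, discharging the read condition $\npair{(\start_\adv,\addrend_\adv)}\in\readCond{}(\glob)(W)$ exactly as before via Lemma~\ref{lem:nwlp-subset-pwl}, to obtain $\npair{\updatePcPerm{c_\adv}} \in \stder(W)$.

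For register-file safety I would argue: $r_1 \mapsto c_\adv$ and the $\pcreg$ are safe by Lemma~\ref{lem:safe-values-safe-invoke} and Lemma~\ref{lem:stdvr-glob-priv-mono} since $c_\adv$ is global and $\nonlocal{c_\adv}$; the cleared registers hold $0$ and are trivially safe; and $r_\stk$ is a \emph{local} $\rwlx$ capability over exactly region $4$, which is safe because \texttt{mclear} has zeroed $[\start_\stk+7,\addrend_\stk]$ so every cell satisfies the $\pwl$ store predicate. Memory satisfaction $\heapSat[\ms]{n'}{W}$ then splits along the disjoint union: the malloc region by the malloc specification's context-independence clause (with $c_\malloc\in\stdvr(W)$ from Lemma~\ref{lem:malloc-in-vr}), the adversary segment as in case~\ref{lem:f1-adv-mem-sat} of Lemma~\ref{lem:correctness-f1} (using Lemma~\ref{lem:stau-subset-pwl} for the read-only link capability at $\start_\adv$), the two static regions by Lemma~\ref{lem:mem-sat-static}, and the temporary stack tail again by the zeroing performed by \texttt{mclear}.

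The main obstacle is the safety of the protected return pointer, $\npair{r_0}\in\stdvr(W)$: for every $W' \futurewk W$, invoking the local enter capability must yield a configuration in $\observations(W')$. The argument has three interacting parts. \textbf{(i) Frame integrity}: because the adversary only ever held the local tail capability (region $4$) and globally-readable capabilities, it can have written nothing into the static frame region $2$, so under $W'$ the value $1$, the saved return pointer, and the saved stack capability are exactly as we left them. \textbf{(ii) Reclaiming the stack}: the on-stack restore code reloads the saved stack capability into $r_\stk$ and pops the saved return pointer into $\pcreg$, returning into \texttt{f2} just after the \texttt{scall}; at this transition I take the \emph{private} future world revoking the temporary region $4$, which is sound precisely because locality forbids the adversary from storing the tail or return capability through any non-$\pwl$ (hence permanent) capability, so revocation invalidates no surviving reference --- this is the linchpin that \texttt{call} lacked and that \texttt{scall} supplies. \textbf{(iii) Completing \texttt{f2}}: with the stack restored, $\texttt{pop } r_1$ recovers the preserved $1$, $\texttt{assert } r_1\ 1$ succeeds without touching $\ms_\flag$, and the program halts. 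Combining $\npair{\reg'} \in \stdrr(W)$ with $\npair{\updatePcPerm{c_\adv}}\in\stder(W)$ yields the observation, and since the flag region is permanent it persists into the final world, giving $m'(\flag)=0$.
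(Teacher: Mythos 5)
Your proposal is correct in substance, but it takes a genuinely different route from the one the paper presents. The paper's proof of Lemma~\ref{lem:correctness-f2} does \emph{not} reason about the protected return pointer, the stack split, or the restore code by hand: it defines the world $W$ \emph{before} the call (containing only the malloc region, a permanent $\iota^\sta$ region for $\ms_{f2}\uplus\ms_\flag$, an $\iota^{\sta,u}$ region for the linking table, and $\iota^{\nwl,p}$ for the adversary), steps to the \texttt{scall} via Lemma~\ref{lem:anti-red-obs}, and then discharges the two hypotheses Hyp-Callee and Hyp-Cont of the reusable \texttt{scall} lemma (Lemma~\ref{lem:scall-works}). That lemma is what internally introduces the temporary $\iota^\sta(\temp,\cdot)$ region for the used stack frame and the $\iota^\pwl$ region for the cleared tail, and what proves once and for all that the local enter capability in $r_0$ and the restricted stack capability are in $\stdvr$. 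Your proof inlines exactly that content: you construct the post-jump world yourself, prove $r_0\in\stdvr(W)$ directly by unfolding $\entryCond{}$ over public future worlds, and argue frame integrity from memory satisfaction. This is the same underlying argument (it closely mirrors the proof of the \texttt{scall} lemma itself), so it is sound; what the paper's factoring buys is that the return-pointer and stack-capability safety arguments are not re-proved for every example (they are reused verbatim in the proofs for \texttt{f3} and \texttt{g1}), while your version is self-contained but duplicates that work.

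Two points worth tightening. First, you place the used stack frame $[\start_\stk,\start_\stk+6]$ inside a \emph{permanent} static region. This happens to work here only because \texttt{f2} halts immediately after the assertion and never writes to that part of the stack again; the paper's machinery deliberately makes that region \emph{temporary} so it can be revoked by a private transition when the caller resumes and the stack is to be reused (this matters already for \texttt{f3}). Second, your justification for $\reg'(r_\stk)\in\stdvr(W)$ ("mclear has zeroed the tail so every cell satisfies the pwl store predicate") addresses memory satisfaction, not the value relation; the actual obligations are $\readCond{}$, $\writeCond{}(\iota^\pwl,\local)$ and the three $\execCond{}$ instances, which are discharged against region $4$ via Lemma~\ref{lem:stack-cap-vr} and Lemma~\ref{lem:stack-exec-cond-help}. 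Similarly, "the adversary only held local capabilities, so it wrote nothing into the frame" is the intuition; the formal step is that the static frame region survives every $W'\futurewk W$, so $\heapSat[\ms'']{n}{W'}$ by itself pins the frame contents. Neither point invalidates your argument, but both are places where the written justification points at the wrong formal device.
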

\begin{proof}[Proof of Lemma~\ref{lem:correctness-f2} (using \texttt{scall} lemma)]
    Let $n$ be given and make the assumptions of the lemma. If we can show
    \begin{equation}
      \label{pf:f2-sc:sfts-obs}
      \npair{(\reg\update{\pcreg}{c_{f2}}\update{r_\stk}{c_\stk},\ms \uplus \ms_\stk)} \in \observations(W)
    \end{equation}
    for 
    \[
      \ms = \hs_{f2} \uplus 
            \hs_\flag \uplus                
            \ms_{\var{link}} \uplus 
            \hs_\adv \uplus 
            \ms_{\malloc} 
    \]
    and
    \[
      W = [0 \mapsto \iota_{\malloc,0}]
          [1 \mapsto \iota^\sta (\perma, \ms_{f2} \uplus \ms_\flag)]
          [2 \mapsto \iota^{\sta,u} (\perma, \ms_\link)]
          [3 \mapsto \iota^{\nwl,p}_{\start_\adv,\addrend_\adv}]
    \]
    then we are done as we by assumption has
    \[
      (\reg\update{\pcreg}{c_{f2}}\update{r_\stk}{c_\stk},m) \step[n] (\halted,m')
    \]
    so \ref{pf:f2-sc:sfts-obs} gives us a $W' \futurestr W$ where $W'$ satisfy part of $m'$. As $\ms_\flag$ is governed by a $\perma$ region, so it is unchanged. In other words
    \[
      m'(\flag) = 0
    \]
    So it suffices to show \ref{pf:f2-sc:sfts-obs}. To this end use Lemma~\ref{lem:anti-red-obs}. Let $\ms_f$ be given, then
    \[
    (\reg\update{\pcreg}{c_{f2}}\update{r_\stk}{c_\stk},\ms \uplus \ms_\stk \uplus \ms_f) \step[k] 
    (\reg',\ms \uplus \ms_\stk' \uplus \ms_f)
  \]
  where
  \begin{itemize}
  \item \lookingat{(\reg',\ms)}{\scall{r_\adv}{}{r_l}}{c_{\var{next}}}
  \item $c_{\var{next}}$ is $c_{f2}$ that points to the instruction after the \texttt{scall}.
  \item \pointstostack{\reg'}{[\start_\stk \mapsto 1]}{\ms_\unused}
    \begin{itemize}
    \item for some $\ms_\unused$ where $\ms_\stk' = [\start_\stk \mapsto 1] \uplus \ms_\unused$.
    \end{itemize}
  \item $\reg'(r_\adv) = c_\adv$
  \end{itemize}
  In order to show the observation part necessary for Lemma~\ref{lem:anti-red-obs}, we use the "\texttt{scall} works"-Lemma (Lemma~\ref{lem:scall-works}). Show the following
  \begin{enumproof}
  \item $\memSat[n-k]{\ms}{W}$\\
    Use Lemma~\ref{lem:disj-mem-sat} with
    \begin{enumproof}
    \item $\memSat[n-k]{\ms_{f2} \uplus \ms_\flag}{[1 \mapsto \iota^\sta (\perma, \ms_{f2} \uplus \ms_\flag)]}$\\
      Lemma~\ref{lem:mem-sat-static}
      \item $\memSat[n-k]{\ms_\adv \uplus \ms_\malloc \uplus \ms_\link}{W_{\var{part}}}$ \\
        where
        \[
          W_{\var{part}} = [0 \mapsto \iota_{\malloc,0}][2 \mapsto \iota^{\sta,u} (\perma, \ms_\link)][3 \mapsto \iota^{\nwl,p}_{\start_\adv,\addrend_\adv}]
        \]
        This amounts to
        \begin{enumproof}
        \item $\npair[n-k-1]{\ms_\malloc} \in H \; 1 \; W_{\var{part}}$ where $H$ is the interpretaion of the $\iota_{\malloc,0}$ region.\\
          Follows from the malloc specification.
        \item $\npair[n-k-1]{\ms_\adv} \in H^\nwl \; 1 \; W_{\var{part}}$\\
          Can be shown using Lemma~\ref{lem:stau-subset-pwl}.
        \item $\npair[n-k-1]{\ms_\link} \in H^{\sta,u} (\ms_\link) \; 1 \; W_{\var{part}}$\\
          This amounts to showing
          \begin{enumproof}
            \item $\npair[n-k-2]{c_\malloc} \in \stdvr(W_{\var{part}})$
              Follows from Lemma~\ref{lem:malloc-in-vr}.
            \item $\npair[n-k-2]{c_\adv} \in \stdvr(W_{\var{part}})$\\
              Follows from Theorem~\ref{thm:ftlr} using Lemma~\ref{lem:nwlp-subset-pwl}.
          \end{enumproof}
        \end{enumproof}
    \end{enumproof}
  \item Hyp-Callee\\
    Assume
    \begin{itemize}
    \item $\dom(\ms_{\mathit{unused}}) = \dom(\ms_{\mathit{act}} \uplus \ms_{\mathit{unused}}')$,
    \item $W' = \revokeTemp{W}[\iota^{\sta}(\temp,\ms_\stk\uplus\ms_{\mathit{act}}),\iota^{\pwl}(\dom(\ms_{\mathit{unused}}'))]$,
    \item $\memSat[n-k-1]{\ms''}{W'}$
    \item $\reg' \text{ points to stack with $\emptyset$ used and $\ms_{\mathit{unused}}'$ unused}$
    \item $\reg'= \reg_0[\pcreg\mapsto\updatePcPerm{c_\adv},r_0 \mapsto c_{\mathit{ret}}, r_\stk \mapsto c_\stk', r_\adv \mapsto c_\adv]$ 
    \item $\npair[n-k-1]{c_{\mathit{ret}}} \in \stdvr(W')$
    \item $\npair[n-k-1]{c_\stk'} \in \stdvr(W')$
    \end{itemize}
    Show
    \[
      (n-k-1,(\reg',\ms'')) \in \observations(W')
    \]
    By Theorem~\ref{thm:ftlr} we get
    \[
      \npair[n-k-1]{\updatePcPerm{c_\adv}} \in \stder(W')
    \]
    getting the desired result amounts to\footnote{We have memory satisfaction by assumption and the above entails the register-file is in the register-file relation.}
    \begin{enumproof}
      \item $\npair[n-k-1]{c_\adv} \in \stdvr(W)$ \\
        To this end let $n' < n-k-1$ and $W'' \futurestr W'$ be given and show
        \[
          \npair[n']{\updatePcPerm{c_\adv}} \in \stder(W'')
        \]
        Follows from Theorem~\ref{thm:ftlr} and Lemma~\ref{lem:nwlp-subset-pwl}.
    \end{enumproof}
  \item Hyp-Cont\\
    Assume
    \begin{itemize}
    \item $n' \leq n-2$
    \item $W'' \futurewk \revokeTemp{W}$
    \item $\memSat[n']{\ms''}{\revokeTemp{W''}}$ 
    \item $\reg''(\pcreg) = c_{\mathit{next}}$
   \item $\reg'' \text{ points to stack with $\ms_\stk$ used and $\ms_{\mathit{unused}}''$ unused}$ for some $\ms_{\mathit{unused}}''$
    \end{itemize}
    and show
    \[
      \npair[n']{(\reg'',\ms'' \uplus [\start_\stk \mapsto 1] \uplus \ms_{\var{unused}}'')} \in \observations(W'')
    \]
    From $\memSat[n']{\ms''}{\revokeTemp{W''}}$, we get that $\ms_{f2}$ is unchanged. Given a frame $\ms_f'$ and assuming $n'$ is sufficiently large, the execution continues as follows:
    \[
      (\reg'',\ms'' \uplus [\start_\stk \mapsto 1] \uplus \ms_{\var{unused}}'' \uplus \ms_f) \step[k] (\halted,\ms'' \uplus [\start_\stk \mapsto 1] \uplus \ms_{\var{unused}}'' \uplus \ms_f)
    \]
    because 1 is popped of the stack to a register, then it is compared with 1 in the assertion, so the assertion succeeds and halts immediately after.

    By assumption we had $\memSat[n']{\ms''}{\revokeTemp{W''}}$ which gives us exactly the memory satisfaction required by $\observations(W'')$.
  \end{enumproof}
\end{proof}

\lau{I have skipped the following program in favour of the one in the next section. }
              ML-like program:
\begin{verbatim}
let f = fun adv =>
          let l = 1 in
          adv(l);
          l := 1;
          adv(0);
          assert(!l == 1)
\end{verbatim}
              In this example \texttt{let l = 1 in} allocates a new local capability \texttt{l} with read-write permissions. Assuming \texttt{adv} has no access to capabilities with permit write local, they cannot store \texttt{l} and thus change its value in the second call.

\subsection{Well-Bracketedness Using Local Capabilities and \texttt{scall}}

\begin{verbatim}
f3: push 1
    fetch r1 adv
    scall r1([],[])
    pop r1
    assert r1 1
    push 2
    fetch r1 adv
    scall r1([],[])
3f: halt
\end{verbatim}
The assertion of $f3$ may seem a bit awkward because it is between two calls. If an adversary could capture the protected return pointer from the first call and save it until the second call, then the adversary could jump to it again. At this point the top of the stack would be 2, so when the execution reaches the assertion, it would fail. However, the produced return pointer is passed as a local capability, so the only place the adversary can store it is on the stack. The adversary loses control of the stack when control is returned to $f3$ where the $\mathtt{scall}$ makes sure to sanitise the stack and register file before control is passed back to the adversary. In other words, the adversary has no way to capture the continuation which makes the above safe and well-bracketed.

\begin{lemma}[Correctness lemma for \texttt{f3}]
  \label{lem:correctness-f3}
  For all $n \in \nats$
  let
  \begin{align*}
    c_{\var{adv}} & \defeq ((\entry,\glob),\start_{\adv},\addrend_{\adv},\start_{\adv}+\olf) \\
    c_{f3} & \defeq ((\rwx,\glob),\mathtt{f3}-\olf,\mathtt{3f},\mathtt{f3}) \\
    c_{\var{stk}} & \defeq ((\rwlx,\local),\start_\stk,\addrend_\stk,\start_\stk-1) \\
    c_\malloc & \defeq ((\entry,\glob),\start_\malloc,\addrend_\malloc,\start_\malloc+\olf) \\
    c_\link & \defeq ((\readonly,\glob),\link,\link+1,\link) \\
    \reg & \in \Regs \\
    m & \defeq \hs_{f3} \uplus 
        \hs_\flag \uplus                
        \ms_{\var{link}} \uplus 
        \hs_\adv \uplus 
        \ms_{\malloc} \uplus 
        \ms_{\var{stk}} \uplus
        \ms_{\var{frame}} 
  \end{align*}
  and
  \begin{itemize}
  \item $c_\malloc$ satisfies the specification for malloc.
  \end{itemize}
  where 
  \begin{align*}
    &\dom(\hs_{f3}) = [\mathtt{f3}-\olf,\mathtt{3f}] \\
    &\dom(\hs_\flag) = [\flag,\flag] \\
    &\dom(\ms_\link) = [\link,\link+1]\\
    &\dom(\ms_\stk) = [\start_\stk, \addrend_\stk]\\
    &\dom(\hs_{\adv}) = [\start_\adv,\addrend_\adv] \\
    &\heapSat[\hs_{\malloc}]{n}{[0 \mapsto \iota_{\malloc,0}]}
  \end{align*}
  and
  \begin{itemize}
  \item $\ms_{f3}(\mathtt{f3}-\olf) = ((\readonly,\glob),\link,\link+1,\link)$, $\ms_{f3}(\mathtt{f3}-\olf+1) = ((\readwrite,\glob),\flag,\flag,\flag)$, the rest of $\hs_{f3}$ contains the code of $f3$.
  \item $\ms_\flag = [\flag \mapsto 0]$
  \item $\ms_{\var{link}} = [\var{link} \mapsto c_\malloc, \var{link} + 1 \mapsto c_\adv]$
  \item $\hs_\adv(\start_\adv) = c_\link$ and all other addresses of $\ms_\adv$ contain instructions.
  \end{itemize}
  if 
  \[
    (\reg\update{\pcreg}{c_{f3}}\update{r_\stk}{c_\stk},m) \step[n] (\halted,m'),
  \]
  then
  \[
    m'(\flag) = 0
  \]  
\end{lemma}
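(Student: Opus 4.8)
The plan is to mirror the proof of Lemma~\ref{lem:correctness-f2}, the essential new ingredient being that the ``\texttt{scall} works'' lemma (Lemma~\ref{lem:scall-works}) must be applied \emph{twice}, once for each call to the adversary, with the second application nested inside the continuation obligation of the first. As in the \texttt{f2} proof, it suffices to establish
\[
  \npair{(\reg\update{\pcreg}{c_{f3}}\update{r_\stk}{c_\stk}, \ms \uplus \ms_\stk)} \in \observations(W)
\]
where
\[
  \ms = \hs_{f3} \uplus \hs_\flag \uplus \ms_{\var{link}} \uplus \hs_\adv \uplus \ms_{\malloc}
\]
and
\[
  W = [0 \mapsto \iota_{\malloc,0}][1 \mapsto \iota^\sta (\perma, \ms_{f3} \uplus \ms_\flag)][2 \mapsto \iota^{\sta,u} (\perma, \ms_\link)][3 \mapsto \iota^{\nwl,p}_{\start_\adv,\addrend_\adv}]
\]
since the permanent region governing $\ms_\flag$ then forces $m'(\flag) = 0$ on any terminating run, which is the conclusion of the lemma.

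To establish this observation relation, I would invoke the anti-reduction lemma (Lemma~\ref{lem:anti-red-obs}) to execute the opening \texttt{push 1} and \texttt{fetch r1 adv}, reaching a configuration poised at the first \texttt{scall} with $1$ on top of the stack and $c_\adv$ in $r_1$, and then apply Lemma~\ref{lem:scall-works}. Its memory-satisfaction obligation is discharged exactly as in the \texttt{f2} proof, via Lemma~\ref{lem:disj-mem-sat}, Lemma~\ref{lem:mem-sat-static}, the malloc specification, and Lemmas~\ref{lem:stau-subset-pwl} and~\ref{lem:nwlp-subset-pwl}; the callee obligation (Hyp-Callee) follows from the FTLR (Theorem~\ref{thm:ftlr}) together with Lemma~\ref{lem:nwlp-subset-pwl}, using that $c_\adv$ is global and hence non-local.

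The genuinely new work lies in the continuation obligation (Hyp-Cont) of this first \texttt{scall}. There I am handed a world $W'' \futurewk \revokeTemp{W}$ with $\memSat[n']{\ms''}{\revokeTemp{W''}}$ and a register file pointing just past the call, with the private part of the stack (still holding the pushed $1$) intact. Since $\ms_{f3}$ and that private stack are unchanged, I step \texttt{pop r1} (loading $1$ into $r_1$), let \texttt{assert r1 1} succeed without altering memory, and step \texttt{push 2} and \texttt{fetch r1 adv}, landing at the \emph{second} \texttt{scall} with $2$ now on top of the stack. At this point I apply Lemma~\ref{lem:scall-works} a second time: its memory-satisfaction and Hyp-Callee obligations are discharged just as for the first call, while its own Hyp-Cont obligation is discharged immediately, since after the second return the code reaches \texttt{halt} without touching memory, so the memory satisfaction supplied as a hypothesis of that Hyp-Cont is preserved (up to downward closure) and $\ms_\flag$ is left untouched.

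The main obstacle is the careful world bookkeeping across the two nested invocations of the \texttt{scall} lemma. Each invocation revokes the temporary stack regions and installs fresh ones for the portion of the stack released to the adversary, so I must check that the world returned by the first call's Hyp-Cont is compatible with the hypotheses of the second invocation, in particular the future-world relation $W'' \futurewk \revokeTemp{W}$ and the freshness of the stack regions for the second call. This is precisely where well-bracketedness is enforced: the protected return pointer handed to the adversary in the first call is a \emph{local} enter-capability, so the adversary can stash it only on the stack, and the \texttt{scall} discipline clears and restricts that stack region on return while $\revokeTemp{-}$ revokes the associated temporary region. Consequently the first call's return pointer cannot survive into the second call, which is exactly what guarantees that the value the assertion pops is the $1$ pushed by \texttt{f3} rather than a stale $2$ reachable by an out-of-bracket re-entry.
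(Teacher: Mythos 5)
Your proposal matches the paper's own proof essentially step for step: the same world $W$ (permanent static region for the code and flag, $\iota^{\sta,u}$ for the linking table, $\iota^{\nwl,p}$ for the adversary), the same reduction to the observation relation via Lemma~\ref{lem:anti-red-obs}, and the same nested double application of Lemma~\ref{lem:scall-works}, with the second invocation inside the first call's Hyp-Cont after the assertion succeeds and $2$ is pushed, and the second Hyp-Cont discharged trivially because the program halts immediately. The supporting lemmas you cite (FTLR, \ref{lem:nwlp-subset-pwl}, \ref{lem:stau-subset-pwl}, \ref{lem:malloc-in-vr}, \ref{lem:disj-mem-sat}, \ref{lem:mem-sat-static}) are exactly the ones the paper uses, so no further comparison is needed.
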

In an attempt to aid the reader, we first provide to high-level descriptions of possible proof of Lemma~\ref{lem:correctness-f3} followed by a more detailed proof.
\begin{proof}[Proof of Lemma~\ref{lem:correctness-f3} (high-level description)]
Executing $f2$ until just after the jump in the first scall brings us to a configuration where the stack contains $1$ followed by some activation code followed by all zeros. The $\pcreg$-register contains an executable adversary capability, register $r_0$ contains a protected return pointer - that is a local enter capability for the execution code, and the $r_\stk$ contains a capability for the cleared part of the stack.

At this point we can define a world with permanent regions
\begin{itemize}
\item fixing the assertion flag, the code of $f2$, and the linking table.
\item the initial malloc region
\item a $\iota^{\nwl,p}$ region
\end{itemize}
and temporary regions
\begin{itemize}
\item a region fixing the private part of the stack
\item a $\iota^\pwl$ region for the rest of the stack 
\end{itemize}
From the FTLR, we get that in any future world of $W$, the adversary capability and its executable counter part is in the expression relation and thus safe to execute in suitable configurations. If the configuration we consider right now is suitable, then the execution produces a memory where the permanent invariants of $W$ are kept which means that the flag is $0$.

To argue that the configuration is suitable, we need to argue that invoking the continuation produces an admissible result. As the continuation is a $\local$ capability, we take a public future world of $W$. In this public world, the private part of the stack remains the same as before the jump, so when we reach the assertion it succeeds and execution continues. At the point of the jump in the second scall, the stack contains $2$ instead of $1$, but otherwise essentially the same. Here we again use that it is safe to execute the adversary and that the continuation in this case halts immediately in a configuration where the assertion flag must be $0$.
\end{proof}

\begin{proof}[Proof of Lemma~\ref{lem:correctness-f3} (high-level description 2)]
\lau{Another go at a proof sketch. More true to the proof below.}
If we can show
\begin{equation}
  (\reg\update{\pcreg}{c_{f3}}\update{r_\stk}{c_\stk},\ms_{\malloc} \uplus \ms' \uplus \ms_{\adv} \uplus \ms_\stk) \in \observations(W),
\end{equation}
for a world $W$ where the assertion flag is permanently 0, then it is still 0 in any configuration the execution halts in. $W$ also needs to require the program and the linking table to permanently remain the same, have a region that governs $\malloc$ and a standard permanent no-write local region for the adversary.

Due to Lemma~\ref{lem:scall-works} the \texttt{scall} lemma, for each \texttt{scall} we have to argue that the adversary and continuation produces results that respect the regions of $W$. Using Lemma~\ref{lem:anti-red-obs} the $\observations$ anti reduction lemma, it suffices to argue that each part of $f3$ between \texttt{scall}s produces admissible results.

Executing until the first \texttt{scall} only pushes 1 to the stack, so the invariants of $W$ are preserved. Due to the \texttt{scall} lemma, we need to argue that that the adversary and the continuation produce admissible results.

Using the FTLR, we get that the executable capability for the adversary is in the $\stder$-relation. As we provide no arguments to the adversary, most of the conditions are satisfied by assumptions and Lemma~\ref{lem:stack-cap-vr}, which makes sure that the stack capability is in the value relation. Which gives us that the adversary produces an admissible result.

With respect to the continuation, it is passed to the adversary as a local capability, so when we reason about it, we consider public future worlds. The \texttt{scall} uses temporary regions for the stack and these persist in public future worlds. This allows us to assume that the private part of the stack still contains 1 after the call. Further, the program, flags, and linking table remain the same in any kind of future world. Therefore, we know that the execution continues by popping 1 from the stack and then asserting that it is indeed 1, which is indeed the case, so 2 is pushed to the stack. At this point we reach another scall. No changes where made to the permanent part of the stack, so the invariants are still satisfied. At this point we use the \texttt{scall} lemma one last time. The adversary call code is well-behaved for the same reasons as in the first call. The \texttt{scall} lemma lets us assume that the continuation continues in a memory that satisfies the invariants of $W$. The execution halts immediately in the continuation, so it produces an admissible result.
\end{proof}

\begin{proof}[Proof of Lemma~\ref{lem:correctness-f3}]
\lau{proof using scall lemma}
Assume the premises of the lemma.
Now define 
  \begin{align*}
    W = \;& [0 \mapsto \iota_{\malloc,0}] \\
          & [1 \mapsto \iota^\sta (\perma,\ms_\flag \uplus \ms_{f2})]\\
          & [2 \mapsto \iota^{\sta,u} (\perma,\ms_\link)]\\
          & [3 \mapsto \iota^{\nwl,p}_{\start_\adv,\addrend_\adv}]\\
  \end{align*}
Further define
\[
\ms' = \ms_\flag \uplus \ms_{f2} \uplus \ms_\link \uplus \ms_\adv
\]
If we can show 
\begin{equation}
\label{eq:f3-obs1}
  \npair[n+1]{(\reg\update{\pcreg}{c_{f3}}\update{r_\stk}{c_\stk},\ms_{\malloc} \uplus \ms' \uplus \ms_{\adv} \uplus \ms_\stk)} \in \observations(W),
\end{equation}
then using $\ms_{\var{frame}}$ as the frame and $m'$ as the resulting memory, we get that $m' = \ms'' \uplus \ms_r \uplus \ms_{\var{frame}}$ for some $\ms'$ and $\ms_r$ s.t. $\memSat[1]{\ms''}{W}$. Region $1$ guarantees that the assertion flag is unchanged, so we have
\[
  m'(\flag) = 0
\]
So SFTS \ref{eq:f3-obs1}. To do so, we use Lemma \ref{lem:anti-red-obs}. Let $\ms_f$ be given. The execution proceeds as follows:
\[
  (\reg\update{\pcreg}{c_{f3}}\update{r_\stk}{c_\stk},\ms' \uplus \ms_\stk \uplus \ms_f) \step[i] (\reg',\ms' \uplus [\start_\stk \mapsto 1] \uplus \ms_\stk|_{\start_\stk+1,\addrend_\stk} \uplus \ms_f),
\]
where
\[
  \lookingat{(\reg',\ms')}{\scall{r}{}{}}{c_{\var{next}}}
\]
where $c_{\var{next}}$ is $c_{f3}$ adjusted to point to the next instruction, namely \texttt{pop r1}. Further we have
\begin{itemize}
\item \pointstostack{\reg'}{[\start_\stk \mapsto 1]}{\ms_\stk|_{\start_\stk+1,\addrend_\stk}}
\end{itemize}
and $i$ is a suitable number of steps.

To show
\[
  \npair[n-i]{(\reg',\ms' \uplus [\start_\stk \mapsto 1] \uplus \ms_\stk |_{\start_\stk+1,\addrend_\stk})} \in \observations(W)
\]
We use Lemma~\ref{lem:scall-works} (we do not use the local frame in the lemma) which requires us to show
\begin{enumproof}
  \item $\memSat[n-i]{\ms'}{W}$ \\
    Partition $\ms'$ as follows:
    \begin{enumproof}
      \item $\ms_\malloc$: governed by $\iota_{\malloc,0}$, use malloc specification.
      \item $\ms_\flag \uplus \ms_{f2}$: governed by region 1, only this memory segment is accepted.
      \item $\ms_\link$: governed by region 2, only this memory segment is accepted. We also need to show that the contents is safe, i.e. shoe
        \begin{enumproof}
          \item $\npair[n-i]{c_\malloc} \in \stdvr(W)$: Follows from Lemma~\ref{lem:malloc-in-vr}.
          \item $\npair[n-i]{c_\adv} \in \stdvr(W)$: \\
            We will show
            \begin{equation}
              \label{eq:f3:adv-in-stder}
              \forall W' \futurestr W \ldotp \npair{c_\adv} \in \stdvr(W')
            \end{equation}
            which will give us what we need using downwards closure as well as a result for later use.

            Let $W' \futurestr W$ be given and show
            \[
              \npair{(\start_\adv,\addrend_\adv,\start_\adv+\olf)} \in \entryCond{}(\glob)(W')
            \]
            to this end let $W'' \futurestr W'$ and $n' < n$ be given and show
            \[
              \npair[n']{\updatePcPerm{c_\adv}} \in \stder(W'')
            \]
            This follows from the FTLR (Theorem~\ref{thm:ftlr}) if we can show
            \[
              \npair[n']{\start_\adv,\addrend_\adv} \in \readCond{}(\glob)(W'')
            \]
            $\iota^{\nwl,p}_{\start_\adv,\addrend_\adv}$ governs the adversary, so the result follows from Lemma~\ref{lem:nwlp-subset-pwl}.
          \end{enumproof}
      \item $\ms_\adv$:
        Follows from Lemma~\ref{lem:stau-subset-pwl}.
    \end{enumproof}
  \item Hyp-Callee \\
    Assume
    \begin{itemize}
    \item $\dom(\ms_\stk |_{\start_\stk+1,\addrend_\stk}) = \dom(\ms_{\mathit{act}} \uplus \ms_{\mathit{unused}}')$
    \item $W' = \revokeTemp{W}[\iota^{\sta}(\temp,[\start_\stk \mapsto 1]\uplus\ms_{\mathit{act}}),\iota^{\pwl}(\dom(\ms_{\mathit{unused}}'))]$
    \item $\memSat[n-i-1]{\ms''}{W'}$
    \item $\reg'' \text{ points to stack with $\emptyset$ used and $\ms_{\mathit{unused}}'$ unused}$
    \item $\reg''= \reg_0[\pcreg\mapsto\updatePcPerm{\reg'(r)},r_0 \mapsto c_{\mathit{ret}}, r_\stk \mapsto c_\stk', r \mapsto \reg'(r)]$ 
    \item $\npair[n-i-1]{c_{\mathit{ret}}} \in \stdvr(W')$
    \item $\npair[n-i-1]{c_\stk'} \in \stdvr(W')$
    \end{itemize}
    for some $\ms_{\mathit{act}}$, $\ms_{\mathit{unused}}$, $\ms''$, $\reg''$, $c_{\mathit{ret}}$.
    
    Using the FTLR, we get $\npair[n-i-1]{\updatePcPerm{c_\adv}} \in \stder(W')$, from
    \begin{enumproof}
      \item $\memSat[n-i-1]{\ms''}{W'}$ : By the above assumptions
      \item $\npair[n-i-1]{\reg''} \in \stdvr(W')$:\\
        show
        \begin{enumproof}
          \item $\npair[n-i-1]{c_{\mathit{ret}}} \in \stdvr(W')$ : by above assumptions.
          \item $\npair[n-i-1]{c_\stk'} \in \stdvr(W')$ : by above assumptions.
          \item $\npair[n-i-1]{c_\adv} \in \stdvr(W')$ : follows from \ref{eq:f3:adv-in-stder}.
          \item The remaining registers we need to consider contain 0 and are thus trivial to show.
        \end{enumproof}
    \end{enumproof}
    we get
    \[
      \npair[n-i-1]{(\ms'',\reg'')} \in \observations(W')
    \]
  \item Hyp-Cont\\
    Assume:
    \begin{itemize}
    \item $n' \leq n-i-2$
    \item $W'' \futurewk \revokeTemp{W}$
    \item $\memSat[n']{\ms''}{\revokeTemp{W''}}$ 
    \item for all $r$, we have that:
      \begin{equation*}
        \reg''(r)
        \begin{cases}
          = c_{\mathit{next}} &\text{ if } r = \pcreg\\
          \in \stdvr(W'') &\text{ if $\reg''(r)$ is a global capability and } r \not\in \{\pcreg, r_\stk\}
        \end{cases}
      \end{equation*}
    \item $\reg'' \text{ points to stack with $[\start_\stk \mapsto 1]$ used and $\ms_{\mathit{unused}}''$ unused}$ for some $\ms_{\mathit{unused}}''$
    \end{itemize}
    and show
    \begin{enumproof}
    \item $(\reg'',\ms'' \uplus [\start_\stk \mapsto 1] \uplus \ms_{\var{unused}}'') \in \observations(\revokeTemp{W''})$\\
      As $W'' \futurestr W$, we know that the program, assertion flag, and linking table remain unchanged in $\ms''$. Given some frame $\ms_f'$, then the execution proceeds by first succeeding the assertion and then pushing 2 to the stack:
      \[
        (\reg'',\ms'' \uplus [\start_\stk \mapsto 1] \uplus \ms_{\var{unused}}'' \uplus \ms_f') \step[k] (\reg^{(3)},\ms'' \uplus [\start_\stk \mapsto 2] \uplus \ms_{\var{unused}}'' \uplus \ms_f')
      \]
      where 
      \begin{itemize}
      \item \lookingat{(\reg^{(3)},\ms'')}{\mathtt{scall}\;r([],[])}{c_{\var{next}}'}
      \item \pointstostack{\reg^{(3)}}{[\start \mapsto 2]}{\ms_{\var{unused}}''}
      \item $\reg^{(3)}(r) = c_\adv$
      \end{itemize}
      By Lemma~\ref{lem:anti-red-obs} it suffices to show
      \begin{enumproof}
      \item $\npair[n'-k]{(\reg^{(3)},\ms'' \uplus [\start_\stk \mapsto 2] \uplus \ms_{\var{unused}}'')} \in \observations(\revokeTemp{W''})$\\
        Show this using Lemma~\ref{lem:scall-works} a. Show:
        \begin{enumproof}
          \item $\memSat[n'-k]{\ms''}{\revokeTemp{W''}}$ is satisfied by one of the first Hyp-cont assumptions and Lemma~\ref{lem:heap-sat-dc}.
          \item Hyp-Callee\\
            Assume:
            \begin{itemize}
            \item $\dom(\ms_{\mathit{unused}}'') = \dom(\ms_{\mathit{act}}' \uplus \ms_{\mathit{unused}}^{(3)})$
            \item $W^{(3)} = \revokeTemp{W''}[\iota^{\sta}(\temp,[\start_\stk \mapsto 2]\uplus\ms_{\mathit{act}}'),\iota^{\pwl}(\dom(\ms_{\mathit{unused}}^{(3)}))]$
            \item $\memSat[n'-k-1]{\ms^{(3)}}{W^{(3)}}$
            \item $\reg^{(4)} \text{ points to stack with $\emptyset$ used and $\ms_{\mathit{unused}}^{(3)}$ unused}$
            \item $\reg^{(4)}= \reg_0[\pcreg\mapsto\updatePcPerm{c_\adv},r_0 \mapsto c_{\mathit{ret}}', r_\stk \mapsto c_\stk'', r \mapsto c_\adv]$ 
            \item $(n'-k-1,c_{\mathit{ret}}') \in \stdvr(W^{(3)})$
            \item $(n'-k-1,c_\stk'') \in \stdvr(W^{(3)})$
            \end{itemize}
            This argument is almost identical to the one we just did for the first call:\\
            Using the FTLR, we get $\npair[n-i-1]{\updatePcPerm{c_\adv}} \in \stder(W^{(3)})$. Which we use with
            \begin{enumproof}
              \item $\memSat[n'-k-1]{\ms^{(3)}}{W^{(3)}}$: By assumption.
              \item $\npair[n'-k-1]{\reg^{(4)}} \in \stdrr(W^{(3)})$: Show:
                \begin{enumproof}
                  \item $\npair[n'-k-1]{c_\adv} \in \stdvr(W^{(3)})$ by Assumption \ref{eq:f3:adv-in-stder}.
                  \item $\npair[n'-k-1]{c_{\mathit{ret}}'}$ by assumption.
                  \item $\npair[n'-k-1]{c_\stk''}$ by assumption
                \end{enumproof}
            \end{enumproof}
            to get
            \[
              \npair[n'-k-1]{(\reg^{(4)},\ms^{(3)})} \in \observations(W^{(3)})
            \]
          \item Hyp-Cont\\
            Assume
            \begin{itemize}
            \item $n'' \leq n'-k-2$
            \item $W^{(3)} \futurewk \revokeTemp{W''}[\iota^{\sta}(\temp,\ms_\stk)][\iota^\sta(\temp,\ms_{\mathit{unused}}^{(3)})]$
            \item $\memSat[n'']{\ms^{(3)}}{\revokeTemp{W^{(3)}}}$ 
            \item for all $r$, we have that:
              \begin{equation*}
                \reg^{(4)}(r)
                \begin{cases}
                  = c_{\mathit{next}}' &\text{ if } r = \pcreg\\
                  \in \stdvr(W'') &\text{ if $\reg^{(4)}(r)$ is a global capability and } r \not\in \{\pcreg, r_\stk\}
                \end{cases}
              \end{equation*}
            \item $\reg' \text{ points to stack with $[\start_\stk \mapsto 2]$ used and $\ms_{\mathit{unused}}^{(3)}$ unused}$ for some $\ms_{\mathit{unused}}^{(3)}$
            \end{itemize}
            and show
            \[
              (n'', (\reg^{(3)},\ms^{(3)} \uplus [\start_\stk \mapsto 2] \uplus \ms_{\var{unused}}^{(3)})) \in \observations(\revokeTemp{W^{(3)}})
            \]
            To this end let $\ms_f''$, $m''$, and $j \leq n''$ be given and assume
            \[
              (\reg^{(3)},\ms^{(3)} \uplus [\start_\stk \mapsto 2] \uplus \ms_{\var{unused}}^{(3)} \uplus \ms_f'') \step[j] (\halted,m'')
            \]
            As the execution halts immediately, 
            \[
              m'' = \ms^{(3)} \uplus [\start_\stk \mapsto 2] \uplus \ms_{\var{unused}}^{(3)} \uplus \ms_f''
            \]
            By assumption we had $\memSat[n'']{\ms^{(3)}}{\revokeTemp{W^{(3)}}}$ and the frame is unchanged, so we can split the memory as needed.
        \end{enumproof}
      \end{enumproof}
    \end{enumproof}
  \end{enumproof}
\end{proof}

\subsection{Inverted Control and Return From Closure}
The following example is constructed to investigate the difficulties of preserving an adversary's local frame. There is no assertion as this is (slightly) beside the point. The lemma we would prove about this should look like Lemma~\ref{lem:correctness-g1}, but it is not state and proven here.
\begin{lstlisting}
g2:  move $r_3$ $\pcreg$
     lea $r_3$ $\dots$
     crtcls $[]$ $r_3$
     rclear $\RegName \setminus \{\pcreg,r_0,r_1 \}$
2g:  jmp $r_0$
f5:  reqglob $r1$
     prepstack $r_\stk$
     scall $r_1$($[]$,$[r_0,r_{\var{env}}]$)
     mclear $r_\stk$
     rclear $\RegName \setminus \{r_0,\pcreg\}$
5f:  jmp $r_0$
\end{lstlisting}

\subsection{Variant of the ``awkward'' example}
Assembly variant of the ``awkward'' example from \citep[p.~11]{Dreyer:2010:IHS:1863543.1863566} which roughly was:
\begin{verbatim}
g = fun _ => let x = 0 in
               fun f =>
                 x := 0;
                 f();
                 x := 1;
                 f();
                 assert(x == 1)
\end{verbatim}
Our translation of the example:
\begin{lstlisting}
g1:  malloc $r_2$ 1
     store $r_2$ 0
     move $r_3$ $\pcreg$
     lea $r_3$ $\dots$
     crtcls $[(x, r_2)]$ $r_3$
     rclear $\RegName \setminus \{\pcreg,r_0,r_1 \}$
1g:  jmp $r_0$
f4:  reqglob $r_1$
     prepstack $r_\stk$
     store $x$ 0
     scall $r_1$($[]$,$[r_0,r_1,r_{\var{env}}]$)
     store $x$ 1
     scall $r_1$($[]$,$[r_0,r_{\var{env}}]$)
     load $r_1$ x
     assert $r_1$ 1
     mclear $r_\stk$
     rclear $\RegName \setminus \{r_0,\pcreg\}$
4f:  jmp $r_0$
\end{lstlisting}
Where the $\dots$ is the appropriate offset to make the capability point to \texttt{f4}.

\begin{lemma}[Correctness of $g1$]
  \label{lem:correctness-g1}
  For all $n \in \nats$
  let
  \begin{align*}
    c_{\var{adv}} & \defeq ((\rwx,\glob),\start_{\adv},\addrend_{\adv},\start_{\adv}+\olf) \\
    c_{g1} & \defeq ((\entry,\glob),\mathtt{g1}-\olf,\mathtt{4f},\mathtt{g1}) \\
    c_\stk & \defeq ((\rwlx,\local),\start_\stk,\addrend_\stk,\start_\stk-1) \\
    c_\malloc & \defeq ((\entry,\glob),\start_\malloc,\addrend_\malloc,\start_\malloc+\olf) \\
    c_\link & \defeq ((\readonly,\glob),\link,\link,\link) \\
    m & \defeq \hs_{g1} \uplus 
        \ms_\flag \uplus                
        \ms_\link \uplus 
        \ms_\adv \uplus 
        \ms_\malloc \uplus 
        \ms_\stk \uplus
        \ms_{\var{frame}} 
  \end{align*}
  where 
  \begin{itemize}
  \item $c_\malloc$ satisfies the specification for malloc with $\iota_{\malloc,0}$
  \end{itemize}
  \begin{align*}
    &\dom(\hs_{g1}) = [\mathtt{g1}-\olf,\mathtt{4f}] \\
    &\dom(\hs_\flag) = [\flag,\flag] \\
    &\dom(\ms_\link) = [\link,\link]\\
    &\dom(\ms_\stk) = [\start_\stk, \addrend_\stk]\\
    &\dom(\hs_{\adv}) = [\start_\adv,\addrend_\adv] \\
    &\heapSat[\hs_{\malloc}]{n}{[0 \mapsto \iota_{\malloc,0}]}
  \end{align*}
  and
  \begin{itemize}
  \item $\ms_{g1}(\mathtt{g1}-\olf) = ((\readonly,\glob),\link,\link,\link)$, $\ms_{g1}(\mathtt{g1}-\olf+1) = ((\readwrite,\glob),\flag,\flag,\flag)$, the rest of $\hs_{g1}$ contains the code of $g1$ immediately followed by the code of $f4$.
  \item $\ms_\flag = [\flag \mapsto 0]$
  \item $\ms_{\var{link}} = [\link \mapsto c_\malloc]$
  \item $\hs_\adv(\start_\adv) = c_\link$ and all other addresses of $\ms_\adv$ contain instructions.
  \item $\forall a \in \dom(\ms_\stk) \ldotp \ms_\stk(a) = 0$ 
  \end{itemize}
  if 
  \[
    (\reg_0\update{\pcreg}{c_\adv}\update{r_\stk}{c_\stk}\update{r_1}{c_{g1}},m) \step[n] (\halted,m'),
  \]
  then
  \[
    m'(\flag) = 0
  \]  
\end{lemma}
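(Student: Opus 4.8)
The plan is to mirror the structure of the proof of Lemma~\ref{lem:correctness-f3}, adapting it to inverted control and a heap-allocated local variable. Since the machine starts with $\pcreg = c_\adv$ pointing into adversary code (here an executable $\rwx$ capability, so $\updatePcPerm{c_\adv}=c_\adv$), I would reduce everything to an application of the FTLR (Theorem~\ref{thm:ftlr}). Concretely, I exhibit a world $W$ built from permanent regions fixing the code of $g1$/$f4$ together with the flag at $0$ ($\iota^\sta(\perma,\ms_\flag \uplus \ms_{g1})$), the linking table ($\iota^{\sta,u}(\perma,\ms_\link)$), the initial malloc region $\iota_{\malloc,0}$, and a no-write-local region $\iota^{\nwl,p}_{\start_\adv,\addrend_\adv}$ for the adversary, plus a temporary $\iota^\pwl_{\start_\stk,\addrend_\stk}$ region for the (initially all-zero) stack. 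The read condition needed to run $c_\adv$ holds by Lemma~\ref{lem:nwlp-subset-pwl}, so the FTLR yields $c_\adv \in \stder(W)$; to conclude that the initial configuration lies in $\observations(W)$ it then remains only to check memory satisfaction for $W$ (immediate region by region) and that the register file lies in $\stdrr(W)$. Once the observation holds, the permanence of the flag region forces $m'(\flag)=0$ exactly as in the earlier lemmas.

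The register-file obligation has three parts: the zero registers are trivially safe, $r_\stk = c_\stk$ is handled by the standard stack argument (Lemma~\ref{lem:stack-cap-vr}), and the real content is $c_{g1} \in \stdvr(W)$. Since $c_{g1}$ is an enter-capability, this unfolds to showing that running $g1$ from any suitable future world produces an admissible observation. Executing $g1$ mallocs a one-cell environment for $x$, initialises $x$ to $0$, builds the closure $f4$ with $\texttt{crtcls}$, clears the registers and jumps to $r_0$; by anti-reduction (Lemma~\ref{lem:anti-red-obs}) it suffices to show that the returned configuration is safe, which reduces to showing that the freshly constructed global enter-capability for the $f4$ closure lies in $\stdvr$ of the extended world. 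This is the crux of the lemma.

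To discharge that, I introduce regions governing the malloc'd cell holding $x$ and the closure environment, and then follow an invocation of $f4$: it checks the callback $r_1$ is global ($\texttt{reqglob}$), prepares the stack, stores $0$ into $x$, performs $\scall{r_1}{}{r_0,r_1,r_{\var{env}}}$, stores $1$ into $x$, performs $\scall{r_1}{}{r_0,r_{\var{env}}}$, loads $x$ and asserts it equals $1$, then clears the stack and returns. I would drive each callback through the $\scall$ lemma (Lemma~\ref{lem:scall-works}) exactly as in the $f3$ proof: Hyp-Callee is discharged by the FTLR applied to the \emph{global} callback, and Hyp-Cont by continuing the $f4$ code. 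The essential mechanism is that the protected return pointer handed to the adversary is a \emph{local} enter-capability, so it can only be parked on the stack; because $\scall$ sanitises and subsegments the stack on each call, the adversary cannot smuggle the first continuation into the second call, which is precisely what enforces well-bracketing. I encode the value of $x$ in a temporary region so that, as with the private stack segment in $f3$, the Hyp-Cont clause of the second $\scall$ returns control in a world $W'' \futurewk \revokeTemp{W}$ in which that region still fixes $x=1$; the subsequent $\texttt{load}$/$\texttt{assert}$ then succeeds, leaving the flag untouched, and the final $\texttt{jmp}\ r_0$ returns safely.

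The hard part is designing the region for $x$ and the bookkeeping that ties the heap value of $x$ to well-bracketed control flow. Unlike $f3$, where the guarded value lives directly on the stack and is protected automatically by the $\scall$ machinery, here $x$ resides in separately malloc'd heap memory while the protection is still inherited from the locality of the return pointer. The delicate issue is re-entrancy: during either callback the adversary may re-enter $f4$ through the global closure, and a nested invocation writes $0$ and then $1$ into the shared cell. I must therefore argue that every \emph{completed} nested invocation leaves $x=1$, and that while the outer second $\scall$ is suspended the only continuations the adversary can reach are those of such nested calls (again by locality of the return pointers and stack sanitisation), so that when the outer second $\scall$ returns the region genuinely guarantees $x=1$. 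Getting the public/private transition structure of the region to reflect exactly this --- permitting the necessary $0 \leftrightarrow 1$ updates while forbidding the adversary from observing $x=0$ at the point of the outer assertion --- is the main obstacle and is the content that distinguishes the awkward example from the simpler $f2$/$f3$ cases.
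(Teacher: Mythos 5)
Your overall architecture matches the paper's proof: the same world $W$ (permanent regions for $\ms_{g1}\uplus\ms_\flag$, $\ms_\link$, malloc, a $\iota^{\nwl,p}$ region for the adversary, a temporary $\iota^\pwl$ region for the stack), FTLR on $c_\adv$, Lemma~\ref{lem:stack-cap-vr} for $c_\stk$, unfolding the enter-capability $c_{g1}$ through \texttt{malloc}/\texttt{crtcls}/anti-reduction, and two applications of Lemma~\ref{lem:scall-works} for the body of \texttt{f4}. But there is a concrete error in the one place you identify as the crux: you propose to govern the cell $x$ by a \emph{temporary} region. That cannot work. The closure $c_\cls$ is a \emph{global} enter-capability, so $\npair{c_\cls}\in\stdvr(W_2)$ unfolds to an obligation over arbitrary \emph{private} future worlds $W_3\futurestr W_2$, and the body of \texttt{f4} repeatedly passes through $\revokeTemp{\cdot}$ (both explicitly and inside the \texttt{scall} lemma, whose Hyp-Cont only supplies $\memSat[n']{\ms''}{\revokeTemp{W''}}$). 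A temporary region is erased by both, so at the point of the assertion nothing in the world would constrain $x$ at all; worse, during a re-entrant invocation the region could simply be revoked and the stores to $x$ could not be tied to any invariant. The region must be \emph{permanent} so that it survives $\revokeTemp{\cdot}$ and private future worlds; what varies publicly versus privately is its \emph{state}, not its existence.

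That state machine is exactly the content you defer as the ``main obstacle,'' and it is given concretely in Definition~\ref{def:iotax-region}: a permanent region with states $\{0,1\}$, interpretation $H_x\,s$ forcing $\ms(x)=s$, public transitions $\phi_\pub=\{(0,1)\}^*$ (so $1\to 0$ is \emph{not} public) and private transitions additionally allowing $1\to 0$. The \texttt{store $x$ 0} at the top of \texttt{f4} is justified by a private transition (admissible because we are inside the enter-condition unfolding over private future worlds), the \texttt{store $x$ 1} after the first callback is a public transition, and the outer assertion is reached in a world that is a public future of the state-$1$ world, whence the state is still $1$ --- no explicit case analysis on nested invocations is needed, which is the point of the construction. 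Note also that your phrase ``permitting the necessary $0\leftrightarrow 1$ updates'' blurs the essential asymmetry: if $1\to 0$ were public the argument would collapse. Finally, one more piece of bookkeeping your sketch omits: to justify the closing \texttt{jmp $r_0$} one must reinstate the temporary regions of $W_3$ to build a world $W_{10}\futurewk W_3$ (not merely $\futurestr W_3$), since $\reg_3(r_0)$ may be a local capability and its membership in $\stdvr$ is only monotone along public future worlds; this is the step \ref{g1:w10futwkw3} in the paper's proof and re-establishing memory satisfaction for the reinstated stack regions there requires Lemma~\ref{lem:pwl-stack}.
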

In the proof of Lemma~\ref{lem:correctness-g1}, we will use the following region
\begin{definition}
  \label{def:iotax-region}
  \begin{align*}
    \iota_x   & = (\perm, 0, \phi_\pub, \phi, H_x) \\
    \phi_\pub & = \{(0,1)\}^* \\
    \phi      & = (1,0) \union \phi_\pub \\
    H_x \; s \; \hat{W} & = \{\npair{\ms} \mid \ms(x) = s \land n > 0 \} \union \{\npair[0]{\ms}\}
  \end{align*}
\end{definition}
\begin{lemma}
  \label{lem:iotax-is-a-region}
  Definition~\ref{def:iotax-region} defines a region.
\end{lemma}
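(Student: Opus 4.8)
The plan is to unfold the definition of $\Regions$ and discharge each of its defining conditions for the concrete tuple $(\perm, 0, \phi_\pub, \phi, H_x)$. A region is a view together with an initial state, a public and a private transition relation on the (implicit) state space, and an interpretation $H$; so the obligations are: (i) the first component is a legitimate view (here the permanent one, which is immediate); (ii) $\phi_\pub$ and $\phi$ are preorders on the state space with $\phi_\pub \subseteq \phi$; and (iii) $H_x$ is a function valued in $\UPred{\MemSegments}$ that is monotone and non-expansive in its world argument. I would first fix the state space as $\{0,1\}$, the set of values $x$ may hold and the set of endpoints occurring in $\phi_\pub$ and $\phi$.

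For the transition relations I would compute the two closures explicitly. Since $\phi_\pub = \{(0,1)\}^*$ is by definition the reflexive--transitive closure of a single edge, it equals $\{(0,0),(1,1),(0,1)\}$ and is therefore a preorder (in fact a partial order, the two-element chain $0 \le 1$) by construction. For $\phi = (1,0) \union \phi_\pub$ I would observe that adjoining the edge $(1,0)$ to $\phi_\pub$ yields the full relation on $\{0,1\}$, which is trivially reflexive and transitive, so no further closure is needed; and $\phi_\pub \subseteq \phi$ holds by inspection. This settles the relation obligations.

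It then remains to check that $H_x$ is a well-formed interpretation. For each state $s \in \{0,1\}$ and each world $\hat W$ I would verify that $H_x\,s\,\hat W$ is a uniform predicate, i.e. downward closed in the step index: given $\npair{\ms} \in H_x\,s\,\hat W$ and $m \le n$, if $m = 0$ then $\npair[0]{\ms}$ lies in the second component of the union, while if $m > 0$ then necessarily $n > 0$, so $\ms(x) = s$, and hence $\npair[m]{\ms}$ satisfies $\ms(x) = s \land m > 0$ and lies in the first component. Monotonicity and non-expansiveness in the world are immediate, since $H_x\,s\,\hat W$ does not mention $\hat W$: for any $\hat W \future \hat W'$ we have $H_x\,s\,\hat W = H_x\,s\,\hat W'$, and likewise $H_x\,s\,\hat W \nequal H_x\,s\,\hat W'$ at every index. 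I do not anticipate a genuine obstacle, as the whole verification is routine; the only points that repay care are the closure computation showing that $\phi$ is already the transitive full relation rather than something requiring further saturation, and the $m = 0$ boundary case of downward closure, which is precisely why the definition of $H_x$ includes the extra clause $\{\npair[0]{\ms}\}$.
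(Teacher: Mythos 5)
Your proposal is correct and follows essentially the same route as the paper's proof: well-formedness of $\phi_\pub$ and $\phi$ by construction, and triviality of the monotonicity/non-expansiveness obligations for $H_x$ because it ignores $\hat W$. The only additions are the explicit closure computation and the check that $H_x\,s\,\hat W$ is downward closed in the step index (which the paper leaves implicit in ``valued in $\UPred{\HeapSegments}$''), both of which are fine.
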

\begin{proof}[Proof of Lemma~\ref{lem:iotax-is-a-region}] \forcenewline
  \begin{itemize}
  \item $\phi_\pub$ is defined as the reflexive transitive closure, so it is immediately well formed.
  \item $\phi$ adds a transition to $\phi_\pub$ and is also reflexive and transitive. 
  \item $H_x$ is trivially non-expansive in the state.
  \item $H_x$ does not depend on the $\hat{W}$, so it also becomes trivially non-expansive and (privately) monotone in $\hat{W}$.
  \end{itemize}
\end{proof}

\begin{proof}[Proof of Lemma~\ref{lem:correctness-g1} (using \texttt{scall} lemma)]
  Let $n$ be given and make the assumptions of the lemma. Define
  \begin{align*}
    W = & [0 \mapsto \iota_{\malloc,0}]\\
        & [1 \mapsto \iota^{\sta,u} (\perma,\ms_\link)]\\
        & [2 \mapsto \iota^\pwl_{\start_\stk, \addrend_\stk}]\\
        & [3 \mapsto \iota^{\nwl,p}_{\start_\adv,\addrend_\adv}]\\
        & [4 \mapsto \iota^\sta (\perma,\ms_{g1} \uplus \ms_\flag)]
  \end{align*}
  and
  \[
    \ms = \ms_{g1} \uplus 
          \ms_\flag \uplus                
          \ms_\link \uplus 
          \ms_\adv \uplus 
          \ms_\malloc 
  \]
  If we can show
  \begin{equation}
    \label{eq:f4:start-conf-in-obs}
        \npair{(\reg_0\update{\pcreg}{c_\adv}\update{r_\stk}{c_\stk}\update{r_1}{c_{g1}},\ms \uplus \ms_\stk)} \in \observations(W)
  \end{equation}
  then the termination assumption gives us that part of $m$ satisfies a private future world of $W$. Region 4 is permanent, so
  \[
    m(\flag) = 0
  \]
  So it suffices to show Eq. \ref{eq:f4:start-conf-in-obs}. 
  To this end use the FTLR to show $\npair{c_\adv} \in \stder(W)$, so show
  \begin{enumproof}
    \item $\npair{(\start_\adv,\addrend_\adv)} \in \readCond{}(\glob)(W)$\\
      Simple using region $3$ in $W$ and Lemma~\ref{lem:nwlp-subset-pwl}.
    \item $\npair{(\start_\adv,\addrend_\adv)} \in \writeCond{}(\iota^\nwl,\glob)(W)$\\
      Simple using region $3$ in $W$, using
      Lemma~\ref{lem:iota-nwl-address-stratified}.
  \end{enumproof}
  in conclusion $\npair{c_\adv} \in \stder(W)$. We get Eq. \ref{eq:f4:start-conf-in-obs} if we show \ref{pf:g1-3} and \ref{pf:g1-4}:
  \begin{enumproof}[resume]
    \item $\memSat{\ms \uplus \ms_\stk}{W}$ \label{pf:g1-3}\\
      \begin{enumproof}
        \item $\memSat{\ms_{g1} \uplus \ms_\flag}{[4 \mapsto \iota^\sta (\perma,\ms_{g1} \uplus \ms_\flag)]}$ \label{pf:g1-3-1}\\
          Lemma~\ref{lem:mem-sat-static}.
        \item $\memSat{\ms_\stk}{[2 \mapsto \iota^\pwl_{\start_\stk,\addrend_\stk}]}$ \label{pf:g1-3-2}\\
          Lemma~\ref{lem:mem-sat-data-only-std-regions} and assumption that $\ms_\stk$ is all 0.
        \item $\memSat{\ms_{\malloc} \uplus \ms_\link \uplus \ms_\adv}{[0 \mapsto \iota_{\malloc,0}][1 \mapsto \iota^{\sta,u} (\perma,\ms_\link)][3 \mapsto \iota^{\nwl,p}_{\start_\adv,\addrend_\adv}]}$ \label{pf:g1-3-3}\\
          For convenience define
          \[
            W_{\var{mini}} = [0 \mapsto \iota_{\malloc,0}][1 \mapsto \iota^{\sta,u} (\perma,\ms_\link)][3 \mapsto \iota^{\nwl,p}_{\start_\adv,\addrend_\adv}]
          \]
          Partitioning the memory segment in the components of the disjoint union, the $\malloc$ part follows from assumption $\memSat{\ms_\malloc}{[0 \mapsto \iota_{\malloc,0}]}$ and the $\malloc$ specification.

          The linking table part of memory amounts to showing:
          \[
            \npair[n]{\ms_\link} \in H^{\sta,u}(\ms_\link)(1)(\xi^{-1}(W_{\var{mini}}))
          \]
          which in turn amounts to showing
          \[
            \npair[n-1]{c_\malloc} \in \stdvr(W_{\var{mini}})
          \]
          which follows from Lemma~\ref{lem:malloc-in-vr}.

          Showing
          \[
            \npair[n]{\ms_\adv} \in H^\sta_{\start_\adv,\addrend_\adv}(1)(\xi^{-1}(W_{\var{mini}}))
          \]
          is a bit more involved. It amounts to 
          \[
            \forall \addr \in \dom(\ms_\adv) \ldotp \npair[n-1]{\ms_\adv(\addr)} \in \stdvr(W_{\var{mini}})
          \]
          which in turn is trivial for everything but
          \[
            \npair[n-1]{c_\link} \in \stdvr(W_{\var{mini}})
          \]
          This amounts to showing
          \[
            \npair[n-1]{(\link,\link)} \in \readCond{}(\glob)(W_{\var{mini}})
          \]
          which amounts to
          \[
            \iota^{\sta,u} (\perma,\ms_\link) \nsubsim[n-1] \iota^\pwl_{\link,\link}
          \]
          which follows from Lemma~\ref{lem:stau-subset-pwl}.
      \end{enumproof}
      Using Lemma~\ref{lem:disj-mem-sat} repeatedly with \ref{pf:g1-3-1}, \ref{pf:g1-3-2}, and \ref{pf:g1-3-3} gives the desired memory satisfaction.
    \item $\npair{\reg_0\update{r_\stk}{c_\stk}\update{r_1}{c_{g1}}} \in \stdrr(W)$ \label{pf:g1-4}\\
      This amounts to showing
      \begin{enumproof}
        \item $\npair{c_\stk} \in \stdvr(W)$ \\
          The assumptions on $c_\stk$ and $\ms_\stk$ in the lemma entail
          \begin{itemize}
          \item \pointstostack{\reg_0\update{r_\stk}{c_\stk}\update{r_1}{c_{g1}}}{\emptyset}{\ms_\stk}
          \end{itemize}
          and further there is a $\iota^\pwl$ region for $\ms_\stk$ in $W$, so the result follows from Lemma~\ref{lem:stack-cap-vr}.
        \item $\npair{c_{g1}} \in \stdvr(W)$ \\
          Let $n_1 < n$ and $W_1 \futurestr W$ and show
          \[
            \npair[n_1]{\updatePcPerm{c_{g1}}} \in \stder(W_1)
          \]
          To this end assume $n_2 \leq n_1$, $\memSat[n_2]{\ms_1}{W_1}$, and $\npair[n_2]{\reg_1} \in \stdrr(W_1)$ and show
          \[
            \npair[n_2]{(\reg_1\update{\pcreg}{\updatePcPerm{c_{g1}}},\ms_1)} \in \observations(W_1)
          \]
          Using Lemma~\ref{lem:malloc-works}, Lemma~\ref{lem:crtcls-works}, Lemma~\ref{lem:anti-red-obs} (and some others),
          it suffices to show
          \[
            \npair[n_2']{(\reg_2,\ms_2 \uplus \ms_\malloc' \uplus \ms_\cls \uplus \ms_x)} \in \observations(W_2)
          \]
          where
          \[
            W_2 = W_1\update{0}{\iota_\malloc}\update{i_1}{\iota^\sta (\perm,\ms_\cls)}\update{i_2}{\iota_x}
          \]
          where $i_1,i_2\not\in\dom(W_1)$ and $i_1 \neq i_2$ and $\iota_x$ is the region in Definition~\ref{def:iotax-region} which is a region by Lemma~\ref{lem:iotax-is-a-region}. Also
          \begin{itemize}
          \item $\iota_\malloc \futurestr \iota_\malloc'$
          \item $c_x = ((\rwx,\glob),x,x,x)$
          \item $\ms_x = [x \mapsto 0]$
          \item $\memSat[n_2']{\ms_2 \uplus \ms_\malloc' \uplus \ms_\cls \uplus \ms_x}{W_2}$
          \item $c_\env = ((\rwx,\glob,\env,\env,\env))$
          \item $\ms_\env = [\env \mapsto c_x]$
          \item $c_{f4} = ((\rwx,\glob),\mathtt{g1} - \olf,\mathtt{4f},\mathtt{f4})$
          \item $\ms_\cls = \ms_\env \uplus \ms_\act$
          \item 
            \begin{equation*}
              \reg_2(r) =
              \begin{cases}
                \updatePcPerm{\reg_1(r_0)} & r = \pcreg \\
                \reg_1(r_0) & r = r_0 \\
                c_\cls & r = r_1 \\
                0 & \text{otherwise}
              \end{cases}
            \end{equation*}
          \end{itemize}
          Finally assume Hyp-Act:
            \begin{multline}
            \label{eq:g1-hyp-act}
            \forall \reg,\ms \ldotp \reg(\pcreg)=c_\cls \Rightarrow\\
            \exists j \ldotp \forall \ms_f \ldotp (\reg,\ms \uplus \ms_\cls \uplus \ms_f) \step[j] (\reg\update{\pcreg}{\updatePcPerm{c_{f4}}}\update{r_\env}{c_\env},\ms \uplus \ms_\cls \uplus \ms_f)
          \end{multline}
          Show
          \begin{equation}
            \label{eq:g1-obs-to-show}
            \npair[n_2-i]{(\reg_2,\ms_2 \uplus \ms_\malloc' \uplus \ms_\env \uplus \ms_x \uplus \ms_\cls)} \in \observations(W_2)
          \end{equation}
          If $\reg_1(r_0).\perm \not\in \{\entry, \exec, \rwx, \rwlx\}$, then the execution fails after the jump and \label{eq:g1-obs-to-show} is thus trivially true.

          If $\reg_1(r_0).\perm \in \{\entry, \exec, \rwx, \rwlx\}$, then either $\execCond{}$ or $\entryCond{}$ holds for the capability in $\reg_1(r_0)$. Now use $W_2 \futurewk W_1$ with the appropriate condition to get 
          \[
            \npair[n_2-i]{\updatePcPerm{\reg_1(r_0)}} \in \stder(W_2)
          \]
          which in turn gives us \ref{eq:g1-obs-to-show} if we can show the following
          \begin{enumproof}
            \item $\memSat[n_2-i]{\ms_2 \uplus \ms_\malloc' \uplus \ms_\env \uplus \ms_x \uplus \ms_\cls}{W_2}$\\
              We first show the following:
              \begin{itemize}
              \item $\memSat[n_2-i]{\ms_2 \uplus \ms_\malloc'}{W_1\update{0}{\iota_\malloc}}$: we already know this.
              \item $\memSat[n_2-i]{\ms_\env \uplus \ms_\cls}{[i_1 \mapsto \iota^\sta(\perma, \ms_\env \uplus \ms_\cls)]}$: By Lemma~\ref{lem:mem-sat-static}.
              \item $\memSat[n_2-i]{\ms_x}{i_2 \mapsto \iota_x}$: $\ms(x) = 0$, so okay.
              \end{itemize}

            \item $\npair[n_2-i]{\reg_2} \in \stdrr(W_2)$ \\
              Amounts to showing
              \begin{enumproof}
                \item $\npair[n_2-i]{\reg_2(r_0)} \in \stdvr(W_2)$ by assumption $\npair[n_2]{\reg_1} \in \stdrr(W_1)$ and $\stdvr$ monotonicity wrt. $\futurewk$
                \item $\npair[n_2-i]{c_\cls} \in \stdvr(W_2)$ \\
                  Let $n_3 < n_2 - i$ and $W_3 \futurestr W_2$ be given and show
                  \[
                    \npair[n_3]{\updatePcPerm{c_\cls}} \in \stder(W_3)
                  \]
                  To this and let $n_4 \leq n_3$, $\memSat[n_4]{\ms_3}{W_3}$, and $\npair[n_4]{\reg_3} \in \stdrr(W_3)$ and show
                  \begin{equation}
                    \label{eq:g1-c-cls-obs}
                    \npair[n_4]{(reg_3\update{\pcreg}{\updatePcPerm{c_\cls}},\ms_3)} \in \observations(W_3)
                  \end{equation}
                  Let $\ms_3^p$ and $\ms_3^t$ be memory segments such that $\ms_3 = \ms_3^p \uplus \ms_3^t$ and $\memSat[n_4]{\ms_3^p}{\revokeTemp{W_3}}$ (using Lemma~\ref{lem:priv-mono-like2}). By $\memSat[n_4]{\ms_3}{W_3}$ and $W_3 \futurestr W_2$, we know $\ms_\cls \subseteq \ms_3^p$, so using Hyp-Act(\ref{eq:g1-hyp-act}), we get $j$ such that

                  \begin{multline}
                    \forall \ms_f \ldotp (reg_3\update{\pcreg}{\updatePcPerm{c_\cls}},\ms_3^p \uplus \ms_3^t \uplus \ms_f) \step[j]\\ (\reg_3\update{\pcreg}{\updatePcPerm{c_\cls}}\update{r_\env}{c_\env},\ms_3^p \uplus \ms_3^t \uplus \ms_f)
                  \end{multline}
                  Using Lemma~\ref{lem:anti-red-obs} it suffices to show
                  \[
                    \npair[n_4]{(\reg_3\update{\pcreg}{\updatePcPerm{c_\cls}}\update{r_\env}{c_\env},\ms_3^p \uplus \ms_3^t)} \in \observations(W_3)
                  \]
                  Use Lemma~\ref{lem:anti-red-obs} again. This time let $\ms_f''$ be given and take $\ms_r$ to be the part of $\ms_3^t$ that $\reg_3(r_\stk)$ does not govern. By the operational semantics, we know\footnote{the execution may fail, but then the configuration is trivially in the observation relation.}
                  \[
                    (\reg_3\update{\pcreg}{\updatePcPerm{c_\cls}}\update{r_\env}{c_\env},\ms_3^p \uplus \ms_3^t \uplus \ms_f'') \step[j']
                    (\reg_4,\ms_4 \uplus \ms_3^t \uplus \ms_f'')
                  \]
                  where
                  \begin{itemize}
                  \item \lookingat{(reg_4,\ms_4)}{\scall{r_1}{}{r_0,r_1,r_\env}}{c_{\var{next}}}
                    \begin{itemize}
                    \item $c_{\var{next}}$ is the capability pointing to the next instruction.
                    \end{itemize}
                  \item \pointstostack{\reg_4}{\emptyset}{\ms_{\var{unused}}}
                    \begin{itemize}
                    \item \texttt{prepstack} did not fail, so the stack capability must be $\rwlx$ and follow the stack convention.
                    \end{itemize}
                  \item $\reg_4(r_1)$ is a $\glob$ capability.
                    \begin{itemize}
                    \item \texttt{reqglob} did not fail
                    \end{itemize}
                  \item $\ms_4(x) = 0$
                  \item $\reg_4(r_\env) = c_\env$
                  \end{itemize}
                  region $i_2$ (the $\iota_x$ region) can be in either state $0$ or $1$, so to make sure it is in state $0$, we use a private transition. So let $W_4$ be $\revokeTemp{W_3}$ with region $i_2$ in state $0$. We then have
                  \[
                    \memSat[n_4-j-j']{\ms_4}{W_4}
                  \]
                  Now we can use Lemma~\ref{lem:scall-works} to show:
                  \[
                    \npair[n_4-j-j']{(\reg_4,\ms_4 \uplus \ms_r \uplus \emptyset \uplus \ms_\unused)} \in \observations(W_4)
                  \]
                  where $\ms_r$ is the local frame of the \texttt{scall} lemma.
                  \begin{enumproof}
                    \item $\memSat[n_4-j-j']{\ms_4}{\revokeTemp{W_4}}$: follows from $W_4 = \revokeTemp{W_4}$
                    \item Hyp-Callee \label{pf:g1-first-hyp-callee}\\
                      We know $\npair[n_4]{\reg_3(r_1)} \in \stdvr(W_3)$. If this is not a capability that becomes executable when jumped to, then the execution fails, so the register memory segment pair is trivially in the observation relation. If it is executable, then either the $\execCond{}$ or the $\entryCond{}$ holds for appropriate values. We also know that it is a global capability, so we can use it with private future worlds. We have $W_5 = \revokeTemp{W_4}[\iota^\sta (\temp,\emptyset \uplus \ms_\act \uplus \ms_r),\iota^\pwl(\dom(\ms_\unused'))] \futurestr W_3$, for some $\ms_\act$ and $\ms_\unused'$. By the execute/enter condition, we have
                      \[
                        \npair[n_4-j-j']{\updatePcPerm{reg_3(r_1)}} \in \stder(W_5)
                      \]
                      Now it suffices to show
                      \begin{enumproof}
                        \item $\memSat[n_4-j-j'-1]{\ms_5}{W_5}$ for some $\ms_5$ which is one of the assumptions of Hyp-Callee.
                        \item $\npair[n_4-j-j'-1]{\reg_5} \in \stdrr(W_5)$ where $\reg_5$ is as described in the \texttt{scall} lemma Hyp-callee premise.\\
                          Amounts to showing:\lau{too deeply nested for environments}\\
                          1) $\npair[n_4-j-j'-1]{\reg_3(r_1)} \in \stdvr(W_5)$, use Lemma~\ref{lem:stdvr-glob-priv-mono} with $\npair[n_4-j-1]{\reg_3(r_1)} \in \stdvr(W_3)$, the capability is global, and $W_5 \futurestr W_3$.
                          2) The protected return pointer and the stack capability are in the value relation by Hyp-callee assumptions.
                      \end{enumproof}
                      which gives us
                      \[
                        \npair[n_4-j-j'-1]{(\reg_5,\ms_5)} \in \observations(W_5)
                      \]
                    \item Hyp-Cont\\
                      Assume
                      \begin{itemize}
                      \item $n_5 \leq n_4-j-j'-2$
                      \item $W_6 \futurewk \revokeTemp{W_4}$
                      \item $\memSat[n_5]{\ms_6}{\revokeTemp{W_6}}$
                      \item $\reg_6(\pcreg) = c_{\var{next}}$, $\reg_6(r_0) = \reg_3(r_0)$, $\reg_6(r_1) = \reg_3(r_1)$, $\reg(r_\env) = c_\env$
                      \item \pointstostack{\reg_6}{\emptyset}{\ms_\unused''}
                      \end{itemize}
                      Show 
                      \[
                        \npair[n_5]{(\reg_6,\ms_6 \uplus \ms_r \uplus \emptyset \uplus \ms_\unused'')} \in \observations(W_6)
                      \]
                      Use the $\observations$-anti-reduction lemma (Lemma~\ref{lem:anti-red-obs}) followed by the \texttt{scall} lemma (Lemma~\ref{lem:scall-works}). Given $\ms_f'''$, we know by the operational semantics and the fact that the program hasn't changed that
                      \[
                        (\reg_6,\ms_6 \uplus \ms_r \uplus \ms_\unused'' \uplus \ms_f''') \step[k] (\reg_7,\ms_6\update{x}{1} \uplus \ms_r \uplus \ms_\unused'' \uplus \ms_f''')
                      \]
                      where
                      \begin{itemize}
                      \item \lookingat{(\reg_7,\ms_6\update{x}{1})}{\mathtt{scall}\;r([],[r_0,r_\env])}{c_{\var{next}}'}\\
                        $c_{\var{next}}'$ is the current $\pcreg$ capability but looking at \texttt{load} $r_1$ $x$.
                      \item $\reg_7(r_0,r_1,r_\env,r_\stk) = \reg_6(r_0,r_1,r_\env,r_\stk)$
                      \end{itemize}
                      In $\revokeTemp{W_6}$, we don't know which state the $\iota_x$ region is in, but state $1$ is reachable via a public transition, so let $W_7$ be $\revokeTemp{W_6}$ with region $i_2$ in state $1$. It follows easily that
                      \[
                        \memSat[n_5-k]{\ms_6\update{x}{1}}{W_7}
                      \]
                      We continue the proof in item \ref{item:proof-cont}
                  \end{enumproof}
              \end{enumproof}
          \end{enumproof}
      \end{enumproof}
      \item \label{item:proof-cont} At this point, we apply the \texttt{scall} lemma, to get
        \[
          \npair[n_5-k]{(\reg_7,\ms_6\update{x}{1} \uplus \ms_r \uplus \ms_\unused''')} \in \observations(W_7)
        \]
        show
        \begin{enumproof}
          \item $\memSat[n_5-k]{\ms_6\update{x}{1}}{\revokeTemp{W_7}}$, follows from $W_7= \revokeTemp{W_7}$.
          \item Hyp-Callee: Goes like the first Hyp-Callee (\ref{pf:g1-first-hyp-callee}).
          \item Hyp-Cont\\
            Assume:
            \begin{itemize}
            \item $n_6 \leq n_5-k-2$
            \item $W_8 \futurewk \revokeTemp{W_7}$
            \item $\memSat[n_6]{\ms_7}{\revokeTemp{W_8}}$ 
            \item $\reg_8(r_0,r_\env) = \reg_7(r_0,r_\env)$
            \item $\reg_8(\pcreg) = c_{\var{next}}'$
            \item $\reg_8 \text{ points to stack with $\emptyset$ used and $\ms_{\mathit{unused}}^{(6)}$ unused}$ for some $\ms_{\mathit{unused}}^{(6)}$
            \end{itemize}
            Show:
            \[
              \npair[n_6]{(\reg_8,\ms_7 \uplus \ms_r \uplus \emptyset \uplus \ms_\unused^{(5)})} \in \observations(W_8)
            \]
            Use Lemma~\ref{lem:anti-red-obs}. Let $\ms_f^{(4)}$ be given, then
            \[
              (\reg_8,\ms_7 \uplus \ms_r \uplus \emptyset \uplus \ms_\unused^{(5)} \uplus \ms_f^{(4)}) \step[l] (reg_9,\ms_7 \uplus \ms_r \uplus \emptyset \uplus \ms_0 \uplus \ms_f^{(4)})
            \]
            where
            \begin{itemize}
            \item $\reg_9(\pcreg) = \updatePcPerm{\reg_3(r_0)}$ (note $\reg_8(r_0) = \reg_3(r_0)$)
            \item $\reg_9(r_0) = reg_3(r_0)$
            \item For all $r \not\in \{\pcreg,r_0\}$, $\reg_9(r) = 0$.
            \item $\dom(\ms_0) = \dom(\ms_\unused^{(5)})$ and $\forall a \in \dom(\ms_0) \ldotp \ms_0(a) = 0$
            \end{itemize}
            The execution proceeds as above because $\iota_x$ in $W_8$ is in state $1$, so $\ms_7(x) = 1$ which causes the assertion to succeed. Subsequently the stack and most of the registers are cleared.

            Now take $W_{10}$ to be $W_9$ with all the regions in $\dom(\erase{W_3}{\temp})$ reinstated. Now we show the following:
            \begin{enumproof}
            \item $W_{10} \futurewk W_3$ \label{g1:w10futwkw3}\\
              We have
              \[
                \forall r \in \dom(W_3) \ldotp W_3(r) = W_{10}(r)
              \]
              if the region was permanent in $W_3$, then it is there because $W_{10} \futurestr W_3$. If it was temporary, then it is there because it was just reinstated. If it was revoked in $W_3$, then it is still there because the only reinstated region were the temporary ones in $W_3$.

              All the future worlds we have been given have been public, so the regions can only have made public transitions. In $W_3$ region $\iota_x$ is in state 0 or 1. In $W_{10}$ region $\iota_x$ is in state 1. State 1 can be reached from 0 and 1 using a public transition, so the $\iota_x$ in $W_{10}$ is a public future region of the $\iota_x$ in $W_3$.

              In other words, all the regions in $W_3$ have only taken public transitions compared to the corresponding regions in $W_{10}$.

              The relation between the relevant worlds is sketched out in Figure~\ref{fig:worlds-in-awk-ex}.

            \item $\memSat[n_6-l]{\ms_7 \uplus \ms_r \uplus \emptyset \uplus \ms_0}{W_{10}}$ \label{g1:memsatw10}\\
              First notice that from 
              \begin{itemize}
              \item $\npair[n_4]{\reg_3} \in \stdrr(W_3)$
              \item $\memSat[n_4]{\ms_3}{W_3}$
              \item $\reg(r_\stk).\perm = \rwlx$
              \end{itemize}
              using Lemma~\ref{lem:pwl-stack} we get that there exists a region, $r_{\adv \stk}$ such that
              \[
                W_3(r_{\adv \stk}) \nequal \iota_{\stk_a,\stk_b}^\pwl
              \]
              and $\dom(\ms_\unused) \subseteq [\stk_a,\stk_b]$. Now take $\ms_{\adv \stk} = \ms_r|_{[\stk_a,\stk_b]}$ (notice this not all of $[\stk_a,\stk_b]$ is in the domain of $\ms_r$).
              We know
              \begin{equation}
                \label{eq:g1:ms7heapsat}
                \memSat[n_6]{\ms_7}{\revokeTemp{W_8}}
              \end{equation}
              and
              \begin{equation}
                \label{eq:g1:ms3heapsat}
                \memSat[n_4]{\ms_3}{W_3}
              \end{equation}
              which gives us two partitions say $P_8$ and $P_3$ respectively. Now define the partition $P$ as follows:
              \[
                P(r) =
                \begin{cases}
                  P_8(r) & r \in \dom(\erase{W_8}{\perma}) \\
                  \ms_{\adv \stk} \uplus \ms_0 & r = r_{\adv \stk}\\
                  P_3(r) & \text{otherwise}
                \end{cases}
              \]
              Now let $r \in \active{W}$, $n_7 < n_6 - l$, and $W(r) = (\_,s,\_,\_,H)$ and show
              \[
                \npair[n_7]{P(r)} \in H(s)(\xi^{-1}(W_{10})).
              \]
              Consider the following cases
              \begin{enumproof}
                \item $r \in \dom(\erase{W_8}{\perma})$ \\
                  Use \ref{eq:g1:ms7heapsat}, the fact that $W_{10} \futurestr \revokeTemp{W_8}$ and that permanent regions respect future private world.
                \item $r = r_{\adv \stk}$ \\
                  In this case we know the region is $\iota^\pwl_{\stk_a,\stk_b}$, so we need to show
                  \[
                    \npair[n_7]{\ms_{\adv \stk} \uplus \ms_0} \in H^\pwl_{\stk_a,\stk_b}(1)(\xi^{-1}(W_{10}))
                  \]
                  which amounts to showing
                  \[
                    \forall a \in \dom(\ms_0) \ldotp  \npair[n_7-1]{\ms_0(a)} \in \stdvr(W_{10}),
                  \]
                  which is trivial, and
                  \[
                    \forall a \in \dom(\ms_{\adv \stk}) \ldotp \npair[n_7-1]{\ms_{\adv \stk}(a)} \in \stdvr(W_{10})
                  \]
                  here we use that \ref{eq:g1:ms3heapsat} entails
                  \[
                    \forall a \in \dom(\ms_{\adv \stk}) \ldotp \npair[n_4-1]{\ms_{\adv \stk}(a)} \in \stdvr(W_3)
                  \]
                  and the fact that $\stdvr$ is monotone w.r.t $\futurewk$, $W_{10} \futurewk W_3$, and $\stdvr(W_{10})$ is downwards-closed.
                \item otherwise \\
                  Use \ref{eq:g1:ms3heapsat}, $W_{10} \futurewk W_3$, and the fact that for a temporary region $H(s)$ is monotone w.r.t. $\futurewk$.
              \end{enumproof}
              
            \item $\npair[n_6-l]{\reg_9} \in \stdrr(W_{10})$ \label{g1:reg9stdrr}\\
              Most registers are cleared. The only interesting register is $r_0$, so show:
              \[
                \npair[n_6-l]{\reg_9(r_0)}\in \stdvr(W_{10})
              \]
              This follows from $\reg_9(r_0) = \reg_3(r_0)$, $\npair[n_4]{\reg_3} \in \stdrr(W_3)$, $\stdvr$ monotone w.r.t $\futurewk$, $W_{10} \futurewk W_3$.
            \end{enumproof}
            As we were using Lemma~\ref{lem:anti-red-obs}, we need to show
            \[
              \npair[n_6-l]{(reg_9,\ms_7 \uplus \ms_r \uplus \emptyset \uplus \ms_0)} \in \observations(W_{10})
            \]
            To this end the use $\reg_3(r_0) = \reg_9(r_0)$ and $\npair[n_4]{\reg_3(r_0)} \in \stdvr1(W_3)$. Assuming that $\reg_9(r_0).\perm \in \{\entry,\exec,\rwx,\rwlx \}$ (if this is not the case, then it is trivial to show the above as the execution fails), then either the $\execCond{}$ or the $\entryCond{}$ hold for appropriate values. Now use that $n_6-l < n_4$ and $W_{10} \futurewk W_3$ (\ref{g1:w10futwkw3})\footnote{We don't know whether the capability is local or global, but it does not matter as we have a public future world relation between the two worlds.} to get
            \[
              \npair[n_6-l]{\updatePcPerm{\reg_9(r_0)}} \in \stder(W_{10})
            \]
            now using \ref{g1:memsatw10} and \ref{g1:reg9stdrr}, we get the desired result.
        \end{enumproof}
  \end{enumproof}
\begin{figure}[t]
  \centering
\begin{tikzpicture}[main node/.style={}, node distance=1.5cm]
  \node[main node] (1) {$W_3$};
  \node[main node] (label1) [below of=1,yshift=1cm] {\footnotesize{\textit{\texttt{f4} called}}};
  \node[main node,left] (given) [left of=1,yshift=0.05cm] {\parbox{1.5cm}{\footnotesize{Given:}}};

  \node[main node] (2) [above right of=1]{$W_4$};
  \node[main node] (label2) [above of=2,yshift=-1cm] {\textit{\footnotesize{first callback}}};

  \node[main node,left] (const) [above of=given,yshift=-0.4cm] {\parbox{1.5cm}{\footnotesize{Constructed:}}};

  \node[main node] (3) [below right of=2]{$W_6$};
  \node[main node] (label3) [below of=3,yshift=0.75cm] {\footnotesize{\textit{callback returns}}};

  \node[main node] (4) [above right of=3]{$W_7$};
  \node[main node] (label4) [above of=4,yshift=-0.75cm] {\footnotesize{\textit{second callback}}};

  \node[main node] (5) [below right of=4]{$W_8$};
  \node[main node] (label5) [below of=5,yshift=1cm] {\footnotesize{\textit{callback returns}}};

  \node[main node] (6) [above right of=5]{$W_{10}$};
  \node[main node] (label6) [above of=6,yshift=-1cm] {\footnotesize{\textit{\texttt{f4} returns}}};
  \path(1) edge[draw=none] node [sloped, auto=false, allow upside down] {$\sqsubseteq^\priv$} (2)
       (2) edge[draw=none] node [sloped, auto=false, allow upside down] {$\sqsubseteq^\pub$} (3)
       (3) edge[draw=none] node [sloped, auto=false, allow upside down] {$\sqsubseteq^\priv$} (4)
       (4) edge[draw=none] node [sloped, auto=false, allow upside down] {$\sqsubseteq^\pub$} (5)
       (5) edge[draw=none] node [sloped, auto=false, allow upside down] {$\sqsubseteq^\priv$} (6);
\end{tikzpicture}
  \label{fig:worlds-in-awk-ex}
\end{figure}
\end{proof}

\lau{In order for this to be secure using a stack the closure needs to make a couple of checks: 1) The alleged stack capability has to be local and $\rwlx$ - if not, then fail. 2) The callback \texttt{f} has to be a global capability - if not, then fail. 3) The stack needs to be cleared before the first callback. \\If it looks like a stack, works like a stack, and quacks like a stack, then it is probably a stack. The first requirement makes sure that what was passed as the stack can be used as a stack even though it might not be a stack capability that follows the conventional stack discipline. If the capability is actually for part of the callers stack, then we will just overwrite their part of the stack. We are not guaranteed that the caller did not save the stack capability - they did after all have a stack to save it on. In fact, if the caller wants to reuse our part of the stack later, then they have to keep a capability for our part of the stack around. When the caller is not trusted, then this seems like an unsafe practice. This brings us to the second requirement: If the callback is global, then there is no way it can restore the previous stack capability and gain access to our part of the stack. This is the case as the \emph{only} place they could have stored the stack capability is on the stack, and the only way they can restore this is through one of the local stack capabilities. In other words, this requirement makes sure that the callback capability is not derived from the stack pointer (which would allow them to get a stack capability for our part of the stack).\\ The last requirement is to sanitise the stack. This makes sure that the caller does not try to sneak a capability to the callback simply by leaving it somewhere on the stack and hoping that we won't notice and pass it on. }

\section{Logical Relation}
\subsection{Worlds}
Assume a sufficiently large set of states $\States$ that at least contains the states used in this document.
\begin{definition}
  \[
   \Rels = \{(\phi_\pub, \phi) \in \powerset{\States^2}\times \powerset{\States^2} \mid \phi_\pub, \phi \text{ is reflexive and transitive and } \phi_\pub \subseteq \phi \}
  \]
\end{definition}

\begin{theorem}\label{thm:world-existence}
  There exists a \cofe{} $\Wor$ and preorders $\futurestr$ and $\futurewk$ such that $(\Wor,\futurestr)$ and $(\Wor,\futurewk)$ are preordered \cofes{} and there exists an isomorphism $\xi$ such that
  \begin{align*}
    \xi : \Wor \cong \blater (\nats \finparfun ( & \{\revoked\}  + \\
                                          & \{\temp\} \times \States \times \Rels \times (\States \fun (\Wor \monwknefun \UPred{\HeapSegments})) + \\
                                          & \{\perma\} \times \States \times \Rels \times (\States \fun (\Wor \monstrnefun \UPred{\HeapSegments}))))
  \end{align*}
 and for $W, W' \in \Wor$
 \[
   W' \futurestr W \Leftrightarrow \xi(W') \futurestr \xi(W)
 \]
and
 \[
   W' \futurewk W \Leftrightarrow \xi(W') \futurewk \xi(W)
 \]
\end{theorem}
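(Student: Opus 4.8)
The plan is to recognise the displayed equation as a recursive domain equation and solve it by the standard existence theorem for fixed points of locally contractive functors on the category of \cofes{} (America and Rutten; Birkedal et al.). The crucial feature is that every recursive occurrence of $\Wor$ on the right-hand side sits underneath $\blater$, which guarantees that the defining functor is locally contractive, so that a fixed point exists and is unique up to isomorphism.

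First I would organise the right-hand side as a bifunctor. Note that $\Wor$ occurs on the right only in the domains of the function spaces $\Wor \monwknefun \UPred{\HeapSegments}$ and $\Wor \monstrnefun \UPred{\HeapSegments}$, i.e.\ purely contravariantly: here $\UPred{\HeapSegments}$, $\States$ and $\Rels$ are fixed \cofes{} not depending on $\Wor$, and the monotone-nonexpansive function-space constructions $(-)\monwknefun\UPred{\HeapSegments}$ and $(-)\monstrnefun\UPred{\HeapSegments}$ are contravariant and locally nonexpansive. I would therefore define a bifunctor
\[
  F(X^-,X^+) = \blater\bigl(\nats \finparfun (\{\revoked\} + \{\temp\}\times\States\times\Rels\times(\States\fun(X^-\monwknefun\UPred{\HeapSegments})) + \{\perma\}\times\States\times\Rels\times(\States\fun(X^-\monstrnefun\UPred{\HeapSegments})))\bigr),
\]
constant in $X^+$ and contravariant in $X^-$. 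Each building block---finite partial function space, disjoint sum, finite product, the two monotone-nonexpansive function spaces, and in particular $\blater$---is a locally nonexpansive functor, and $\blater$ is locally contractive; composing them shows $F$ is locally contractive. A solution is then a \cofe{} $\Wor$ with $\Wor \cong F(\Wor,\Wor)$, which the existence theorem supplies together with the isomorphism $\xi$.

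To obtain the two preorders I would carry out the construction throughout in the category of doubly preordered \cofes{}: objects are \cofes{} $X$ equipped with two preorders $\futurewk,\futurestr$ with $\futurewk\subseteq\futurestr$ such that $(X,\futurewk)$ and $(X,\futurestr)$ are each preordered \cofes{}, and morphisms are nonexpansive maps monotone for both. On $F(X,X)$ the two future-world relations are defined componentwise from those of the argument exactly as dictated by the $\monwknefun$/$\monstrnefun$ annotations (revoked regions may be replaced arbitrarily; a temporary region evolves by the weak/public ordering of its heap-predicate and may additionally be revoked only under $\futurestr$; a permanent region evolves by the strong/private ordering and stays permanent). One checks that $F$ restricts to a locally contractive endo-bifunctor of this enriched category, and running the same fixed-point construction there yields $\Wor$ carrying both preorders with $\xi$ an isomorphism of doubly preordered \cofes{}. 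Unpacking "isomorphism" for each preorder is precisely the two biconditionals $W'\futurestr W \Leftrightarrow \xi(W')\futurestr\xi(W)$ and $W'\futurewk W \Leftrightarrow \xi(W')\futurewk\xi(W)$.

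The main obstacle is the bookkeeping for the two preorders rather than the existence of the underlying \cofe{}. I expect the delicate points to be: verifying that the componentwise definitions of $\futurewk$ and $\futurestr$ on $F(X,X)$ really are preorders (reflexivity is immediate, but transitivity must be checked through the region-kind transitions, e.g.\ that permanent can never become temporary while temporary may be revoked only under $\futurestr$), that the inclusion $\futurewk\subseteq\futurestr$ is preserved, and that both preordered-\cofe{} axioms hold simultaneously. One must also confirm that the two monotone-nonexpansive function spaces are well-defined preordered \cofes{} of the correct variance, so that $F$ is a genuine locally contractive endofunctor of the enriched category; once this is in place the appeal to the fixed-point theorem is routine.
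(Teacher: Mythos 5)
Your proposal is correct in substance and reaches the theorem by the same underlying machinery (America--Rutten-style fixed points of locally contractive functors on categories of preordered \cofes{}, as in Birkedal et al.), but it packages the two orderings differently from the paper. The paper defines \emph{two} functors $F_1,F_2 : P^{op}\times P^{op}\to P$ on the category $P$ of (singly) preordered \cofes{} --- identical except that $F_1$ equips the result with $\futurestr$ and $F_2$ with $\futurewk$, and with the two contravariant slots used to feed the $\monstrnefun$- and $\monwknefun$-function spaces separately --- observes that $U\circ F_1 = U\circ F_2$ for the forgetful functor $U$, and then invokes a result (Bizjak's note) guaranteeing a single \cofe{} $\Wor$ carrying both preorders with one isomorphism $\xi$ respecting both. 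You instead propose enriching the category once and for all to ``doubly preordered \cofes{}'' and running the standard fixed-point theorem there; this buys a cleaner one-shot statement at the cost of having to re-verify that this enriched category supports the solution theorem (completeness, local contractivity of $\blater$, etc.), which is exactly the work the paper outsources to the cited note. Both routes are legitimate. One small imprecision: the future-region relations on $F(X,X)$ are not induced by an ordering on the heap predicates --- the paper's rules require the tuple $(v,\phi_{\var{pub}},\phi,H)$ to be \emph{equal} and only let the state $s$ move along $\phi_{\var{pub}}$ or $\phi$ (plus the revocation/reinstatement cases) --- so the ordering on the argument $X$ is needed only to make the monotone function spaces $X \monwknefun \UPred{\HeapSegments}$ and $X \monstrnefun \UPred{\HeapSegments}$ well-defined objects, not to define the ordering on $F(X,X)$ itself. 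With that correction your transitivity and $\futurewk\subseteq\futurestr$ checks go through as you anticipate.
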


We now define the regions to be
\begin{align*}
  \Regions = & \{\revoked\} \uplus \\
             & \{\temp\} \times \States \times \Rels \times (\States \fun (\Wor \monwknefun \UPred{\HeapSegments})) \uplus \\
             & \{\perma\} \times \States \times \Rels \times (\States \fun (\Wor \monstrnefun \UPred{\HeapSegments}))
\end{align*}
Let $\iota.v$ be the projection of the view of a region.

And the worlds are
\[
  \Worlds = \RegionNames \finparfun \Regions
\]
where $\RegionNames = \nats$.

The two \emph{private future} region relations satisfies the following properties:
\begin{mathpar}
  \inferrule{  (s,s') \in \phi \\
    (v,\phi_\pub,\phi,H) = (v',\phi_\pub',\phi',H')}
  {  (v',s',\phi_\pub',\phi',H') \futurestr (v,s,\phi_\pub,\phi,H) }
  \and
  \inferrule{ r \in \Regions }
  { r \futurestr (\temp,s,\phi_\pub,\phi,H) }
  \and
  \inferrule{ r \in \Regions }
  { r \futurestr \revoked }
\end{mathpar}
The two \emph{public future} region relations satisfies the following properties:
\begin{mathpar}
  \inferrule{  (s,s') \in \phi_\pub \\
    (v,\phi_\pub,\phi,H) = (v',\phi_\pub',\phi',H')}
  {  (v',s',\phi_\pub',\phi',H') \futurewk (v,s,\phi_\pub,\phi,H) }
  \and
  \inferrule{ (\temp,s,\phi_\pub,\phi,H) \in \Regions }
  { (\temp,s,\phi_\pub,\phi,H) \futurewk \revoked }
  \and
  \inferrule{ }
  { \revoked \futurewk \revoked }
\end{mathpar}

The two future world relations satisfy the following properties: They allow for any extension of the current world and all existing worlds are allowed to move to an appropriate future region. That is
\begin{mathpar}
  \inferrule{ \dom(W') \supseteq \dom(W)\\ 
    \forall r \in \dom(W) \ldotp W'(r) \futurewk W(r) }
  { W' \futurewk W }
  \and
  \inferrule{ \dom(W') \supseteq \dom(W)\\ 
    \forall r \in \dom(W) \ldotp W'(r) \futurestr W(r) }
  { W' \futurestr W }
\end{mathpar}


\begin{proof}[Proof of Theorem~\ref{thm:world-existence}]
  The theorem follows from a more general solution theorem for the
  category of $P$ preordered \cofes,
  see~\cite{Birkedal:2010:TCS:411:4102-4122}, \cite{Birkedal:tutorial-notes} and \cite{bizjak-note}. 
  We define two functors $F_1$ and $F_2$ from $P^{op}\times P^{op}$ to
  $P$. 
  \begin{align*}
     F_1((X,{\futurestr}'),(Y,{\futurewk}')) = \\
     (\blater (\nats \finparfun ( & \{\revoked\}  + \\
                                          & \{\temp\} \times \States \times \Rels \times (\States \fun ((Y,{\futurewk}') \monnefun \UPred{\HeapSegments})) + \\
                                          & \{\perma\} \times \States
                                            \times \Rels \times
                                            (\States \fun ((X,{\futurestr}')
                                            \monnefun
                                            \UPred{\HeapSegments})))),
                                        \futurestr)
  \end{align*}
  and
  \begin{align*}
     F_2((X,{\futurestr}'),(Y,{\futurewk}')) = \\
     (\blater (\nats \finparfun ( & \{\revoked\}  + \\
                                          & \{\temp\} \times \States \times \Rels \times (\States \fun ((Y,{\futurewk}') \monnefun \UPred{\HeapSegments})) + \\
                                          & \{\perma\} \times \States
                                            \times \Rels \times
                                            (\States \fun ((X,{\futurestr}')
                                            \monnefun
                                            \UPred{\HeapSegments})))),
                                        \futurewk)
  \end{align*}
  The orderings $\futurestr$ and $\futurewk$ used in the definition of
  $F_1$ and $F_2$ are defined by the properties given above.
  Note that the image of $F_1$ and $F_2$ only differ in the ordering
  relation, i.e., letting $U$ denote the forgetful functor from the
  category of preordered \cofes\ to the category of \cofes, we have
  $U \circ F_1 = U \circ F_2$.
  From~\cite{bizjak-note} it then follows that there exists a \cofe\ $\Wor$ and two
  preorderings $\futurestr$ and $\futurewk$ and an isomorphism $\xi$ satisfying the properties
  claimed in theorem.
  (Here, in the proof, we have written the ordering explicitly on the \cofe\ when
  using monotone non-expansive functions; in the theorem formulation
  we have instead annotated the arrow to indicate which ordering is used.)
\end{proof}


Erase all but a set of views:
\begin{align*}
  \lfloor W \rfloor_S \defeq \lambda r \ldotp 
  \begin{cases}
    W(r) & W(r).v \in S\\
    \bot & \text{otherwise}
  \end{cases}
\end{align*}

Define the function $\activeReg{\cdot}$ as follows
\begin{align*}
  \activeReg{} & : \Worlds \fun 2^\RegionNames \\
  \activeReg{W} & \defeq \dom(\erase{W}{\perma,\temp})
\end{align*}

Memory segment satisfaction:
\begin{align*}
  &\heapSat[\hs]{n}{W} 
    \text{ iff }
    \left\{\begin{aligned}
        &\exists P : \activeReg{W} \rightarrow \HeapSegments \ldotp \\
        &\quad \memSatPar{\ms}{W}{P}
      \end{aligned}\right.
\end{align*}

\begin{align*}
  &\memSatPar{\ms}{W}{P}
    \text{ iff }
    \left\{\begin{aligned}
        &\hs = \biguplus_{r\in\activeReg{W}}P(r) \land \\
        &\forall r \in \activeReg{W} \ldotp\\
        &\quad \exists H,s \ldotp\\
        &\qquad W(r) = (\_,s,\_,\_,H) \land \\
        &\qquad \npair[n]{P(r)} \in H(s)(\xi^{-1}(W))\\
      \end{aligned}\right.
\end{align*}

Standard regions for when writing locally is permitted:
\begin{align*}
  &\iota^\pwl : \cal{P} \fun \Regions\\
  &\iota^\pwl\;A \defeq (\temp,1,=,=,H^\pwl \; A) \\
  \\
  &H^\pwl : \cal{P}(\Addrs) \fun \States \fun (\Wor \monwknefun \UPred{\HeapSegments})\\
  &H^\pwl\; A \; s \; \hat{W} \defeq \left\{\npair{\hs} \middle|
    \begin{aligned}
      &\dom(\hs) = A \land \\
      &\forall \addr \in A \ldotp \npair[n-1]{\hs(\addr)} \in \stdvr(\xi(\hat{W}))
    \end{aligned}
        \right\} \union \{\npair[0]{\ms} \}
\end{align*}
Revoking all temporary regions:
\begin{align*}
  \revokeTemp{} & : \Worlds \fun \Worlds \\
  \revokeTemp{W} & \defeq \lambda r \ldotp 
                   \begin{cases}
                     \revoked            & \text{if }W(r) = (\temp,s,\phi_\pub,\phi,H) \\
                     W(r)                & \text{otherwise}
                   \end{cases}
\end{align*}
Further define
\[
  \iota^\pwl_{\start,\addrend} \defeq \iota^\pwl([\start,\addrend])
\]

\newcommand{\wrev}[1]{\revokeTemp{#1}}

Standard regions for when write local is not allowed:
\begin{align*}
  &\iota^\nwl : \cal{P}(\Addrs) \fun \Regions \\
  &\iota^\nwl\; A \defeq (\temp,1,=,=,H^\nwl\;A) \\
  \\
  &\iota^{\nwl,p} : \cal{P}(\Addrs) \fun \Regions \\
  &\iota^{\nwl,p}\;A \defeq (\perma,1,=,=,H^\nwl\;A)\\
  \\
  &H^\nwl : \cal{P}(\Addrs) \fun \States \fun (\Wor \monstrnefun \UPred{\HeapSegments})\\
  &H^\nwl\;A \; s \;\hat{W} \defeq \left\{\npair{\hs} \middle|
    \begin{aligned}
      &\dom(\hs) = A \land \\
      &\forall \addr \in A \ldotp \\
      &\quad \nonlocal{\ms(\addr)} \land\\ 
      &\quad \npair[n-1]{\hs(\addr)} \in \stdvr(\xi(\hat{W}))
    \end{aligned}
        \right\} \union \{\npair[0]{\ms} \}
\end{align*}

Further define

\begin{align*}
  \iota^\nwl_{\start,\addrend} & \defeq \iota^\nwl([\start,\addrend])\\
  \iota^{\nwl,p}_{\start,\addrend} & \defeq \iota^{\nwl,p}([\start,\addrend])\\
\end{align*}

For convenience define
\[
  \localityReg(\gl,W) \defeq 
  \begin{cases}
    \dom(\erase{W}{\perm,\temp}) & \text{if } \gl = \local \\
    \dom(\erase{W}{\perm}) & \text{if } \gl = \glob
  \end{cases}
\]
$\localityReg(\local,W)$ are the regions that local capabilities may govern - that is permanent and temporary regions. $\localityReg(\glob,W)$ are the regions that global capabilities may govern - that is permanent regions. Now define the following function

We need a notion of subset between regions that is almost $n$-subset, but not quite. The only difference is that the view part of a region is disregarded. Define ``semi $n$-subset'' and ``semi $n$-supset'' as:
\lau{ Using normal the canonical $n$-subset gave some technical problems in the proofs. For instance, it was not possible to have a global read capability for a region governed by a permanent standard region. The issue was that $\iota^\pwl$ is always $\temp$ which gave some issues in the read and write conditions when they used $\iota^\pwl$. Another option was to be more explicit, but we opted for this more elegant(?) solution. }
\begin{mathpar}
  \inferrule{(s,\phi_\pub,\phi) = (s',\phi_\pub',\phi') \\
    \forall \hat{W} \ldotp H \; s \; \hat{W} \nsubeq H' \; s' \; \hat{W} }
  { (v,s,\phi_\pub,\phi,H) \nsubsim (v',s',\phi_\pub',\phi',H')}
\end{mathpar}

\subsection{The logical relation}
The logical relation is defined by several mutual recursive definitions. In order to handle this mutual recursion and show that this definitions are well-defined, Banach's fixed-point theorem can be used. 
We have omitted the details of this construction here, but it is done by parameterising all the definitions by the value relation.

\begin{align*}
&\iota = (v,s,\phi_\pub,\phi,H) \text{ is address-stratified iff }\\
&\qquad\begin{multlined}
  \forall s', \hat{W}, n, \ms, \ms' \ldotp \\
  \npair{\ms},\npair{\ms'} \in H~ s'~ \hat{W} \Rightarrow \\
  \dom(\ms) = \dom(\ms') \wedge \\
  \forall \addr \in
  \dom(\ms)\ldotp \npair{\ms\update{\addr}{\ms'(\addr)}} \in H~ s'~ \hat{W}
\end{multlined}
\end{align*}

\begin{align*}
  & \writeCond{} : (((\Addrs\times\Addrs) \fun\Regions) \times \Globals) \fun \Worlds \monnefun \UPred{\Addrs^2}  \\
  & \writeCond{}(\iota,\gl)(W) =  \\
  & \quad \begin{aligned}[t]
    \{ \npair{(\start,\addrend)} \mid & \;\exists r \in \var{localityReg}(g,W) \ldotp \\
    & \;\quad \exists [\start',\addrend'] \supseteq [\start,\addrend] \ldotp \\
    & \;\qquad W(r)\nsupsim[n-1] \iota_{\start',\addrend'} \text{ and }\\
    & \;\qquad W(r) \text{ is address-stratified } \}
  \end{aligned}
\end{align*}

\begin{align*}
  & \readCond{} : \Globals \fun \Worlds \monnefun \UPred{\Addrs^2}  \\
  & \readCond{}(\gl)(W) =  \\
  & \quad \begin{aligned}[t]
    \{ \npair{(\start,\addrend)} \mid & \;\exists r \in \var{localityReg}(g,W) \ldotp \\
    & \;\quad \exists [\start',\addrend'] \supseteq [\start,\addrend] \ldotp \\
    & \;\qquad W(r)\nsubsim[n] \iota_{\start',\addrend'}^\pwl \}
  \end{aligned}
\end{align*}

\begin{align*}
  & \execCond{}(\gl)(W) = \\
  & \quad
    \begin{aligned}[t]
      \{ \npair{(\perm,\start,\addrend)} \mid &  \forall n' < n \ldotp\\
      & \quad \forall W' \future W \ldotp\\
      & \qquad\forall a \in [\start',\addrend'] \subseteq [\start,\addrend] \ldotp\\
      & \qquad \quad \npair[n']{((\perm,\gl),\start',\addrend',\addr)} \in \stder(W')\}
    \end{aligned} \\
  & \quad \text{where } \gl = \local \Rightarrow \future = \futurewk \\
  & \quad \text{and } \gl = \glob \Rightarrow \future = \futurestr
\end{align*}

\begin{align*}
  & \entryCond{}(\gl)(W) = \\
  & \quad
    \begin{aligned}[t]
      \{ \npair{(\start,\addrend,\addr)} \mid &  \forall n' < n \ldotp\\
      & \quad \forall W' \future W \ldotp\\
      & \qquad \npair[n']{((\exec,\gl),\start,\addrend,\addr)} \in \stder(W')\}
    \end{aligned} \\
  & \quad \text{where } \gl = \local \Rightarrow \future = \futurewk \\
  & \quad \text{and } \gl = \glob \Rightarrow \future = \futurestr
\end{align*}

Now define the value relation as follows:
\begin{align*}
  &\stdvr : \Worlds \monwknefun \UPred{\Words} \\
  &\stdvr\defeq \lambda \; W \ldotp 
    \begin{aligned}[t]
      & \{ \npair{i} \mid i \in \ints \union \{ \infty \} \} 
      \union \\
      & \{ \npair{\stdcap[(\noperm,\gl)] }  \} 
      \union \\
      & \{ \npair{\stdcap[(\readonly,\gl)] } \mid \\
      & \quad \npair{(\start,\addrend)} \in \readCond{}(\gl)(W)\} 
      \union \\
      & \{ \npair{\stdcap[(\readwrite,\gl)] } \mid \\
      & \quad \npair{(\start,\addrend)} \in \readCond{}(\gl)(W) \land \\
      & \quad \npair{(\start,\addrend)} \in \writeCond{}(\iota^\nwl,\gl)(W) \}
      \union \\
      & \{ \npair{\stdcap[(\readwritel,\gl)] } \mid \\
      & \quad \npair{(\start,\addrend)} \in \readCond{}(\gl)(W) \land \\
      & \quad \npair{(\start,\addrend)} \in \writeCond{}(\iota^\pwl,\gl)(W) \}
      \union \\
      & \{ \npair{\stdcap[(\exec,\gl)]} \mid \\
      & \quad \npair{(\start,\addrend)} \in \readCond{}(\gl)(W) \land \\
      & \quad \npair{(\exec,\start,\addrend)} \in \execCond{}(\gl)(W) \} 
      \union \\
      & \{ \npair{\stdcap[(\entry,\gl)]} \mid \\
      & \quad \npair{(\start,\addrend,\addr)} \in \entryCond{}(\gl)(W)\} 
      \union \\
      & \{ \npair{\stdcap[(\rwx,\gl)]} \mid \\
      & \quad \npair{(\start,\addrend)} \in \readCond{}(\gl)(W) \land \\
      & \quad \npair{(\start,\addrend)} \in \writeCond{}(\iota^\nwl,\gl)(W) \land\\
      & \quad \npair{(\rwx,\start,\addrend)} \in \execCond{}(\gl)(W)  \land \\
      & \quad \npair{(\exec,\start,\addrend)} \in \execCond{}(\gl)(W) \}
      \union \\
      & \{ \npair{\stdcap[(\rwlx,\gl)]} \mid \\
      & \quad \npair{(\start,\addrend)} \in \readCond{}(\gl)(W) \land \\
      & \quad \npair{(\start,\addrend)} \in \writeCond{}(\iota^\pwl,\gl)(W) \land\\
      & \quad \npair{(\rwlx,\start,\addrend)} \in \execCond{}(\gl)(W) \land \\
      & \quad \npair{(\rwx,\start,\addrend)} \in \execCond{}(\gl)(W) \land \\
      & \quad \npair{(\exec,\start,\addrend)} \in \execCond{}(\gl)(W) \}
    \end{aligned}
\end{align*}

\begin{align*}
  \observations : &  \Worlds \nefun \UPred{\Regs \times \HeapSegments} \\
  \observations \defeq & \lambda W \ldotp 
                             \{ \npair{(\reg,\hs)} \mid
                             \begin{aligned}[t]
                               & \forall \ms_f, \heap', i \leq n \ldotp \\
                               & \quad(\reg,\hs \uplus \ms_f) \step[i] (\halted,\heap')  \\
                               & \qquad \Rightarrow
                               \begin{aligned}[t]
                                 & \exists W' \futurestr W \ldotp\exists \hs_r, \hs' \ldotp\\
                                 & \quad \heap' = \hs' \uplus \hs_r \uplus \ms_f \land \\ 
                                 & \quad \heapSat[\hs']{n-i}{W'} \}
                               \end{aligned}
                             \end{aligned}
\end{align*}

\begin{align*}
  \stdrr : & \Worlds \monwknefun \UPred{\Regs} \\
  \stdrr \defeq & \lambda W \ldotp
                  \begin{aligned}[t]
                    \{ \npair{\reg} \mid & \;\forall r \in \RegName \setminus \{\pcreg\} \ldotp \\
                    & \;\quad  \npair{\reg(r)} \in \stdvr(W) \}
                  \end{aligned}
\end{align*}

\begin{align*}
  \stder : & \Worlds \nefun \UPred{\Words} \\
  \stder \defeq & \lambda W \ldotp \{ \npair{\pc} \mid 
                  \begin{aligned}[t]
                    & \forall n' \leq n \ldotp\\
                    & \quad \forall \npair[n']{\reg} \in \stdrr(W) \ldotp \\
                    & \quad \qquad  \forall \heapSat[\hs]{n'}{W} \ldotp \\
                    & \quad \qquad \quad \npair[n']{(\reg\update{\pcreg}{\pc},\hs)} \in \observations(W) \}
                  \end{aligned}
\end{align*}

\subsection{Useful regions}
Static region used for parts of memory that should not change.
\begin{align*}
  \iota^\sta (v,\ms) &= (v,1,=,=,H^\sta\;\ms)\\
  H^\sta \; \ms \; s \; \hat{W} = & \{\npair{\ms} \mid n > 0 \} \union \{\npair[0]{\ms'} \mid \ms' \in \Mems \}
\end{align*}

Static region used for parts of memory that should not change and where you pass control to untrusted code.
\begin{align*}
  \iota^{\sta,u} (v,\ms) &= (v,1,=,=,H^{\sta,u}\;\ms)\\
  H^{\sta,u} \; \ms \; s \; \hat{W} = & \left\{\npair{\ms'} \middle|
    \begin{aligned}
      &\ms' = \ms \land\\
      &\forall \addr \in \dom(\ms) \ldotp\\
      & \quad \nonlocal{\ms(\addr)} \land\\
      & \quad \npair[n-1]{\ms(\addr)} \in \stdvr(\xi(\hat{W}))
    \end{aligned}
        \right\} \union \{\npair[0]{\ms'} \mid \ms' \in \Mems \}
\end{align*}

\begin{align*}
  \iota^\cnst (v,n) &= (v,1,=,=,H^\cnst\;n)\\
  H^\cnst \; n' \; s \; \hat{W} = &  \{\npair{\ms} \mid n > 0 \land \forall \addr \in \dom(\ms) \ldotp \ms(\addr) = n' \} \union \{\npair[0]{\ms'} \mid \ms' \in \Mems \}
\end{align*}

\subsection{Lemmas}
\subsubsection{Anti-reduction for the observation relation}

\begin{lemma}[Failing terms are in $\observations$ and $\stder$]
  \label{lem:failed-obs-stder}
  If $(\reg,\ms\uplus\ms_f) \step[*] \failed$ for all $\ms_f$, then
  $\npair{(\reg,\ms)} \in \observations(W)$ for any $W$.

  If $(\reg\update{\pcreg}{\word},\ms) \step[*] \failed$ for all $\reg, \ms$,
  then $\npair{\word} \in \stder(W)$ for any $W$.
\end{lemma}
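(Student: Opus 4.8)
The plan is to prove both parts by the same observation: a configuration that is destined to fail can never reach a halting configuration, so the (universally quantified) premise inside $\observations$ is vacuously false. The only genuine ingredient is determinacy (Lemma~\ref{lem:determinacy}), together with the fact that $\failed$ and the halting configurations $(\halted,\heap')$ are terminal and distinct: the step relation is defined only on $\ExecConfs$, and neither $\failed$ nor any $(\halted,\cdot)$ lies in $\ExecConfs$, so neither can step.

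For the first part, I would fix $W$ and $n$ and unfold $\observations(W)$: fix a frame $\ms_f$, a heap $\heap'$, and some $i\leq n$, and assume $(\reg,\ms\uplus\ms_f)\step[i](\halted,\heap')$; it then suffices to reach a contradiction so that the conclusion holds vacuously. By hypothesis $(\reg,\ms\uplus\ms_f)\step[k]\failed$ for some $k$. Applying the third clause of Lemma~\ref{lem:determinacy} to these two reductions (taking the failing one as the general target) yields $k\leq i$ together with $\failed\step[i-k](\halted,\heap')$. Since $\failed\notin\ExecConfs$ is terminal, this forces $i-k=0$ and $\failed=(\halted,\heap')$, which is absurd. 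Hence no halting reduction exists, the premise is false, and $\npair{(\reg,\ms)}\in\observations(W)$.

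For the second part, I would fix $W$ and $n$ and unfold $\stder(W)$: given $n'\leq n$, a register file with $\npair[n']{\reg}\in\stdrr(W)$, and a heap with $\heapSat[\ms]{n'}{W}$, I must show $\npair[n']{(\reg\update{\pcreg}{\word},\ms)}\in\observations(W)$. The idea is to reduce this to the first part applied to the configuration $(\reg\update{\pcreg}{\word},\ms)$. To invoke it I need $(\reg\update{\pcreg}{\word},\ms\uplus\ms_f)\step[*]\failed$ for every frame $\ms_f$; but each such configuration is of the form $(\reg'\update{\pcreg}{\word},\ms')$ with $\reg'=\reg$ and $\ms'=\ms\uplus\ms_f$, which is exactly an instance of the hypothesis of the second part (universally quantified over all register files and heaps). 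The first part then delivers $\npair[n']{(\reg\update{\pcreg}{\word},\ms)}\in\observations(W)$, as required; note that the assumptions $\npair[n']{\reg}\in\stdrr(W)$ and $\heapSat[\ms]{n'}{W}$ are never actually used, since failing terms are safe regardless of their context.

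The only subtle point is the determinacy step ruling out a halting reduction, and the main thing to be careful about is aligning the universal quantifiers over frames and over register files: the frame quantified inside $\observations$ in the first part is matched against the frame quantified in that part's hypothesis, and in the second part the arbitrary register file $\reg$ and heap $\ms\uplus\ms_f$ are matched against the hypothesis's quantification over all register files and heaps. Everything else is routine unfolding of definitions.
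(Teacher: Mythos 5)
Your proposal is correct and follows essentially the same route as the paper, whose proof is just the one-line remark that the lemma ``follows from the definitions of $\observations(W)$ and $\stder(W)$ using an (omitted) determinacy result.'' You have simply filled in the details the paper leaves implicit: the determinacy argument showing a failing configuration cannot also halt (so the premise inside $\observations$ is vacuous), and the reduction of the second claim to the first by instantiating the universally quantified register file and memory.
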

\begin{proof}
  Follows from the definitions of $\observations(W)$ and $\stder(W)$ using an
  (omitted) determinacy result. \dominique{at least state the result
    somewhere?}
\end{proof}

\begin{lemma}[Anti-reduction for $\observations$]
\label{lem:anti-red-obs}
  \begin{align*}
    & \forall n, n',i,\reg, \reg', \ms, \ms', \ms_r, W, W' \ldotp\\
    & \quad n' \geq n - i \land W' \futurestr W \land \\
    & \quad (\forall \ms_f \ldotp (\reg,\ms \uplus \ms_r \uplus \ms_f) \step[i] (\reg',\ms' \uplus \ms_r \uplus \ms_f)) \land \\
    & \quad \npair[n']{(\reg',\ms')} \in \observations(W') \\
    & \qquad \Rightarrow \npair{(\reg,\ms \uplus \ms_r)} \in \observations(W)
  \end{align*}
\end{lemma}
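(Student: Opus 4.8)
The plan is to unfold the definition of $\observations(W)$ and reduce the goal to a single halting run, then transport that run across the $i$ given steps so that it becomes a halting run of $(\reg',\ms')$, whose observation behaviour we already control.

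First I would take arbitrary $\ms_f$, $\heap'$ and $j \leq n$ with $(\reg,\ms \uplus \ms_r \uplus \ms_f) \step[j] (\halted,\heap')$; these are exactly the data that the membership $\npair{(\reg,\ms\uplus\ms_r)} \in \observations(W)$ requires us to process. Instantiating the reduction hypothesis at this same $\ms_f$ gives $(\reg,\ms \uplus \ms_r \uplus \ms_f) \step[i] (\reg',\ms' \uplus \ms_r \uplus \ms_f)$. Since $(\reg',\ms'\uplus\ms_r\uplus\ms_f)$ is an ordinary executable configuration rather than $\halted$, and the machine is deterministic, the determinacy result (Lemma~\ref{lem:determinacy}) yields both $i \leq j$ and the tail run $(\reg',\ms'\uplus\ms_r\uplus\ms_f) \step[j-i] (\halted,\heap')$.

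Next I would feed this tail run into the assumption $\npair[n']{(\reg',\ms')} \in \observations(W')$, taking $\ms_r \uplus \ms_f$ as the frame; absorbing the private region $\ms_r$ into the frame is the essential move here. The index side-condition $j-i \leq n'$ holds because $j \leq n$ and $n' \geq n-i$. This produces a world $W'' \futurestr W'$ and memory segments $\hs_r'$, $\hs''$ with $\heap' = \hs'' \uplus \hs_r' \uplus (\ms_r \uplus \ms_f)$ and $\heapSat[\hs'']{n'-(j-i)}{W''}$.

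Finally I would assemble the witnesses for the goal: the world $W''$, which satisfies $W'' \futurestr W$ by transitivity along $W'' \futurestr W' \futurestr W$; the satisfied segment $\hs''$; and the leftover segment $\hs_r' \uplus \ms_r$, giving $\heap' = \hs'' \uplus (\hs_r' \uplus \ms_r) \uplus \ms_f$ as demanded. It only remains to lower the step index: from $n' \geq n-i$ one computes $n-j \leq n'-(j-i)$, so downwards closure of memory satisfaction converts $\heapSat[\hs'']{n'-(j-i)}{W''}$ into the required $\heapSat[\hs'']{n-j}{W''}$. The sole obstacle is the bookkeeping of the three index inequalities, namely $i \leq j$ (from determinacy), $j-i \leq n'$ (to legally invoke the observation of $(\reg',\ms')$), and $n-j \leq n'-(j-i)$ (to close via downwards closure); each reduces to the single given bound $n' \geq n-i$ together with $j \leq n$, and everything else is routine frame manipulation.
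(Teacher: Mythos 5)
Your proposal is correct and follows essentially the same route as the paper's proof: instantiate the reduction hypothesis at the given frame, use determinacy to obtain the tail run to $\halted$, apply the observation assumption for $(\reg',\ms')$ with $\ms_r \uplus \ms_f$ as the enlarged frame, and then repartition the resulting memory and close with transitivity of $\futurestr$ and downwards closure of memory satisfaction. The only difference is that you spell out the step-index arithmetic more explicitly than the paper does, which is a point in your favour rather than a divergence.
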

\begin{proof}[Proof of Lemma~\ref{lem:anti-red-obs}]
  Assume
  \begin{enumproof} 
    \item $n' \geq n - i$ \label{ass:anti-red-steps}
    \item $W_2 \futurestr W_1$ \label{ass:anti-red-ftrworld}
    \item $\forall \ms_f \ldotp (\reg,\ms \uplus \ms_r \uplus \ms_f) \step[i] (\reg',\ms' \uplus \ms_r \uplus \ms_f)$ \label{ass:anti-red-exec}
    \item $\npair[n']{(\reg',\ms')} \in \observations(W_2)$ \label{ass:anti-red-obs}
  \end{enumproof}
  Show 
  \[
    \npair{(\reg,\ms \uplus \ms_r)}  \in \observations(W_1)
  \]
  To this end let $\ms_{\var{frame}}$, $m'$ and $j$ be given and assume
  \begin{equation}
    \label{ass:anti-red-termination}
    (\reg,\ms \uplus \ms_r \uplus \ms_{\var{frame}}) \step[j] (\halted,m')
  \end{equation}
  From \ref{ass:anti-red-exec} instantiated with $\ms_{\var{frame}}$ we know
  \begin{equation}
    \label{ass:anti-red-part-exec}
    (\reg,\ms \uplus \ms_r \uplus \ms_{\var{frame}}) \step[i] (\reg',\ms' \uplus \ms_r \uplus \ms_{\var{frame}})
  \end{equation}
  Using \ref{ass:anti-red-termination} and \ref{ass:anti-red-part-exec}, we get 
  \[
    (\reg',\ms' \uplus \ms_r \uplus \ms_{\var{frame}}) \step[j-i] (\halted,m')
  \]
  Using this with \ref{ass:anti-red-obs} and $\ms_r \uplus \ms_{\var{frame}}$ as frame, we get $W_3 \futurestr W_2$, $\ms''$ and $\ms_{\var{rev}}$ such that
  \begin{enumproof}[resume]
    \item $m' = \ms'' \uplus \ms_{\var{rev}} \uplus (\ms_r \uplus \ms_{\var{frame}})$ \label{ass:anti-red-mem-split}
    \item $\memSat[n'-(j-i)]{\ms''}{W_3}$ \label{ass:anti-red-mem-sat}
  \end{enumproof}
  Now use $\ms_r \uplus \ms_{\var{rev}}$ as the ``revoked'' memory, $\ms''$ as the memory that satisfies some invariants, and $W_3$ as the desired world, then \ref{ass:anti-red-mem-split} gives us the split and by downwards closure \ref{ass:anti-red-mem-sat} gives us the desired memory satisfaction.
\end{proof}

\subsubsection{Standard regions}

\begin{lemma}
  \label{lem:pwl-stack}
  For all $W$, $\start$, $\addrend$, $n$, $\ms$ if
  \begin{itemize}
  \item $\memSat{\ms}{W}$
  \item $\npair{\stdcap} \in \stdvr(W)$
  \item $\start \leq \addrend$
  \item $\perm \in \{ \rwlx, \rwx \}$
  \end{itemize}
  then
  \[
    \exists r, \start',\addrend' \ldotp [\start,\addrend] \subseteq [\start',\addrend'] \land W(\var{r}) \nequal \iota^\pwl_{\start',\addrend'}
  \]
\end{lemma}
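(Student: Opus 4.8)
The plan is to unfold the definition of $\stdvr(W)$ at the permission $\perm$, read off the two clauses that every $\rwlx$ (resp.\ $\rwx$) capability must satisfy---a read condition and a write condition---and then combine them with $\memSat{\ms}{W}$ to locate a single region that is $n$-equal to a standard $\iota^\pwl$ region. The case $n=0$ is immediate, since $\nequal[0]$ holds between any two regions (everything is $0$-related), so I assume $n \geq 1$.

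First I would use $\npair{\stdcap} \in \stdvr(W)$. Because $\perm \in \{\rwlx,\rwx\}$, the value relation yields $\npair{(\start,\addrend)} \in \readCond{}(\gl)(W)$ together with $\npair{(\start,\addrend)} \in \writeCond{}(\iota,\gl)(W)$, where $\iota = \iota^\pwl$ for $\rwlx$ and $\iota = \iota^\nwl$ for $\rwx$. Unfolding $\readCond{}$ gives a region name $r_1 \in \localityReg(\gl,W)$ and an interval $A_1 = [\start_1',\addrend_1'] \supseteq [\start,\addrend]$ with $W(r_1) \nsubsim[n] \iota^\pwl_{\start_1',\addrend_1'}$; unfolding $\writeCond{}$ gives $r_2 \in \localityReg(\gl,W)$ and $A_2 = [\start_2',\addrend_2'] \supseteq [\start,\addrend]$ with $W(r_2) \nsupsim[n-1] \iota_{\start_2',\addrend_2'}$ and $W(r_2)$ address-stratified.

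Next I would use memory satisfaction to force $r_1 = r_2$. A partition $P$ over $\activeReg{W}$ realizes $\memSat{\ms}{W}$, with $\npair{P(r)} \in H_{W(r)}(s_r)(\xi^{-1}(W))$ for each active $r$ and with the $P(r)$ pairwise domain-disjoint. From $W(r_1) \nsubsim[n] \iota^\pwl_{A_1}$ and $n \geq 1$, the inclusion $H_{W(r_1)} \nsubeq[n] H^\pwl_{A_1}$ forces $\dom(P(r_1)) = A_1$ (only the positive-level part of $H^\pwl_{A_1}$ can contain $P(r_1)$). For $r_2$, address-stratification says all segments accepted by $W(r_2)$ share one domain, and the superset $H^\pwl_{A_2} \nsubeq[n-1] H_{W(r_2)}$ (resp.\ $H^\nwl_{A_2} \nsubeq[n-1] H_{W(r_2)}$) exhibits an accepted segment of domain $A_2$, whence $\dom(P(r_2)) = A_2$. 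Since $[\start,\addrend]$ is nonempty ($\start \leq \addrend$) and lies in both $A_1$ and $A_2$, the domains overlap, so disjointness of $P$ gives $r_1 = r_2 =: r$ and $A_1 = A_2 =: A = [\start',\addrend']$. In the $\rwlx$ case I then combine the two bounds $H^\pwl_A \nsubeq[n-1] H_{W(r)} \nsubeq[n] H^\pwl_A$ (the state and transition components already agree, as both $\nsubsim$ and $\nsupsim$ demand it) to conclude $W(r) \nequal \iota^\pwl_{\start',\addrend'}$, which is the claim.

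The main obstacle is that the read and write conditions quantify over regions independently, so the crux is forcing the two chosen regions to coincide; this is exactly where $\memSat{\ms}{W}$, disjointness of the satisfying partition, and the address-stratified property of the write region are indispensable, and it is the one place a hypothesis beyond the value relation is genuinely used. A second delicate point is the $\rwx$ case: there the write condition only delivers a superset of $\iota^\nwl$, so the argument above yields merely the sandwich $H^\nwl_A \nsubeq H_{W(r)} \nsubeq H^\pwl_A$ (via Lemma~\ref{lem:nwlp-subset-pwl}) rather than genuine $n$-equality with $\iota^\pwl$; one either restricts attention to $\rwlx$, which is all the stack usage in Lemma~\ref{lem:correctness-g1} requires, or settles for a weaker conclusion there. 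Finally I would track the off-by-one between the level-$n$ read bound and the level-$(n-1)$ write bound, so the equality is really obtained at level $n-1$.
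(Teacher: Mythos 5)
Your overall route is the same as the paper's: unfold the value relation to get the read and write conditions, extract the two witnessing regions $r_1$ (with $W(r_1)\nsubsim[n]\iota^\pwl_{A_1}$) and $r_2$ (with $W(r_2)\nsupsim[n-1]\iota_{A_2}$, address-stratified), use the satisfying partition of $\memSat{\ms}{W}$ to pin $\dom(P(r_1))=A_1$ and $\dom(P(r_2))=A_2$ (the latter via address-stratification applied to the all-zero segment), and then force $r_1=r_2$ by disjointness of the partition using $\start\leq\addrend$. All of that matches the paper, and your remark about $\rwx$ is apt: the paper's own proof in fact only treats the $\rwlx$ case, since for $\rwx$ the write condition is stated with $\iota^\nwl$ and the sandwich $H^\nwl\subseteq H_{W(r)}\subseteq H^\pwl$ is all one gets.

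There is, however, one genuine gap at the very last step. You conclude $W(r)\nequal\iota^\pwl_{\start',\addrend'}$ from the two semi-inclusions on the grounds that ``the state and transition components already agree.'' But the relations $\nsubsim$ and $\nsupsim$ are defined precisely so as to \emph{disregard the view component} $v\in\{\temp,\perma\}$ of a region; combining them therefore only yields $W(r)\nsim\iota^\pwl_{\start',\addrend'}$, i.e.\ agreement on the state, the transition systems, and the interpretation $H$ up to level $n-1$ --- it says nothing about whether $W(r)$ is a $\temp$ or a $\perma$ region, whereas $\iota^\pwl$ is by definition $\temp$. The paper closes this gap with Lemma~\ref{lem:pwl-nsim-view}, which is not a formality: it shows that any region $\nsim$-related to $\iota^\pwl_{\start',\addrend'}$ (with $\start'\leq\addrend'$) must have view $\temp$, because $H^\pwl$ accepts segments containing \emph{local} capabilities and is therefore not monotone with respect to $\futurestr$, so it cannot be the interpretation of a $\perma$ region; the argument requires exhibiting two concrete worlds in the $\futurestr$ relation and a segment storing a local read-only capability whose safety is lost in the future world. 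Without this step (or an equivalent argument) your proof establishes only $\nsim$, not the claimed $\nequal$.
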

\begin{proof}[Proof of Lemma~\ref{lem:pwl-stack}]
  Assume
  \begin{enumproof}
    \item $\npair{((\rwlx, \gl),b,e,a)} \in \stdvr(W)$ \label{ass:pwl-stack:val-rel}
    \item $\memSat{\ms}{W}$ \label{ass:pwl-stack:mem-sat}
  \end{enumproof}
  From Assumption~\ref{ass:pwl-stack:val-rel}, we get $r_1$, $r_2$, $b_1$, $b_2$, $e_1$ and $e_2$ such that
  \begin{enumproof}[resume]
    \item $r_1 \in \localityReg(\gl,W)$
    \item $r_2 \in \localityReg(\gl,W)$
    \item $[b,e] \subseteq [b_1,e_1]$\label{ass:pwl-stack:b1}
    \item $[b,e] \subseteq [b_2,e_2]$ \label{ass:pwl-stack:b2}
    \item $W(r_1) \nsubsim \iota^\pwl_{b_1,e_1}$ \label{ass:pwl-stack:subsim}
    \item $W(r_2) \nsupsim \iota^\pwl_{b_2,e_2}$ \label{ass:pwl-stack:supsim}
    \item $W(r_2)$ is address-stratified. \label{ass:pwl-stack:add-stra}
  \end{enumproof}
From Assumption~\ref{ass:pwl-stack:mem-sat}, we get partitionen $P$ s.t.
\[
  \memSat[n,p]{\ms}{W}
\]
Say $P(r_1) = \ms_1$ and $P(r_2) = \ms_2$. First from $\npair{\ms_1} \in W(r_1).H\;W(r_1).s\;\xi^{(-1)}(W)$ using \label{ass:pwl-stack:subsim}, we get $\npair{\ms_1} \in H^\pwl_{b_1,e_1}\;1\;\xi^{(-1)}(W)$ which means $\dom(\ms_1) = [b_1,e_1]$.

Second we know $\npair{[b_2 \mapsto 0, \dots , e_2 \mapsto 0]} \in H^\pwl_{b_2,e_2}\;1\;\xi^{(-1)}(W)$ and $\npair{\ms_2} \in W(r_2).H\;W(r_2).s\;\xi^{(-1)}(W)$ which by Assumption~\ref{ass:pwl-stack:supsim} and \ref{ass:pwl-stack:add-stra} means $\dom(\ms_2) = [b_2,e_2]$.

Now assume for contradition $r_1 \neq r_2$, then we have a contradiction with $\memSat[n,p]{\ms}{W}$ because $\ms_1$ and $\ms_2$ are not disjoint (by Assumptions~\ref{ass:pwl-stack:b1} and \ref{ass:pwl-stack:b2}). So $r_1=r_2$ which also means $[b_1,e_1] = [b_2,e_2]$, so from Assumption~\ref{ass:pwl-stack:subsim} and \ref{ass:pwl-stack:supsim}, we get $W(r_1) \nsim \iota^\pwl_{b_1,e_1}$ which by Lemma~\ref{lem:pwl-nsim-view} means $W(r_1) \nequal \iota^\pwl_{b_1,e_1}$
\end{proof}

\begin{lemma}
  \label{lem:hpwl-mono}
  $H^\pwl_{\start,\addrend} \; s$ is monotone w.r.t $\futurewk$ for all $s \in \States$ and $\start$ and $\addrend$
\end{lemma}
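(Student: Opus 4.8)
The plan is to unfold what monotonicity means for a function of type $\Wor \monwknefun \UPred{\HeapSegments}$ and reduce the whole claim to the monotonicity of the value relation $\stdvr$, which is already guaranteed by its type $\Worlds \monwknefun \UPred{\Words}$ (part of the Banach-fixed-point construction). Concretely, fixing $s \in \States$ and addresses $\start,\addrend$, I must show that whenever $\hat{W}' \futurewk \hat{W}$ we have $H^\pwl_{\start,\addrend}\; s\; \hat{W} \subseteq H^\pwl_{\start,\addrend}\; s\; \hat{W}'$, i.e.\ the set of accepted segments only grows at future worlds.

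So I would take $\npair{\hs} \in H^\pwl_{\start,\addrend}\; s\; \hat{W}$ and show $\npair{\hs} \in H^\pwl_{\start,\addrend}\; s\; \hat{W}'$. If $n = 0$, membership is immediate from the $\union \{\npair[0]{\ms}\}$ summand in the definition of $H^\pwl$. If $n > 0$, then from the hypothesis I obtain $\dom(\hs) = [\start,\addrend]$ together with $\forall \addr \in [\start,\addrend] \ldotp \npair[n-1]{\hs(\addr)} \in \stdvr(\xi(\hat{W}))$. The domain condition does not mention the world, so it transfers unchanged; the only thing left to do is to upgrade the per-address condition from $\hat{W}$ to $\hat{W}'$.

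For that step I would use $\hat{W}' \futurewk \hat{W}$ together with Theorem~\ref{thm:world-existence}, which states that $\xi$ preserves $\futurewk$ in both directions, to get $\xi(\hat{W}') \futurewk \xi(\hat{W})$. Since $\stdvr$ has type $\Worlds \monwknefun \UPred{\Words}$ it is monotone w.r.t.\ $\futurewk$, so $\stdvr(\xi(\hat{W})) \subseteq \stdvr(\xi(\hat{W}'))$. Hence $\npair[n-1]{\hs(\addr)} \in \stdvr(\xi(\hat{W}'))$ for every $\addr \in [\start,\addrend]$, which with the domain condition gives $\npair{\hs} \in H^\pwl_{\start,\addrend}\; s\; \hat{W}'$, as required. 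No argument about the internal structure of $\stdvr$ is needed, only its typing.

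The only delicate point is the bookkeeping around $\xi$ and the later modality $\blater$ that appears in its codomain: I would make sure that the instance of $\futurewk$ fed to $\stdvr$ is exactly the one delivered by the isomorphism, and that the step-index drop from $n$ to $n-1$ (the device that makes the recursive reference to $\stdvr$ well-founded under $\blater$) does not interfere. Since $\blater$ preserves monotonicity and $\stdvr$ is monotone at every index, this is harmless, and the argument goes through essentially by composition of two monotone maps ($\stdvr$ after $\xi$) with a world-independent side condition on the domain.
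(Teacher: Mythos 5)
Your proposal is correct and follows essentially the same route as the paper's proof: unfold the definition of $H^\pwl$, observe that the domain condition is world-independent, transport $\futurewk$ through $\xi$ via Theorem~\ref{thm:world-existence}, and upgrade the per-address condition using monotonicity of $\stdvr$ w.r.t.\ $\futurewk$ (which the paper invokes as Lemma~\ref{lem:stdvr-mono-wk} rather than reading it off the type of $\stdvr$, but this is the same fact). Your explicit handling of the $n=0$ case is a minor completeness bonus over the paper's write-up.
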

\begin{proof}[Proof of Lemma~\ref{lem:hpwl-mono}]
  Let $\hat{W}' \futurewk \hat{W}$ be given and let 
  \begin{equation}
    \label{pf:hpwl-mono:ass}
    \npair{\ms} \in H^\pwl_{\start,\addrend}\; s \; \hat{W}
  \end{equation}
  and show
  \[
    \npair{\ms} \in H^\pwl_{\start,\addrend}\; s \; \hat{W}'
  \]
  From \ref{pf:hpwl-mono:ass}, we get $\dom(\ms) = [\start,\addrend]$. Now let $\addr \in [\start,\addrend]$ be given and show
  \[
    \npair[n-1]{\ms(a)} \in \stdvr(\xi(\hat{W}'))
  \]
  now this follows from Lemma~\ref{lem:stdvr-mono-wk}, $\hat{W}' \futurewk \hat{W}$, Theorem~\ref{thm:world-existence}, and Assumption~\ref{pf:hpwl-mono:ass}.
\end{proof}

\begin{lemma}
  \label{lem:iotapwl-is-a-region}
  $\iota^\pwl_{\start,\addrend}$ is a region for all $\start$ and $\addrend$.
\end{lemma}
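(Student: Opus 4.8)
The plan is to unfold the definition $\iota^\pwl_{\start,\addrend} = \iota^\pwl([\start,\addrend]) = (\temp, 1, =, =, H^\pwl\;[\start,\addrend])$ and observe that this tuple has exactly the shape of an element of the $\{\temp\}$-summand of $\Regions$, namely $\{\temp\} \times \States \times \Rels \times (\States \fun (\Wor \monwknefun \UPred{\HeapSegments}))$. Thus it suffices to check that each component lives in the appropriate set: that $1 \in \States$, that the pair $(=, =)$ is a valid element of $\Rels$, and — the only substantive obligation — that $H^\pwl\;[\start,\addrend]$ is genuinely a function of type $\States \fun (\Wor \monwknefun \UPred{\HeapSegments})$.

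First I would dispatch the trivial components. We have $1 \in \States$ by the standing assumption that $\States$ is sufficiently large to contain the states used in the document. For the relation pair, the equality relation $=$ on $\States$ is reflexive and transitive, and clearly $\mathord{=} \subseteq \mathord{=}$, so $(=,=) \in \Rels$ by the definition of $\Rels$. It then remains to verify the three conditions packaged into the arrow type $\Wor \monwknefun \UPred{\HeapSegments}$ for each fixed state $s$: (i) that $H^\pwl\;[\start,\addrend]\;s\;\hat{W}$ is a well-formed uniform predicate, i.e.\ downwards closed in the step index, for every $\hat{W}$; (ii) that $H^\pwl\;[\start,\addrend]\;s$ is monotone with respect to $\futurewk$; and (iii) that $H^\pwl\;[\start,\addrend]\;s$ is non-expansive in $\hat{W}$.

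For (i), I would take $\npair{\ms} \in H^\pwl\;[\start,\addrend]\;s\;\hat{W}$ and $m \leq n$; the case $m = 0$ is immediate from the explicit union with $\{\npair[0]{\ms}\}$, and for $m > 0$ the domain condition is unaffected while each membership $\npair[m-1]{\ms(\addr)} \in \stdvr(\xi(\hat{W}))$ follows from $\npair[n-1]{\ms(\addr)} \in \stdvr(\xi(\hat{W}))$ by downwards closure of the uniform predicate $\stdvr(\xi(\hat{W}))$, since $m-1 \leq n-1$. Condition (ii) is precisely Lemma~\ref{lem:hpwl-mono}, which I may cite directly. The main (though still routine) point is the non-expansiveness (iii): assuming $\hat{W} \nequal[k] \hat{W}'$, I would show $H^\pwl\;[\start,\addrend]\;s\;\hat{W} \nequal[k] H^\pwl\;[\start,\addrend]\;s\;\hat{W}'$ by taking $\npair{\ms}$ with $n < k$ in the left-hand set (the $n=0$ case being trivial) and transferring each condition $\npair[n-1]{\ms(\addr)} \in \stdvr(\xi(\hat{W}))$ to $\stdvr(\xi(\hat{W}'))$: the isomorphism $\xi$ from Theorem~\ref{thm:world-existence} is non-expansive and $\stdvr$ is non-expansive by its typing $\Wor \monwknefun \UPred{\Words}$, so $\stdvr \circ \xi$ is non-expansive, giving $\stdvr(\xi(\hat{W})) \nequal[k] \stdvr(\xi(\hat{W}'))$, and since $n-1 < k$ the membership transfers. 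The converse inclusion is symmetric, which establishes $k$-equality and hence non-expansiveness. With (i)–(iii) in hand, $H^\pwl\;[\start,\addrend]$ has the required type and $\iota^\pwl_{\start,\addrend} \in \Regions$.
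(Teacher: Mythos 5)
Your proposal is correct and follows essentially the same route as the paper: the paper's proof simply cites Lemma~\ref{lem:hpwl-mono} for the one substantive obligation (monotonicity of $H^\pwl$ with respect to $\futurewk$), exactly as you do in step (ii). The additional checks you spell out — membership of the trivial components, downwards closure of the uniform predicate, and non-expansiveness via $\stdvr \circ \xi$ — are left implicit in the paper but are the right routine verifications.
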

\begin{proof}[Proof of Lemma~\ref{lem:iotapwl-is-a-region}]
  Follows from Lemma~\ref{lem:hpwl-mono}.
\end{proof}

\begin{lemma}
  \label{lem:iota-pwl-address-stratified}
  $\iota^\pwl_{\start,\addrend}$ is address-stratified.
\end{lemma}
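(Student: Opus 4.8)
The plan is to unfold both definitions and exploit the fact that membership in $H^\pwl_{\start,\addrend}\,s\,\hat{W}$ is characterised \emph{pointwise} over addresses; once that is made explicit, address-stratification is almost immediate. First I would unfold $\iota^\pwl_{\start,\addrend} = (\temp,1,{=},{=},H^\pwl([\start,\addrend]))$, so the only component that matters is the predicate $H = H^\pwl_{\start,\addrend}$, which is constant in its state argument. Then I would fix arbitrary $s'$, $\hat{W}$, $n$, $\ms$, $\ms'$ with $\npair{\ms},\npair{\ms'} \in H^\pwl_{\start,\addrend}\,s'\,\hat{W}$ and discharge the two obligations of the address-stratified definition: $\dom(\ms) = \dom(\ms')$, and $\npair{\ms\update{\addr}{\ms'(\addr)}} \in H^\pwl_{\start,\addrend}\,s'\,\hat{W}$ for every $\addr \in \dom(\ms)$.

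The main case is $n > 0$. Here the disjunct $\{\npair[0]{\ms}\}$ contributes nothing, so both $\npair{\ms}$ and $\npair{\ms'}$ must come from the first disjunct of $H^\pwl$. This gives $\dom(\ms) = [\start,\addrend] = \dom(\ms')$, discharging the first conjunct, and it gives $\npair[n-1]{\ms(\addr)} \in \stdvr(\xi(\hat{W}))$ and $\npair[n-1]{\ms'(\addr)} \in \stdvr(\xi(\hat{W}))$ for every $\addr \in [\start,\addrend]$. For the second conjunct I would fix $\addr \in \dom(\ms)=[\start,\addrend]$ and inspect $\ms\update{\addr}{\ms'(\addr)}$: its domain is still $[\start,\addrend]$ because $\addr$ already lies in the domain, and for each $a \in [\start,\addrend]$ the cell $(\ms\update{\addr}{\ms'(\addr)})(a)$ equals $\ms(a)$ when $a \neq \addr$ and $\ms'(\addr)$ when $a = \addr$; in either case $\npair[n-1]{\cdot} \in \stdvr(\xi(\hat{W}))$ holds by the two facts just extracted. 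Hence $\ms\update{\addr}{\ms'(\addr)}$ satisfies the defining condition of the first disjunct, completing this case. The reason the lemma holds is exactly this separability: the membership predicate factors as a domain constraint together with an independent per-address constraint, so overwriting one in-domain cell with a value that already meets the per-address constraint cannot destroy membership.

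I do not expect a deep obstacle; the only delicate points are step-index bookkeeping (the per-address condition lives at $n-1$ while segment membership lives at $n$) and the boundary case $n = 0$. At $n = 0$ the update obligation is immediate from the $\{\npair[0]{\ms}\}$ disjunct, which accepts every segment at step index $0$. The one genuine wrinkle is that this same disjunct admits segments of arbitrary domain, so the conjunct $\dom(\ms) = \dom(\ms')$ is only truly forced for $n > 0$; I would dispatch the $n = 0$ instance by the convention that the level-$0$ layer is degenerate (every address-stratification use in the development operates at positive step index, e.g.\ inside $\writeCond{}$ where the hypothesis sits at $n-1$), so no nontrivial domain claim is needed there.
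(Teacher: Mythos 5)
Your proof is correct and is exactly the ``easy unfolding of definitions'' that the paper gives as its entire proof: membership in $H^\pwl_{\start,\addrend}\,s\,\hat{W}$ factors into a domain constraint plus an independent per-address constraint, so overwriting one in-domain cell with a value already satisfying that constraint preserves membership. Your observation that the $\{\npair[0]{\ms}\}$ disjunct admits arbitrary domains and hence breaks the $\dom(\ms)=\dom(\ms')$ conjunct at step index $0$ identifies a genuine (if minor) imprecision that the paper's one-line proof silently glosses over; your degenerate-level-$0$ reading is the intended one.
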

\begin{proof}
  Easy unfolding of definitions.
\end{proof}

\begin{lemma}
  \label{lem:hnwl-mono-str}
  $H^\nwl_{\start,\addrend} \; s$ is monotone w.r.t $\futurestr$ for all $s \in \States$ and $\start$ and $\addrend$
\end{lemma}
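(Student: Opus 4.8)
The plan is to mirror the proof of Lemma~\ref{lem:hpwl-mono}, which established the public-future ($\futurewk$) monotonicity of $H^\pwl$. The only substantive difference is that here the target relation is the \emph{private}-future relation $\futurestr$, and the ingredient that lets us upgrade from public to private monotonicity is the non-locality side condition built into $H^\nwl$.

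Concretely, I would fix $s$, $\start$, $\addrend$, let $\hat{W}' \futurestr \hat{W}$ be given, assume $\npair{\ms} \in H^\nwl_{\start,\addrend}\;s\;\hat{W}$, and show $\npair{\ms} \in H^\nwl_{\start,\addrend}\;s\;\hat{W}'$. If $n = 0$ the goal is immediate, since $\npair[0]{\ms}$ always lies in the second disjunct of $H^\nwl$. For $n > 0$, unfolding the definition yields $\dom(\ms) = [\start,\addrend]$ together with, for every $\addr \in [\start,\addrend]$, both $\nonlocal{\ms(\addr)}$ and $\npair[n-1]{\ms(\addr)} \in \stdvr(\xi(\hat{W}))$.

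The domain equation $\dom(\ms) = [\start,\addrend]$ and each non-locality fact $\nonlocal{\ms(\addr)}$ are world-independent — by Definition~\ref{def:non-local-cap} they constrain only the word $\ms(\addr)$ — so both transfer to $\hat{W}'$ unchanged. The only clause requiring work is $\npair[n-1]{\ms(\addr)} \in \stdvr(\xi(\hat{W}'))$ for each $\addr$. Here non-locality is the crux: since $\ms(\addr)$ is either an integer or a global capability, its membership in the value relation is preserved along private future worlds. I would discharge this using the private monotonicity of $\stdvr$ on non-local words (Lemma~\ref{lem:stdvr-glob-priv-mono}), together with $\hat{W}' \futurestr \hat{W}$ and Theorem~\ref{thm:world-existence} to carry the relation across $\xi$, giving $\xi(\hat{W}') \futurestr \xi(\hat{W})$ and hence $\npair[n-1]{\ms(\addr)} \in \stdvr(\xi(\hat{W}'))$.

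The main obstacle, and the reason the statement quantifies over $\futurestr$ rather than $\futurewk$, is precisely this last step: the value relation $\stdvr$ is in general only monotone with respect to $\futurewk$, so without the non-locality constraint the argument would fail for local capabilities, whose safety can genuinely be lost when temporary regions are revoked along a private transition. The non-local restriction rules exactly these out, and everything else is routine unfolding parallel to Lemma~\ref{lem:hpwl-mono}.
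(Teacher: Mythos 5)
Your proof is correct and follows essentially the same route as the paper's: unfold $H^\nwl$, observe that the domain equation and the non-locality condition are world-independent and transfer unchanged, and discharge the remaining value-relation clause via private monotonicity of $\stdvr$ on non-local words together with Theorem~\ref{thm:world-existence}. The only nit is a citation: the paper invokes Lemma~\ref{lem:stdvr-non-loc-priv-mono} directly where you cite Lemma~\ref{lem:stdvr-glob-priv-mono}, but since non-local words are exactly global capabilities or integers (and integers are trivially in $\stdvr$ of any world), the two amount to the same argument.
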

\begin{proof}[Proof of Lemma~\ref{lem:hnwl-mono-str}]
  Let $\hat{W}' \futurestr \hat{W}$ be given and let 
  \begin{equation}
    \label{pf:hnwl-mono:ass}
    \npair{\ms} \in H^\nwl_{\start,\addrend}\; s \; \hat{W}
  \end{equation}
  and show
  \[
    \npair{\ms} \in H^\nwl_{\start,\addrend}\; s \; \hat{W}'
  \]
  From \ref{pf:hnwl-mono:ass}, we get $\dom(\ms) = [\start,\addrend]$. Now let $\addr \in [\start,\addrend]$ be given and show
  \begin{enumproof}
    \item \nonlocal{\ms(a)} \label{pf:hnwl-mono:ob1}
    \item $\npair[n-1]{\ms(a)} \in \stdvr(\xi(\hat{W}'))$ \label{pf:hnwl-mono:ob2}
  \end{enumproof}
  \ref{pf:hnwl-mono:ob1} follows trivially from \ref{pf:hnwl-mono:ass}. \ref{pf:hnwl-mono:ob1} follows from Assumption~\ref{pf:hpwl-mono:ass}, \ref{pf:hnwl-mono:ob1} (which we just argued), $\hat{W}' \futurestr \hat{W}$, Theorem~\ref{thm:world-existence}, and Lemma~\ref{lem:stdvr-non-loc-priv-mono}.
\end{proof}

\begin{lemma}
  \label{lem:iotanwl-is-a-region}
  $\iota^\nwl_{\start,\addrend}$ is a region for all $\start$ and $\addrend$.
\end{lemma}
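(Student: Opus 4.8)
The plan is to unfold the definition of a region and check that
$\iota^\nwl_{\start,\addrend} = (\temp,1,{=},{=},H^\nwl_{\start,\addrend})$ lands in the $\temp$ summand of $\Regions$, exactly mirroring the proof of Lemma~\ref{lem:iotapwl-is-a-region} for $\iota^\pwl$. First I would dispatch the trivial data: the state $1 \in \States$ holds by assumption, and the relation pair $({=},{=})$ lies in $\Rels$ since equality is reflexive and transitive and $({=}) \subseteq ({=})$. What remains is the real content, namely that for every $s \in \States$ the map $H^\nwl_{\start,\addrend}\,s$ is an element of $\Wor \monwknefun \UPred{\HeapSegments}$, i.e.\ is non-expansive and monotone with respect to the \emph{public} future relation $\futurewk$.

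Non-expansiveness I would simply read off the definition of $H^\nwl$: its only dependence on the world argument $\hat{W}$ is through the clause $\npair[n-1]{\ms(\addr)} \in \stdvr(\xi(\hat{W}))$, and the stripped index $n-1$ together with non-expansiveness of $\stdvr$ (declared $\Worlds \monwknefun \UPred{\Words}$) and of the isomorphism $\xi$ makes $n$-equal worlds yield $n$-equal predicates.

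The genuine obstacle is monotonicity in the \emph{correct} relation. A $\temp$ region demands monotonicity with respect to $\futurewk$, whereas the monotonicity lemma available for $H^\nwl$, namely Lemma~\ref{lem:hnwl-mono-str}, only establishes monotonicity with respect to the \emph{private} relation $\futurestr$. (This mismatch is intrinsic to $\nwl$ regions: their cells hold non-local values, whose membership in $\stdvr$ is preserved even under private future worlds via Lemma~\ref{lem:stdvr-non-loc-priv-mono}, which is why the stronger private monotonicity is what one naturally proves.) I would bridge the gap by observing that every public future world is also a private future world, i.e.\ $\futurewk \subseteq \futurestr$; this follows from $\phi_\pub \subseteq \phi$ in the definition of $\Rels$ together with the defining rules of the region- and world-level future relations, each public rule being matched by a corresponding private one. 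Hence, given $\hat{W}' \futurewk \hat{W}$ we also have $\hat{W}' \futurestr \hat{W}$, and Lemma~\ref{lem:hnwl-mono-str} delivers $H^\nwl_{\start,\addrend}\,s\,\hat{W} \subseteq H^\nwl_{\start,\addrend}\,s\,\hat{W}'$. This establishes the required $\futurewk$-monotonicity and completes the verification that $\iota^\nwl_{\start,\addrend}$ is a region.
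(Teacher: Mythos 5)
Your proposal is correct and matches the paper's proof, which likewise derives the required $\futurewk$-monotonicity of $H^\nwl_{\start,\addrend}$ by combining Lemma~\ref{lem:hnwl-mono-str} (private monotonicity) with Lemma~\ref{lem:future-pub-impl-future-priv} (public future worlds are private future worlds). Your additional checks of the state, transition-system, and non-expansiveness components are fine and are simply left implicit in the paper.
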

\begin{proof}[Proof of Lemma~\ref{lem:iotanwl-is-a-region}]
  Follows from Lemma~\ref{lem:hnwl-mono-str} and Lemma~\ref{lem:future-pub-impl-future-priv}.
\end{proof}

\begin{lemma}
  \label{lem:iota-nwl-address-stratified}
  $\iota^\nwl_{\start,\addrend}$ is address-stratified.
\end{lemma}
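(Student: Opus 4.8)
The plan is to unfold the definition of address-stratification directly for $\iota^\nwl_{\start,\addrend}$, exactly as was done for $\iota^\pwl_{\start,\addrend}$ in Lemma~\ref{lem:iota-pwl-address-stratified}; the two regions have structurally identical heap predicates (each is a pointwise comprehension over $[\start,\addrend]$ unioned with the degenerate step-$0$ set $\{\npair[0]{\ms}\}$), so the same argument goes through. Writing $A = [\start,\addrend]$ and recalling $\iota^\nwl_{\start,\addrend} = (\temp,1,{=},{=},H^\nwl\,A)$, it suffices to reason about $H^\nwl\,A$. I would fix $s'$, $\hat{W}$, $n$ and two segments $\ms,\ms'$ with $\npair{\ms},\npair{\ms'} \in H^\nwl\,A\,s'\,\hat{W}$, and establish both the domain-equality conclusion and the single-address swap conclusion.

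First I would dispatch the degenerate case $n=0$: here membership can be witnessed by the $\{\npair[0]{\ms}\}$ summand, so $\npair[0]{\ms\update{\addr}{\ms'(\addr)}}$ lies in the set for free and there is nothing further to verify. For $n>0$ both $\ms$ and $\ms'$ must instead arise from the main comprehension, since the step-$0$ summand contributes nothing at positive index; this immediately forces $\dom(\ms) = \dom(\ms') = A$, which gives domain equality.

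For the swap, I would fix $\addr \in A$ and consider $\ms\update{\addr}{\ms'(\addr)}$. Its domain is still $A$, so the only remaining obligation is the pointwise condition: for every $b \in A$ the stored word must be non-local and satisfy $\npair[n-1]{\cdot} \in \stdvr(\xi(\hat{W}))$. If $b \neq \addr$ the word is $\ms(b)$, which meets both conjuncts by membership of $\ms$; if $b = \addr$ the word is $\ms'(\addr)$, which meets both by membership of $\ms'$. Hence $\npair{\ms\update{\addr}{\ms'(\addr)}} \in H^\nwl\,A\,s'\,\hat{W}$, completing the argument.

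There is essentially no hard step: the only things to be careful about are the bookkeeping over the two summands of the union (using the comprehension in the positive-index case and the degenerate set when $n=0$) and the observation that the per-address requirements of $H^\nwl$ — non-locality together with value-relation membership — are genuinely pointwise. That pointwise independence is precisely what lets a single-cell substitution from $\ms'$ into $\ms$ remain in the relation, and it is the whole content of stratification, mirroring the reasoning behind the $\iota^\pwl$ case.
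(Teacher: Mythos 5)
Your unfolding is exactly the argument the paper intends: the paper's entire proof is the one-liner ``Easy unfolding of definitions,'' and your case split on the two summands of the union together with the pointwise $b = \addr$ versus $b \neq \addr$ analysis is precisely the content of that unfolding. The only wrinkle is that at $n=0$ the degenerate summand puts \emph{every} segment into $H^\nwl\,A\,s'\,\hat{W}$, so the domain-equality conjunct of address-stratification is not literally dischargeable there (your ``nothing further to verify'' covers only the swap conjunct) --- but this is an infelicity of the paper's definitions that affects Lemma~\ref{lem:iota-pwl-address-stratified} identically, not a defect of your proof.
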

\begin{proof}
  Easy unfolding of definitions.
\end{proof}

\begin{lemma}
  \label{lem:iotanwlp-is-a-region}
  $\iota^{\nwl,p}_{\start,\addrend}$ is a region for all $\start$ and $\addrend$.
\end{lemma}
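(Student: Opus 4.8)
The plan is to verify directly the well-formedness conditions that the definition of $\Regions$ imposes on a \emph{permanent} region. Unfolding the definitions, $\iota^{\nwl,p}_{\start,\addrend} = (\perma, 1, {=}, {=}, H^\nwl([\start,\addrend]))$, so I must check three things: that the relation pair $({=},{=})$ lies in $\Rels$; that $H^\nwl([\start,\addrend])$ has the type required of the heap predicate in the $\perma$ summand, namely $\States \fun (\Wor \monstrnefun \UPred{\HeapSegments})$; and that the assembled tuple is then a legitimate element of that summand.

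The first condition is immediate, since equality is reflexive and transitive and $({=}) \subseteq ({=})$, so $({=},{=}) \in \Rels$. For the typing condition I must show that for each fixed state $s$ the map $H^\nwl([\start,\addrend])\;s$ is a non-expansive and $\futurestr$-monotone function from $\Wor$ to $\UPred{\HeapSegments}$. Non-expansiveness in the world argument follows exactly as in the companion situation for $H^\pwl$ (the setting of Lemma~\ref{lem:hpwl-mono}): the only dependence of $H^\nwl\;A\;s\;\hat W$ on $\hat W$ is through the clause $\npair[n-1]{\ms(\addr)} \in \stdvr(\xi(\hat W))$, and $\stdvr$ precomposed with $\xi$ is non-expansive, whereas the non-locality requirement on each $\ms(\addr)$ and the domain constraint $\dom(\ms) = A$ do not mention $\hat W$ at all.

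The crucial remaining requirement is precisely $\futurestr$-monotonicity of $H^\nwl([\start,\addrend])\;s$, and this is exactly the content of Lemma~\ref{lem:hnwl-mono-str}, so no additional analytic work is needed. This is the one place where the permanent variant is genuinely \emph{easier} than the temporary one: in Lemma~\ref{lem:iotanwl-is-a-region} the region $\iota^\nwl$ sits in the $\temp$ summand and therefore demands $\futurewk$-monotonicity, forcing one to route through Lemma~\ref{lem:future-pub-impl-future-priv} in order to convert the available $\futurestr$-monotonicity into the weaker-ordering form. Here, because $\iota^{\nwl,p}$ lives in the $\perma$ summand, the native ordering is already $\futurestr$ and Lemma~\ref{lem:hnwl-mono-str} applies on the nose.

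I therefore anticipate no real obstacle: the substantive part (the $\futurestr$-monotonicity of $H^\nwl$) has already been discharged, and what remains is only the bookkeeping of matching $H^\nwl$'s established properties against the $\perma$ clause in the definition of $\Regions$. In short, the proof will read ``Follows from Lemma~\ref{lem:hnwl-mono-str}'', with the single observation that, in contrast to the temporary case, no appeal to Lemma~\ref{lem:future-pub-impl-future-priv} is required.
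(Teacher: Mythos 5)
Your proposal is correct and matches the paper's proof exactly: the paper discharges this lemma with a bare appeal to Lemma~\ref{lem:hnwl-mono-str}, and your observation that the $\perma$ summand asks for $\futurestr$-monotonicity natively---so that, unlike in Lemma~\ref{lem:iotanwl-is-a-region}, no detour through Lemma~\ref{lem:future-pub-impl-future-priv} is needed---is precisely the right accounting of why the citation suffices here.
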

\begin{proof}[Proof of Lemma~\ref{lem:iotanwlp-is-a-region}]
  Follows from Lemma~\ref{lem:hnwl-mono-str}.
\end{proof}

\begin{lemma}
  \label{lem:iotasta-is-a-region}
  $\iota^\sta(v,\ms)$ is a region for all $v \in \{\perma, \temp\}$ and $\ms$.
\end{lemma}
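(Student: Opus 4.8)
The plan is to unfold the definition of $\Regions$ and verify the defining conditions for the tuple $\iota^\sta(v,\ms) = (v,1,=,=,H^\sta\;\ms)$, following exactly the template of the proof of Lemma~\ref{lem:iotax-is-a-region}. Two things must be checked: first, that the pair of state relations $(\phi_\pub,\phi) = (=,=)$ lies in $\Rels$; and second, that $H^\sta\;\ms$ is a legitimate inhabitant of $\States \fun (\Wor \monnefun \UPred{\HeapSegments})$ carrying the monotonicity dictated by the view $v$, namely monotonicity w.r.t.\ $\futurewk$ when $v = \temp$ and w.r.t.\ $\futurestr$ when $v = \perma$. Since the statement quantifies over both views, I would handle both in one stroke.

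The first condition is immediate: the equality relation on $\States$ is reflexive and transitive and is contained in itself, so $(=,=) \in \Rels$.

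For the second condition, the key observation, just as in the $\iota_x$ case, is that $H^\sta\;\ms\;s\;\hat{W}$ depends on neither the state $s$ nor the world $\hat{W}$: it is the constant set $\{\npair{\ms} \mid n > 0\} \union \{\npair[0]{\ms'} \mid \ms' \in \Mems\}$. From this I would conclude non-expansiveness and monotonicity in $\hat{W}$ for free, since a constant function is both non-expansive and monotone with respect to any preorder, which simultaneously discharges the $\futurewk$ requirement for the $\temp$ case and the $\futurestr$ requirement for the $\perma$ case. By the same token non-expansiveness in the state argument $s$ is trivial, and the (vacuous) monotonicity along $\phi = {=}$ holds because there are no nontrivial state transitions.

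The only genuine content left is to verify that for each fixed $s$ and $\hat{W}$ this constant set is a well-formed uniform predicate, i.e.\ downward-closed in the step index; I expect this to be the main, though still routine, obstacle. Given $\npair[m]{\ms'}$ with $m \leq n$ and $\npair{\ms'}$ in the set, if $n > 0$ and $\ms' = \ms$ then $\npair[m]{\ms}$ lies in the set whether $m > 0$ (via the first component) or $m = 0$ (via the second component, using $\ms \in \Mems$); and if $n = 0$ then $m = 0$ and $\npair[0]{\ms'}$ is already in the second component. Assembling these observations shows that $\iota^\sta(v,\ms)$ satisfies all the requirements of $\Regions$ for both $v \in \{\perma,\temp\}$.
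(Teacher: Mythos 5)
Your proof is correct and takes essentially the same approach as the paper, whose entire argument is the single observation that $H^\sta$ does not depend on $\hat{W}$ (nor on the state), making the required non-expansiveness and monotonicity trivial for both views. Your additional verifications that $(=,=)\in\Rels$ and that the constant interpretation is downward-closed in the step index are routine details the paper leaves implicit, and you carry them out correctly.
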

\begin{proof}[Proof of Lemma~\ref{lem:iotasta-is-a-region}]
  $H^\sta$ does not depend on $\hat{W}$, so it is trivial to show the necessary non-expansive and monotonicity requirements.
\end{proof}

\begin{lemma}
  \label{lem:hstau-mono-str}
  $H^{\sta,u}(\ms) \; s$ is monotone w.r.t $\futurestr$ for all $s \in \States$ and $\ms$.
\end{lemma}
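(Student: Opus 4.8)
The plan is to mirror the proof of Lemma~\ref{lem:hnwl-mono-str} almost verbatim, since $H^{\sta,u}(\ms)$ has the same shape as $H^\nwl_{\start,\addrend}$: both impose a $\nonlocal{}$ condition together with a safety condition $\npair[n-1]{\ms(\addr)} \in \stdvr(\xi(\hat{W}))$ on each governed cell, closed off with the usual $\{\npair[0]{\ms'}\}$ padding. First I would fix $s \in \States$ and $\ms$, take an arbitrary $\hat{W}' \futurestr \hat{W}$, assume $\npair{\ms'} \in H^{\sta,u}(\ms)\;s\;\hat{W}$, and aim to show $\npair{\ms'} \in H^{\sta,u}(\ms)\;s\;\hat{W}'$.

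The argument splits on $n$. If $n = 0$, then $\npair[0]{\ms'}$ lies in the trailing set $\{\npair[0]{\ms'} \mid \ms' \in \Mems\}$ regardless of the world, so membership in $H^{\sta,u}(\ms)\;s\;\hat{W}'$ is immediate. If $n > 0$, the assumption forces $\ms' = \ms$ together with, for every $\addr \in \dom(\ms)$, both $\nonlocal{\ms(\addr)}$ and $\npair[n-1]{\ms(\addr)} \in \stdvr(\xi(\hat{W}))$. Of the three clauses needed to conclude $\npair{\ms} \in H^{\sta,u}(\ms)\;s\;\hat{W}'$, the equality $\ms' = \ms$ and the non-locality $\nonlocal{\ms(\addr)}$ are world-independent and carry over unchanged; only the safety clause must be re-established at $\hat{W}'$.

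For that remaining clause I would fix $\addr \in \dom(\ms)$ and show $\npair[n-1]{\ms(\addr)} \in \stdvr(\xi(\hat{W}'))$. Since $\hat{W}' \futurestr \hat{W}$, Theorem~\ref{thm:world-existence} gives $\xi(\hat{W}') \futurestr \xi(\hat{W})$, and because $\ms(\addr)$ is non-local we may invoke the private monotonicity of the value relation on non-local words, i.e.\ Lemma~\ref{lem:stdvr-non-loc-priv-mono}, applied to the known membership $\npair[n-1]{\ms(\addr)} \in \stdvr(\xi(\hat{W}))$. This is exactly the point at which the $\nonlocal{}$ clause baked into $H^{\sta,u}$ earns its keep: $\stdvr$ is in general only monotone with respect to the public order $\futurewk$, so without non-locality a $\futurestr$ step need not preserve safety.

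I do not expect a genuine obstacle here; the only thing to get right is selecting the \emph{private} (non-local) monotonicity lemma rather than the weaker public version, precisely as in Lemma~\ref{lem:hnwl-mono-str}. Should one prefer not to repeat the $H^\nwl$ argument, an alternative would be to observe directly that, on its first (non-trivial) component, $H^{\sta,u}(\ms)\;s\;\hat{W}$ differs from $H^\nwl_{\start,\addrend}\;s\;\hat{W}$ only by the additional constraint $\ms' = \ms$, which is independent of $\hat{W}$; monotonicity of the latter (Lemma~\ref{lem:hnwl-mono-str}) then transfers, modulo this world-independent restriction, giving the result.
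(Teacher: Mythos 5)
Your proposal is correct and follows essentially the same route as the paper's proof: extract $\ms' = \ms$ from the assumption, carry over the world-independent non-locality clause, and re-establish the safety clause at $\hat{W}'$ via Theorem~\ref{thm:world-existence} and Lemma~\ref{lem:stdvr-non-loc-priv-mono}. Your explicit handling of the $n=0$ padding case is a minor bit of extra care the paper leaves implicit, but the substance is identical.
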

\begin{proof}[Proof of Lemma~\ref{lem:hstau-mono-str}]
  Let $\hat{W}' \futurestr \hat{W}$ be given and let 
  \begin{equation}
    \label{pf:hstau-mono:ass}
    \npair{\ms'} \in H^{\sta,u}(\ms)\; s \; \hat{W}
  \end{equation}
  and show
  \[
    \npair{\ms'} \in H^{\sta,u}(\ms)\; s \; \hat{W}'
  \]
  From \ref{pf:hstau-mono:ass}, we get $\ms' = \ms$. Now let $\addr \in \dom(\ms)$ be given and show
  \begin{enumproof}
    \item \nonlocal{\ms(a)} \label{pf:hstau-mono:ob1}
    \item $\npair[n-1]{\ms(a)} \in \stdvr(\xi(\hat{W}'))$ \label{pf:hstau-mono:ob2}
  \end{enumproof}
  \ref{pf:hstau-mono:ob1} follows trivially from \ref{pf:hstau-mono:ass}. \ref{pf:hstau-mono:ob1} follows from Assumption~\ref{pf:hstau-mono:ass}, \ref{pf:hstau-mono:ob1} (which we just argued), $\hat{W}' \futurestr \hat{W}$, Theorem~\ref{thm:world-existence} and Lemma~\ref{lem:stdvr-non-loc-priv-mono}.
\end{proof}

\begin{lemma}
  \label{lem:iotastau-is-a-region}
  $\iota^{\sta,u}(v,\ms)$ is a region for all $v \in \{\perma, \temp\}$ and $\ms$.
\end{lemma}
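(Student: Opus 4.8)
The plan is to verify, clause by clause, that the tuple $\iota^{\sta,u}(v,\ms) = (v,1,=,=,H^{\sta,u}\;\ms)$ inhabits the appropriate summand of $\Regions$, following exactly the recipe already used for $\iota^\nwl$ and $\iota^{\nwl,p}$ in Lemmas~\ref{lem:iotanwl-is-a-region} and~\ref{lem:iotanwlp-is-a-region}. Concretely, the obligations split into (i) checking the state and relation components, (ii) checking that $H^{\sta,u}\;\ms$ lands in the function space $\States \fun (\Wor \monwknefun \UPred{\HeapSegments})$ (for $v=\temp$) or $\States \fun (\Wor \monstrnefun \UPred{\HeapSegments})$ (for $v=\perma$), and (iii) the monotonicity of $H^{\sta,u}\;\ms\;s$ in its world argument.

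First I would dispatch the routine parts. The relation pair $(=,=)$ lies in $\Rels$ because equality is reflexive and transitive and $= \subseteq =$, and the state component $1$ is unconstrained. For each fixed $s$, I claim $H^{\sta,u}\;\ms\;s\;\hat W$ is a uniform predicate: it is downward-closed in the step index since at index $0$ every memory segment is captured by the $\{\npair[0]{\ms'}\}$ summand, while at positive indices the clause $\npair[n-1]{\ms(\addr)} \in \stdvr(\xi(\hat W))$ is preserved under lowering $n$, using that $\stdvr(\cdot)$ is itself a uniform predicate. Since $H^{\sta,u}$ ignores its state argument, non-expansiveness in $s$ is immediate. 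Non-expansiveness in $\hat W$ follows because membership at index $n$ mentions $\stdvr(\xi(\hat W))$ only at index $n-1$, and $\stdvr$ is non-expansive (by its type and Theorem~\ref{thm:world-existence}); hence $\hat W \nequal \hat W'$ forces the index-$n$ truncations of the two predicates to coincide.

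The substantive clause is monotonicity of $H^{\sta,u}\;\ms\;s$ in $\hat W$ with respect to the future-world relation fixed by $v$, namely $\futurewk$ when $v=\temp$ and $\futurestr$ when $v=\perma$, matching the arrow annotations $\monwknefun$ and $\monstrnefun$ in the definition of $\Regions$. For the permanent case this is precisely Lemma~\ref{lem:hstau-mono-str}. For the temporary case I would reduce to the same lemma: given $\hat W' \futurewk \hat W$, apply Lemma~\ref{lem:future-pub-impl-future-priv} to obtain $\hat W' \futurestr \hat W$, and then Lemma~\ref{lem:hstau-mono-str} closes the goal. This is exactly the manoeuvre used in the proof of Lemma~\ref{lem:iotanwl-is-a-region}.

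I expect no genuine obstacle. The only point requiring attention is that Lemma~\ref{lem:hstau-mono-str} is phrased for $\futurestr$, so in the $v=\temp$ case one cannot invoke it directly and must first pass through the public-implies-private inclusion; everything else is a mechanical unfolding of the uniform-predicate and non-expansiveness conditions.
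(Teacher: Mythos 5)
Your proposal is correct and matches the paper's own proof, which simply cites Lemma~\ref{lem:hstau-mono-str} together with Lemma~\ref{lem:future-pub-impl-future-priv} (exactly your reduction of the $\temp$ case through the public-implies-private inclusion, mirroring Lemma~\ref{lem:iotanwl-is-a-region}). The additional routine checks you spell out (the relation pair, uniformity, non-expansiveness) are left implicit in the paper but are handled correctly.
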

\begin{proof}[Proof of Lemma~\ref{lem:iotastau-is-a-region}]
  Follows from Lemma~\ref{lem:hstau-mono-str} and Lemma~\ref{lem:future-pub-impl-future-priv}.
\end{proof}

\begin{lemma}
  \label{lem:hwnl-nsubset-hpwl}
  \[
    H^\nwl_{\start,\addrend} \; s\; \hat{W} \nsubeq H^\pwl_{\start,\addrend} \; s\; \hat{W}
  \]
\end{lemma}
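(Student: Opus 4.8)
The plan is to unfold the definition of the $n$-indexed subset relation $\nsubeq$ and then observe that the membership condition defining $H^\pwl$ is a literal weakening of the one defining $H^\nwl$. Concretely, I would fix an approximation stage $m \leq n$ and a memory segment $\ms$ satisfying $\npair[m]{\ms} \in H^\nwl_{\start,\addrend}\;s\;\hat{W}$, and aim to establish $\npair[m]{\ms} \in H^\pwl_{\start,\addrend}\;s\;\hat{W}$.

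First I would dispatch the degenerate step-index case $m = 0$. Every standard region in this development contains $\npair[0]{\ms}$ for an arbitrary $\ms$ as one disjunct of its defining union, so $\npair[0]{\ms} \in H^\pwl_{\start,\addrend}\;s\;\hat{W}$ holds immediately, without any appeal to the hypothesis.

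For the main case $m > 0$, the element cannot have entered $H^\nwl_{\start,\addrend}\;s\;\hat{W}$ through the $\{\npair[0]{\ms}\}$ disjunct, so membership must come from the structured part of the union and yields two facts: $\dom(\ms) = [\start,\addrend]$, and for every $\addr \in [\start,\addrend]$ both $\nonlocal{\ms(\addr)}$ and $\npair[m-1]{\ms(\addr)} \in \stdvr(\xi(\hat{W}))$. To land in $H^\pwl_{\start,\addrend}\;s\;\hat{W}$ I need precisely $\dom(\ms) = [\start,\addrend]$ together with $\npair[m-1]{\ms(\addr)} \in \stdvr(\xi(\hat{W}))$ for each such $\addr$; both are already in hand, and I simply discard the unused $\nonlocal{\ms(\addr)}$ conjunct.

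I do not expect any genuine obstacle: the substance of the lemma is that $H^\nwl$ imposes strictly more than $H^\pwl$ — namely non-locality on top of the safety (value-relation) requirement — so the inclusion is true essentially by construction. The only points demanding minor care are reading $\nsubeq$ correctly as quantifying over all stages $m \leq n$ rather than $n$ alone, and isolating the $m = 0$ base case so that the $n{-}1$ shift appearing in the value-relation clause is always well-defined.
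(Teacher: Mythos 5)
Your proof is correct and follows essentially the same route as the paper's: take an element of $H^\nwl_{\start,\addrend}\;s\;\hat{W}$, extract the domain condition and the value-relation condition on each address, and discard the extra non-locality conjunct to land in $H^\pwl_{\start,\addrend}\;s\;\hat{W}$. Your explicit treatment of the $m=0$ disjunct and the quantification over stages $m \leq n$ is slightly more careful than the paper's one-line argument, but it is the same proof.
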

\begin{proof}[Proof of Lemma~\ref{lem:hwnl-nsubset-hpwl}]
  Trivial. Let
  \[
    \npair{\ms} \in H^\nwl_{\start,\addrend} \; s\; \hat{W}
  \]
  and show
  \[
    \npair{\ms} \in H^\pwl_{\start,\addrend} \; s\; \hat{W}
  \]
  From the assumption, we get $\dom(\ms) = [\start,\addrend]$. We further need to show
  \[
    \forall \addr \in \dom(\ms) \ldotp \npair[n-1]{\ms(\addr)} \in \stdvr(\xi(\hat{W}))
  \]
  Given $\addr$, we know from the assumption that
  \[
    \npair[n-1]{\ms(\addr)} \in \stdvr(\xi(\hat{W}))
  \]
\end{proof}

\begin{lemma}
  \label{lem:nwl-subset-pwl}
  \begin{align*}
    & \forall n \in \nats\ldotp \forall \start, \addrend \in \Addrs \ldotp  \\
    & \quad \iota^\nwl_{\start,\addrend} \nsubsim \iota^\pwl_{\start,\addrend}
  \end{align*}
\end{lemma}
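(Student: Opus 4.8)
The plan is to unfold the definition of the ``semi $n$-subset'' relation $\nsubsim$ on regions and discharge its two premises directly. Recall that by definition $\iota^\nwl_{\start,\addrend} = (\temp,1,{=},{=},H^\nwl_{\start,\addrend})$ and $\iota^\pwl_{\start,\addrend} = (\temp,1,{=},{=},H^\pwl_{\start,\addrend})$. Fix arbitrary $n \in \nats$ and $\start,\addrend \in \Addrs$; I apply the single inference rule defining $\nsubsim[n]$ to these two regions.

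First I would check the equality premise $(s,\phi_\pub,\phi) = (s',\phi_\pub',\phi')$. For both regions the state component is $1$, the public relation $\phi_\pub$ is the identity relation ${=}$, and the private relation $\phi$ is also ${=}$. Hence this premise holds immediately, component by component. Note that the $\nsubsim$ rule deliberately ignores the view component, so the fact that both regions happen to carry the view $\temp$ is not even needed here; only the state and the two relations matter.

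Second I would establish the $H$-containment premise, namely that for every $\hat{W}$ we have $H^\nwl_{\start,\addrend}\;s\;\hat{W} \nsubeq[n] H^\pwl_{\start,\addrend}\;s\;\hat{W}$, where $s = 1$. This is exactly the statement of Lemma~\ref{lem:hwnl-nsubset-hpwl}, so I would simply invoke it for each $\hat{W}$; the $n$-index threads through unchanged. With both premises of the rule satisfied, the rule concludes $\iota^\nwl_{\start,\addrend} \nsubsim[n] \iota^\pwl_{\start,\addrend}$, and since $n$, $\start$, $\addrend$ were arbitrary the universally quantified statement follows.

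There is no real obstacle in this argument: the substance is entirely contained in Lemma~\ref{lem:hwnl-nsubset-hpwl} (which observes that every memory segment in $H^\nwl$ is, after dropping the $\nonlocal{\cdot}$ constraint, already a member of $H^\pwl$), and the only thing added here is the bookkeeping that the two standard regions share identical state and transition relations. The one point worth stating explicitly in the write-up is that the view mismatch is irrelevant because $\nsubsim$ compares only $(s,\phi_\pub,\phi)$ and the $H$-components, which is precisely the reason the ``semi'' variant of $n$-subset was introduced.
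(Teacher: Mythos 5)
Your proposal is correct and follows essentially the same route as the paper's proof: both observe that the two regions share the state $1$ and the transition relations $({=},{=})$, so the semi $n$-subset reduces to $H^\nwl_{\start,\addrend}\;1\;\hat{W} \nsubeq H^\pwl_{\start,\addrend}\;1\;\hat{W}$ for each $\hat{W}$, which is exactly Lemma~\ref{lem:hwnl-nsubset-hpwl}. Your extra remark that $\nsubsim$ ignores the view component is accurate but, as you note, not even needed here since both regions carry the view $\temp$.
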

\begin{proof}[Proof of Lemma~\ref{lem:nwl-subset-pwl}]
  Let $n$, $\start$, $\addrend$ be given and show
  \[
    \iota^\nwl_{\start,\addrend} \nsubsim \iota^\pwl_{\start,\addrend}
  \]
  They agree on the state and transition systems, so given $\hat{W}$ it suffices to show
  \[
    H^\nwl_{\start,\addrend} \; 1\; \hat{W} \nsubeq H^\pwl_{\start,\addrend} \; 1\; \hat{W}  
  \]
  which is true by Lemma~\ref{lem:hwnl-nsubset-hpwl}.
\end{proof}

\begin{lemma}
  \label{lem:nwlp-subset-pwl}
  \begin{align*}
    & \forall n \in \nats\ldotp \forall \start, \addrend \in \Addrs \ldotp  \\
    & \quad \iota^{\nwl,p}_{\start,\addrend}  \nsubsim \iota^\pwl_{\start,\addrend}
  \end{align*}
\end{lemma}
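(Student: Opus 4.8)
The plan is to reduce the claim directly to Lemma~\ref{lem:hwnl-nsubset-hpwl}, following exactly the pattern of the proof of Lemma~\ref{lem:nwl-subset-pwl}. First I would fix arbitrary $n$, $\start$, and $\addrend$ and unfold the two regions: by definition $\iota^{\nwl,p}_{\start,\addrend} = (\perma, 1, {=}, {=}, H^\nwl_{\start,\addrend})$ and $\iota^\pwl_{\start,\addrend} = (\temp, 1, {=}, {=}, H^\pwl_{\start,\addrend})$. The key observation is that these two regions agree on the state component ($s = 1$) and on both transition relations ($\phi_\pub = {=}$ and $\phi = {=}$); they differ only in the view ($\perma$ versus $\temp$).

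Next I would appeal to the definition of the semi $n$-subset relation $\nsubsim$, whose whole point (as the accompanying remark explains) is that it disregards the view component of a region. Its only premise besides the heap-predicate inclusion requires that the state-and-transition triples coincide, which we have just checked, so the inference rule for $\nsubsim$ applies and it remains to discharge the heap-predicate obligation: for every $\hat{W}$, show $H^\nwl_{\start,\addrend} \; 1 \; \hat{W} \nsubeq H^\pwl_{\start,\addrend} \; 1 \; \hat{W}$.

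This last inclusion is precisely Lemma~\ref{lem:hwnl-nsubset-hpwl} instantiated at $s = 1$, which says that any heap satisfying the no-write-local predicate also satisfies the permit-write-local predicate (the former simply imposes the extra non-locality constraint on each cell while keeping the same value-relation membership requirement). Hence the goal follows immediately, and there is no real obstacle: the entire content is that $\nsubsim$ ignores the $\perma$ versus $\temp$ mismatch, so the permanent variant $\iota^{\nwl,p}$ behaves identically to the temporary $\iota^\nwl$ already handled in Lemma~\ref{lem:nwl-subset-pwl}.
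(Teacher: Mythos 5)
Your proposal is correct and matches the paper's own proof, which likewise reduces the claim to Lemma~\ref{lem:hwnl-nsubset-hpwl} by the same argument used for Lemma~\ref{lem:nwl-subset-pwl}: the two regions agree on state and transition systems, the semi $n$-subset relation ignores the $\perma$/$\temp$ view mismatch, and the remaining obligation is exactly the heap-predicate inclusion $H^\nwl_{\start,\addrend}\;1\;\hat{W} \nsubeq H^\pwl_{\start,\addrend}\;1\;\hat{W}$. No changes needed.
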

\begin{proof}[Proof of Lemma~\ref{lem:nwlp-subset-pwl}]
  Follows from Lemma~\ref{lem:hwnl-nsubset-hpwl} (see proof of Lemma~\ref{lem:nwl-subset-pwl}).
\end{proof}

\begin{lemma}
  \label{lem:stau-subset-pwl}
  \begin{align*}
    & \forall n \in \nats\ldotp \forall \start, \addrend \in \Addrs \ldotp \forall v \in \{\perma, \temp\} \ldotp \\
    & \quad \dom(\ms) = [\start,\addrend] \Rightarrow \\
    & \qquad \iota^{\sta,u}_{\start,\addrend}(v,\ms) \nsubsim \iota^\pwl_{\start,\addrend}
  \end{align*}
  \dominique{err.. shouldn't you require that $\npair{\ms(a)}\in\stdvr(W)$ for
    all $W$.  Hm: this probably doesn't hold like that?}
\end{lemma}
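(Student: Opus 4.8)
The plan is to unfold the definition of $\nsubsim$ and reduce the claim to an inclusion at the level of the predicate components $H$, exactly as was done for Lemma~\ref{lem:nwl-subset-pwl}. By definition of the semi $n$-subset relation, to establish $\iota^{\sta,u}(v,\ms) \nsubsim \iota^\pwl_{\start,\addrend}$ it suffices to check that the two regions agree on their state and transition systems and that their $H$-components are related by $\nsubeq$ in every world. Both $\iota^{\sta,u}(v,\ms) = (v,1,=,=,H^{\sta,u}\,\ms)$ and $\iota^\pwl_{\start,\addrend} = (\temp,1,=,=,H^\pwl\,[\start,\addrend])$ carry state $1$ and the equality transition system in both the public and private components, so the first requirement is immediate (recall $\nsubsim$ disregards the view $v$). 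It therefore remains to show, for every $\hat{W}$,
\[
  H^{\sta,u}(\ms)\;1\;\hat{W} \nsubeq H^\pwl\;[\start,\addrend]\;1\;\hat{W}.
\]

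To prove this inclusion I would fix $\npair[m]{\ms'}$ with $m \leq n$ in the left-hand side and argue it lies in the right-hand side, splitting on whether $m = 0$. When $m = 0$ both sets contain every memory segment at step-index $0$, so membership is trivial. When $m > 0$, unfolding membership in $H^{\sta,u}(\ms)\;1\;\hat{W}$ gives $\ms' = \ms$ together with the facts that $\ms(\addr)$ is non-local and $\npair[m-1]{\ms(\addr)} \in \stdvr(\xi(\hat{W}))$ for every $\addr \in \dom(\ms)$. The hypothesis $\dom(\ms) = [\start,\addrend]$ lets me conclude $\dom(\ms') = [\start,\addrend]$, which is exactly the domain condition of $H^\pwl\;[\start,\addrend]\;1\;\hat{W}$, and the per-address condition $\npair[m-1]{\ms'(\addr)} \in \stdvr(\xi(\hat{W}))$ is precisely the one already in hand; hence $\npair[m]{\ms'}$ belongs to the right-hand side. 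This mirrors Lemma~\ref{lem:hwnl-nsubset-hpwl} and its consequences, the only genuinely new ingredient being the domain bookkeeping.

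The one point that deserves care — and the reason the extra hypothesis $\dom(\ms)=[\start,\addrend]$ appears — is precisely this domain alignment: in the $H^\nwl$/$H^\pwl$ case the domain is baked into the region as the fixed set $A = [\start,\addrend]$, whereas for $H^{\sta,u}(\ms)$ it is determined by $\ms$ and must be matched explicitly. I expect no real obstacle beyond this: the additional non-locality clause of $H^{\sta,u}$ is simply weakened away (it is demanded on the stronger left-hand side but not required on the right), and the $\stdvr$ membership that $H^\pwl$ requires is supplied verbatim by membership in $H^{\sta,u}(\ms)$. In particular, contrary to the marginal worry, no separate argument that $\ms(\addr)$ is in $\stdvr$ for all worlds is needed, since that obligation is discharged by the very assumption that $\npair[m]{\ms'}$ already lies in $H^{\sta,u}(\ms)\;1\;\hat{W}$.
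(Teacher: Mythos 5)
Your proposal is correct and follows essentially the same route as the paper, which simply defers to the proofs of Lemma~\ref{lem:nwl-subset-pwl} and Lemma~\ref{lem:hwnl-nsubset-hpwl}; you have spelled out exactly that argument, with the only genuinely new ingredient being the domain bookkeeping via the hypothesis $\dom(\ms)=[\start,\addrend]$. Your closing observation that the $\stdvr$ obligation is already supplied by membership in $H^{\sta,u}(\ms)\;1\;\hat{W}$ correctly resolves the worry raised in the marginal note as far as this inclusion is concerned.
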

\begin{proof}[Proof of Lemma~\ref{lem:stau-subset-pwl}]
  Essentially the same as the proof of Lemma~\ref{lem:nwl-subset-pwl} and Lemma~\ref{lem:hwnl-nsubset-hpwl}.
\end{proof}

\begin{lemma}
  \label{lem:pwl-nsim-view}
  \begin{align*}
    & \forall n \in \nats \ldotp \forall \start, \addrend, b \in \Addrs \ldotp \forall \iota \in \Regions \\
    & \iota \nsim \iota^\pwl_{\start,\addrend} \land \start \leq \addrend \Rightarrow \iota \nequal \iota^\pwl_{\start,\addrend}
  \end{align*}
\end{lemma}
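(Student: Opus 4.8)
The plan is to unfold $\nsim$ as the conjunction $\nsubsim \land \nsupsim$, read off everything it forces about $\iota$, and then observe that the only remaining freedom is the region's view, which the hypothesis $\start \leq \addrend$ lets me pin down. First I would write $\iota = (v,s,\phi_\pub,\phi,H)$ and abbreviate $A = [\start,\addrend]$, so that $\iota^\pwl_{\start,\addrend} = (\temp,1,{=},{=},H^\pwl A)$. From $\iota \nsubsim \iota^\pwl_{\start,\addrend}$ and $\iota \nsupsim \iota^\pwl_{\start,\addrend}$ the equality premise of the semi-relations gives $s = 1$ and $\phi_\pub = \phi = {=}$ directly, while the two inclusions combine to $H\,1\,\hat W \nsubeq H^\pwl A\,1\,\hat W$ and $H^\pwl A\,1\,\hat W \nsubeq H\,1\,\hat W$ for every $\hat W$; that is, $H\,1\,\hat W$ and $H^\pwl A\,1\,\hat W$ are $n$-equal in $\UPred{\HeapSegments}$ for all $\hat W$. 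Since the transition relation is the identity, state $1$ is the only state reachable from $1$, so this state-$1$ agreement captures everything the region's meaning depends on.

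Consequently, to obtain $\iota \nequal \iota^\pwl_{\start,\addrend}$ it remains only to show $v = \temp$. The view $\revoked$ is excluded immediately, since $\nsubsim$ requires $\iota$ to supply the components $s,\phi_\pub,\phi$ that $\revoked$ lacks, so $v \in \{\temp,\perma\}$. The case $n = 0$ is vacuous, because $0$-equality on the region \cofe{} is total. Thus I only need to rule out $v = \perma$ for $n \geq 1$.

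The heart of the argument, and the only place $\start \leq \addrend$ enters, is excluding $v = \perma$. If $v = \perma$, then by the typing of $\Regions$ the map $H\,1$ lies in $\Wor \monstrnefun \UPred{\HeapSegments}$ and is therefore $\futurestr$-monotone: $\hat W' \futurestr \hat W$ implies $H\,1\,\hat W \subseteq H\,1\,\hat W'$. I would contradict this by exhibiting a failure of $\futurestr$-monotonicity of $H^\pwl A\,1$ that is already visible at cell step-index $n-1$. Because $\start \leq \addrend$, the set $A$ is nonempty; fix $a_0 \in A$. Choose a world $\hat W$ carrying a temporary region (say an $\iota^\nwl_{a_0,a_0}$ at some name) and let $c$ be a $\local$ capability over $a_0$ referencing it, so that $\npair{c} \in \stdvr(\xi(\hat W))$ via the read/write conditions, using that $\localityReg(\local,\hat W)$ includes temporary regions. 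Passing to $\hat W' = \revokeTemp{\hat W}$, which is a $\futurestr$-future of $\hat W$ (revoking temporaries is a private, not a public, step), the capability $c$ is local and its region is revoked, so $\npair{c} \notin \stdvr(\xi(\hat W'))$. Taking $\ms$ with $\dom(\ms) = A$, $\ms(a_0) = c$, and all other cells $0$, one gets $\npair{\ms} \in H^\pwl A\,1\,\hat W$ but $\npair{\ms} \notin H^\pwl A\,1\,\hat W'$. By the state-$1$ $n$-equality established above this transfers to $H\,1$, giving $\npair{\ms} \in H\,1\,\hat W \setminus H\,1\,\hat W'$ with $\hat W' \futurestr \hat W$, which contradicts $\futurestr$-monotonicity of $H\,1$.

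Hence $v = \temp$, all five components of $\iota$ match those of $\iota^\pwl_{\start,\addrend}$ at level $n$, and $\iota \nequal \iota^\pwl_{\start,\addrend}$. The main obstacle is exactly the witness construction of the previous paragraph: one must produce concrete worlds $\hat W \futurestr \hat W'$ together with a local capability whose safety is destroyed by revocation, and verify that the discrepancy already appears at step index $n-1$ so that it survives the $n$-level agreement of the two invariants. Everything else is routine bookkeeping with the definitions of $\nsubsim$, $\nsupsim$, and region equality.
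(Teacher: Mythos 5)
Your proposal follows essentially the same route as the paper's proof: extract $s$, the transition relations, and the $n$-agreement of the interpretations from $\nsim$, then rule out $v = \perma$ by exhibiting a failure of $\futurestr$-monotonicity of $H$ at state $1$, using a world containing a temporary region, its $\revokeTemp{}$-image as a private future world, and a memory segment over $[\start,\addrend]$ holding a $\local$ read-only capability that falls out of $\stdvr$ after revocation. The only (immaterial) difference is that the paper places the auxiliary temporary region at an address $b \notin [\start,\addrend]$ and stores the local capability at $\start$, whereas you place it at an address inside the range; your explicit transfer of the counterexample through the $n$-equality of $H$ and $H^\pwl$ is, if anything, slightly more careful than the paper's direct identification $H = H^\pwl_{\start,\addrend}$, and both arguments share the same low-index caveat that the authors themselves flag in a margin note.
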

\begin{proof}[Proof of Lemma~\ref{lem:pwl-nsim-view}]
  For $n = 0$ it is trivial, so assume $n > 0$.
  Say $\iota = (v,s,\phi_{\var{pub}},\phi,H)$, then by $\nsim$, we know $s = 1$, $\phi_{\var{pub}} \equiv \phi \equiv =$, and $H = H^\pwl_{\start,\addrend}$. It remains to show that $v=\temp$. To do so, we show that it cannot be the case that $v=\perma$. If $v = \perma$, then $H$ must be monotone with respect to $\futurestr$. If we can show that this is not the case, then for $\iota$ to be a region it must be the case that $v \neq \perma$ and thus $v = \temp$.

  To this end let $b\not\in[\start,\addrend]$ and define the worlds:
  \begin{align*}
    \xi (W)  =& [0 \mapsto \iota^\pwl_{\start,\addrend}]  \\ 
              & [1 \mapsto \iota^\pwl_{b,b}] \\
    \xi (W')  =& [0 \mapsto \iota^\pwl_{\start,\addrend}]  \\ 
              & [1 \mapsto \revoked] 
  \end{align*}
  For these two worlds, we have $\xi(W') \futurestr \xi(W)$ and from mono.\ of $\xi^{-1}$, we have $W' \futurestr W$. Now define the following memory segment:
  \[
    \ms = [\start \mapsto ((\readonly,\local),b,b,b), \start+1 \mapsto 0, \dots , \addrend \mapsto 0]
  \]
  It is the case that
  \[
    \npair{\ms} \in H \; 1 \; W
  \]
  but
  \[
    \npair{\ms} \not\in H \; 1 \; W'
  \]
  as it is not the case that \lau{what if $n=1$? Need to figure out when exactly this holds}
  \[
    \npair[n-1]{((\readonly,\local),b,b,b)} \in \stdvr(\xi(W')).
  \]
  The only other option that remains is $v=\temp$.
\end{proof}
\lau{$a \leq b$ that the interpretation is not empty.}

\subsubsection{Observation relation}
\begin{lemma}[Observation relation ($\observations$) non-expansive]
  \label{lem:obs-rel-ne}
  \[
    W \nequal W' \Rightarrow \observations(W) \nequal \observations(W')
  \]
\end{lemma}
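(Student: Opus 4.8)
The plan is to unfold $\observations$ and notice that the world $W$ enters $\observations(W)$ \emph{only} through the existential quantifier $\exists W'\futurestr W$: the rest of the body, namely the split $\heap'=\hs'\uplus\hs_r\uplus\ms_f$ and the requirement $\heapSat[\hs']{n-i}{W'}$, mentions the quantified future world $W'$ and never $W$ itself. So the lemma reduces to showing that this future-world quantifier is non-expansive in $W$, given that memory segment satisfaction is non-expansive in its world argument. The case $n=0$ is immediate because $0$-equality is total, so I would fix $n\ge 1$, assume $W\nequal[n]W'$, and prove $\observations(W)\nequal[n]\observations(W')$, i.e. that for every $m\le n$ the two sets contain the same $\npair[m]{(\reg,\hs)}$. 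Since the hypothesis $W\nequal[n]W'$ is symmetric, it suffices to prove one inclusion.

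Fix $m\le n$ with $m\ge 1$ (for $m=0$ membership is automatic for $0$-indexed elements) and assume $\npair[m]{(\reg,\hs)}\in\observations(W)$. Let $\ms_f,\heap',i\le m$ be given with $(\reg,\hs\uplus\ms_f)\step[i](\halted,\heap')$; note $i\ge 1$, since the source configuration is an execution configuration and never $\halted$. Instantiating the hypothesis yields $V\futurestr W$, $\hs_r$, $\hs'$ with $\heap'=\hs'\uplus\hs_r\uplus\ms_f$ and $\heapSat[\hs']{m-i}{V}$. The remaining task is to manufacture a future world of $W'$ close enough to $V$ to carry the memory satisfaction over.

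The crux is constructing $V'$ with $V'\futurestr W'$ and $V'$ agreeing with $V$ up to index $m$ (or $m-1$, depending on the guardedness convention in the world domain equation), by grafting. At a positive step-index the only non-discrete component of a region is its interpretation $H$, so from $W\nequal[n]W'$ I get $\dom(W)=\dom(W')$ and, for each $r$, that $W(r)$ and $W'(r)$ share view, state and transition systems and have index-close interpretations. I would set $\dom(V')=\dom(V)$, put $V'(r)=V(r)$ on the new names $r\in\dom(V)\setminus\dom(W)$, and for $r\in\dom(W)=\dom(W')$ case on the derivation of $V(r)\futurestr W(r)$: if $W(r)$ is $\temp$ or $\revoked$ then (the discrete view being preserved) $W'(r)$ is likewise, so any region is a legal future and I keep $V'(r)=V(r)$; otherwise $V(r)$ and $W(r)$ share data $(v,\phi_\pub,\phi,H)$ with the state advanced along $\phi$, and I set $V'(r)$ to be $V(r)$ with its interpretation replaced by the identical-shape, index-close interpretation of $W'(r)$. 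In every case the state transition witnessing $V(r)\futurestr W(r)$ is still available for $W'(r)$, giving $V'(r)\futurestr W'(r)$, while $V'(r)$ differs from $V(r)$ at most in an index-close interpretation. The extension rule for the future-world relation (new names unconstrained, old names advanced) then yields $V'\futurestr W'$, and pointwise closeness yields $V'\nequal[m]V$. I expect this grafting, together with verifying that $\futurestr$ is compatible with $n$-equality, to be the main obstacle; it is clean to isolate as an auxiliary lemma on regions and worlds.

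Finally, since $i\ge 1$ we have $m-i\le m-1<m$, so in particular $V'\nequal[m-i]V$ regardless of a possible off-by-one in the previous step. Because each region interpretation $H$ is non-expansive in its world argument and $\xi^{-1}$ is non-expansive, memory segment satisfaction is non-expansive in its world argument: reusing the witnessing partition $P$ unchanged, $V$ and $V'$ have the same active regions and states and $(m-i)$-close interpretations evaluated at the $(m-i)$-close worlds $\xi^{-1}(V),\xi^{-1}(V')$. Hence $\heapSat[\hs']{m-i}{V}$ gives $\heapSat[\hs']{m-i}{V'}$, and $V',\hs_r,\hs'$ witness $\npair[m]{(\reg,\hs)}\in\observations(W')$, closing the inclusion and the lemma.
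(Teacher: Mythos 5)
Your proposal is correct, but note that the paper does not actually prove this lemma: the proof environment for Lemma~\ref{lem:obs-rel-ne} is left empty, as it is for most of the ``sanity'' lemmas in that section. Your argument supplies exactly the two pieces that make the statement non-trivial. First, since $W$ occurs in the body of $\observations(W)$ only under the existential $\exists W''\futurestr W$, one must transport a witness $V\futurestr W$ to a witness $V'\futurestr W'$ that is index-close to $V$; your grafting construction (keep $V(r)$ where $W(r)$ is $\temp$ or $\revoked$, since there any region is a legal private future, and swap in the close interpretation of $W'(r)$ where $W(r)$ is $\perma$, since the first $\futurestr$-rule demands on-the-nose equality of $(v,\phi_\pub,\phi,H)$) is the right move and is worth isolating as its own lemma on regions, as you suggest. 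Second, you need non-expansiveness of memory-segment satisfaction to conclude $\heapSat[\hs']{n-i}{V'}$ from $\heapSat[\hs']{n-i}{V}$; this is precisely the paper's Lemma~\ref{lem:mem-sat-ne} (also stated without proof), so you could cite it rather than inline the partition argument. The only delicate point is the bookkeeping of indices through the $\blater$ in the world equation and the $\UPred{}$ convention; you flag this and your observation that $i\geq 1$ (a halted configuration is never the source of a step) absorbs the off-by-one, so the argument goes through under either standard convention.
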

\begin{proof}[Proof of Lemma~\ref{lem:obs-rel-ne}]
\end{proof}

\subsubsection{Register-file relation}
\begin{lemma}[Register-file relation ($\stdrr$) non-expansive]
  \label{lem:stdrr-ne}
  \[
    W \nequal W' \Rightarrow \stdrr(W) \nequal \stdrr(W')
  \]
\end{lemma}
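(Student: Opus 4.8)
The plan is to reduce non-expansiveness of $\stdrr$ to non-expansiveness of the value relation $\stdvr$, since $\stdrr(W)$ is defined purely in terms of $\stdvr(W)$, quantified over the fixed finite index set $\RegName \setminus \{\pcreg\}$. First I would unfold the meaning of $n$-equality of the two uniform predicates $\stdrr(W), \stdrr(W') \in \UPred{\Regs}$: it suffices to show that for every index $m \le n$ and every register file $\reg \in \Regs$ we have $\npair[m]{\reg} \in \stdrr(W)$ iff $\npair[m]{\reg} \in \stdrr(W')$.

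Fix such $m \le n$ and $\reg$. By the definition of $\stdrr$, the membership $\npair[m]{\reg} \in \stdrr(W)$ holds exactly when $\npair[m]{\reg(r)} \in \stdvr(W)$ for every $r \in \RegName \setminus \{\pcreg\}$, and likewise with $W'$. Hence it is enough to establish, for each such $r$, the biconditional $\npair[m]{\reg(r)} \in \stdvr(W) \iff \npair[m]{\reg(r)} \in \stdvr(W')$. This is precisely where the hypothesis $W \nequal[n] W'$ is used: the value relation is a non-expansive function of the world (this is part of its typing $\stdvr : \Worlds \monwknefun \UPred{\Words}$ supplied by the construction in Theorem~\ref{thm:world-existence}), so $W \nequal[n] W'$ yields $\stdvr(W) \nequal[n] \stdvr(W')$, i.e.\ $\stdvr(W)$ and $\stdvr(W')$ contain the same pairs at every index $\le n$. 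Since $m \le n$ and $\reg(r) \in \Words$, the two sides of the biconditional coincide. Taking the conjunction over the finitely many $r \neq \pcreg$ preserves the biconditional, giving $\npair[m]{\reg} \in \stdrr(W) \iff \npair[m]{\reg} \in \stdrr(W')$, as required.

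There is essentially no hard step: the argument is a routine ``lift a non-expansive relation pointwise over a fixed index set'' computation, and all the real content sits in non-expansiveness of $\stdvr$, which I am assuming. The only points demanding care are bookkeeping ones, namely getting the step-index convention for $n$-equality of $\UPred{-}$ to line up (so the downward quantification over $m$ matches the convention behind $\stdvr(W) \nequal[n] \stdvr(W')$), and observing that $\RegName \setminus \{\pcreg\}$ is finite so the conjunction is unproblematic. The one genuine subtlety is that $\stdvr$'s non-expansiveness is itself only available through the simultaneous guarded (Banach) construction of the logical relation; strictly, then, this lemma is one clause of that simultaneous non-expansiveness statement rather than a standalone corollary, but the guarded structure makes the present clause immediate from the value-relation clause.
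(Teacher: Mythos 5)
Your proposal is correct: the paper actually leaves the proof of this lemma empty (evidently regarding it as routine), and your pointwise-lifting argument --- unfolding $n$-equality of $\UPred{\Regs}$, reducing membership at each index $m \le n$ to the per-register condition, and invoking non-expansiveness of $\stdvr$ from its typing $\stdvr : \Worlds \monwknefun \UPred{\Words}$ --- is exactly the argument the authors omitted. Your closing caveat is also well taken: since $\stdvr$ is only defined via the simultaneous guarded fixed-point construction, this lemma is really one clause of that mutual non-expansiveness statement, but given the value-relation clause the register-file clause is immediate, as you show.
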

\begin{proof}[Proof of Lemma~\ref{lem:stdrr-ne}]
\end{proof}

\begin{lemma}[Register-file relation ($\stdrr$) monotone wrt $\futurewk$]
  \label{lem:stdrr-mono}
  \[
    W' \futurewk W \Rightarrow \stdrr(W') \nsupeq \stdrr(W)
  \]
\end{lemma}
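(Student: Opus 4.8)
The plan is to unfold the definition of $\stdrr$ and reduce the claim to the already-established monotonicity of the value relation. Recall that $\npair{\reg} \in \stdrr(W)$ holds exactly when $\npair{\reg(r)} \in \stdvr(W)$ for every register name $r \in \RegName \setminus \{\pcreg\}$. Since this is a conjunction of membership conditions in $\stdvr$ — one per register, all at the same step index $n$ — monotonicity of $\stdrr$ follows pointwise from monotonicity of $\stdvr$ with respect to $\futurewk$, namely Lemma~\ref{lem:stdvr-mono-wk}. Using $\futurewk$ rather than $\futurestr$ is exactly what matches the declaration of $\stdrr$ as a $\futurewk$-monotone map, so the hypothesis of the lemma can be handed to the value-relation lemma without modification.

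Concretely, I would fix $W' \futurewk W$ and, to prove $\stdrr(W') \nsupeq \stdrr(W)$, take any $m \le n$ and any $\reg$ with $\npair[m]{\reg} \in \stdrr(W)$; the goal is $\npair[m]{\reg} \in \stdrr(W')$. Unfolding the hypothesis gives $\npair[m]{\reg(r)} \in \stdvr(W)$ for each $r \neq \pcreg$. For each such $r$, Lemma~\ref{lem:stdvr-mono-wk} applied to $W' \futurewk W$ yields $\npair[m]{\reg(r)} \in \stdvr(W')$. Since this holds for every $r \in \RegName \setminus \{\pcreg\}$, re-folding the definition of $\stdrr$ gives $\npair[m]{\reg} \in \stdrr(W')$, as required.

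I do not expect any genuine obstacle here: the statement is a direct corollary of value-relation monotonicity, the only bookkeeping being to keep the step index $m \le n$ fixed while the register index $r$ ranges over the (finite) set $\RegName \setminus \{\pcreg\}$. All the substantive work — handling the $\readCond{}$, $\writeCond{}$, $\execCond{}$, and $\entryCond{}$ conditions that appear inside $\stdvr$ — lives in Lemma~\ref{lem:stdvr-mono-wk} and is not redone here. The one point worth double-checking is that $\stdvr$'s monotonicity is indeed phrased for the public future relation $\futurewk$, which is consistent with the arrow annotations in the signatures of both $\stdrr$ and $\stdvr$, so the hypothesis $W' \futurewk W$ transfers verbatim.
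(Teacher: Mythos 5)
Your proof is correct: the paper leaves this proof empty (it is stated without an argument), and your reduction to the pointwise application of Lemma~\ref{lem:stdvr-mono-wk} over $r \in \RegName \setminus \{\pcreg\}$, with the step index $m \le n$ handled as required by $\nsupeq$, is exactly the intended one-liner. No gaps.
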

\begin{proof}[Proof of Lemma~\ref{lem:stdrr-mono}]
\end{proof}

\subsubsection{Expression relation}
\begin{lemma}[Expression relation ($\stder$) non-exapansive]
  \label{lem:stder-ne}
  \[
    W \nequal W' \Rightarrow \stder(W) \nequal \stder(W')
  \]
\end{lemma}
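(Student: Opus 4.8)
The plan is to prove non-expansiveness of $\stder$ as a routine congruence argument: unfold the defining condition, reduce to a single inclusion by symmetry, and then propagate the $n$-equality of worlds through each of the three relations occurring in the definition, namely $\stdrr$, memory segment satisfaction, and $\observations$. Concretely, I fix $n$, assume $W \nequal W'$, and recall that for uniform predicates $P \nequal Q$ means that $P$ and $Q$ contain exactly the same elements of step-index at most $n$. By symmetry of $\nequal$ it suffices to prove one inclusion.

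So let $m \leq n$ and $\npair[m]{\pc} \in \stder(W)$ be given; I must show $\npair[m]{\pc} \in \stder(W')$. Unfolding $\stder(W')$, I take $n' \leq m$, a register file with $\npair[n']{\reg} \in \stdrr(W')$, and a memory with $\heapSat[\hs]{n'}{W'}$, and must establish $\npair[n']{(\reg\update{\pcreg}{\pc},\hs)} \in \observations(W')$. Since $n' \leq m \leq n$, downward closure of the ultrametric gives $W \nequal[n'] W'$, which is the equality I will feed into each component lemma.

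The argument then has three transport steps. First, $\stdrr$ non-expansive (Lemma~\ref{lem:stdrr-ne}) applied to $W \nequal[n'] W'$ yields $\stdrr(W) \nequal[n'] \stdrr(W')$, hence $\npair[n']{\reg} \in \stdrr(W)$. Second, non-expansiveness of memory segment satisfaction in its world argument turns $\heapSat[\hs]{n'}{W'}$ into $\heapSat[\hs]{n'}{W}$. Third, the hypothesis $\npair[m]{\pc} \in \stder(W)$, instantiated at index $n' \leq m$ with this transported $\reg$ and $\hs$, gives $\npair[n']{(\reg\update{\pcreg}{\pc},\hs)} \in \observations(W)$; finally $\observations$ non-expansive (Lemma~\ref{lem:obs-rel-ne}) with $W \nequal[n'] W'$ carries this to $\observations(W')$, as required.

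The main obstacle is the second step: non-expansiveness of memory satisfaction in the world, which is not stated as a separate lemma in the excerpt, so I would either prove it inline or extract it. The delicate point is that $\heapSat[\hs]{n'}{W}$ ranges over the active regions $\activeReg{W}$ and evaluates each region interpretation at $\xi^{-1}(W)$, so I need $W \nequal[n'] W'$ to force both $\activeReg{W} = \activeReg{W'}$ and $\npair[n']{P(r)} \in H(s)(\xi^{-1}(W)) \Leftrightarrow \npair[n']{P(r)} \in H(s)(\xi^{-1}(W'))$ for each active $r$. Here I must track the shift introduced by the $\blater$ in Theorem~\ref{thm:world-existence}: the world metric is guarded, so $W \nequal[n'] W'$ corresponds to the underlying region maps agreeing at index $n'-1$, which is exactly what the non-expansive interpretations $H \in \States \fun (\Wor \monnefun \UPred{\HeapSegments})$ and the isometry $\xi^{-1}$ need in order to match the $n'$-content of each $H(s)(\xi^{-1}(W))$. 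The $n' = 0$ case is trivial since everything is $0$-equal, and all remaining steps only cite the corresponding non-expansiveness lemma.
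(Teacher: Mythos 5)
Your argument is correct and is exactly the intended congruence-style proof: the paper leaves the body of this proof (and of the component lemmas it would cite) empty, so there is nothing more specific to compare against, but transporting the register file via Lemma~\ref{lem:stdrr-ne}, the memory via non-expansiveness of memory satisfaction, and the conclusion via Lemma~\ref{lem:obs-rel-ne} is the natural decomposition the surrounding lemmas are set up for. The one ingredient you flag as ``not stated as a separate lemma'' is in fact already present as Lemma~\ref{lem:mem-sat-ne} (memory segment satisfaction is non-expansive in the world), so you can cite it directly rather than proving it inline; your remarks about the $\blater$ shift and the $n'-1$ indices are the right justification for why that lemma holds, but they belong to its proof rather than to this one.
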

\begin{proof}[Proof of Lemma~\ref{lem:stder-ne}]
\end{proof}

\subsubsection{Permission based conditions}

\begin{lemma}
  \label{lem:revoketemp-readcond}
  If
  \[
    \npair{(\start,\addrend)} \in \readCond{}(\gl)(\revokeTemp{W})
  \]
  then
  \[
    \npair{(\start,\addrend)} \in \readCond{}(\gl)(W)
  \]
\end{lemma}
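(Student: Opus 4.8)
The plan is to unfold the definition of $\readCond{}$ and exploit the fact that any region witnessing membership on the revoked world $\revokeTemp{W}$ must be a \emph{permanent} region, which $\revokeTemp{}$ leaves untouched, so the very same region witnesses membership on $W$. Concretely, assuming $\npair{(\start,\addrend)} \in \readCond{}(\gl)(\revokeTemp{W})$, I would extract a region name $r \in \localityReg(\gl, \revokeTemp{W})$ together with an interval $[\start',\addrend'] \supseteq [\start,\addrend]$ such that $\revokeTemp{W}(r) \nsubsim[n] \iota^\pwl_{\start',\addrend'}$.

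The key step is to show $r \in \localityReg(\gl, W)$ and $W(r) = \revokeTemp{W}(r)$. By definition, $\revokeTemp{W}$ sends every region that is $\temp$ in $W$ to $\revoked$ and leaves all other regions unchanged; in particular $\revokeTemp{W}$ contains no temporary regions at all. Hence, whether $\gl = \glob$ (so $r \in \dom(\erase{\revokeTemp{W}}{\perma})$) or $\gl = \local$ (so $r \in \dom(\erase{\revokeTemp{W}}{\perma,\temp})$), the absence of temporary regions forces $\revokeTemp{W}(r).v = \perma$. Since $\revokeTemp{}$ fixes permanent regions pointwise, we get $W(r) = \revokeTemp{W}(r)$ with $W(r).v = \perma$, and therefore $r \in \dom(\erase{W}{\perma}) \subseteq \localityReg(\gl, W)$ for either locality.

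Finally, substituting the equality $W(r) = \revokeTemp{W}(r)$ into the witnessed relation yields $W(r) \nsubsim[n] \iota^\pwl_{\start',\addrend'}$ with the same interval $[\start',\addrend']$, which is exactly what is needed to conclude $\npair{(\start,\addrend)} \in \readCond{}(\gl)(W)$. I do not expect a genuine obstacle here; the only point demanding care is the bookkeeping that a witnessing region cannot have been revoked — a $\revoked$ region does not have the shape $(v,s,\phi_\pub,\phi,H)$ required by the $\nsubsim$ rule, and is in any case excluded by membership in $\localityReg$ — which, combined with the observation that $\revokeTemp{W}$ has already eliminated all $\temp$ regions, pins the witness down to a permanent region shared identically between the two worlds.
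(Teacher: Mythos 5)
Your proof is correct and follows essentially the same route as the paper's: extract the witnessing region $r$, observe that it must be permanent (since $\revokeTemp{W}$ contains no temporary regions and $\localityReg$ excludes revoked ones), hence $W(r) = \revokeTemp{W}(r)$, and reuse $r$ as the witness for $W$. The extra bookkeeping you supply about why the witness cannot be revoked or temporary is exactly the "Notice $\revokeTemp{W}(r)$ is a $\perma$ region" step that the paper leaves implicit.
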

\begin{proof}[Proof of Lemma~\ref{lem:revoketemp-readcond}]
  \[
    \npair{(\start,\addrend)} \in \readCond{}(\gl)(\revokeTemp{W})
  \]
  Gives $r \in \localityReg(\gl,\revokeTemp{W})$ such that
  \[
    \forall [\start',\addrend'] \subseteq [\start,\addrend] \ldotp \revokeTemp{W}(r) \nsubsim[n] \iota^\pwl_{[\start',\addrend']}
  \]
  Notice $\revokeTemp{W}(r)$ is a $\perma$ region, so $\revokeTemp{W}(r) = W(r)$. Using $r$ as witness, the result is immediate.
\end{proof}

\begin{lemma}
  \label{lem:revoketemp-writecond}
  If
  \[
    \npair{(\start,\addrend)} \in \writeCond{}(\iota,\gl)(\revokeTemp{W})
  \]
  then
  \[
    \npair{(\start,\addrend)} \in \writeCond{}(\iota,\gl)(W)
  \]
\end{lemma}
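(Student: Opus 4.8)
The plan is to mirror, essentially verbatim, the proof of the analogous read-condition lemma (Lemma~\ref{lem:revoketemp-readcond}). First I would unfold the definition of $\writeCond{}$ applied to $\revokeTemp{W}$. Membership $\npair{(\start,\addrend)} \in \writeCond{}(\iota,\gl)(\revokeTemp{W})$ supplies a region name $r \in \localityReg(\gl,\revokeTemp{W})$ together with an enclosing interval $[\start',\addrend'] \supseteq [\start,\addrend]$ such that $\revokeTemp{W}(r) \nsupsim[n-1] \iota_{\start',\addrend'}$ and $\revokeTemp{W}(r)$ is address-stratified. The goal is to re-use exactly this same witness $r$ (and the same interval) to establish membership in $\writeCond{}(\iota,\gl)(W)$.

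The key observation, as in the read-condition case, is that $\revokeTemp{}$ only touches temporary regions: it maps each $\temp$ region to $\revoked$ and leaves every other region, in particular every $\perma$ region, unchanged. Since $r$ is drawn from $\localityReg(\gl,\revokeTemp{W})$ and $\revokeTemp{W}$ contains no $\temp$ regions (they have all been revoked), the region $\revokeTemp{W}(r)$ must be a $\perma$ region. For a permanent region the revocation acts as the identity, so $\revokeTemp{W}(r) = W(r)$; moreover $r \in \localityReg(\gl,W)$, because permanent region names belong to the locality-region set $\dom(\erase{W}{\perma})$ for $\gl = \glob$ and equally to $\dom(\erase{W}{\perma,\temp})$ for $\gl = \local$.

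With these two facts the side conditions transfer directly. Substituting $\revokeTemp{W}(r) = W(r)$ into the hypotheses gives both $W(r) \nsupsim[n-1] \iota_{\start',\addrend'}$ and that $W(r)$ is address-stratified, which together with $r \in \localityReg(\gl,W)$ and the interval $[\start',\addrend']$ witness precisely $\npair{(\start,\addrend)} \in \writeCond{}(\iota,\gl)(W)$. I expect no genuine obstacle here; the only point requiring a little care, and it is the very same point that drives Lemma~\ref{lem:revoketemp-readcond}, is the justification that any region name surviving in $\localityReg(\gl,\revokeTemp{W})$ necessarily indexes a permanent region, so that the equality $\revokeTemp{W}(r) = W(r)$ is legitimate and the witness can be reused unchanged.
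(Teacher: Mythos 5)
Your proposal is correct and matches the paper's own proof essentially verbatim: both extract the witness region $r$ from the hypothesis, observe that any region surviving in $\localityReg(\gl,\revokeTemp{W})$ must be permanent and hence satisfies $\revokeTemp{W}(r) = W(r)$, and reuse $r$ unchanged as the witness for $\writeCond{}(\iota,\gl)(W)$. Your added justification for why $r$ must index a $\perma$ region (all $\temp$ regions having been revoked) is exactly the point the paper leaves implicit in its ``Notice $\revokeTemp{W}(r)$ is a $\perma$ region'' step.
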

\begin{proof}[Proof of Lemma~\ref{lem:revoketemp-writecond}]
  \[
    \npair{(\start,\addrend)} \in \writeCond{}(\iota,\gl)(\revokeTemp{W})
  \]
  Gives $r \in \localityReg(\gl,\revokeTemp{W})$ such that
  \[
    \forall [\start',\addrend'] \subseteq [\start,\addrend] \ldotp \revokeTemp{W}(r) \nsupsim[n-1] \iota_{[\start',\addrend']}
  \]
  and 
  \begin{equation*}
    \revokeTemp{W}(r) \text{ is address-stratified}
  \end{equation*}
  Notice $\revokeTemp{W}(r)$ is a $\perma$ region, so $\revokeTemp{W}(r) = W(r)$. Using $r$ as witness, the result is immediate.
\end{proof}

\begin{lemma}
  \label{lem:revoketemp-execcond}
  If
  \begin{itemize}
  \item $\npair{(\perm,\start,\addrend)} \in \execCond{}(\gl)(\revokeTemp{W})$
  \end{itemize}
  then
  \[
    \npair{(\perm,\start,\addrend)} \in \execCond{}(\gl)(W)
  \]
\end{lemma}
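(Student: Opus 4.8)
The plan is to unfold $\execCond{}$ and reduce the whole statement to the single fact that every future world of $W$ is also a future world of $\revokeTemp{W}$, at which point the hypothesis applies verbatim. Assume $\npair{(\perm,\start,\addrend)} \in \execCond{}(\gl)(\revokeTemp{W})$. To show membership in $\execCond{}(\gl)(W)$, I would fix $n' < n$, a world $W' \future W$ (with $\future = \futurewk$ when $\gl = \local$ and $\future = \futurestr$ when $\gl = \glob$), and an address $\addr \in [\start',\addrend'] \subseteq [\start,\addrend]$; the goal is $\npair[n']{((\perm,\gl),\start',\addrend',\addr)} \in \stder(W')$. If I can show $W' \future \revokeTemp{W}$, then instantiating the assumption at the very same $n'$, $W'$, $\start'$, $\addrend'$, $\addr$ delivers exactly this, finishing the proof. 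This is the same reduction pattern as in Lemma~\ref{lem:revoketemp-readcond} and Lemma~\ref{lem:revoketemp-writecond}; the difference is that $\execCond{}$ quantifies \emph{universally} over future worlds, so instead of transporting an existential permanent witness I must relate the given $W'$ to $\revokeTemp{W}$.

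Hence the core step is the sublemma: $W' \future W$ implies $W' \future \revokeTemp{W}$ (for either choice of $\future$). Since the future-world relations are defined region-wise and $\dom(\revokeTemp{W}) = \dom(W) \subseteq \dom(W')$, it suffices to check $W'(r) \future \revokeTemp{W}(r)$ for each $r \in \dom(W)$, knowing $W'(r) \future W(r)$. If $W(r)$ is permanent then $\revokeTemp{W}(r) = W(r)$ and there is nothing to do. If $W(r)$ is temporary or already revoked then $\revokeTemp{W}(r) = \revoked$, and I would use the stated region rules $W(r) \future \revoked$ (which hold: any region is a private future of $\revoked$, and a temporary or revoked region is a public future of $\revoked$) together with transitivity of the preorder $\future$ and $W'(r) \future W(r)$ to conclude $W'(r) \future \revoked = \revokeTemp{W}(r)$.

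I expect the only delicate point to be the public (local) case of this sublemma, where $W'(r) \futurewk \revoked$ is \emph{not} available for an arbitrary region $W'(r)$. The transitivity argument sidesteps any need to analyze the shape of $W'(r)$: because $W(r)$ is temporary or revoked we already have $W(r) \futurewk \revoked$ from the public-future rules, and chaining with $W'(r) \futurewk W(r)$ yields the claim. Once the sublemma is in hand, reassembling the region-wise facts via the defining rule for $\future$ on worlds gives $W' \future \revokeTemp{W}$, and the reduction of the first paragraph closes the lemma.
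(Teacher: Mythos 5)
Your proof is correct and follows essentially the same route as the paper: the paper's entire proof is an appeal to Lemma~\ref{lem:rt-w-pub-future-w} (whose content is exactly your region-wise fact that temporary and revoked regions of $W$ sit above $\revoked$, so that $W$ is a future world of $\revokeTemp{W}$), combined with transitivity to transport any $W' \future W$ to $W' \future \revokeTemp{W}$ before instantiating the universally quantified hypothesis. Your explicit treatment of the public case via transitivity—rather than trying to show $W'(r) \futurewk \revoked$ directly—is precisely the right way to make that cited lemma do the work.
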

\begin{proof}[Proof of Lemma~\ref{lem:revoketemp-execcond}]
  Use Lemma~\ref{lem:rt-w-pub-future-w}.
\end{proof}

\begin{lemma}
  \label{lem:revoketemp-entercond}
  If
  \begin{itemize}
  \item $\npair{(\addr,\start,\addrend)} \in \execCond{}(\gl)(\revokeTemp{W})$
  \end{itemize}
  then
  \[
    \npair{(\addr,\start,\addrend)} \in \execCond{}(\gl)(W)
  \]
\end{lemma}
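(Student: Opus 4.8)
The plan is to reproduce, in expanded form, the argument underlying Lemma~\ref{lem:revoketemp-execcond}, whose proof is just the one-liner ``Use Lemma~\ref{lem:rt-w-pub-future-w}'': the statement here has exactly the same shape, transporting membership in $\execCond{}(\gl)(-)$ from $\revokeTemp{W}$ to $W$. First I would unfold the definition of $\execCond{}(\gl)(W)$. To establish $\npair{(\addr,\start,\addrend)} \in \execCond{}(\gl)(W)$ I fix an arbitrary $n' < n$, an arbitrary future world $W' \future W$ (where $\future = \futurewk$ when $\gl = \local$ and $\future = \futurestr$ when $\gl = \glob$), and an arbitrary $a \in [\start',\addrend'] \subseteq [\start,\addrend]$, and I must derive the corresponding membership in $\stder(W')$.

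The crux is that $W$ is itself a future world of $\revokeTemp{W}$. Concretely, Lemma~\ref{lem:rt-w-pub-future-w} supplies $W \futurewk \revokeTemp{W}$: on the permanent regions $\revokeTemp{}$ acts as the identity (a reflexive public step), while on each temporary region it replaces $(\temp,\dots)$ by $\revoked$, and $(\temp,\dots) \futurewk \revoked$ is exactly one of the public-future region rules from Theorem~\ref{thm:world-existence}. So reinstating a temporary region is a public move, giving the pointwise public-future relationship $W \futurewk \revokeTemp{W}$.

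I then close the gap by transitivity in each locality branch. In the local branch $\future = \futurewk$: I have $W' \futurewk W$ by choice of $W'$ and $W \futurewk \revokeTemp{W}$ from Lemma~\ref{lem:rt-w-pub-future-w}, so since $\futurewk$ is a preorder (Theorem~\ref{thm:world-existence}) transitivity yields $W' \futurewk \revokeTemp{W}$. In the global branch $\future = \futurestr$: I first upgrade $W \futurewk \revokeTemp{W}$ to $W \futurestr \revokeTemp{W}$ via Lemma~\ref{lem:future-pub-impl-future-priv} (public future implies private future), and then transitivity of $\futurestr$ with $W' \futurestr W$ gives $W' \futurestr \revokeTemp{W}$. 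In both cases $W'$ is a future world of $\revokeTemp{W}$ of precisely the flavour quantified over in $\execCond{}(\gl)(\revokeTemp{W})$. Instantiating the hypothesis $\npair{(\addr,\start,\addrend)} \in \execCond{}(\gl)(\revokeTemp{W})$ at this same $n'$, at $W'$, and at the same $a$ then delivers the required $\stder(W')$ membership, which is exactly the goal.

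I do not expect a genuine obstacle: this is a routine monotonicity argument riding on transitivity of the future-world preorders. The only point demanding care is keeping the two locality branches separate so that the correct future relation ($\futurewk$ for $\local$, $\futurestr$ for $\glob$) is threaded through the transitivity step; this is exactly where Lemma~\ref{lem:future-pub-impl-future-priv} is invoked to convert the public relationship $W \futurewk \revokeTemp{W}$ into the private one needed for the global case.
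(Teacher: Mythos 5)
Your proposal is correct and takes essentially the same approach as the paper: the paper's entire proof is ``Use Lemma~\ref{lem:rt-w-pub-future-w}'', and your argument is exactly the expansion of that one-liner---establish that $W$ is a public (hence, via Lemma~\ref{lem:future-pub-impl-future-priv}, also private) future world of $\revokeTemp{W}$, then close by transitivity of the appropriate future-world preorder in each locality branch before instantiating the hypothesis. One point in your favour worth noting: you read Lemma~\ref{lem:rt-w-pub-future-w} in the direction its own proof actually establishes ($W \futurewk \revokeTemp{W}$, since reinstating a temporary region is the public move), which is the direction genuinely needed here, even though that lemma's statement literally writes the relation the other way around.
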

\begin{proof}[Proof of Lemma~\ref{lem:revoketemp-entercond}]
    Use Lemma~\ref{lem:rt-w-pub-future-w}.
\end{proof}

\begin{lemma}
  \label{lem:wc-pwl-implies-wc-nwl}
  If
  \[
    \npair{(\start,\addrend)} \in \writeCond{}(\iota^\pwl,\local)(W)
  \]
  then
  \[
    \npair{(\start,\addrend)} \in \writeCond{}(\iota^\nwl,\local)(W)
  \]
\end{lemma}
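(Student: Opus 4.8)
The plan is to reuse verbatim the witnessing region name and bounding interval supplied by the hypothesis, so that the entire argument collapses to the region-level containment $H^\nwl \subseteq H^\pwl$ already proved in Lemma~\ref{lem:hwnl-nsubset-hpwl}. First I would unfold the assumption $\npair{(\start,\addrend)} \in \writeCond{}(\iota^\pwl,\local)(W)$. By the definition of $\writeCond{}$ this yields a region name $r \in \localityReg(\local,W)$ together with an interval $[\start',\addrend'] \supseteq [\start,\addrend]$ such that $W(r) \nsupsim[n-1] \iota^\pwl_{\start',\addrend'}$ and $W(r)$ is address-stratified. I would then offer the \emph{same} $r$ and the same $[\start',\addrend']$ as witnesses for membership in $\writeCond{}(\iota^\nwl,\local)(W)$. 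With these choices the locality-region condition $r \in \localityReg(\local,W)$, the containment $[\start',\addrend'] \supseteq [\start,\addrend]$, and the address-stratifiedness of $W(r)$ are all literally the facts handed to us by the hypothesis, so the single remaining obligation is $W(r) \nsupsim[n-1] \iota^\nwl_{\start',\addrend'}$.

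To discharge that obligation I would unfold $\nsupsim$, reading $W(r) \nsupsim[n-1] \iota^\nwl_{\start',\addrend'}$ as $\iota^\nwl_{\start',\addrend'} \nsubsim[n-1] W(r)$, which splits into agreement of the state/transition components and a step-indexed inclusion of interpretations. The state/transition agreement is free: both $\iota^\pwl_{\start',\addrend'}$ and $\iota^\nwl_{\start',\addrend'}$ are built as $(\temp,1,=,=,\cdot)$, hence share the state $1$ and the transition systems $=,=$, so the agreement of $W(r)$ with $\iota^\nwl_{\start',\addrend'}$ on these components is exactly the agreement with $\iota^\pwl_{\start',\addrend'}$ already contained in the hypothesis. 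For the interpretation inclusion I must show, for every $\hat{W}$, that $H^\nwl_{\start',\addrend'}\,1\,\hat{W} \nsubeq[n-1] W(r).H\,(W(r).s)\,\hat{W}$. Here Lemma~\ref{lem:hwnl-nsubset-hpwl} gives $H^\nwl_{\start',\addrend'}\,1\,\hat{W} \subseteq H^\pwl_{\start',\addrend'}\,1\,\hat{W}$ (a full, index-uniform subset, since its proof works elementwise for arbitrary step index), while the hypothesis gives $H^\pwl_{\start',\addrend'}\,1\,\hat{W} \nsubeq[n-1] W(r).H\,(W(r).s)\,\hat{W}$; composing the two produces the required $(n-1)$-subset.

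I do not expect a genuine obstacle: the entire argument is bookkeeping around the definitions of $\writeCond{}$ and $\nsupsim$, turning on the single prior fact that the no-write-local interpretation is pointwise contained in the permit-write-local one. The only mild care needed is to observe that the two standard regions really do coincide on their state and transition components (so that the $\nsupsim$ relation can be inherited wholesale rather than re-established), and to note that the inclusion of Lemma~\ref{lem:hwnl-nsubset-hpwl} sits on the \emph{left} of the step-indexed subset coming from the hypothesis — which is legitimate precisely because that lemma's inclusion is uniform in the index, so no step budget is consumed when composing. Conceptually this matches the intuition that an authority strong enough to store arbitrary, possibly local, words is a fortiori strong enough to store only non-local ones.
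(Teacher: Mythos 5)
Your proposal is correct and takes essentially the same route as the paper: the paper's own argument (spelled out in the otherwise-identical Lemma~\ref{lem:pwl-writecond-implies-nwl} and compressed to a one-line citation here) reuses the witness region $r$ and interval $[\start',\addrend']$ from the hypothesis and reduces the goal to $W(r) \nsupsim[n-1] \iota^\nwl_{\start',\addrend'}$, obtained by composing $W(r)\nsupsim[n-1]\iota^\pwl_{\start',\addrend'}$ with $\iota^\nwl_{\start',\addrend'} \nsubsim \iota^\pwl_{\start',\addrend'}$ (Lemma~\ref{lem:nwl-subset-pwl}), which is exactly the pointwise containment $H^\nwl \subseteq H^\pwl$ of Lemma~\ref{lem:hwnl-nsubset-hpwl} that you invoke. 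Your additional remarks about the shared state/transition components and the index-uniformity of the containment are accurate bookkeeping that the paper leaves implicit.
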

\begin{proof}[Proof of lemma~\ref{lem:wc-pwl-implies-wc-nwl}]
  Follows from Lemma~\ref{lem:nwlp-subset-pwl}.
\end{proof}

\begin{lemma}[$\mathit{readCondition}$ monotone w.r.t $\futurewk$]
  \label{lem:readcond-mono-pub}
  If
  \begin{itemize}
  \item $W' \futurewk W$
  \item $\npair{(\start,\addrend)} \in \readCond{}(\gl)(W)$
  \end{itemize}
  then
  \[
    \npair{(\start,\addrend)} \in \readCond{}(\gl)(W')
  \]
\end{lemma}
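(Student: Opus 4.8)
The plan is to unfold the definition of $\readCond{}$ and observe that the witnessing region is left completely unchanged under passage to any public future world, so that the very same witness certifies the read condition for $W'$. Concretely, I would unfold the hypothesis $\npair{(\start,\addrend)} \in \readCond{}(\gl)(W)$ to obtain a region name $r \in \localityReg(\gl,W)$ and an interval $[\start',\addrend'] \supseteq [\start,\addrend]$ with $W(r) \nsubsim[n] \iota^\pwl_{\start',\addrend'}$. Since $\localityReg(\gl,W) \subseteq \dom(W)$, the region-wise characterisation of $W' \futurewk W$ yields $W'(r) \futurewk W(r)$.

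The key observation is that $W(r)$ has \emph{trivial} transition systems. By definition of $\nsubsim$, the relation $W(r) \nsubsim[n] \iota^\pwl_{\start',\addrend'}$, where $\iota^\pwl_{\start',\addrend'} = (\temp,1,=,=,H^\pwl[\start',\addrend'])$, forces $W(r)$ and $\iota^\pwl_{\start',\addrend'}$ to agree on their state and on both transition relations; hence $W(r) = (v,1,=,=,H)$ for some view $v$ and predicate $H$, with $\phi_\pub = \phi = (=)$. Now I examine $W'(r) \futurewk W(r)$: because $W(r)$ is not $\revoked$, the only applicable rule of the public future region relation is the state-transition rule, which requires $(1,s') \in \phi_\pub$; since $\phi_\pub$ is the identity relation this forces $s' = 1$, and the rule additionally equates the view, both transition relations, and the predicate. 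Consequently $W'(r) = W(r)$ exactly.

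With $W'(r) = W(r)$ the conclusion is immediate. We have $W'(r) \nsubsim[n] \iota^\pwl_{\start',\addrend'}$ at the same step-index $n$ (no downward closure is even needed), and since the view of $W'(r)$ equals that of $W(r)$, the region name $r$ still lies in $\localityReg(\gl,W')$: a permanent witness stays permanent, and when $\gl = \local$ a temporary witness stays temporary. Taking $r$ and $[\start',\addrend']$ as witnesses establishes $\npair{(\start,\addrend)} \in \readCond{}(\gl)(W')$.

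The only real obstacle is a small amount of bookkeeping: one must confirm that the public future region relation admits no rule carrying a non-revoked region to a different one once its public transition relation is the identity, and that membership in $\localityReg$ is stable under this (identity) transition. Both facts read off directly from the inference rules for $\futurewk$ on regions and from the definitions of the erasure operation $\erase{W}{S}$ and of $\localityReg$, so no genuine difficulty arises.
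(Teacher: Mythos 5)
Your argument is correct. Note that the paper itself leaves the body of this proof empty, so there is no official proof to compare against; your write-up fills that gap, and it matches the style of the paper's one-line proof of the neighbouring private-monotonicity variant (Lemma~\ref{lem:readcond-mono-priv}), which likewise rests on the observation that the witnessing region cannot move. The key steps are all sound: $W(r)\nsubsim[n]\iota^\pwl_{\start',\addrend'}$ forces $W(r)$ to have state $1$ and both transition relations equal to the identity, the public future region rule for non-$\revoked$ regions preserves the view, both transition systems and $H$ while moving the state along $\phi_\pub=(=)$, so $W'(r)=W(r)$; preservation of the view keeps $r$ in $\localityReg(\gl,W')$ for either choice of $\gl$. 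The only caveat, which you already flag, is that this relies on reading the displayed rules for $\futurewk$ on regions as an exhaustive characterisation (so that inversion is legitimate); under that standard reading the proof is complete, and no downward closure in $n$ is needed.
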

\begin{proof}[Proof of Lemma~\ref{lem:readcond-mono-pub}]
\end{proof}

\begin{lemma}[$\mathit{readCondition}$ global monotonicity w.r.t $\futurestr$]
  \label{lem:readcond-mono-priv}
  If
  \begin{itemize}
  \item $W' \futurestr W$
  \item $\npair{(\start,\addrend)} \in \readCond{}(\glob)(W)$
  \end{itemize}
  then
  \[
    \npair{(\start,\addrend)} \in \readCond{}(\glob)(W')
  \]
\end{lemma}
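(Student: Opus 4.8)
The plan is to peel off the existential witness supplied by the hypothesis and transport it along $W' \futurestr W$. Concretely, from $\npair{(\start,\addrend)} \in \readCond{}(\glob)(W)$ I would first unfold the definition of $\readCond{}$ to obtain a region name $r \in \localityReg(\glob,W) = \dom(\erase{W}{\perma})$ together with an interval $[\start',\addrend'] \supseteq [\start,\addrend]$ such that $W(r) \nsubsim[n] \iota^\pwl_{\start',\addrend'}$. The interval part of the witness is independent of the world, so it carries over verbatim; the entire content of the proof is to show that the region name $r$ can still serve as a witness in $W'$.

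The crux is the observation that the witness region is permanent: in the global case $\localityReg(\glob,W)$ is exactly the set of permanent regions $\dom(\erase{W}{\perma})$, so $W(r) = (\perma, s, \phi_\pub, \phi, H)$. I would then invoke the monotonicity-of-worlds rule for $\futurestr$ to obtain $r \in \dom(W')$ and $W'(r) \futurestr W(r)$. Since $W(r)$ has view $\perma$ (neither $\temp$ nor $\revoked$), the only applicable private-future region rule is the first one, giving $W'(r) = (\perma, s', \phi_\pub, \phi, H)$ with $(s,s') \in \phi$. The $\nsubsim[n]$ comparison with $\iota^\pwl_{\start',\addrend'}$ already pins the state and transition systems of $W(r)$, forcing $s = 1$ and $\phi$ to be the identity relation; hence $(s,s') \in \phi$ forces $s' = s$ and therefore $W'(r) = W(r)$. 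This is the same move used in the proof of Lemma~\ref{lem:revoketemp-readcond}.

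With $W'(r) = W(r)$ in hand the conclusion is immediate: $W'(r).v = \perma$, so $r \in \localityReg(\glob,W')$, and $W'(r) \nsubsim[n] \iota^\pwl_{\start',\addrend'}$ holds by reusing the same interval $[\start',\addrend']$, yielding $\npair{(\start,\addrend)} \in \readCond{}(\glob)(W')$. I expect no real obstacle; the only step needing care is establishing $W'(r) = W(r)$, and that is precisely where globality is used. For a \emph{local} read condition the witness region may instead be $\temp$, and temporary regions can be revoked or moved arbitrarily under $\futurestr$, so the statement genuinely fails for $\local$ and the restriction to $\glob$ is essential (the public variant, Lemma~\ref{lem:readcond-mono-pub}, covers both localities because public transitions preserve temporary regions as well).
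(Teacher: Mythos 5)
Your proposal is correct and takes essentially the same route as the paper's (one-line) proof: the witness region for $\readCond{}(\glob)(W)$ is permanent, permanent regions persist unchanged under $\futurestr$ (the $\nsubsim$ comparison with $\iota^\pwl_{\start',\addrend'}$ forces $\phi$ to be equality, so even the state cannot move), and hence the same witness $r$ and interval work in $W'$. Your closing remark on why the restriction to $\glob$ is essential is also accurate.
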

\begin{proof}[Proof of Lemma~\ref{lem:readcond-mono-priv}]
  $\readCond{}(\glob)(W)$ picks a $\perma$ region from $W$. $\perma$ regions are persistent over $\futurestr$, so we can use the region that the assumption gives us.
\end{proof}

\begin{lemma}[$\mathit{readCondition}$ downwards-closed]
  \label{lem:readcond-dc}
  If
  \begin{itemize}
  \item $ n' \leq n$
  \item $\npair[n]{(\start,\addrend)} \in \readCond{}(\gl)(W)$
  \end{itemize}
  then
  \[
    \npair[n']{(\start,\addrend)} \in \readCond{}(\gl)(W)
  \]
\end{lemma}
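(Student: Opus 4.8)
The plan is to reuse the witnesses supplied by the level-$n$ membership and simply re-derive the semi-subset obligation at the lower index $n'$. First I would unfold the hypothesis $\npair[n]{(\start,\addrend)} \in \readCond{}(\gl)(W)$ according to the definition of $\readCond{}$: this yields a region name $r \in \localityReg(\gl,W)$ together with an interval $[\start',\addrend'] \supseteq [\start,\addrend]$ such that
\[
  W(r) \nsubsim[n] \iota^\pwl_{\start',\addrend'}.
\]
Crucially, the set $\localityReg(\gl,W)$ and the containment $[\start',\addrend'] \supseteq [\start,\addrend]$ carry no step-index, so the very same $r$ and $[\start',\addrend']$ are legitimate witnesses for the goal $\npair[n']{(\start,\addrend)} \in \readCond{}(\gl)(W)$. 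Thus the entire proof reduces to establishing $W(r) \nsubsim[n'] \iota^\pwl_{\start',\addrend'}$ from $W(r) \nsubsim[n] \iota^\pwl_{\start',\addrend'}$ and $n' \leq n$.

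The second step is to show that the semi $n$-subset relation $\nsubsim$ is itself downwards closed in the index. Writing $W(r) = (v,s,\phi_\pub,\phi,H)$ and $\iota^\pwl_{\start',\addrend'} = (\temp,s',\phi_\pub',\phi',H')$, the definition of $\nsubsim[n]$ has two ingredients: the equality $(s,\phi_\pub,\phi) = (s',\phi_\pub',\phi')$, which is index-independent and hence transfers verbatim, and the obligation $\forall \hat{W} \ldotp H\,s\,\hat{W} \nsubeq[n] H'\,s'\,\hat{W}$. For the latter I would invoke the standard downwards-closure of the $n$-indexed subset relation on uniform predicates: whenever $P \nsubeq[n] Q$ and $n' \leq n$, we also have $P \nsubeq[n'] Q$, since membership up to level $n'$ is a restriction of membership up to level $n$. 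Applying this pointwise in $\hat{W}$ gives $\forall \hat{W} \ldotp H\,s\,\hat{W} \nsubeq[n'] H'\,s'\,\hat{W}$, and therefore $W(r) \nsubsim[n'] \iota^\pwl_{\start',\addrend'}$, completing the argument.

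I do not expect any genuine obstacle here: this is a routine step-indexing bookkeeping lemma, and the only fact that is not pure unfolding is the downwards-closure of $\nsubeq$ on $\UPred{\HeapSegments}$, which is part of the basic theory of uniform predicates assumed throughout. If one wanted to be fully explicit, the single point worth spelling out is that none of $\localityReg(\gl,W)$, the interval containment, nor the state/transition equalities depend on the index, so the decrease from $n$ to $n'$ is absorbed entirely by the UPred subset obligation. Everything else is inherited unchanged, so the same witnesses $r$ and $[\start',\addrend']$ discharge the goal.
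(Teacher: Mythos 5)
Your proof is correct, and it takes the only natural route: the paper itself leaves the proof of this lemma blank, so there is nothing to diverge from. You correctly identify that the witnesses $r$ and $[\start',\addrend']$ are index-independent and that the whole burden falls on downwards closure of $\nsubsim$, which in turn reduces to downwards closure of the $n$-indexed subset $\nsubeq$ on $\UPred{\HeapSegments}$ — exactly the bookkeeping fact that makes $\readCond{}(\gl)(W)$ a well-defined element of $\UPred{\Addrs^2}$ as its declared type requires.
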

\begin{proof}[Proof of Lemma~\ref{lem:readcond-dc}]
\end{proof}

\begin{lemma}[$\mathit{writeCondition}$ monotone w.r.t $\futurewk$]
  \label{lem:writecond-mono-pub}
  If
  \begin{itemize}
  \item $W' \futurewk W$
  \item $\iota \in \{\iota^\pwl,\iota^\nwl,\iota^{(\nwl,p)}\}$
  \item $\npair{(\start,\addrend)} \in \writeCond{}(\iota,\gl)(W)$
  \end{itemize}
  then
  \[
    \npair{(\start,\addrend)}\in \writeCond{}(\iota,\gl)(W')
  \]
\end{lemma}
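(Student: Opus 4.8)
The plan is to show that the very witness establishing $\npair{(\start,\addrend)} \in \writeCond{}(\iota,\gl)(W)$ carries over unchanged to $W'$, exploiting that this witness region is forced to be rigid (no nontrivial transitions) and cannot be revoked by a public step. First I would unfold the hypothesis to obtain a region name $r \in \localityReg(\gl,W)$ together with an interval $[\start',\addrend'] \supseteq [\start,\addrend]$ such that $W(r) \nsupsim[n-1] \iota_{\start',\addrend'}$ and $W(r)$ is address-stratified. The goal then reduces to verifying that $(r,[\start',\addrend'])$ is again a valid witness for $W'$.

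The key observation is that each of the three admissible region shapes $\iota^\pwl_{\start',\addrend'}$, $\iota^\nwl_{\start',\addrend'}$, $\iota^{\nwl,p}_{\start',\addrend'}$ has state component $1$ and both transition relations equal to $=$. By the definition of $\nsubsim$ (hence of $\nsupsim$), which forces the state and both transition systems of the two compared regions to coincide while disregarding the view, it follows that $W(r).s = 1$ and $W(r).\phi_\pub = {=}$. Moreover, $r \in \localityReg(\gl,W)$ means $W(r)$ has permanent view when $\gl = \glob$, and permanent or temporary view when $\gl = \local$.

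Next I would invoke $W' \futurewk W$, which gives $\dom(W') \supseteq \dom(W)$ and $W'(r) \futurewk W(r)$. Inspecting the public-future region rules, a permanent or temporary region can only evolve to a region of the \emph{same} view via a public state step $(s,s') \in \phi_\pub$; there is no public-future rule revoking a temporary region. Since $W(r).\phi_\pub = {=}$ and $W(r).s = 1$, the only reachable state is $s' = 1$, so in fact $W'(r) = W(r)$ (identical view, state, transition systems, and predicate $H$). Consequently $r$ still has permanent (resp.\ permanent-or-temporary) view, so $r \in \localityReg(\gl,W')$; and the conditions $W'(r) \nsupsim[n-1] \iota_{\start',\addrend'}$ and address-stratification hold verbatim because the region is literally unchanged. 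This exhibits $(r,[\start',\addrend'])$ as the required witness, establishing $\npair{(\start,\addrend)} \in \writeCond{}(\iota,\gl)(W')$.

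The hard part is exactly the third step: one must argue that the witness region survives the public transition, i.e.\ that a temporary region is never revoked under $\futurewk$. This is precisely where the public/private distinction is essential — under a \emph{private} future a temporary witness could be revoked and, for a local capability, no replacement witness need exist, so the analogous monotonicity with respect to $\futurestr$ fails in general — whereas the absence of any public-future rule sending a temporary region to $\revoked$ is what makes the present claim go through.
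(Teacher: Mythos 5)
Your proof is correct. Note that the paper itself leaves the body of this proof empty, so there is nothing to compare against directly; but your argument is exactly the one the surrounding lemmas intend (compare the paper's one-line proofs of the private-monotonicity variants, which likewise reuse the witness region). You correctly identify the two facts that make the witness $(r,[\start',\addrend'])$ survive: (i) $W(r)\nsupsim[n-1]\iota_{\start',\addrend'}$ pins down $W(r)$'s state to $1$ and both transition relations to $=$, since the semi-$n$-subset relation requires the state and transition systems to coincide (only the view is disregarded); and (ii) no public-future region rule sends a $\temp$ or $\perma$ region to $\revoked$, and the only applicable rule preserves view, transition systems and $H$ while moving the state along $\phi_\pub = {=}$, so $W'(r)=W(r)$, whence $r\in\localityReg(\gl,W')$ and the $\nsupsim[n-1]$ and address-stratification conditions transfer verbatim. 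Your closing remark is also on point: this is precisely why the corresponding $\futurestr$-monotonicity lemma is only stated for $\glob$, where the witness is forced to be permanent. The one implicit step worth flagging is that you read the future-world and future-region inference rules as an exact characterization (so that $W'\futurewk W$ can be inverted to $W'(r)\futurewk W(r)$ for $r\in\dom(W)$); the paper uses the same inversion in, e.g., the proof of Lemma~\ref{lem:revoketemp-readcond}, so this is consistent with its conventions.
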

\begin{proof}[Proof of Lemma~\ref{lem:writecond-mono-pub}]
\end{proof}

\begin{lemma}[$\mathit{writeCondition}$ global monotonicity w.r.t $\futurestr$]
  \label{lem:writecond-mono-priv}
  If
  \begin{itemize}
  \item $W' \futurestr W$
  \item $\iota \in \{\iota^\nwl,\iota^{(\nwl,p)}\}$
  \item $\npair{(\start,\addrend)} \in \writeCond{}(\iota,\glob)(W)$
  \end{itemize}
  then
  \[
    \npair{(\start,\addrend)}\in \writeCond{}(\iota,\glob)(W')
  \]
\end{lemma}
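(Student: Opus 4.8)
The plan is to mirror the proof of Lemma~\ref{lem:readcond-mono-priv}: the global write condition always selects a \emph{permanent} region, and permanent regions survive into private future worlds. First I would unfold the hypothesis $\npair{(\start,\addrend)} \in \writeCond{}(\iota,\glob)(W)$ to obtain a witness region name $r \in \localityReg(\glob,W) = \dom(\erase{W}{\perma})$ together with an interval $[\start',\addrend'] \supseteq [\start,\addrend]$ such that $W(r) \nsupsim[n-1] \iota_{\start',\addrend'}$ and $W(r)$ is address-stratified. Since $r \in \dom(\erase{W}{\perma})$, the region $W(r)$ has view $\perma$, and it is this fact that will let me transport the witness.

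The crucial observation is that here one gets the strong statement $W'(r) = W(r)$, not merely that $W(r)$ persists in some future state. To see this, note that both $\iota^\nwl_{\start',\addrend'}$ and $\iota^{\nwl,p}_{\start',\addrend'}$ have state $1$ and identity transition systems (their $\phi_\pub$ and $\phi$ are both the identity relation). Unfolding $W(r) \nsupsim[n-1] \iota_{\start',\addrend'}$, i.e. $\iota_{\start',\addrend'} \nsubsim[n-1] W(r)$, forces agreement on state and on both transition relations, so in particular $W(r).\phi$ is the identity relation. From $W' \futurestr W$ we obtain $W'(r) \futurestr W(r)$, and because $W(r)$ is permanent the only applicable private future region rule keeps $v$, $\phi_\pub$, $\phi$, $H$ fixed and allows the state to move only along $\phi$; but $\phi$ being the identity relation pins the state, hence $W'(r) = W(r)$.

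With $W'(r) = W(r)$ established, I would simply reuse $r$ and $[\start',\addrend']$ as the witnesses for $\npair{(\start,\addrend)} \in \writeCond{}(\iota,\glob)(W')$. The region is still permanent, so $r \in \localityReg(\glob,W') = \dom(\erase{W'}{\perma})$; the relation $W'(r) \nsupsim[n-1] \iota_{\start',\addrend'}$ holds verbatim since $W'(r) = W(r)$; and address-stratification is preserved because it is a property of the unchanged $H$ and state. This discharges all three clauses of the write condition at $W'$.

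The main obstacle is precisely the step that upgrades ``persistence'' to the equality $W'(r) = W(r)$: one must use that the comparison region $\iota_{\start',\addrend'}$ carries the identity transition system, so that the $\nsupsim$ hypothesis actually pins $W(r).\phi$ to the identity and thereby forbids any state change along a private transition. This is the same subtlety tacitly used in Lemma~\ref{lem:readcond-mono-priv}; here it does the extra job of guaranteeing that the address-stratification requirement, which $\writeCond{}$ imposes but $\readCond{}$ does not, carries over unchanged.
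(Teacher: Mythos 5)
Your proof is correct and takes the same route as the paper, which argues in one line that $\writeCond{}(\cdot,\glob)$ picks a $\perma$ region and that permanent regions persist under $\futurestr$. Your write-up additionally makes explicit a detail the paper elides — that the $\nsupsim$ hypothesis pins $W(r).\phi$ to the identity, so the permanent region's state cannot move and the witness carries over verbatim, including address-stratification — which is exactly the justification needed to make the one-line argument rigorous.
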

\begin{proof}[Proof of Lemma~\ref{lem:writecond-mono-priv}]
  $\writeCond{}(\iota,\glob)(W)$ picks a $\perma$ region from $W$. $\perma$ regions are persistent over $\futurestr$, so we can use the region that the assumption gives us.
\end{proof}

\begin{lemma}[$\mathit{writeCondition}$ downwards-closed]
  \label{lem:writecond-dc}
  If 
  \begin{itemize}
  \item $n' \leq n$
  \item $\iota \in \{\iota^\pwl,\iota^\nwl,\iota^{(\nwl,p)}\}$
  \item $\npair{(\start,\addrend)} \in \writeCond{}(\iota,\gl)(W)$
  \end{itemize}
  then 
  \[
    \npair[n']{(\start,\addrend)} \in \writeCond{}(\iota,\gl)(W)
  \]
\end{lemma}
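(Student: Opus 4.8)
The plan is to unfold the set-builder definition of $\writeCond{}$ and reuse the very witnesses that certify membership at index $n$, relying only on the fact that each side condition is monotone (downward closed) in the step index. Observe first that $\writeCond{}(\iota,\gl)(W)$ is declared to land in $\UPred{\Addrs^2}$, so what the lemma really demands is that the explicit definition genuinely yields a downward-closed predicate; the proof amounts to checking this concretely.

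Concretely, from the hypothesis $\npair{(\start,\addrend)} \in \writeCond{}(\iota,\gl)(W)$ I would extract a region name $r \in \localityReg(\gl,W)$ and an interval $[\start',\addrend'] \supseteq [\start,\addrend]$ with $W(r) \nsupsim[n-1] \iota_{\start',\addrend'}$ and $W(r)$ address-stratified. To establish $\npair[n']{(\start,\addrend)} \in \writeCond{}(\iota,\gl)(W)$ I would keep the same $r$ and the same interval $[\start',\addrend']$, since the membership $r \in \localityReg(\gl,W)$, the inclusion $[\start',\addrend'] \supseteq [\start,\addrend]$, and the address-stratification of $W(r)$ are all independent of the step index (address-stratification is defined with a universal quantifier over all indices). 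The only remaining obligation is the semi-supset requirement at the lowered index, namely $W(r) \nsupsim[n'-1] \iota_{\start',\addrend'}$.

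For that last obligation I would invoke downward closure of the semi-$n$-supset relation in its index: since $n' \leq n$ we have $n'-1 \leq n-1$, so $W(r) \nsupsim[n-1] \iota_{\start',\addrend'}$ should yield $W(r) \nsupsim[n'-1] \iota_{\start',\addrend'}$. Unfolding $\nsupsim$, this reduces to showing $\iota_{\start',\addrend'}.H \; s \; \hat{W} \nsubeq[n'-1] W(r).H \; s \; \hat{W}$ for all $\hat{W}$ (the state and transition-system components being already equated by the $\nsupsim[n-1]$ assumption), which follows from the standard fact that the $n$-approximation $\nsubeq$ is downward closed in its index: $X \nsubeq[m] Y$ together with $m' \leq m$ gives $X \nsubeq[m'] Y$.

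There is essentially no hard part here: the crux is merely recognizing that every ingredient of the definition except the $\nsupsim[n-1]$ clause is step-index agnostic, and that the $\nsupsim$ clause lowers its index by the downward closure of $\nsubeq$ on step-indexed predicates. The same template proves the companion statement (Lemma~\ref{lem:readcond-dc}) verbatim, with $\nsupsim[n-1]$ replaced by $\nsubsim[n]$.
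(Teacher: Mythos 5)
Your proof is correct: the witnesses $r$ and $[\start',\addrend']$ are index-independent, address-stratification quantifies over all indices, and the only index-sensitive clause $W(r)\nsupsim[n-1]\iota_{\start',\addrend'}$ lowers to $n'-1$ by downward closure of $\nsubeq$ on $\UPred{\HeapSegments}$, exactly as you say. The paper leaves this proof blank (treating it as routine), and your argument is precisely the intended unfolding, so there is nothing to compare beyond noting that you have filled in the omitted details correctly.
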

\begin{proof}[Proof of Lemma~\ref{lem:writecond-dc}]
\end{proof}

\begin{lemma}[$\mathit{execCondition}$ monotone w.r.t $\futurewk$]
  \label{lem:execcond-mono-pub}
  If
  \begin{itemize}
  \item $W' \futurewk W$
  \item $\perm \in \{\exec, \rwx, \rwlx\}$
  \item $\npair{(\perm,\start,\addrend)} \in \execCond{}(\gl)(W)$
  \end{itemize}
  then
  \[
    \npair{(\perm,\start,\addrend)}\in \execCond{}(\iota,\gl)(W')
  \]
\end{lemma}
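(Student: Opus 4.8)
The plan is to unfold the definition of $\execCond{}$ and reduce the whole statement to transitivity of the two future-world relations, together with the fact that a public future world is also a private future world. Recall that membership $\npair{(\perm,\start,\addrend)} \in \execCond{}(\gl)(W)$ means: for every $n' < n$, every world $W''$ with $W'' \future W$ (where $\future = \futurewk$ if $\gl = \local$ and $\future = \futurestr$ if $\gl = \glob$), and every $\addr \in [\start',\addrend'] \subseteq [\start,\addrend]$, one has $\npair[n']{((\perm,\gl),\start',\addrend',\addr)} \in \stder(W'')$. Since both the hypothesis and the goal live at the same step index $n$, the inner quantifier $\forall n' < n$ is identical on both sides, so no step-index bookkeeping is needed; the entire content is about transporting future worlds of $W'$ back to future worlds of $W$.

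First I would fix $n' < n$, a world $W''$ with $W'' \future W'$, an interval $[\start',\addrend'] \subseteq [\start,\addrend]$, and an address $\addr \in [\start',\addrend']$, with the goal of showing $\npair[n']{((\perm,\gl),\start',\addrend',\addr)} \in \stder(W'')$. By the assumption $\npair{(\perm,\start,\addrend)} \in \execCond{}(\gl)(W)$ it suffices to establish $W'' \future W$ and then instantiate that hypothesis with exactly these $n'$, $W''$, $[\start',\addrend']$ and $\addr$.

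I would then case on the locality $\gl$. If $\gl = \local$, then $\future = \futurewk$, so we have $W'' \futurewk W'$ and $W' \futurewk W$ (the lemma's hypothesis); transitivity of $\futurewk$ yields $W'' \futurewk W$, as required. If $\gl = \glob$, then $\future = \futurestr$, so we have $W'' \futurestr W'$; from $W' \futurewk W$ and Lemma~\ref{lem:future-pub-impl-future-priv} (a public future world is a private future world) we obtain $W' \futurestr W$, and transitivity of $\futurestr$ then gives $W'' \futurestr W$. In both cases the hypothesis applies and delivers the desired membership in $\stder(W'')$, completing the argument.

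The only mild subtlety, and the sole place where the two cases genuinely diverge, is the global case: there $\execCond{}$ quantifies over \emph{private} future worlds $\futurestr$, whereas the hypothesis of the lemma only supplies a \emph{public} step $W' \futurewk W$. This gap is bridged precisely by Lemma~\ref{lem:future-pub-impl-future-priv}. Everything else is routine unfolding and transitivity, so I expect no real obstacle.
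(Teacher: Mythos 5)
Your argument is correct. The paper itself leaves the proof body of Lemma~\ref{lem:execcond-mono-pub} empty, so there is nothing to compare against line by line, but your proof is exactly the argument one would expect to fill that gap: unfold $\execCond{}$, observe that both sides are stated at the same index $n$ so no downward closure is needed, and transport a future world of $W'$ back to a future world of $W$. The case split on $\gl$ is the right place to put the only real content: in the local case plain transitivity of $\futurewk$ suffices, and in the global case you correctly identify that the hypothesis only gives a public step while $\execCond{}(\glob)$ quantifies over private future worlds, bridging this with Lemma~\ref{lem:future-pub-impl-future-priv} followed by transitivity of $\futurestr$. Note that the paper's Lemma~\ref{lem:future-priv-pub-trans} states the composite $W'' \futurestr W' \land W' \futurewk W \Rightarrow W'' \futurestr W$ directly, so the global case could be discharged by a single citation of that lemma rather than the two-step derivation; the two routes are equivalent. (The $\iota$ appearing in the lemma's conclusion $\execCond{}(\iota,\gl)(W')$ is a typo in the statement --- $\execCond{}$ takes no region argument --- and you were right to ignore it.)
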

\begin{proof}[Proof of Lemma~\ref{lem:execcond-mono-pub}]
\end{proof}

\begin{lemma}[$\mathit{execCondition}$ global monotonicity w.r.t $\futurestr$]
  \label{lem:execcond-mono-priv}
  If
  \begin{itemize}
  \item $W' \futurestr W$
  \item $\perm \in \{\exec, \rwx \}$
  \item $\npair{(\perm,\start,\addrend)} \in \execCond{}(\glob)(W)$
  \end{itemize}
  then
  \[
    \npair{(\perm,\start,\addrend)}\in \execCond{}(\glob)(W')
  \]
\end{lemma}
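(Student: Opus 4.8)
The plan is to unfold the definition of $\execCond{}(\glob)$ and observe that monotonicity with respect to $\futurestr$ falls out immediately from transitivity of the private future relation. Recall that, in the global case, membership $\npair{(\perm,\start,\addrend)} \in \execCond{}(\glob)(W)$ means precisely that for every $n' < n$, every $W'' \futurestr W$, every sub-range $[\start',\addrend'] \subseteq [\start,\addrend]$, and every $\addr \in [\start',\addrend']$, we have $\npair[n']{((\perm,\glob),\start',\addrend',\addr)} \in \stder(W'')$. The goal is to establish the same statement with $W$ replaced by the given $W' \futurestr W$, at the same step-index $n$.

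First I would fix arbitrary $n' < n$, a world $W'' \futurestr W'$, a sub-range $[\start',\addrend'] \subseteq [\start,\addrend]$, and an address $\addr \in [\start',\addrend']$, reducing the goal to showing $\npair[n']{((\perm,\glob),\start',\addrend',\addr)} \in \stder(W'')$. Since $\futurestr$ is a preorder (Theorem~\ref{thm:world-existence}), from $W'' \futurestr W'$ and $W' \futurestr W$ I obtain $W'' \futurestr W$ by transitivity. Thus $W''$ is a legitimate witness for the universally quantified private future world in the assumption $\npair{(\perm,\start,\addrend)} \in \execCond{}(\glob)(W)$, so instantiating that assumption at $n'$, $W''$, $[\start',\addrend']$, and $\addr$ yields exactly the required membership in $\stder(W'')$, which completes the argument.

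There is essentially no obstacle here: the whole proof rests on transitivity of $\futurestr$, which is why the analogues for $\readCond{}$ and $\writeCond{}$ (Lemmas~\ref{lem:readcond-mono-priv} and~\ref{lem:writecond-mono-priv}) instead had to invoke persistence of $\perma$ regions, whereas the execute condition already quantifies over private future worlds and so transports along $\futurestr$ for free. The restriction $\perm \in \{\exec,\rwx\}$ plays no role in the transitivity step, since the definition of $\execCond{}$ is uniform in $\perm$; it merely records which permissions carry a private execute condition in the value relation, and the same proof would go through verbatim for any permission. The only point worth double-checking is that the global case of $\execCond{}$ genuinely uses $\futurestr$ (and not $\futurewk$) as its future-world quantifier, which is exactly what the side condition ``$\gl = \glob \Rightarrow \future = \futurestr$'' in the definition guarantees.
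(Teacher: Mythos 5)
Your proof is correct and is essentially identical to the paper's: both unfold $\execCond{}(\glob)$, fix an arbitrary private future world of $W'$, and conclude by transitivity of $\futurestr$ that it is also a private future world of $W$, so the assumption applies directly. Your side remarks (that the permission restriction is not used, and that the contrast with the read/write condition lemmas is that those rely on persistence of $\perma$ regions rather than transitivity) are accurate observations consistent with the paper.
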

\begin{proof}[Proof of Lemma~\ref{lem:execcond-mono-priv}]
  Assume $W_2 \futurestr W_1$, $\perm \in \{\exec, \rwx \}$ and $\npair{(\perm,\start,\addrend)} \in \execCond{}(\glob)(W_1)$. Now let $W_3 \futurestr W_2$, $\addr \in [\start',\addrend'] \subseteq [\start,\addrend]$, and $n' < n$, and show
  \[
    \npair{((\perm,\glob),\start',\addrend',\addr)} \in \stder(W_3)
  \]
  by transitivity we have $W_3 \futurestr W_1$, so the result follows from $\npair{(\perm,\start,\addrend)} \in \execCond{}(\glob)(W_1)$.
\end{proof}

\begin{lemma}[$\mathit{execCondition}$ downwards-closed]
  \label{lem:execcond-dc}
  If
  \begin{itemize}
  \item $n' \leq n$
  \item $\perm \in \{\exec, \rwx, \rwlx\}$
  \item $\npair{(\perm,\start,\addrend)} \in \execCond{}(\gl)(W)$
  \end{itemize}
  then 
  \[
    \npair[n']{(\perm,\start,\addrend)} \in \execCond{}(\gl)(W)
  \]
\end{lemma}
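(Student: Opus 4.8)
The plan is to unfold the definition of $\execCond{}$ for both the hypothesis and the goal and to observe that the step-index occurs only as a \emph{strict} upper bound on a universally quantified index; shrinking the index can therefore only shrink the range of quantification, so the implication is immediate. Recall that $\npair[k]{(\perm,\start,\addrend)} \in \execCond{}(\gl)(W)$ unfolds to the statement that for all $m < k$, all $W' \future W$, and all $\addr \in [\start',\addrend'] \subseteq [\start,\addrend]$, we have $\npair[m]{((\perm,\gl),\start',\addrend',\addr)} \in \stder(W')$, where $\future$ is $\futurewk$ if $\gl = \local$ and $\futurestr$ if $\gl = \glob$. Crucially, this choice of future-world relation depends only on $\gl$, which is the same in the hypothesis and the goal, and not on the index.

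To prove the conclusion $\npair[n']{(\perm,\start,\addrend)} \in \execCond{}(\gl)(W)$, I would let $m < n'$, $W' \future W$, and $\addr \in [\start',\addrend'] \subseteq [\start,\addrend]$ be arbitrary and show $\npair[m]{((\perm,\gl),\start',\addrend',\addr)} \in \stder(W')$. Since $m < n' \leq n$ we have $m < n$, so the hypothesis $\npair{(\perm,\start,\addrend)} \in \execCond{}(\gl)(W)$, instantiated at this very $m$, $W'$, and $\addr$, yields exactly the required membership. This discharges the goal.

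There is effectively no obstacle: the downward closure is forced by the strict inequality $m < n$ in the definition, which makes every witness required for index $n'$ already a witness available at index $n$. The side condition $\perm \in \{\exec, \rwx, \rwlx\}$ is not used in the argument beyond restricting attention to permissions for which $\execCond{}$ is applied in the value relation; the same proof pattern establishes the companion downward-closure lemmas for $\readCond{}$ (Lemma~\ref{lem:readcond-dc}) and $\writeCond{}$ (Lemma~\ref{lem:writecond-dc}).
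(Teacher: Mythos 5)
Your proof is correct and matches the paper's approach: the paper simply says the lemma ``follows easily from definition,'' and your unfolding --- instantiating the universally quantified index $m < n' \leq n$ in the hypothesis --- is exactly that easy argument spelled out.
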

\begin{proof}[Proof of Lemma~\ref{lem:execcond-dc}]
  Follows easily from definition.
\end{proof}

\begin{lemma}[$\mathit{enterCondition}$ monotone w.r.t $\futurewk$]
  \label{lem:entrycond-mono-pub}
  If
  \begin{itemize}
  \item $W' \futurewk W$
  \item $\npair{(\addr,\start,\addrend)} \in \entryCond{}(\gl)(W)$
  \end{itemize}
  then
  \[
    \npair{(\addr,\start,\addrend)}\in \entryCond{}(\iota,\gl)(W')
  \]
\end{lemma}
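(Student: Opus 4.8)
The plan is to unfold the definition of $\entryCond{}$ and reduce the claim to pure transitivity of the relevant future-world relation, with a case split on $\gl$. So assume $W' \futurewk W$ and $\npair{(\start,\addrend,\addr)} \in \entryCond{}(\gl)(W)$. Unfolding the goal, I fix an arbitrary $n' < n$ and an arbitrary $W'' \future W'$, where $\future$ is $\futurewk$ when $\gl = \local$ and $\futurestr$ when $\gl = \glob$, and must show $\npair[n']{((\exec,\gl),\start,\addrend,\addr)} \in \stder(W'')$. It therefore suffices to establish $W'' \future W$ for the same $\future$, since instantiating the hypothesis $\npair{(\start,\addrend,\addr)} \in \entryCond{}(\gl)(W)$ at $n'$ and $W''$ then yields exactly the required membership.

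In the case $\gl = \local$ we have $\future = \futurewk$, so from $W'' \futurewk W'$ and the assumption $W' \futurewk W$ transitivity of $\futurewk$ gives $W'' \futurewk W$, closing this case. In the case $\gl = \glob$ we have $\future = \futurestr$, so we are given $W'' \futurestr W'$. Here the public hypothesis $W' \futurewk W$ must first be promoted to a private one: by Lemma~\ref{lem:future-pub-impl-future-priv} we obtain $W' \futurestr W$, and transitivity of $\futurestr$ then yields $W'' \futurestr W$ as needed.

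The only step that is not pure bookkeeping is this promotion from $\futurewk$ to $\futurestr$ in the global case, so that is the point to watch. It relies on every public region transition being a private one (since $\phi_\pub \subseteq \phi$ for every element of $\Rels$) together with the compatibility of the revoked cases, which is precisely what Lemma~\ref{lem:future-pub-impl-future-priv} packages up. The argument is otherwise identical in shape to the proof of Lemma~\ref{lem:execcond-mono-priv} and to the companion monotonicity lemmas for $\readCond{}$, $\writeCond{}$, and $\execCond{}$.
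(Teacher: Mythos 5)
Your proof is correct and matches the paper's (which simply says the result ``follows easily from definition''): you unfold $\entryCond{}$ and reduce everything to transitivity of the appropriate future-world relation. Your explicit handling of the $\glob$ case --- promoting $W' \futurewk W$ to $W' \futurestr W$ via Lemma~\ref{lem:future-pub-impl-future-priv} before applying transitivity of $\futurestr$ --- is exactly the one non-trivial step the paper leaves implicit, and it is handled correctly.
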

\begin{proof}[Proof of Lemma~\ref{lem:entrycond-mono-pub}]
  Follows easily from definition.
\end{proof}

\begin{lemma}[$\mathit{enterCondition}$ global monotonicity w.r.t $\futurestr$]
  \label{lem:entrycond-mono-priv}
  If
  \begin{itemize}
  \item $W' \futurestr W$
  \item $\npair{(\addr,\start,\addrend)} \in \entryCond{}(\glob)(W)$
  \end{itemize}
  then
  \[
    \npair{(\addr,\start,\addrend)}\in \entryCond{}(\glob)(W')
  \]
\end{lemma}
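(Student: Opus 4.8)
The plan is to mirror the proof of Lemma~\ref{lem:execcond-mono-priv} almost verbatim, since the global enter condition is quantified over the \emph{private} future relation $\futurestr$, which is transitive. Unfolding $\entryCond{}(\glob)$, membership $\npair{(\addr,\start,\addrend)} \in \entryCond{}(\glob)(W)$ means precisely that for every $n' < n$ and every $W'' \futurestr W$ we have $\npair[n']{((\exec,\glob),\start,\addrend,\addr)} \in \stder(W'')$. So the entire content of the lemma is a quantifier-reindexing along transitivity.

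Concretely, I would first assume the two hypotheses $W' \futurestr W$ and $\npair{(\addr,\start,\addrend)} \in \entryCond{}(\glob)(W)$. To establish $\npair{(\addr,\start,\addrend)} \in \entryCond{}(\glob)(W')$, I unfold the definition and let an arbitrary $n' < n$ and an arbitrary $W'' \futurestr W'$ be given; the goal becomes $\npair[n']{((\exec,\glob),\start,\addrend,\addr)} \in \stder(W'')$. By transitivity of $\futurestr$ applied to $W'' \futurestr W' \futurestr W$, I obtain $W'' \futurestr W$. I then instantiate the unfolded assumption with this same $n'$ (which still satisfies $n' < n$) and with $W''$, and this yields exactly the desired conclusion.

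The reason the argument succeeds specifically in the global case is worth flagging: the enter condition for $\glob$ ranges over $\futurestr$, and private future worlds compose, so any private future world of $W'$ is automatically a private future world of the original $W$. (The analogous public-monotonicity statement, Lemma~\ref{lem:entrycond-mono-pub}, goes through by the same scheme but using $\futurewk$ and its transitivity; it is listed separately only because the quantified future relation differs.)

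I do not anticipate any real obstacle here — there is no interaction with the underlying step relation, $\stder$, or memory satisfaction, since $\stder(W'')$ appears unchanged on both sides. The only point requiring minor care is bookkeeping of the tuple ordering: the statement writes the triple as $(\addr,\start,\addrend)$ whereas the definition of $\entryCond{}$ writes it as $(\start,\addrend,\addr)$, so I would keep the components named rather than positional to avoid confusion, but this is purely cosmetic. Thus the whole proof reduces to ``use transitivity of $\futurestr$ and re-apply the hypothesis,'' exactly as in the proof of Lemma~\ref{lem:execcond-mono-priv}.
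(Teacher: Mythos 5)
Your proof is correct and is essentially identical to the paper's: both unfold $\entryCond{}(\glob)$, take an arbitrary $W'' \futurestr W'$, obtain $W'' \futurestr W$ by transitivity of $\futurestr$, and re-apply the hypothesis, exactly as in the proof of Lemma~\ref{lem:execcond-mono-priv}. No gaps.
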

\begin{proof}[Proof of Lemma~\ref{lem:entrycond-mono-priv}]
  Assume $W_2 \futurestr W_1$ and $\npair{(\addr,\start,\addrend)} \in \entryCond{}(\glob)(W_1)$. Now let $W_3 \futurestr W_2$, $n' < n$, and show
  \[
    \npair{\stdcap[(\exec,\glob)]} \in \stder(W_3)
  \]
  by transitivity we have $W_3 \futurestr W_1$, so the result follows from $\npair{(\addr,\start,\addrend)} \in \entryCond{}(\glob)(W_1)$.
\end{proof}

\begin{lemma}[$\mathit{enterCondition}$ downwards-closed]
  \label{lem:entrycond-dc}
  If
  \begin{itemize}
  \item $n' \leq n$
  \item $\npair{(\addr,\start,\addrend)} \in \entryCond{}(\gl)(W)$
  \end{itemize}
  then 
  \[
    \npair[n']{(\addr,\start,\addrend)} \in \entryCond{}(\gl)(W)
  \]
\end{lemma}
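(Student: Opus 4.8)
The plan is to unfold the definition of $\entryCond{}(\gl)(W)$ at both step-indices and observe that the defining condition at the smaller index $n'$ quantifies over strictly fewer sub-indices than the condition at $n$. Hence every instance required at $n'$ is already delivered by the hypothesis at $n$, and the proof is a pure consequence of the universally-quantified ``for all $m < n$'' shape of the enter condition. This mirrors exactly the downwards-closure argument for $\execCond{}$ in Lemma~\ref{lem:execcond-dc}.

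Concretely, the hypothesis $\npair{(\addr,\start,\addrend)} \in \entryCond{}(\gl)(W)$ unfolds (writing $m$ for the bound index to avoid a clash with $n'$) to the statement: for all $m < n$ and all $W' \future W$ one has $\npair[m]{((\exec,\gl),\start,\addrend,\addr)} \in \stder(W')$, where the future-world relation $\future$ is $\futurewk$ when $\gl = \local$ and $\futurestr$ when $\gl = \glob$. Crucially, this relation depends only on $\gl$ and not on the index, so the quantification over $W'$ is literally the same at both indices. To establish $\npair[n']{(\addr,\start,\addrend)} \in \entryCond{}(\gl)(W)$ I must show the identical statement but with the bound $m < n'$.

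So first I would fix an arbitrary $m < n'$ and an arbitrary $W' \future W$. Since $n' \leq n$ by assumption, combining $m < n'$ with $n' \leq n$ yields $m < n$. I then instantiate the hypothesis at this very $m$ and $W'$ to obtain $\npair[m]{((\exec,\gl),\start,\addrend,\addr)} \in \stder(W')$, which is precisely what membership in $\entryCond{}(\gl)(W)$ at index $n'$ demands. As $m$ and $W'$ were arbitrary, this discharges all the obligations and completes the argument.

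There is essentially no obstacle here: no appeal to monotonicity, non-expansiveness, or downwards closure of $\stder$ is needed, since the relevant $\stder(W')$ membership is carried over verbatim from the hypothesis. The only thing being exploited is that shrinking the upper bound of a universal quantifier weakens the resulting condition.
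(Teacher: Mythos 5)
Your proof is correct and matches the paper's intent: the paper leaves this proof empty (treating it as immediate from the definition, as with the analogous downwards-closure lemma for $\execCond{}$), and your argument---shrinking the bound on the universally quantified index from $n$ to $n' \leq n$ only weakens the condition---is exactly the intended one-line justification. No further comment is needed.
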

\begin{proof}[Proof of Lemma~\ref{lem:entrycond-dc}]
\end{proof}

\subsubsection{LR Sanity lemmas}
\begin{lemma}
  \label{lem:mem-sat-ne}
  \begin{align*}
    &\forall \hs, n, W \nequal W' \ldotp \\
    &\quad \heapSat[\hs]{n}{W} \land W \nequal W' \Rightarrow \heapSat[\hs]{n}{W'}
  \end{align*}
\end{lemma}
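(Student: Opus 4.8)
The plan is to unfold the definition of memory-segment satisfaction, reuse the witnessing partition, and push the world change through each region predicate using non-expansiveness. Concretely, assume $W \nequal W'$ and $\heapSat[\hs]{n}{W}$. The latter yields a partition $P : \activeReg{W} \to \HeapSegments$ with $\memSatPar{\hs}{W}{P}$, i.e. $\hs = \biguplus_{r \in \activeReg{W}} P(r)$ and, for every active $r$ with $W(r) = (\_,s,\_,\_,H)$, the membership $\npair[n]{P(r)} \in H(s)(\xi^{-1}(W))$. The goal is to produce the analogous data for $W'$.

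First I would dispatch the base case $n = 0$, which is immediate since every $\UPred{\HeapSegments}$ contains all elements at step $0$, so the region memberships are vacuous. For the step case $n \geq 1$, the first thing to extract from $W \nequal W'$ is that the two worlds carry the same active regions, $\activeReg{W} = \activeReg{W'}$ (the views in $\erase{W}{\perma,\temp}$ and $\erase{W'}{\perma,\temp}$ are discrete data preserved by $n$-equality), and that for each active $r$ the data of $W(r)$ and $W'(r)$ agree on the discrete components — the state $s$ and the view — while the predicate components $H$ are $n$-close. This lets me reuse the \emph{same} partition $P$ as a witness for $W'$, so that the domain condition $\hs = \biguplus_{r} P(r)$ is literally unchanged and only the region memberships remain to be transferred.

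The core step is then, for each active $r$, to convert $\npair[n]{P(r)} \in H(s)(\xi^{-1}(W))$ into $\npair[n]{P(r)} \in H'(s)(\xi^{-1}(W'))$, where $(s,H)$ and $(s,H')$ come from $W(r)$ and $W'(r)$. I would do this in two moves: (i) non-expansiveness of $\xi^{-1}$ gives $\xi^{-1}(W) \nequal \xi^{-1}(W')$, and since $H(s)$ is a non-expansive map $\Wor \to \UPred{\HeapSegments}$, the sets $H(s)(\xi^{-1}(W))$ and $H(s)(\xi^{-1}(W'))$ have the same step-$n$ members; (ii) the $n$-closeness of $H$ and $H'$ then lets me replace $H$ by $H'$ at step $n$. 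Chaining (i) and (ii) gives the desired membership, and reassembling over all $r$ yields $\memSatPar{\hs}{W'}{P}$, hence $\heapSat[\hs]{n}{W'}$.

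The main obstacle is the step-index bookkeeping around the guardedness in $\Wor \cong \blater(\dots)$: one must check that the equalities obtained from $W \nequal W'$ — which, through $\xi$, relate the worlds only up to the shift induced by $\blater$ — still transfer membership at step $n$ rather than merely at step $n-1$. This is exactly where the built-in $n-1$ shift inside each region predicate (e.g. $H^\pwl$ requires $\npair[n-1]{\hs(\addr)} \in \stdvr(\xi(\hat{W}))$) is used to absorb the later, so that non-expansiveness of $H(s)$ delivers agreement at the right index. Everything else is routine unfolding, and the reverse direction implicit in the biconditional reading follows by the symmetry of $\nequal$.
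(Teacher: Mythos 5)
The paper leaves the proof of this lemma empty, so there is nothing to compare your argument against; what you propose is the natural proof and is essentially right: reuse the witnessing partition $P$, observe that for $n\geq 1$ the $n$-equal worlds have the same active regions and agree on the discrete components (view, state, transition relations) of each region, and transfer each membership $\npair[n]{P(r)} \in H(s)(\xi^{-1}(W))$ to the primed world. Two points deserve tightening. First, in your step (ii) you need the $n$-equal uniform predicates $H(s)(\xi^{-1}(W'))$ and $H'(s)(\xi^{-1}(W'))$ to agree on members at index \emph{exactly} $n$; whether $\nequal$ on $\UPred{\HeapSegments}$ means agreement at indices $\leq n$ or $< n$ is a convention the paper never fixes, and under the strict ($<n$) convention this leg of the argument only yields satisfaction at $n-1$. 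Note that the $\blater$ shift you invoke rescues the $\xi^{-1}(W)$-versus-$\xi^{-1}(W')$ leg (there the guard genuinely buys you an index), but it does not help with $H$ versus $H'$, since those live in the region tuples outside the later. Second, the base case $n=0$ is not quite as immediate as you claim: triviality of step-$0$ membership discharges the per-region conditions, but $0$-equality of worlds places no constraint on the views, so $\activeReg{W}$ and $\activeReg{W'}$ may differ and the disjoint-union condition $\hs = \biguplus_{r\in\activeReg{W'}}P'(r)$ is then a genuine obligation. Neither issue reflects a wrong idea on your part; they are exactly the conventions the paper leaves implicit, and your proof is the one the authors evidently had in mind.
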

\begin{proof}[Proof of Lemma~\ref{lem:mem-sat-ne}]
\end{proof}

\begin{lemma}[Heap satisfaction downwards closure]
  \label{lem:heap-sat-dc}
  \begin{align*}
    &\forall \hs, n' \leq n, W \ldotp \\
    &\quad \heapSat[\hs]{n}{W} \Rightarrow \heapSat[\hs]{n'}{W}
  \end{align*}
\end{lemma}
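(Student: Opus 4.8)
The plan is to reuse, verbatim, the very partition that witnesses satisfaction at the larger index. First I would unfold the hypothesis $\heapSat[\hs]{n}{W}$ to obtain a map $P : \activeReg{W} \rightarrow \HeapSegments$ with $\memSatPar{\hs}{W}{P}$; that is, $\hs = \biguplus_{r \in \activeReg{W}} P(r)$ together with, for every $r \in \activeReg{W}$ and the corresponding $W(r) = (\_,s,\_,\_,H)$, the membership $\npair[n]{P(r)} \in H(s)(\xi^{-1}(W))$.

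Then I would claim that this same $P$ witnesses $\heapSat[\hs]{n'}{W}$. The splitting equation $\hs = \biguplus_{r \in \activeReg{W}} P(r)$ does not mention the step index, so it transfers unchanged; moreover $\activeReg{W}$, the region interpretation $H$, and the chosen state $s$ are all determined solely by $W$, hence they are unaffected by shrinking the index. It therefore only remains to check, for each $r \in \activeReg{W}$, that $\npair[n']{P(r)} \in H(s)(\xi^{-1}(W))$. Here I would invoke the defining property of uniform predicates: $H(s)(\xi^{-1}(W)) \in \UPred{\HeapSegments}$ is downwards-closed in the step index, so from $n' \le n$ and $\npair[n]{P(r)} \in H(s)(\xi^{-1}(W))$ we immediately obtain $\npair[n']{P(r)} \in H(s)(\xi^{-1}(W))$. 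This is the single place where the hypothesis $n' \le n$ is used.

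I expect essentially no obstacle: the statement is an immediate consequence of the step-indexed (downwards-closed) nature of the elements of $\UPred{\HeapSegments}$ combined with the index-independence of the partition condition. The only point worth stating explicitly is that reducing the index does not alter which regions are active or how they are interpreted, so the witness $P$ continues to satisfy $\memSatPar{\hs}{W}{P}$ at index $n'$ without any modification.
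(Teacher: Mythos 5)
Your proposal is correct and follows essentially the same route as the paper's proof: reuse the witnessing partition $P$ unchanged and conclude each membership $\npair[n']{P(r)} \in H(s)(\xi^{-1}(W))$ from $n' \leq n$ and the fact that $H(s)(\xi^{-1}(W))$ is an element of $\UPred{\HeapSegments}$, hence downwards-closed in the step index. Nothing is missing.
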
 
\begin{proof}[Proof of Lemma~\ref{lem:heap-sat-dc}]
  Let $\ms$, $n' \leq n$, and $W$ be given and assume
  \[
    \heapSat[\hs]{n}{W}
  \]
  This assumption gives us $P : \activeReg{W} \fun \MemSegments$ such that
  \begin{enumproof}
    \item $\hs = \biguplus_{r\in\activeReg{W}}P(r)$ \label{pf:heap-sat:prop1}
    \item \label{pf:heap-sat:prop2}
      \begin{align*}
        &\forall r \in \activeReg{W} \ldotp\\
        &\quad \exists H,s \ldotp\\
        &\qquad W(r) = (\_,s,\_,\_,H) \land \\
        &\qquad \npair[n']{P(r)} \in H(s)(\xi^{-1}(W))
      \end{align*}
  \end{enumproof}
  Using $P$ as witness, \ref{pf:heap-sat:prop1} is the first condition we need. Now let $r$ be given and use \ref{pf:heap-sat:prop2} to get $H$ and $s$ such that
  \begin{enumproof}[resume]
    \item $W(r) = (\_,s,\_,\_,H)$ 
    \item $\npair[n]{P(r)} \in H(s)(\xi^{-1}(W))$ \label{pf:heap-sat:interp}
  \end{enumproof}
We now need to show
\[
        \npair[n']{P(r)} \in H(s)(\xi^{-1}(W))
\]
which follows from \ref{pf:heap-sat:interp}, $n' \leq n$, and $H(s)(\xi^{-1}(W))$ is a $\UPred{\MemSegments}$.
\end{proof}

\begin{lemma}
  \label{lem:wl-local}
  If
  \begin{itemize}
  \item $\memSat{\ms}{W}$
  \item $\npair{\stdcap} \in \stdvr(W)$
  \item $\start \leq \addrend$
  \item $\perm \in \{\rwlx,\rwl\}$
  \end{itemize}
  then 
  \[
    \gl = \local
  \]
\end{lemma}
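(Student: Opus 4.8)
The plan is to argue by contradiction, closely following the proof of Lemma~\ref{lem:pwl-stack} (of which this is the ``write-local'' analogue): I assume $\gl = \glob$ and derive a contradiction. First I would unfold $\npair{\stdcap} \in \stdvr(W)$ in the case $\perm \in \{\rwlx, \rwl\}$. Both of these cases of the value relation contain the two conjuncts $\npair{(\start,\addrend)} \in \readCond{}(\gl)(W)$ and $\npair{(\start,\addrend)} \in \writeCond{}(\iota^\pwl,\gl)(W)$. Unfolding these with $\gl = \glob$ yields regions $r_1, r_2 \in \localityReg(\glob,W) = \dom(\erase{W}{\perma})$ and intervals $[b_1,e_1],[b_2,e_2] \supseteq [\start,\addrend]$ such that $W(r_1) \nsubsim[n] \iota^\pwl_{b_1,e_1}$ and $W(r_2) \nsupsim[n-1] \iota^\pwl_{b_2,e_2}$ with $W(r_2)$ address-stratified. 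Note that both $\nsubsim$ and $\nsupsim$ force the state of $W(r_1)$ and $W(r_2)$ to be $1$, and that both $r_1,r_2$ are \emph{permanent} regions, since $\localityReg(\glob,W)$ contains only permanent regions.

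Next I would use $\memSat{\ms}{W}$ to show that in fact $r_1 = r_2$ (hence $[b_1,e_1] = [b_2,e_2]$), exactly as in Lemma~\ref{lem:pwl-stack}. Memory satisfaction provides a partition $P$ with $\npair{P(r_i)} \in W(r_i).H\;1\;\xi^{-1}(W)$. From $W(r_1) \nsubsim[n] \iota^\pwl_{b_1,e_1}$ I get $\npair{P(r_1)} \in H^\pwl_{b_1,e_1}\;1\;\xi^{-1}(W)$ and hence $\dom(P(r_1)) = [b_1,e_1]$; and from $W(r_2) \nsupsim[n-1] \iota^\pwl_{b_2,e_2}$ together with address-stratification (the all-zero segment on $[b_2,e_2]$ lies in $H^\pwl_{b_2,e_2}$ and hence in $W(r_2).H$, and address-stratification transports its domain to $P(r_2)$) I get $\dom(P(r_2)) = [b_2,e_2]$. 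Since $\start \leq \addrend$, the interval $[\start,\addrend]$ is nonempty and contained in both domains, so if $r_1 \neq r_2$ the segments $P(r_1)$ and $P(r_2)$ would overlap, contradicting disjointness of the partition. Thus $r_1 = r_2 =: r$ and $[b_1,e_1] = [b_2,e_2]$, which gives $W(r) \nsim \iota^\pwl_{b_1,e_1}$. By Lemma~\ref{lem:pwl-nsim-view} (applicable because $b_1 \leq e_1$) I conclude $W(r) \nequal \iota^\pwl_{b_1,e_1}$, so for $n \geq 1$ the view of $W(r)$ is $\temp$. This contradicts $r \in \dom(\erase{W}{\perma})$, all of whose regions have view $\perma$. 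Therefore $\gl = \local$.

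I expect the identification $r_1 = r_2$ to be the main obstacle, and it is precisely where the hypothesis $\memSat{\ms}{W}$ is indispensable: the write condition only supplies a \emph{lower} bound $W(r_2) \nsupsim \iota^\pwl$ and the read condition only an \emph{upper} bound $W(r_1) \nsubsim \iota^\pwl$, and neither bound alone rules out a permanent region (a permanent region accepting every segment of domain $[b_2,e_2]$ already satisfies the write condition). It is only by forcing the two witnessing regions to coincide, via disjointness of the memory partition, that one obtains a genuine $\nsim \iota^\pwl$ and may appeal to Lemma~\ref{lem:pwl-nsim-view}. The degenerate indices $n = 0$ (and, for pinning down $\dom(P(r_2))$, $n = 1$) are trivial or vacuous and are handled exactly as the corresponding edge cases in Lemmas~\ref{lem:pwl-stack} and~\ref{lem:pwl-nsim-view}.
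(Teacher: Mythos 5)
Your proof is correct and follows essentially the same route as the paper: the paper's own (very terse) proof simply invokes Lemma~\ref{lem:pwl-stack} to obtain a region $W(r) \nequal \iota^\pwl_{b',e'}$, observes that $n$-equality forces its view to be $\temp$, and concludes that $r \in \localityReg(\gl,W)$ is only possible for $\gl = \local$. You have merely inlined the argument of Lemma~\ref{lem:pwl-stack} rather than citing it (which in fact sidesteps the minor mismatch that that lemma is stated for $\perm \in \{\rwlx,\rwx\}$ rather than $\{\rwlx,\readwritel\}$), so the substance is the same.
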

\begin{proof}[Proof of Lemma~\ref{lem:wl-local}]
It follows as a consequence of Lemma~\ref{lem:pwl-stack}. The $n$-equality forces the region to be $\temp$, so for the region name to be in $\localityReg(\gl,W)$, the locality must be $\local$.
\end{proof}

\subsubsection{Malloc safe to pass to adversary}
\label{sec:malloc-valrel}

\begin{lemma}[Safe values are safe to invoke.]
  \label{lem:safe-values-safe-invoke}
  If $\npair[n+1]{w} \in \stdvr(W)$, then $\npair{\updatePcPerm{w}} \in \stder(W)$.
\end{lemma}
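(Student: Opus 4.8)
The plan is to proceed by a case analysis on the shape of $w$, exploiting that the value-relation clauses for the executable permissions are built precisely from $\execCond{}$ and $\entryCond{}$, which in turn package up membership in $\stder$. Note first that $\updatePcPerm{-}$ only changes $w$ when $w$ is an enter capability (turning $\entry$ into $\exec$) and is the identity otherwise, and recall that when $\pcreg$ holds a word that is not a capability with permission in $\{\exec,\rwx,\rwlx\}$ whose address lies within its bounds, the machine fails in a single step by the operational semantics.

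First I would dispatch the non-executable cases. If $w \in \ints \cup \{\infty\}$, or $w = \stdcap[(\perm,\gl)]$ with $\perm \in \{\noperm,\readonly,\readwrite,\readwritel\}$, then $\updatePcPerm{w} = w$, and placing it in $\pcreg$ makes the machine fail immediately, independently of the rest of the register file and the heap (the guard $\memreg(\pcreg)=\stdcap$ with $\perm \in \{\exec,\rwx,\rwlx\}$ is violated). Hence by Lemma~\ref{lem:failed-obs-stder} we obtain $\npair{\updatePcPerm{w}} \in \stder(W)$ for any $W$ at once.

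Next I would treat the enter case $w = ((\entry,\gl),\start,\addrend,\addr)$, for which $\updatePcPerm{w} = ((\exec,\gl),\start,\addrend,\addr)$. From $\npair[n+1]{w}\in\stdvr(W)$ the corresponding value-relation clause yields $\npair[n+1]{(\start,\addrend,\addr)}\in\entryCond{}(\gl)(W)$. Instantiating the definition of $\entryCond{}$ at the index $n < n+1$ and the world $W$ itself (using reflexivity of $\future$, which is $\futurewk$ or $\futurestr$ according to $\gl$) gives exactly $\npair{((\exec,\gl),\start,\addrend,\addr)}\in\stder(W)$, that is $\npair{\updatePcPerm{w}}\in\stder(W)$.

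Finally the cases $w = ((\perm,\gl),\start,\addrend,\addr)$ with $\perm \in \{\exec,\rwx,\rwlx\}$, where $\updatePcPerm{w} = w$. Here the relevant value-relation clause always provides $\npair[n+1]{(\perm,\start,\addrend)}\in\execCond{}(\gl)(W)$, since each of the three clauses lists the $\execCond{}$ entry for its own permission. I expect the one genuine wrinkle to be that $\execCond{}$ quantifies only over addresses lying inside $[\start,\addrend]$, whereas the capability's own address $\addr$ need not. I would therefore split on $\addr$: if $\addr \notin [\start,\addrend]$ the machine again fails immediately when this word sits in $\pcreg$, and Lemma~\ref{lem:failed-obs-stder} finishes it; if $\addr \in [\start,\addrend]$ I instantiate $\execCond{}$ at index $n$, world $W$, the subrange $[\start,\addrend]$ itself, and $a = \addr$, obtaining $\npair{((\perm,\gl),\start,\addrend,\addr)}\in\stder(W)$, i.e.\ $\npair{\updatePcPerm{w}}\in\stder(W)$. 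The step count works out exactly because each condition quantifies over strictly smaller indices, matching the drop from $n+1$ on the value side to $n$ on the expression side.
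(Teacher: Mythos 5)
Your proposal is correct and follows essentially the same route as the paper's proof: a case split on the shape of $w$, extracting $\execCond{}$ or $\entryCond{}$ from the relevant value-relation clause and instantiating at index $n$ with $W' = W$, and discharging all remaining cases (non-executable words and out-of-bounds executable capabilities) via Lemma~\ref{lem:failed-obs-stder}. The only cosmetic difference is that the paper folds every failing configuration into a single ``otherwise'' case rather than splitting on bounds inside the executable case as you do.
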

\begin{proof}
  \begin{enumproof}
  \item Case $w = \stdcap$ and $\start \leq \addr \leq \addrend$
    and $\perm \in \{ \exec,\rwx, \rwlx \}$:
    \begin{enumproof}
    \item $\npair[n+1]{(\perm,\start,\addrend)}\in\execCond{\gl}(W)$.\\
      By: definition of $\stdvr(W)$ using the fact that $\perm \in \{ \exec,\rwx, \rwlx \}$.
    \item $\npair[n]{((\perm,\gl),\start,\addrend,\addr)} \in \stder(W)$:
      By definition of $\execCond{}$ using the fact that $\start \leq \addr \leq \addrend$.
    \end{enumproof}
  \item Case $w = \stdcap$ and $\start \leq \addr \leq \addrend$
    and $\perm = \entry$:
    \begin{enumproof}
    \item $\npair[n+1]{(\start,\addrend,\addr)}\in\entryCond{\gl}(W)$.\\
      By: definition of $\stdvr(W)$ using the fact that $\perm = \entry$.
    \item $\npair[n]{((\exec,\gl),\start,\addrend,\addr)} \in \stder(W)$:
      By definition of $\entryCond{}$ using the fact that $\start \leq \addr \leq \addrend$.
    \item $\updatePcPerm{w} = ((\exec,\gl),\start,\addrend,\addr)$:\\
      By definition of $\updatePcPerm{\cdot}$
    \end{enumproof}
  \item Otherwise:
    $\npair{\updatePcPerm{w}} \in \stder(W)$:\\
    By Lemma~\ref{lem:failed-obs-stder}.
  \end{enumproof}
\end{proof}

\begin{lemma}[Malloc is safe to pass to adversary]
  \label{lem:malloc-in-vr}
    For $c_\malloc$ that satisfies the specification for malloc with region $\iota_{\malloc,0}$,  if $W(r) \futurestr \iota_{\malloc,0}$, then
  $\npair{c_\malloc} \in \stdvr(W)$ for all $n$.
\end{lemma}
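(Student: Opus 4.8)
The plan is to unfold the value relation at the enter case and then hand the actual work to the malloc specification. Writing $c_\malloc = \stdcap[(\entry,\glob)]$, membership $\npair{c_\malloc} \in \stdvr(W)$ reduces to $\npair{(\start,\addrend,\addr)} \in \entryCond{}(\glob)(W)$. By the definition of $\entryCond{}$ (global, so the future relation is $\futurestr$) it suffices to fix $n' < n$ and $W' \futurestr W$ and prove $\npair[n']{\updatePcPerm{c_\malloc}} \in \stder(W')$, where $\updatePcPerm{c_\malloc} = ((\exec,\glob),\start,\addrend,\addr)$. Unfolding $\stder$, I fix $n'' \leq n'$, a register file with $\npair[n'']{\reg} \in \stdrr(W')$, and memory with $\heapSat[\ms]{n''}{W'}$, and must show $\npair[n'']{(\reg\update{\pcreg}{\updatePcPerm{c_\malloc}},\ms)} \in \observations(W')$. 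I then fix a frame $\ms_f$, a halting run $(\reg\update{\pcreg}{\updatePcPerm{c_\malloc}},\ms\uplus\ms_f)\step[i](\halted,\heap')$ with $i\leq n''$, and must produce a world and a memory split.

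First I isolate the malloc footprint. Since $\iota_{\malloc,0}$ is permanent and $W'(r)\futurestr W(r)\futurestr\iota_{\malloc,0}$, the world $W'$ still carries a malloc region at name $r$. Heap satisfaction yields a partition with $P(r)=\ms_\malloc$ governed by $W'(r)$, and the context-independence property from Specification~\ref{spec:malloc} rewrites this as $\heapSat[\ms_\malloc]{n''}{[r\mapsto W'(r)]}$, so the specification applies with $\iota_\malloc = W'(r)$, index $n''$, region name $r$, $w_{\var{ret}}=\reg(r_0)$, and frame $\ms_{\var{rest}}\uplus\ms_f$ where $\ms=\ms_\malloc\uplus\ms_{\var{rest}}$. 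Now case split on $\reg(r_1)$. If $\reg(r_1)\notin\ints$ or $\reg(r_1)<0$, the failure clause gives a reduction to $\failed$, so the configuration is in $\observations(W')$ by Lemma~\ref{lem:failed-obs-stder}. Otherwise $\reg(r_1)=\size\geq 0$ and the success clause yields $j>0$, addresses $b',e'$, an all-zero segment $\ms_{\var{alloc}}$ on $[b',e']$, an updated footprint $\ms_\malloc'$ with $\heapSat[\ms_\malloc']{n''-j}{[r\mapsto\iota_\malloc']}$ for some $\iota_\malloc'\futurewk W'(r)$, and a $j$-step reduction returning to $\reg(r_0)$ with $r_1$ holding $((\rwx,\glob),b',e',b')$ and all other registers unchanged. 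By the determinacy result (Lemma~\ref{lem:determinacy}) these are exactly the first $j$ steps of the fixed halting run, so $j\leq i$ and the resulting configuration halts in $i-j$ further steps.

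To finish I build the successor world $W''$ by replacing $W'(r)$ with $\iota_\malloc'$ and adjoining a fresh region $\iota^{\nwl,p}_{b',e'}$ for the allocation; this is a public transition on $r$ together with an extension, so $W''\futurewk W'$, hence $W''\futurestr W'$ by Lemma~\ref{lem:future-pub-impl-future-priv}. I verify $\heapSat[\ms_\malloc'\uplus\ms_{\var{alloc}}\uplus\ms_{\var{rest}}]{n''-j}{W''}$: the untouched part $\ms_{\var{rest}}$ survives by monotonicity of the region interpretations along $W''\futurewk W'$, the footprint by the specification (context independence again), and $\ms_{\var{alloc}}$ because it is all zeros and $\iota^{\nwl,p}_{b',e'}$ accepts safe non-local words. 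I also check the returned register file is in $\stdrr(W'')$: $\reg(r_0)$ and the unchanged registers stay safe by monotonicity of $\stdvr$ along $\futurewk$, and the fresh $((\rwx,\glob),b',e',b')$ is safe by the fundamental theorem (Theorem~\ref{thm:ftlr}) applied to the new region, with Lemma~\ref{lem:nwlp-subset-pwl} supplying the read condition and Lemma~\ref{lem:iota-nwl-address-stratified} the write condition. Since $\reg(r_0)\in\stdvr(W'')$ is a safe value, Lemma~\ref{lem:safe-values-safe-invoke} gives $\updatePcPerm{\reg(r_0)}\in\stder(W'')$; feeding it the post-malloc register file and $\ms_\malloc'\uplus\ms_{\var{alloc}}\uplus\ms_{\var{rest}}$ at index $n''-j$ produces an observation whose split, combined with the remaining $i-j$ halting steps and $W''\futurestr W'$, closes the original goal.

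The main obstacle is the safety of the freshly returned capability: because $((\rwx,\glob),b',e',b')$ is executable, placing it in $\stdvr(W'')$ forces the execute condition, which I can only discharge through the fundamental theorem on the newly installed region, whereas the read and write conditions are routine through the standard-region inclusion lemmas. If one instead prefers to factor the $j$-step prefix through Lemma~\ref{lem:anti-red-obs}, the secondary technical point becomes discharging its universally quantified frame hypothesis — the specification is existential per frame — which again rests on malloc acting only on its footprint together with determinacy; the direct argument above sidesteps this by fixing $\ms_f$ before invoking the specification.
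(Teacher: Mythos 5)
Your proposal is correct and follows essentially the same route as the paper's proof: unfold the enter/execute conditions down to the observation relation, isolate the malloc footprint via the heap-satisfaction partition and the context-independence clause of Specification~\ref{spec:malloc}, apply the specification, reconcile its $j$-step run with the assumed halting run by determinacy (Lemma~\ref{lem:determinacy}), and build the successor world by updating the malloc region and adjoining a fresh standard region for the allocation, with Lemma~\ref{lem:safe-values-safe-invoke} closing the continuation. The only notable divergence is that you install $\iota^{\nwl,p}_{b',e'}$ (permanent) where the paper writes $\iota^\nwl_{b',e'}$ (temporary); since the returned capability is $\glob$ and $\localityReg(\glob,\cdot)$ only admits permanent regions, your choice is the one that actually discharges the read and write conditions.
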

\begin{proof}
  \begin{enumproof}
  \item $c_\malloc = ((\entry,\glob),\base,\addrend,\addr)$.\\
    By: the malloc specification (Specification~\ref{spec:malloc}).
  \item Suffices: $\npair{(\start,\addrend,\addr)}\in\entryCond{\glob}(W)$.\\
    By definition of $\stdvr(W)$.
  \item Assume: $n' < n$, $W'\futurestr W$.\\
    Suffices: $\npair[n']{((\exec,\glob),\start,\addrend,\addr)} \in
    \stder(W')$.\\
    By: definition of the $\entryCond{}$
  \item Assume: $n''\leq n'$,
    $\npair[n'']{\reg}\in\stdrr(W')$, $\heapSat[\ms]{n''}{W'}$\\
    Suffices:
    $\npair[n'']{(\reg\update{\pcreg}{((\exec,\glob),\start,\addrend,\addr)},\ms)}\in\observations(W')$\\
    By: definition of $\stder(W')$
  \item Assume: $i < n''$,
    $(\reg\update{\pcreg}{((\exec,\glob),\start,\addrend,\addr)},\ms\uplus\ms_f)\step[i]
    (\halted,\mem')$\\
    Suffices: $\exists W''\futurestr W',\ms_r,\ms'\ldotp$ $\mem' =
    \ms'\uplus\ms_r\uplus\ms_f$ and $\heapSat[\ms']{n''-i}{W''}$\\
    By: definition of $\observations(W')$
  \item $W'(r) \futurestr \iota_{\malloc,0}$\\
    Easy from: $W'\futurestr W$ and $W(r) \futurestr \iota_{\malloc,0}$ using
    transitivity of $\futurestr$.
  \item $\exists P : \activeReg{W'} \rightarrow \HeapSegments \ldotp$
    $\memSatPar[n'']{\ms}{W'}{P}$, i.e. $\hs = \biguplus_{r\in\activeReg{W'}}P(r)$ and
    $\forall r \in \activeReg{W'} \ldotp$ $\exists H,s \ldotp$ $W'(r) =
    (\_,s,\_,\_,H)$ and $\npair[n'']{P(r)} \in H(s)(\xi^{-1}(W'))$\\
    By: definition of $\heapSat[\ms]{n''}{W'}$.
  \item Define $\ms_{\var{frame}} = \left(\biguplus_{r'\in\activeReg{W'},r'\neq
        r}P(r')\right)\uplus\ms_f$. Then $\ms\uplus\ms_f = P(r) \uplus
    \ms_{\var{frame}}$ and $\npair[n'']{P(r)} \in W'(r).H~(W'(r).s)~(\xi^{-1}(W'))$.
    Easy from the previous point.
  \item $\npair[n'']{P(r)} \in W'(r).H~(W'(r).s)~(\xi^{-1}([r\mapsto W'(r)]))$, i.e.
    $\heapSat[P(r)]{n''}{[r\mapsto W'(r)]}$.\\
    By: the malloc specification (Specification~\ref{spec:malloc}) from the
    previous point.
  \item Case: $\reg(r_1) \in \ints$ and $\reg(r_1) \geq 0$
    \begin{enumproof}
    \item Define $\var{size} = \reg(r_1)$
    \item $\exists \Phi' \in \ExecConfs, \ms_{\var{footprint}}',
      \ms_{\var{alloc}} \in \HeapSegments, j \in \nats, j > 0 \land b',e'\in
      \Addrs, \iota_\malloc' \in \Regions \ldotp$
      $(\reg\update{\pcreg}{((\exec,\glob),\start,\addrend,\addr)},\ms\uplus\ms_f)
      \step[j] \Phi'$ and $\memheap[\Phi']=\ms_{\var{footprint}}' \uplus
      \hs_{\var{alloc}} \uplus \ms_{\var{frame}}$ and $\iota_{\malloc}'
      \futurewk W'(r)$ and $\heapSat[\ms_{\var{footprint}}']{n''-j}{[r \mapsto
        \iota_\malloc']}$ and $\dom(\hs_{\var{alloc}}) = [b',e']$ and $\forall a
      \in [b',e']\ldotp \hs_{\var{alloc}}(a) = 0$ and $\memreg[\Phi'] =
      \memreg[\Phi]\update{\pcreg}{\updatePcPerm{w_{\var{ret}}}}\update{r_1}{((\rwx,\glob),b',e',b')}$
      and $\var{size} - 1 = e'-b' )$ with $w_{\var{ret}} = \memreg(r_1)$.\\
    By: the malloc specification (Specification~\ref{spec:malloc}).
    \item Define $W'' = W'\update{r}{\iota_\malloc'}[i\mapsto
      \iota^\nwl_{b',e'}]$ for $i \not\in\dom(W')$. We have that $W''\futurewk
      [r\mapsto\iota_\malloc']$ and $W'' \futurewk W'$.\\
      By: definition of $\futurewk$ , using the fact that
      $\iota_\malloc'\futurewk W(r)$.
    \item $\npair[n''']{(\start',\addrend')}\in\readCond{\glob}(W'')$ for all $n'''$:\\
      By: definition of $\readCond{}$, using the region $W''(i)$ and Lemma~\ref{lem:nwl-subset-pwl}.
    \item $\npair[n''']{(\start',\addrend')}\in\writeCond{\iota^\nwl,\glob}(W'')$ for all $n'''$:\\
      By: definition of $\writeCond{}$, using the region $W''(i)$.
    \item
      $\npair[n''']{(p,\start',\addrend')}\in\execCond{\iota^\nwl,\glob}(W'')$
      for all $n'''$, $p \in \{\rwx,\exec\}$:\\
      By: the definition of $\execCond{}, $ the FTLR (Theorem~\ref{thm:ftlr})
      using Lemmas~\ref{lem:writecond-mono-priv}, \ref{lem:readcond-mono-priv}
      and the previous two points.
    \item $\npair[n'']{((\rwx,\glob),b',e',b')}\in\stdvr(W'')$:\\
      By: definition of $\stdvr(W'')$ and the above three points.
    \item
      $\npair[n''-j]{\memreg[\Phi]\update{r_1}{((\rwx,\glob),b',e',b')}}\in\stdrr(W'')$:\\
      By Lemma~\ref{lem:stdrr-dc}, Lemma~\ref{lem:stdrr-mono} using the fact
      that $W''\futurewk W'$ and $\npair[n'']{\memreg}\in\stdvr(W')$, together
      with the previous point.
    \item $\npair[n''']{\ms_{\var{alloc}}} \in \iota^\nwl_{b',e'}.H~
      \iota^\nwl_{b',e'}.s~ W''$ for any $n'''$:\\
      By definition of $\iota^\nwl$, $H^\nwl$ and $\stdvr(\cdot)$ and the facts
      that $\dom(\hs_{\var{alloc}}) = [b',e']$ and $\forall a
      \in [b',e']\ldotp \hs_{\var{alloc}}(a) = 0$.
    \item Define $\ms' = \left(\biguplus_{r'\in\activeReg{W'},r'\neq
          r}P(r')\right)\uplus \ms_{\var{footprint}}'\uplus\ms_{\var{alloc}}$.
      Then
      $\memheap[\Phi'] = \ms' \uplus \ms_f$ and $\heapSat[\ms']{n''-j}{W''}$:\\
      By the facts that $\memheap[\Phi']=\ms_{\var{footprint}}' \uplus
      \hs_{\var{alloc}} \uplus \ms_{\var{frame}}$, $\ms_{\var{frame}} =
      \left(\biguplus_{r'\in\activeReg{W'},r'\neq r}P(r')\right)\uplus\ms_f$,
      the previous point, the facts that
      $\heapSat[\ms_{\var{footprint}}']{n''-j}{[r \mapsto \iota_\malloc']}$ and
      $W''\futurewk [r\mapsto\iota_\malloc']$, the facts that ($\forall r \in
      \activeReg{W'} \ldotp$ $\exists H,s \ldotp$ $W'(r) =, (\_,s,\_,\_,H)$ and
      $\npair[n'']{P(r)} \in H(s)(\xi^{-1}(W'))$) and $W'' \futurewk W'$ and the
      public monotonicity and downwards closedness of all regions, and finally
      the definition of $W''$.
    \item $\npair[n''-j+1]{w_{\var{ret}}} \in \stdvr(W'')$:\\
      By Lemma~\ref{lem:stdvr-mono-wk}, the fact that $W''\futurewk W'$,
      Lemma~\ref{lem:stdvr-dc}, and the fact that
      $\npair[n'']{w_{\var{ret}}}\in\stdvr(W')$, which follows from
      $w_{\var{ret}} = \memreg(r_1)$ and $\npair[n'']{\reg}\in\stdrr(W')$.
    \item $\npair[n''-j]{\updatePcPerm{w_{\var{ret}}}} \in \stder(W'')$:\\
      By Lemma~\ref{lem:safe-values-safe-invoke} from the previous point.
    \item
      $\npair[n''-j]{(\memreg[\Phi]\update{r_1}{((\rwx,\glob),b',e',b')}\update{\pcreg}{\updatePcPerm{w_{\var{ret}}}},
        \ms')} \in \observations(W'')$:\\
      By: definition of $\stder(W'')$, using the previous point and the facts
      that\\
      $\npair[n''-j]{\memreg[\Phi]\update{r_1}{((\rwx,\glob),b',e',b')}}\in\stdrr(W'')$,
      $\heapSat[\ms']{n''-j}{W''}$
    \item $i > j$ and $\Phi'\step[i-j]
      (\halted,\mem')$.\\
      By combining
      $(\reg\update{\pcreg}{((\exec,\glob),\start,\addrend,\addr)},\ms\uplus\ms_f)\step[i]
      (\halted,\mem')$ with
      $(\reg\update{\pcreg}{((\exec,\glob),\start,\addrend,\addr)},\ms\uplus\ms_f)
      \step[j] \Phi'$ using Lemma~\ref{lem:determinacy}.
    \item $\exists W''' \futurestr W'', \hs_r, \hs'' \ldotp$ $\heap' = \hs'' \uplus \hs_r \uplus \ms_f$ and $\heapSat[\hs'']{n-i}{W'''}$.\\
      By: definition of $\observations(W''')$ from the two previous points.
    \item $W''' \futurestr W'$: \\
      By Lemma~\ref{lem:future-priv-pub-trans}, using the previous point and the fact that $W'' \futurewk
      W'$.
    \end{enumproof}
  \item Case: $\reg(r_1)\not\in\ints \vee \reg(r_1) < 0$
    \begin{enumproof}
    \item $\exists j\ldotp (\reg\update{\pcreg}{((\exec,\glob),\start,\addrend,\addr)},\ms\uplus\ms_f)\step[j]
      \failed$\\
      By: the malloc specification (Specification~\ref{spec:malloc}).
    \item Contradiction with 
      $(\reg\update{\pcreg}{((\exec,\glob),\start,\addrend,\addr)},\ms\uplus\ms_f)\step[i]
      (\halted,\mem')$
    \end{enumproof}
  \end{enumproof}
\end{proof}

\subsubsection{Fundamental theorem of logical relations}

\begin{lemma}[Conditions for load instruction are sufficient]
  \label{lem:conds-load-suffice}
  If
  \begin{itemize}
  \item $\heapSat[\memheap]{n}{W}$
  \item $\var{c} = ((\perm,\gl),\start,\addrend,\addr)$
  \item $\npair{c}\in\stdvr(W)$
  \item $\readAllowed{\perm}$
  \item $\withinBounds{\var{c}}$
  \end{itemize}
  then $\npair[n-1]{\memheap(\addr)} \in \stdvr(W)$
\end{lemma}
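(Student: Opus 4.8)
The plan is to trace the read permission through the value relation down to the witness supplied by memory satisfaction, all within the single world $W$ at a fixed step index. First I would case on $\perm$: since $\readAllowed{\perm}$ holds, $\perm$ ranges over $\{\readonly,\readwrite,\readwritel,\exec,\rwx,\rwlx\}$, and inspecting the definition of $\stdvr$ shows that each of these clauses contains the conjunct $\npair{(\start,\addrend)}\in\readCond{}(\gl)(W)$. Hence from $\npair{c}\in\stdvr(W)$ I extract $\npair{(\start,\addrend)}\in\readCond{}(\gl)(W)$ uniformly, without needing the write or execute conjuncts.

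Next I would unfold $\readCond{}$ to obtain a region name $r\in\localityReg(\gl,W)$ together with an interval $[\start',\addrend']\supseteq[\start,\addrend]$ satisfying $W(r)\nsubsim[n]\iota^\pwl_{\start',\addrend'}$. Since $\localityReg(\gl,W)\subseteq\activeReg{W}$ for both localities, $r$ is an active region, so I may invoke $\heapSat[\memheap]{n}{W}$ to obtain a partition $P$ with $\memheap=\biguplus_{r'\in\activeReg{W}}P(r')$ and, in particular, $\npair{P(r)}\in W(r).H\,(W(r).s)\,(\xi^{-1}(W))$.

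The crux is the $\nsubsim$ step. By its definition, $W(r)\nsubsim[n]\iota^\pwl_{\start',\addrend'}$ forces $W(r).s=1$ and, for every $\hat{W}$, the $n$-inclusion $W(r).H\,(W(r).s)\,\hat{W}\nsubeq H^\pwl_{\start',\addrend'}\,1\,\hat{W}$. Instantiating $\hat{W}=\xi^{-1}(W)$ and applying this inclusion to the membership above yields $\npair{P(r)}\in H^\pwl_{\start',\addrend'}\,1\,(\xi^{-1}(W))$. Unfolding $H^\pwl$ (for $n>0$) gives both $\dom(P(r))=[\start',\addrend']$ and $\npair[n-1]{P(r)(\addr')}\in\stdvr(W)$ for every $\addr'\in[\start',\addrend']$. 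From $\withinBounds{c}$ I get $\start\leq\addr\leq\addrend$, hence $\addr\in[\start,\addrend]\subseteq[\start',\addrend']$, so $\addr\in\dom(P(r))$ and therefore $P(r)(\addr)=\memheap(\addr)$ because $P$ partitions $\memheap$. This gives $\npair[n-1]{\memheap(\addr)}\in\stdvr(W)$, as required.

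The main obstacle is keeping the step-index bookkeeping honest rather than any deep semantic point. Two places need care: the passage through $\nsubeq$, where the $n$-indexed inclusion must deposit the pair $\npair{P(r)}$ into $H^\pwl$ at the \emph{same} level $n$; and the $H^\pwl$ clause, which only delivers cell contents at level $n-1$, exactly matching the lemma's conclusion. The degenerate case $n=0$ is handled trivially, since then the conclusion holds vacuously at the bottom step index. Notably, no monotonicity or future-world reasoning is invoked, since the argument stays inside the fixed world $W$ and never changes it.
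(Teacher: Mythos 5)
Your proposal is correct and follows essentially the same route as the paper's proof: extract the read condition from $\stdvr$, unfold it to obtain the witnessing region $r$ and the relation $W(r)\nsubsim[n]\iota^\pwl_{\start',\addrend'}$, use the partition from memory satisfaction to place $P(r)$ in $W(r)$'s interpretation, transfer through the $n$-inclusion into $H^\pwl$, and conclude via $\withinBounds{c}$. The extra care you take with the step indices and the $n=0$ case only makes explicit what the paper leaves implicit.
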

\begin{proof}
  \begin{enumproof}
  \item $\npair{(\start,\addrend)} \in \readCond{\gl}(W)$: follows by
    definition of $\stdvr{\cdot}$ from $\npair{c}\in\stdvr(W)$.
  \item $\exists r \in \var{localityReg}(g,W)$, $[\start',\addrend'] \supseteq
    [\start,\addrend] \ldotp$ $W(r)\nsubsim[n]
    \iota_{\start',\addrend'}^\pwl$. By definition of $\readCond{\gl}(W)$. \label{step:unfold-readCond}
  \item $\exists P : \activeReg{W} \rightarrow \HeapSegments \ldotp
    \memSatPar{\memheap}{W}{P}$. By definition of $\heapSat[\memheap]{n}{W}$.
  \item $\memheap = \biguplus_{r\in\activeReg{W}}P(r)$ and $\forall r \in
    \activeReg{W}$, $\exists H,s \ldotp$ $W(r) = (\_,s,\_,\_,H)$
    and $\npair[n]{P(r)} \in H(s)(\xi^{-1}(W))$. By definition of
    $\memSatPar{\memheap}{W}{P}$. \label{step:unfold-phi-sat-part}
  \item $r \in \var{localityReg}(g,W)\subseteq \activeReg{W}$. By definition of
    $\var{localityReg}(\cdot)$ and $\activeReg{\cdot}$.
  \item $\exists H,s \ldotp$ $W(r) = (\_,s,\_,\_,H)$
    and $\npair[n]{P(r)} \in H(s)(\xi^{-1}(W))$.
    By specializing the result from
    Step~\ref{step:unfold-phi-sat-part} to the $r$ from
    Step~\ref{step:unfold-readCond}.
  \item $\npair[n]{P(r)} \in H^\pwl_{\start',\addrend'}(s)(\xi^{-1}(W))$.
    Follows by combining $\npair[n]{P(r)} \in H(s)(\xi^{-1}(W))$ with
    $W(r)\nsubsim[n] \iota_{\start',\addrend'}^\pwl$ from Step~\ref{step:unfold-readCond}.
  \item $\dom(P(r)) = [\start',\addrend']$ and for all $\addr' \in
    [\start',\addrend']\ldotp$
    $\npair[n-1]{P(r)(\addr')}\in\stdvr(\xi(\xi^{-1}(W)))$. By definition of $H^\pwl_{\start',\addrend'}$.
  \item $\addr \in [\start,\addrend]\subseteq[\start',\addrend']$. By combining
    $\withinBounds{c}$ with the fact that $[\start',\addrend'] \supseteq
    [\start,\addrend] \ldotp$ from Step~\ref{step:unfold-readCond}.
  \item In particular, we get: $\memheap(\addr) = P(r)(\addr)$ and
    $\npair[n-1]{P(r)(\addr)}\in\stdvr(W)$.
  \end{enumproof}
\end{proof}

 \begin{lemma}[Conditions for lea instruction are sufficient]
   \label{lem:conds-lea-suffice}
   If
   \begin{itemize}
   \item $\npair{((\perm,\gl),\start,\addrend,\addr)}\in\stdvr(W)$
   \item $\perm \neq \entry$
   \end{itemize}
 
   then $\npair{((\perm,\gl),\start,\addrend,\addr')} \in \stdvr(W)$
 \end{lemma}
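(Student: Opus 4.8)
The plan is to proceed by a case analysis on the permission $\perm$, exploiting the structural fact that the current address $\addr$ of a capability enters the definition of $\stdvr(W)$ in exactly one clause, the $\entry$ clause, which the hypothesis $\perm \neq \entry$ rules out. Concretely, I would first unfold $\npair{((\perm,\gl),\start,\addrend,\addr)} \in \stdvr(W)$. Since the word in question is a capability, the integer/$\infty$ clause does not apply, and since $\perm \neq \entry$ we land in one of the clauses for $\perm \in \{\noperm,\readonly,\readwrite,\readwritel,\exec,\rwx,\rwlx\}$.

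For $\noperm$ the clause is unconditional, so nothing is required. For each of the remaining six permissions the membership predicate is a conjunction built only from $\readCond{}(\gl)(W)$, one of $\writeCond{}(\iota^\nwl,\gl)(W)$ or $\writeCond{}(\iota^\pwl,\gl)(W)$, and one or more copies of $\execCond{}(\gl)(W)$, each applied to the pair $(\start,\addrend)$ together with a permission, but in no instance to $\addr$. Because none of these conditions mention $\addr$, the very same unfolded hypotheses witness $\npair{((\perm,\gl),\start,\addrend,\addr')} \in \stdvr(W)$ for the new address $\addr'$; in each case I would simply re-assemble the identical conjunction, now read as the defining condition for the capability with address $\addr'$.

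The only point that needs verifying is the claim that $\addr$ does not occur outside the $\entry$ clause, which is settled by directly inspecting the definition of $\stdvr$: there $\addr$ appears solely in the $\entry$ case, through $\entryCond{}(\gl)(W)$ applied to the triple $(\start,\addrend,\addr)$. Consequently there is no genuine obstacle, and the lemma is essentially a syntactic observation about the independence of the non-$\entry$ clauses from the offset; one may note that it is precisely this independence that makes the $\perm \neq \entry$ side condition of $\sem{\lea{\refreg{r_1}}{\rv}}(\Phi)$ exactly the right restriction for $\lea$ to preserve safety.
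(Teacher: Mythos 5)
Your proposal is correct and is essentially the paper's own argument: the paper proves this lemma by noting that the address $\addr$ is ignored in every clause of the definition of $\stdvr(W)$ except the $\entry$ clause, which is excluded by the hypothesis $\perm \neq \entry$. You simply spell out the case analysis in more detail, so there is nothing to add.
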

 \begin{proof}
   Follows by inspection of the cases in the definition of $\stdvr(W)$: $a$ is
   ignored in all cases except where $\perm = \entry$.
 \end{proof}
 
 \begin{lemma}[pwl writecond implies nwl]
   \label{lem:pwl-writecond-implies-nwl}
   If $\npair{(\start,\addrend)} \in \writeCond{}(\iota^\pwl,\gl)(W)$ then
   $\npair{(\start,\addrend)} \in \writeCond{}(\iota^\nwl,\gl)(W) \}$.
 \end{lemma}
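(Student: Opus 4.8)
The plan is to unfold the definition of $\writeCond{}$ for the hypothesis, reuse the witnesses it provides when discharging the conclusion, and reduce the whole statement to a single monotonicity fact about the standard regions that is already available. Concretely, $\npair{(\start,\addrend)} \in \writeCond{}(\iota^\pwl,\gl)(W)$ supplies a region name $r \in \localityReg(\gl,W)$ and an interval $[\start',\addrend'] \supseteq [\start,\addrend]$ with $W(r) \nsupsim[n-1] \iota^\pwl_{\start',\addrend'}$ and with $W(r)$ address-stratified. To establish $\npair{(\start,\addrend)} \in \writeCond{}(\iota^\nwl,\gl)(W)$ I would offer the very same $r$ and $[\start',\addrend']$; the address-stratification clause is then discharged verbatim, so the only remaining obligation is the region comparison $W(r) \nsupsim[n-1] \iota^\nwl_{\start',\addrend'}$.

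For that obligation, recall that $\nsupsim$ is the converse of $\nsubsim$, so I must show $\iota^\nwl_{\start',\addrend'} \nsubsim[n-1] W(r)$ while the hypothesis gives $\iota^\pwl_{\start',\addrend'} \nsubsim[n-1] W(r)$. Lemma~\ref{lem:nwl-subset-pwl} supplies $\iota^\nwl_{\start',\addrend'} \nsubsim[n-1] \iota^\pwl_{\start',\addrend'}$, and chaining these two facts by transitivity of the semi-$n$-subset relation yields exactly $\iota^\nwl_{\start',\addrend'} \nsubsim[n-1] W(r)$. Equivalently, one can argue at the level of the interpretation functions: Lemma~\ref{lem:hwnl-nsubset-hpwl} gives $H^\nwl_{\start',\addrend'}\, s\, \hat{W} \nsubeq H^\pwl_{\start',\addrend'}\, s\, \hat{W}$ for every $\hat{W}$, which composes with the $H^\pwl_{\start',\addrend'} \nsubeq W(r).H$ inclusion coming from the hypothesis.

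The one thing worth checking carefully, and the only place anything could go wrong, is that the state/transition-system side-condition built into $\nsubsim$ is preserved along the chain. Here it is trivial: both $\iota^\pwl_{\start',\addrend'}$ and $\iota^\nwl_{\start',\addrend'}$ are defined with state $1$ and with both transition relations equal to $=$, so they agree on $(s,\phi_\pub,\phi)$, and since $\nsubsim$ deliberately disregards the view component, transitivity of $\nsubeq$ on the $H$-parts is all that is required. I do not expect a genuine obstacle: the statement is just the arbitrary-locality generalization of the already-proven Lemma~\ref{lem:wc-pwl-implies-wc-nwl}, and the locality enters only through the set $\localityReg(\gl,W)$ from which $r$ is drawn; since I reuse the same $r$, the passage from $\local$ to an arbitrary $\gl$ is immediate.
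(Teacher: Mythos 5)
Your proposal is correct and follows essentially the same route as the paper's proof: unfold $\writeCond{}$, reuse the same witnesses $r$ and $[\start',\addrend']$ (so address-stratification carries over verbatim), and discharge the remaining obligation by chaining $W(r)\nsupsim[n-1]\iota^\pwl_{\start',\addrend'}$ with Lemma~\ref{lem:nwl-subset-pwl} via transitivity of the semi-$n$-subset relation. Your extra check that the state/transition-system components agree is a sound (and slightly more careful) elaboration of the same argument, not a different approach.
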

 \begin{proof}
   \begin{enumproof}
   \item $\exists r \in \var{localityReg}(g,W) \ldotp$ $\exists
     [\start',\addrend'] \supseteq [\start,\addrend] \ldotp$ $W(r)\nsupsim[n-1]
     \iota^\pwl_{\start',\addrend'}$ and $W(r)$ is address-stratified: by definition of $\writeCond{}$.
   \item Suffices: $W(r)\nsupsim[n-1] \iota^\nwl_{\start',\addrend'}$. By
     definition of $\writeCond{}$
   \item $W(r)\nsupsim[n-1] \iota^\pwl_{\start',\addrend'} \nsupsim[n-1]
     \iota^\nwl_{\start',\addrend'}$: follows by Lemma~\ref{lem:nwl-subset-pwl}.
   \end{enumproof}
 \end{proof}
 
 \begin{lemma}[execCond implies entryCond]
   \label{lem:execCond-implies-entryCond}
   If $\npair{(\exec,\start,\addrend)} \in \execCond{}(\gl)(W)$ then
     $\npair{(\start,\addrend,\addr)} \in \entryCond{}(\gl)(W)$.
 \end{lemma}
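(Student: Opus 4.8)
The plan is to unfold both conditions and observe that the enter condition is very nearly a special case of the execute condition, the only gap being the treatment of the stored address. To show $\npair{(\start,\addrend,\addr)} \in \entryCond{}(\gl)(W)$, I would fix an arbitrary $n' < n$ and an arbitrary $W' \future W$ — where $\future = \futurewk$ when $\gl = \local$ and $\future = \futurestr$ when $\gl = \glob$, matching the convention shared by the two definitions — and reduce the goal to showing $\npair[n']{((\exec,\gl),\start,\addrend,\addr)} \in \stder(W')$.

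I would then case on whether the address $\addr$ lies within the bounds $[\start,\addrend]$. If $\start \leq \addr \leq \addrend$, the hypothesis $\npair{(\exec,\start,\addrend)} \in \execCond{}(\gl)(W)$ can be instantiated directly: using the same $n'$ and $W'$, choosing the subrange $[\start',\addrend'] = [\start,\addrend]$ (so the containment $[\start,\addrend] \subseteq [\start,\addrend]$ is trivial) and the address $a = \addr$, the execute condition delivers exactly $\npair[n']{((\exec,\gl),\start,\addrend,\addr)} \in \stder(W')$, which is what we need.

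The only case not covered by a direct instantiation is $\addr \notin [\start,\addrend]$, and this is the one mildly delicate point of the argument. Here I would appeal to the operational semantics: a program counter holding $((\exec,\gl),\start,\addrend,\addr)$ fails its within-bounds side condition, so every configuration with this capability in $\pcreg$ steps to $\failed$ irrespective of the register file and memory. Lemma~\ref{lem:failed-obs-stder} then yields $\npair[n']{((\exec,\gl),\start,\addrend,\addr)} \in \stder(W')$ for an arbitrary world, in particular for $W'$.

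Combining the two cases closes the goal. The proof is short precisely because the enter and execute conditions differ only in that the latter quantifies over all in-bounds subranges whereas the former pins down the full range together with the stored address; the single subtlety to keep in mind is that an out-of-bounds address is not handled by the execute hypothesis at all but rather by the observation that a term which always reduces to $\failed$ inhabits the expression relation.
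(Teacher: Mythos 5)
Your proposal matches the paper's own proof exactly: the same reduction to showing $\npair[n']{((\exec,\gl),\start,\addrend,\addr)} \in \stder(W')$, the same case split on whether $\addr \in [\start,\addrend]$, with the in-bounds case following from the definition of $\execCond{}$ and the out-of-bounds case handled by Lemma~\ref{lem:failed-obs-stder}. The argument is correct and complete.
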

 \begin{proof}
   \begin{enumproof}
   \item Assume: $n' < n$, $W' \future W$ where $\gl = \local \Rightarrow
     \future = \futurewk$ and $\gl = \glob \Rightarrow \future = \futurestr$
     \\
     Suffices: $\npair[n']{((\exec,\gl),\start,\addrend,\addr)} \in \stder(W')$
   \item Case $\addr \in [\start,\addrend]$: Follows from the definition of
     $\execCond{}$.
   \item Case $\addr \not\in[\start,\addrend]$: Follows by
     Lemma~\ref{lem:failed-obs-stder}.
   \end{enumproof}
 \end{proof}
 
 \begin{lemma}[Conditions for restrict instruction are sufficient]
   \label{lem:conds-restrict-suffice}
   If
   \begin{itemize}
   \item $\npair{((\perm,\gl),\start,\addrend,\addr)}\in\stdvr(W)$
   \item $(\perm',\gl')\sqsubseteq (\perm,\gl)$
   \end{itemize}
 
   then  $\npair{((\perm',\gl'),\start,\addrend,\addr)}\in \stdvr(W)$
 \end{lemma}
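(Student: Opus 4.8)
The plan is to unfold the definition of $\stdvr(W)$ on both the hypothesis and the goal. Membership $\npair{((\perm,\gl),\start,\addrend,\addr)}\in\stdvr(W)$ places us in exactly one clause of the large union, which supplies a bundle of side conditions (some combination of $\readCond{}(\gl)(W)$, a write condition $\writeCond{}(\iota^\pwl\text{ or }\iota^\nwl,\gl)(W)$, execute conditions $\execCond{}(\gl)(W)$ at various inner permissions, and $\entryCond{}(\gl)(W)$). Since the order on $\Perms\times\Globals$ is pointwise, the hypothesis $(\perm',\gl')\sqsubseteq(\perm,\gl)$ splits as $\perm'\sqsubseteq\perm$ and $\gl'\sqsubseteq\gl$, and I would correspondingly split the argument into a locality-weakening step and a permission-weakening step, reducing everything to showing that the condition bundle required for $(\perm',\gl')$ is implied by the one available for $(\perm,\gl)$.

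For locality weakening the only nontrivial case is $\gl=\glob$, $\gl'=\local$ (otherwise $\gl'=\gl$ and there is nothing to do). For $\readCond{}$ and $\writeCond{}$ the witness region is drawn from $\localityReg(\gl,W)$, and $\localityReg(\glob,W)=\dom(\erase{W}{\perma})\subseteq\dom(\erase{W}{\perma,\temp})=\localityReg(\local,W)$, so any global witness is also a legal local witness; hence $\readCond{}(\glob)(W)\subseteq\readCond{}(\local)(W)$ and likewise for $\writeCond{}$. For $\execCond{}$ and $\entryCond{}$ the locality only selects the future-world relation ($\futurestr$ for $\glob$, $\futurewk$ for $\local$); since every public future world is in particular a private future world ($\phi_\pub\subseteq\phi$ by the definition of $\Rels$), the universally quantified global condition is the stronger one and implies the local one, giving $\execCond{}(\glob)(W)\subseteq\execCond{}(\local)(W)$ and $\entryCond{}(\glob)(W)\subseteq\entryCond{}(\local)(W)$. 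Thus each atomic condition available at $\gl$ yields the same condition at $\gl'$.

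For permission weakening I would argue along the covering relation of Figure~\ref{fig:perm-hier} and compose by transitivity. The requirement-sets are arranged so that descending never forces a \emph{stronger} condition: the read-granting, write-granting, and execute-granting permissions each form an up-set, so whenever $\perm'$ requires $\readCond{}$, or an $\execCond{}$ at some inner permission, the larger $\perm$ already carries it, and most covering edges therefore only \emph{drop} conditions and hold by inclusion. Exactly two edges are genuine implications rather than inclusions: descending from $\rwlx/\readwritel$ to $\rwx/\readwrite$ turns a $\iota^\pwl$ write condition into a $\iota^\nwl$ one, handled by Lemma~\ref{lem:pwl-writecond-implies-nwl}; and descending to $\entry$ (reachable only from $\exec,\rwx,\rwlx$, each of which carries $\npair{(\exec,\start,\addrend)}\in\execCond{}(\gl)(W)$) turns an execute condition into an enter condition, handled by Lemma~\ref{lem:execCond-implies-entryCond}. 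Composing the permission step with the locality step of the previous paragraph discharges every target clause, including the trivial $\perm'=\noperm$ case.

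The main obstacle is not any single deep argument but the bookkeeping: one must check, clause by clause, that no target permission at a weaker locality ever demands a condition the source lacks — in particular that we never need to pass from an $\iota^\nwl$ to an $\iota^\pwl$ write condition, never need to add an execute condition at an inner permission the source does not already provide, and never need to strengthen $\local$ to $\glob$. Each of these holds precisely because the orderings of Figures~\ref{fig:perm-hier} and~\ref{fig:glob-hier} make the relevant requirement-sets up-closed, so once the up-closure observations are stated the verification is routine.
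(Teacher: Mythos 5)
Your proposal matches the paper's proof: the paper likewise reduces everything to inspection of the clauses of $\stdvr(W)$, declaring all steps trivial except exactly the two implications you isolate — the $\iota^\pwl$-to-$\iota^\nwl$ write-condition step via Lemma~\ref{lem:pwl-writecond-implies-nwl} and the execute-to-enter step via Lemma~\ref{lem:execCond-implies-entryCond}. Your explicit treatment of the locality-weakening direction (via $\localityReg(\glob,W)\subseteq\localityReg(\local,W)$ and $\futurewk$ implying $\futurestr$) only spells out what the paper folds into ``follows trivially,'' so the two proofs are essentially the same.
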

 \begin{proof}
   By inspection of the definition of $\stdvr(W)$, everything follows trivially
   except the following.

   \begin{enumproof}
   \item If $\npair{(\start,\addrend)} \in \writeCond{}(\iota^\pwl,\gl)(W)$ then
     $\npair{(\start,\addrend)} \in \writeCond{}(\iota^\nwl,\gl)(W)$: holds
     by lemma~\ref{lem:pwl-writecond-implies-nwl}.
   \item If $\npair{(\exec,\start,\addrend)} \in \execCond{}(\gl)(W)$ then
     $\npair{(\start,\addrend,\addr)} \in \entryCond{}(\gl)(W)$.
   \end{enumproof}
 \end{proof}
 
 \begin{lemma}[Conditions for subseg instruction are sufficient]
   \label{lem:conds-subseg-suffice}
   If
   \begin{itemize}
   \item $\npair{((\perm,\gl),\start,\addrend,\addr)} \in \stdvr(W)$
   \item $\start \leq \start'$
   \item $\addrend' \leq \addrend$
   \item $\perm \neq \entry$
   \end{itemize}
 
   then $\npair{((\perm,\gl),\start',\addrend',\addr)} \in \stdvr(W)$
 \end{lemma}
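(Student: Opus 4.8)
The plan is to mirror the proofs of Lemma~\ref{lem:conds-lea-suffice} and Lemma~\ref{lem:conds-restrict-suffice}: unfold the definition of $\stdvr(W)$ and do a case analysis on $\perm$. Since the word in question is a capability, only the eight permission cases are relevant, and by the hypothesis $\perm \neq \entry$ the $\entry$ case is excluded. In each remaining case, membership reduces to a conjunction of $\readCond{}(\gl)(W)$, $\writeCond{}(\iota,\gl)(W)$ and $\execCond{}(\gl)(W)$ facts about the interval $(\start,\addrend)$ (with $\iota$ either $\iota^\nwl$ or $\iota^\pwl$ according to whether write-local is permitted), so it suffices to transport each such fact to the shrunk interval $(\start',\addrend')$. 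Crucially, the base address $\addr$ is left untouched by subseg and is ignored by $\stdvr$ in every non-$\entry$ case, so no side condition on $\addr$ ever arises.

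The heart of the argument is that each of the three conditions is preserved when the authority interval is narrowed, given $[\start',\addrend'] \subseteq [\start,\addrend]$. For $\readCond{}(\gl)(W)$ and $\writeCond{}(\iota,\gl)(W)$, membership of $(\start,\addrend)$ supplies a region $r \in \localityReg(\gl,W)$ together with a super-interval $[\hat{\start},\hat{\addrend}] \supseteq [\start,\addrend]$ satisfying the required $\nsubsim$ (resp.\ $\nsupsim$ plus address-stratification). Since $[\start',\addrend'] \subseteq [\start,\addrend] \subseteq [\hat{\start},\hat{\addrend}]$, the very same $r$ and $[\hat{\start},\hat{\addrend}]$ witness the condition for $(\start',\addrend')$. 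For $\execCond{}(\gl)(W)$, membership of $(\perm,\start,\addrend)$ is a universal statement ranging over all sub-intervals $[\start'',\addrend''] \subseteq [\start,\addrend]$ and all $a$ therein; because every sub-interval of $[\start',\addrend']$ is again a sub-interval of $[\start,\addrend]$, that statement specialises directly to $(\perm,\start',\addrend')$ (with the permission tag inside $\execCond{}$ unchanged).

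Putting it together: $\noperm$ needs nothing; $\readonly$ uses read-preservation; $\readwrite$ and $\readwritel$ add write-preservation (for $\iota^\nwl$, resp.\ $\iota^\pwl$); $\exec$ adds exec-preservation; and $\rwx$, $\rwlx$ combine read-, write-, and the relevant exec-preservation facts. I do not expect a genuine obstacle here---the proof is of the same ``by inspection'' flavour as the lea and restrict lemmas. The only point demanding care is keeping the direction of the interval inclusions straight: read and write look for a super-interval (so narrowing only helps), whereas exec quantifies over sub-intervals (so narrowing merely restricts the quantifier), and in both directions $[\start',\addrend'] \subseteq [\start,\addrend]$ is exactly what makes the transport go through. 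This also explains why $\entry$ must be excluded: $\entryCond{}$ pins down the exact interval $[\start,\addrend]$ via the $\exec$-capability it tests, so shrinking the bounds would alter that capability and the transport would fail.
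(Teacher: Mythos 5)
Your proposal is correct and matches the paper's proof, which is exactly the one-line observation that the claim ``follows easily from the definitions of $\stdvr(W)$, $\readCond{}$, $\writeCond{}$, $\execCond{}$''; you have simply spelled out the transport of each condition under narrowing of the authority interval, including the right directionality (super-interval witnesses for read/write, universal quantification over sub-intervals for exec) and the reason $\entry$ is excluded. No gaps.
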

 \begin{proof}
   Follows easily from the definitions of $\stdvr(W)$, $\readCond{}$,
   $\writeCond{}$, $\execCond{}$.
 \end{proof}

 \begin{lemma}[Conditions for store instruction are sufficient]
   \label{lem:conds-store-suff}
   If 
   \begin{itemize}
   \item $\ms = \ms' \uplus \ms_f$
   \item $\heapSat[\ms']{n}{W}$
   \item $((\perm,\gl),\start,\addrend,\addr) = c$
   \item $\npair{c}\in\stdvr(W)$
   \item $\writeAllowed{\perm}$
   \item $\withinBounds{\var{c}}$
   \item $\npair{\var{w}}\in\stdvr(W)$
   \item if $\var{w} = ((\_,\local),\_,\_,\_)$, then $\perm \in
     \{\rwlx,\readwritel \}$
   \end{itemize}
 
   then $\addr \in \dom(\ms')$ (i.e. $\ms\update{a}{w} =
   \ms'\update{a}{w}\uplus\ms_f$) and
   $\heapSat[{\ms'\update{\addr}{\var{w}}}]{n}{W}$
 \end{lemma}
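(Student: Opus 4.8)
The plan is to read a write condition off the value-relation membership $\npair{\var c}\in\stdvr(W)$, locate the region of $W$ that governs the target address $\addr$, and re-establish memory satisfaction after overwriting the single cell $\addr$ with $w$, exploiting that the governing region is address-stratified. First I would case on $\perm$. Since $\writeAllowed{\perm}$ holds, $\perm\in\{\readwrite,\rwx,\readwritel,\rwlx\}$, and in each of these four clauses the definition of $\stdvr$ records a write condition for $\var c$: for $\perm\in\{\readwrite,\rwx\}$ we obtain $\npair{(\start,\addrend)}\in\writeCond{}(\iota^\nwl,\gl)(W)$, and for $\perm\in\{\readwritel,\rwlx\}$ we obtain $\npair{(\start,\addrend)}\in\writeCond{}(\iota^\pwl,\gl)(W)$. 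Unfolding $\writeCond{}$ gives a region name $r\in\localityReg(\gl,W)\subseteq\activeReg{W}$, an interval $[\start',\addrend']\supseteq[\start,\addrend]$ with $W(r)\nsupsim[n-1]\iota_{\start',\addrend'}$, and the fact that $W(r)$ is address-stratified. From $\withinBounds{\var c}$ I then get $\addr\in[\start,\addrend]\subseteq[\start',\addrend']$. Writing $W(r)=(\_,s,\_,\_,H)$, the semi-supset relation unfolds to the inclusion $H^\nwl_{\start',\addrend'}(s)(\hat W)\nsubeq[n-1] H(s)(\hat W)$ (respectively with $H^\pwl_{\start',\addrend'}$) for every $\hat W$.

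Next, from $\heapSat[\ms']{n}{W}$ I would extract a partition $P$ with $\ms'=\biguplus_{r'\in\activeReg{W}}P(r')$ and $\npair{P(r')}\in H_{r'}(s_{r'})(\xi^{-1}(W))$ for each active $r'$; in particular $\npair{P(r)}\in H(s)(\xi^{-1}(W))$. To pin down $\dom(P(r))$ I observe that the all-zero segment on $[\start',\addrend']$ lies in $H^\nwl_{\start',\addrend'}(s)(\xi^{-1}(W))$ or $H^\pwl_{\start',\addrend'}(s)(\xi^{-1}(W))$, since $0$ is non-local and in $\stdvr$ at every level; by the inclusion above it therefore lies in $H(s)(\xi^{-1}(W))$ at level $n-1$. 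As $P(r)$ also lies there by downward closure and $W(r)$ is address-stratified, the two segments share a domain, so $\dom(P(r))=[\start',\addrend']$. Consequently $\addr\in\dom(P(r))\subseteq\dom(\ms')$, which is the first conclusion and justifies the bookkeeping identity $\ms\update{\addr}{w}=\ms'\update{\addr}{w}\uplus\ms_f$, the update falling entirely inside the $P(r)$ block.

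For memory satisfaction of $\ms'\update{\addr}{w}$ I would take the modified partition $P'=P\update{r}{P(r)\update{\addr}{w}}$, which partitions $\ms'\update{\addr}{w}$ and agrees with $P$ off $r$, so that only the region-$r$ obligation $\npair{P(r)\update{\addr}{w}}\in H(s)(\xi^{-1}(W))$ remains. I would build a witness $\ms^\ast$ with domain $[\start',\addrend']$ carrying $w$ at $\addr$ and $0$ elsewhere, and show $\ms^\ast\in H^\nwl_{\start',\addrend'}(s)(\xi^{-1}(W))$ (respectively $H^\pwl_{\start',\addrend'}$): every value is $0$ or $w$, and $\npair{w}\in\stdvr(W)$ supplies safety of $w$ by downward closure. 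In the $H^\nwl$ case I additionally need that $w$ is non-local; this is exactly where the last hypothesis is used, since there $\perm\in\{\readwrite,\rwx\}$, so $w$ cannot be a local capability. Via the inclusion, $\ms^\ast\in H(s)(\xi^{-1}(W))$, and then address-stratification of $W(r)$ lets me swap the single cell $\addr$ between the valid segments $P(r)$ and $\ms^\ast$, delivering $\npair{P(r)\update{\addr}{w}}\in H(s)(\xi^{-1}(W))$.

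The hard part will be this final swap together with the step-index accounting: the write condition only furnishes the region inclusion, and hence the zero and witness segments, at level $n-1$, whereas $\heapSat[{\ms'\update{\addr}{w}}]{n}{W}$ demands the updated $P'$-block back in $H(s)(\xi^{-1}(W))$ at level $n$. The delicate point is therefore to ensure the witness $\ms^\ast$ is available at precisely the level at which $P(r)$ is known valid, so that the address-stratified swap lands the overwritten segment at the index required by the conclusion; the non-locality deduction for $w$ and the domain identification via the zero segment are comparatively routine once the relevant clause of $\stdvr$, the definition of $\writeCond{}$, and address-stratification are unfolded.
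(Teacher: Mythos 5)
Your proposal follows essentially the same route as the paper's proof: extract the appropriate $\writeCond{}$ clause from $\npair{c}\in\stdvr(W)$ (using the last hypothesis to decide between $\iota^\nwl$ and $\iota^\pwl$), unfold it to obtain a governing region with the $\nsupsim[n-1]$ inclusion and address-stratification, construct the witness segment carrying $w$ at $\addr$ and $0$ elsewhere to pin down $\dom(P(r))=[\start',\addrend']$ and to swap the single cell, and finish with the updated partition $P'$. The step-index subtlety you flag at the end (the inclusion being available only at level $n-1$ while the conclusion demands level $n$) is also present, and glossed over, in the paper's own proof, so your account is no less rigorous than the original.
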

 \begin{proof}
   \begin{enumproof}
   \item $\npair{(\start,\addrend)} \in \writeCond{}(\iota,\gl)(W)$ where $\iota
     = \iota^\pwl$ or $\iota = \iota^\nwl$ and (if $\var{w} =
     ((\_,\local),\_,\_,\_)$, then $\iota = \iota^\pwl$).

     By definition of $\stdvr(W)$ and $\writeAllowed{}$, from
     $\npair{c}\in\stdvr(W)$, $((\perm,\gl),\start,\addrend,\addr) = c$ and
     $\writeAllowed{\perm}$ and the fact that (if $\var{w} =
     ((\_,\local),\_,\_,\_)$, then $\perm \in \{\rwlx,\readwritel \}$)
   \item $\exists r \in \var{localityReg}(\gl,W) \ldotp$ $\exists
     [\start',\addrend'] \supseteq [\start,\addrend] \ldotp$ $W(r)\nsupsim[n-1]
     \iota_{\start',\addrend'}$ and $W(r)$ is address-stratified. By definition
     of $\writeCond{}$.
   \item $\exists P : \activeReg{W} \rightarrow \HeapSegments \ldotp$
     $\memSatPar{\ms'}{W}{P}$. By definition of $\heapSat[\ms']{n}{W}$.
   \item $\ms' = \biguplus_{r\in\activeReg{W}}P(r)$ and $\forall r \in
     \activeReg{W} \ldotp$ $\exists H,s \ldotp$ $W(r) = (\_,s,\_,\_,H)$ and
     $\npair[n]{P(r)} \in H(s)(\xi^{-1}(W))$. By definition of
     $\memSatPar{\ms'}{W}{P}$.
   \item $\exists H,s \ldotp$ $W(r) = (\_,s,\_,\_,H)$ and
     $\npair[n]{P(r)} \in H(s)(\xi^{-1}(W))$. By instantiating the previous
     point to the $r$ from the $\writeCond{}$.
   \item $\npair{w} \in \iota.H~(\iota.s)~(\xi^{-1}(W))$ by definition of
     $\iota^\pwl$, $\iota^\nwl$ and the fact that (if $\var{w} =
     ((\_,\local),\_,\_,\_)$, then $\iota = \iota^\pwl$).
   \item Define $\ms'_w$ such that $\dom(\ms'_w) = [\start',\addrend']$,
     $\ms'_w(\addr) = w$ and $\ms'_w(\addr') = 0$ for $\addr' \neq \addr$. It's
     easy to show from the previous point that $\npair{\ms'_w} \in
     H(s)(\xi^{-1}(W))$.
   \item $\dom(P(r)) = \dom(\ms'_w) = [\start',\addrend'] \ni \addr$ and
     $\npair{P(r)\update{\addr}{w}} \in H(s)(\xi^{-1}(W))$ by applying the fact
     that $W(r)$ is address-stratified, combined with the previous point.
   \item Define $P'(r) = P(r)\update{a}{w}$ and $P'(r') = P(r')$ for $r' \neq r$.
   \item $\ms'\update{a}{w} = \biguplus_{r\in\activeReg{W}}
     P'(r)$ and $\ms'\update{a}{w} :_{n,P'} W$. By definition of
     $\memSatPar{\ms'}{W}{P}$ and the previous two points.
   \end{enumproof}
 \end{proof}
  
\begin{theorem}[Fundamental theorem of logical relations]
  \label{thm:ftlr}
  For all $n$, $\perm$, $\start$, $\addrend$, $\addr$, $\gl$, $W$  \\
  If one of the following holds:
  \begin{itemize}
  \item \[
      \begin{gathered}
        \perm = \exec \land\\
        \npair{(\start,\addrend)} \in \readCond{}(\gl)(W)
      \end{gathered}
    \]
  \item \[
      \begin{gathered}
        \perm = \rwx \land \\
        \npair{(\start,\addrend)} \in \readCond{}(\gl)(W) \land\\
        \npair{(\start,\addrend)} \in \writeCond{}(\iota^\nwl,\gl)(W)
      \end{gathered}
    \]
  \item \[
      \begin{gathered}
        \perm = \rwlx \land\\
        \npair{(\start,\addrend)} \in \readCond{}(\gl)(W) \land\\
        \npair{(\start,\addrend)} \in \writeCond{}(\iota^\pwl,\gl)(W),
      \end{gathered}
    \]
  \end{itemize}
  then
  \[
    \npair{((\perm,\gl),\start,\addrend,\addr)} \in \stder(W)
  \]
\end{theorem}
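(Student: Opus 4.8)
The plan is to argue by L\"ob induction on the step index $n$, taking the full theorem at every strictly smaller index as induction hypothesis. Writing $c = ((\perm,\gl),\start,\addrend,\addr)$, I first unfold $\stder(W)$: I fix $n' \leq n$, a register file with $\npair[n']{\reg} \in \stdrr(W)$, and a memory with $\heapSat[\hs]{n'}{W}$, and must show $\npair[n']{(\reg\update{\pcreg}{c},\hs)} \in \observations(W)$. Unfolding $\observations(W)$ in turn, I fix a frame $\ms_f$, a final memory $\heap'$, and a count $i \leq n'$ with $(\reg\update{\pcreg}{c}, \hs \uplus \ms_f) \step[i] (\halted, \heap')$, and must exhibit $W' \futurestr W$ and a splitting $\heap' = \hs' \uplus \hs_r \uplus \ms_f$ with $\heapSat[\hs']{n'-i}{W'}$. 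An execution configuration is never already halted, so $i \geq 1$ and I may analyse the first machine step.

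A fact I would isolate first, since it underlies several instruction cases and is where the apparent circularity of the definitions is broken, is that $\npair[n']{c} \in \stdvr(W)$. The hypotheses of the theorem are exactly the $\readCond{}$ (and, for $\rwx$ and $\rwlx$, $\writeCond{}$) requirements that $\stdvr$ imposes on such a capability; the only missing ingredients are the relevant $\execCond{}$ clauses. But an $\execCond{}$ at $W$ quantifies over smaller indices $n'' < n'$ and future worlds $W' \future W$, asking that each sub-range capability lie in $\stder(W')$. By monotonicity and downward closure of the conditions (Lemmas~\ref{lem:readcond-mono-pub}, \ref{lem:readcond-mono-priv}, \ref{lem:writecond-mono-pub}, \ref{lem:writecond-mono-priv}, together with Lemma~\ref{lem:pwl-writecond-implies-nwl}) the read/write hypotheses transport to $(n'',W')$, and since $n'' < n$ the induction hypothesis then delivers exactly those $\stder(W')$ memberships. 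Hence the $\execCond{}$ clauses hold and $c \in \stdvr(W)$, which downward-closes to any smaller index as needed.

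Next I would dispatch the easy first-step outcomes. If the program counter is out of bounds or the decoded instruction faults, the machine steps to $\failed$ and, by determinacy, cannot reach $\halted$, so the goal is vacuous (Lemma~\ref{lem:failed-obs-stder}). Otherwise $\readCond{}$ forces $\addr \in \dom(\hs)$, so the executed instruction is $\decode(\hs(\addr))$, and I case on it. For $\halt$ the memory is untouched and I take $W' = W$, $\hs_r = \emptyset$, $\hs' = \hs$, closing by Lemma~\ref{lem:heap-sat-dc}. For $\zinstr{jmp}$ and for a taken $\zinstr{jnz}$ the successor program counter is $\updatePcPerm{\reg(\lv)}$, where $\reg(\lv)$ is safe---either an ordinary register, hence in $\stdvr(W)$ by $\npair[n']{\reg} \in \stdrr(W)$, or the program counter $c$ by the previous paragraph; Lemma~\ref{lem:safe-values-safe-invoke} then puts it into $\stder(W)$ at index $n'-1$.

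For every remaining, fall-through instruction the successor program counter is $c' = ((\perm,\gl),\start,\addrend,\addr+1)$, whose conditions coincide with those of $c$ and hold at index $n'-1$ by downward closure (Lemmas~\ref{lem:readcond-dc}, \ref{lem:writecond-dc}); as $n'-1 < n$, the induction hypothesis gives $\npair[n'-1]{c'} \in \stder(W)$. In each case I check that the one-step successor $(\reg_1,\hs_1)$ keeps $\npair[n'-1]{\reg_1} \in \stdrr(W)$ and $\heapSat[\hs_1]{n'-1}{W}$: register writes of integers are immediate, capability-producing writes are covered by Lemmas~\ref{lem:conds-lea-suffice}, \ref{lem:conds-restrict-suffice}, \ref{lem:conds-subseg-suffice} and \ref{lem:conds-load-suffice}, and an instruction copying $\pcreg$ into a register is handled by $c \in \stdvr(W)$. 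With $\npair[n'-1]{(\reg_1,\hs_1)} \in \observations(W)$ obtained from $\npair[n'-1]{c'} \in \stder(W)$, I fold the single step back using anti-reduction (Lemma~\ref{lem:anti-red-obs} with $i = 1$, $\ms_r = \emptyset$, $W' = W$) to reach the goal. I expect $\zinstr{store}$ to be the real obstacle: the operational side condition that a local word may be written only through an $\rwlx$ or $\readwritel$ capability must be matched against the value relation, which grants an $\iota^\pwl$ write condition precisely for $\rwlx$ but only an $\iota^\nwl$ one for $\rwx$. Lemma~\ref{lem:conds-store-suff} packages this reconciliation, and lining up its hypotheses---non-locality of the stored word, address-stratification of the governing region, and the written value itself lying in $\stdvr(W)$---is the delicate part, second only to the step-indexed circularity between $\execCond{}$ and $\stder$ that the L\"ob induction exists to resolve.
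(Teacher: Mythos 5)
Your proposal follows essentially the same route as the paper's proof: induction on $n$, unfolding $\stder$ and $\observations$, establishing $\npair[n']{c}\in\stdvr(W)$ by discharging the $\execCond{}$ clauses via the induction hypothesis, then a case analysis on the first instruction using Lemmas~\ref{lem:conds-load-suffice}--\ref{lem:conds-store-suff} and the induction hypothesis for the incremented program counter. The only cosmetic difference is that you fold the single step back with Lemma~\ref{lem:anti-red-obs} where the paper unfolds $\observations(W)$ inline, and you invoke Lemma~\ref{lem:safe-values-safe-invoke} where the paper unfolds $\entryCond{}$/$\execCond{}$ directly; both are equivalent.
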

\begin{proof}
  \begin{enumproof}
  \item By induction on $n$. In other words, assume that the theorem already
    holds for all $n' < n$.
  \item Assume: $n' \leq n$, $\npair[n']{\reg}\in \stdrr(W)$, $\heapSat[\hs]{n'}{W}$.\\
    Suffices: $\npair[n']{(\reg\update{\pcreg}{((\perm,\gl),\start,\addrend,\addr)},\hs)}\in\observations(W)$.\\
    By: definition of $\stder(W)$.
  \item Assume: $\ms_f$, $\heap'$, $i \leq n'$, $\Phi =
    (\reg\update{\pcreg}{((\perm,\gl),\start,\addrend,\addr)},\hs \uplus \ms_f)$
    and $\Phi \step[i] (\halted,\mem')$,\\
    Suffices: $\exists W' \futurestr W$, $\hs_r$, $\hs' \ldotp$
    $\mem' = \hs' \uplus \hs_r \uplus \ms_f$ and $\heapSat[\hs']{n'-i}{W'}$\\
    By: definition of $\observations(W)$ \label{suff-after-eval}
  \item $i \neq 0$, since
    $(\reg\update{\pcreg}{((\perm,\gl),\start,\addrend,\addr)},\hs \uplus
    \ms_f)\neq (\halted,\mem')$ for any $\mem'$. \\
    Therefore, assume w.l.o.g. that $i = 1+i'$,
    \begin{equation*}
      \Phi \step \var{conf'} \step[i'] (\halted,\mem')
    \end{equation*} \label{step:ip-non-zero}
  \item $n \geq n' > 0$, since otherwise $i = 0$ (because $i \leq n'\leq n$) and this is
    impossible by the previous point.
  \item $\npair[n']{\memreg(\pcreg)}\in\stdvr(W)$. \label{step:ftlr-pc-vr}
    Proof:
    \begin{enumproof}
    \item Assume: $\perm' \in \{\exec,\rwx,\rwlx\}$ with $\perm'
      \sqsubseteq\perm$\\
      Suffices: $\npair[n']{(\perm',\start,\addrend)}\in\execCond{g}(W)$\\
      By: the definition of $\stdvr(\cdot)$ using the assumptions
    \item Assume: $n''<n'$, $W'\future W$, $\addr' \in [\start,\addrend]$, $\gl
      = \local \Rightarrow \future = \futurewk$,
      $\gl=\glob\Rightarrow\future = \futurestr$.\\
      Suffices: $\npair[n'']{((\perm,\gl),\start,\addrend,\addr')} \in
      \stder(W')$.
      By: definition of $\execCond{g}(W)$
    \item By induction, using the assumptions and Lemmas~\ref{lem:readcond-dc}
      and~\ref{lem:writecond-dc}.
    \end{enumproof}
  \item For all $r \in \RegName$, $\npair[n']{\memreg(r)}\in\stdvr(W)$. \label{step:ftlr-reg-rr}
    \begin{enumproof}
    \item Case $r \neq \pcreg$: follows from $\npair[n']{\reg}\in \stdrr(W)$ by
      definition of $\stdrr(W)$.
    \item Case $r = \pcreg$: by step~\ref{step:ftlr-pc-vr}.
    \end{enumproof}
  \item By inspection of the definitions of $\Phi\step \var{conf'}$ and
    $\sem{\decode(\memheap(\addr))}$ and $\updatePcPerm{\cdot}$ and
    $\stdUpdatePc{\cdot}$, it is easy to see that one of the following cases
    must hold:
  \item Case $\var{conf'} = \failed$: contradiction, since it is not possible
    that $\failed \step[i'] (\halted,\mem')$. 
  \item Case $\var{conf'} = (\halted,\mem)$:
    \begin{enumproof}
    \item Then $i' = 0$ and $\mem' =
      \mem$\\
      Follows from $(\halted, \mem) \step[i'] (\halted,\mem)$
    \item For $W' = W$, $\ms_r = \emptyset$ and $\ms' = \ms$, we have that
      $\mem = \hs' \uplus \hs_r \uplus \ms_f$ and
      $\heapSat[\hs']{n'-1}{W'}$ (using Lemma~\ref{lem:heap-sat-dc}).
    \end{enumproof}
  \item Case $\var{conf'} = \updateReg[\Phi'']{\pcreg}{\var{newPc}}$, and additionally, one of the following
    holds:
    \begin{itemize}
    \item $\memheap[\Phi''] = \memheap[\Phi]$
    \item $\memheap[\Phi''] = \memheap[\Phi]\update{\addr'}{\var{w}}$, with
      $\memreg[\Phi](r_1) = ((\perm',\gl'),\start',\addrend',\addr') = c$ and
      $\writeAllowed{\perm'}$ and $\withinBounds{\var{c}}$ and $\var{w} =
      \memreg(r_2)$ and if $\var{w} = ((\_,\local),\_,\_,\_)$, then $\perm' \in
      \{\rwlx,\readwritel \}$
    \end{itemize}
    and also one of the following holds:
    \begin{itemize}
    \item $\var{newPc}= \updatePcPerm{\memreg(\lv)}$
    \item $\var{newPc} = ((\perm',\gl'),\start',\addrend',\addr' + 1)$ and
      $\memreg[\Phi](\pcreg) = ((\perm',\gl'),\start',\addrend',\addr')$
    \end{itemize}
    and finally, for all $r \in \RegName$, one of the following
    holds:
    \begin{itemize}
    \item $\memreg[\Phi''](r) = \memreg[\Phi](r)$
    \item $\memreg[\Phi''](r) = z$ for some $z \in \ints$
    \item $\memreg[\Phi''](r) = \var{w}$ and $\memreg(r_2) =
      ((\perm',\gl'),\start',\addrend',\addr') = \var{c}$ and
      $\readAllowed{\perm'}$ and $\withinBounds{\var{c}}$ and $\var{w} =
      \memheap(\addr')$
    \item $\memreg[\Phi''](r) = \var{c}$ and $\memreg(r_1) =
      ((\perm',\gl'),\start',\addrend',\addr')$ and $\perm' \neq \entry$ and
      $\var{c} = ((\perm',\gl'),\start',\addrend',\addr' + z)$ for some $z \in
      \ints$
    \item $\memreg[\Phi''](r) = \var{c}$ and $\memreg(r) =
      ((\perm',\gl'),\start',\addrend',\addr')$ and $(\perm'',\gl'')\sqsubseteq
      (\perm',\gl')$ and $c = ((\perm'',\gl''),\start',\addrend',\addr')$
    \item $\memreg[\Phi''](r) = \var{c}$ and $\memreg(r) =
      ((\perm',\gl'),\start',\addrend',\addr')$ and $\start' \leq \start''$ and
      $\addrend'' \leq \addrend'$ and $c =
      ((\perm',\gl'),\start'',\addrend'',\addr')$ and $\perm' \neq \entry$
    \end{itemize}
    In this case, we have:
    \begin{enumproof}
    \item $\memheap[\Phi''] = \ms'' \uplus \ms_f$ and $\heapSat[\ms'']{n'-1}{W}$.
      \begin{enumproof}
      \item Case $\memheap[\Phi''] = \memheap[\Phi]$:
        Then $\memheap[\Phi''] = \ms\uplus\ms_f$ and
        $\heapSat[\ms]{n'-1}{W}$ follows by Lemma~\ref{lem:heap-sat-dc}.
      \item Case $\memheap[\Phi''] = \memheap[\Phi]\update{\addr'}{\var{w}}$,
        with $\memreg[\Phi](r_1) = ((\perm',\gl'),\start',\addrend',\addr') = c$
        and $\writeAllowed{\perm'}$ and $\withinBounds{\var{c}}$ and $\var{w} =
        \memreg(r_2)$ and if $\var{w} = ((\_,\local),\_,\_,\_)$, then $\perm'
        \in \{\rwlx,\readwritel \}$.\\
        The facts that $\memheap[\Phi''] = \ms''\uplus\ms_f$ and
        $\heapSat[\ms'']{n'-1}{W}$ follow by Lemmas~\ref{lem:conds-store-suff}
        and~\ref{lem:heap-sat-dc} using the fact that $\heapSat[\ms]{n'}{W}$ and
        $\npair[n']{\memreg(r_1)}\in\stdvr(W)$ and
        $\npair[n']{\memreg(r_2)}\in\stdvr(W)$ which follows from
        Step~\ref{step:ftlr-reg-rr}.
      \end{enumproof}
    \item For all $r \in \RegName$, $\npair[n'-1]{\memreg[\Phi''](r)}\in\stdvr(W)$.\label{step:phip-regs-stdvr}
      \begin{enumproof}
      \item Case $\memreg[\Phi''](r) = \memreg[\Phi](r)$:
        $\npair[n'-1]{\memreg[\Phi''](r)} \in \stdvr(W)$ follows from
        Step~\ref{step:ftlr-reg-rr} using Lemma~\ref{lem:stdvr-dc}.
      \item $\memreg[\Phi''](r) = z$ for some $z \in \ints$.
        $\npair[n'-1]{\memreg[\Phi''](r)} \in \stdvr(W)$ follows by definition of
        $\stdvr(\cdot)$
      \item $\memreg[\Phi''](r) = \var{w}$ and $\memreg(r_2) =
        ((\perm',\gl'),\start',\addrend',\addr') = \var{c}$ and
        $\readAllowed{\perm'}$ and $\withinBounds{\var{c}}$ and $\var{w} =
        \memheap(\addr')$:\\
        $\npair[n'-1]{\memreg[\Phi''](r)} \in \stdvr(W)$ follows by
        Lemmas~\ref{lem:conds-load-suffice} using the fact that
        $\heapSat[\memheap]{n'}{W}$ and $\npair[n']{\memreg(r_2)}\in\stdvr(W)$ which
        we have from step~\ref{step:ftlr-reg-rr}.
      \item $\memreg[\Phi''](r) = \var{c}$ and $\memreg(r_1) =
        ((\perm',\gl'),\start',\addrend',\addr')$ and $\perm' \neq \entry$ and
        $\var{c} = ((\perm',\gl'),\start',\addrend',\addr' + z)$ for some $z \in
        \ints$: \\
        $\npair[n'-1]{\memreg[\Phi''](r)} \in \stdvr(W)$ follows by
        Lemmas~\ref{lem:conds-lea-suffice} and~\ref{lem:stdvr-dc} using the
        fact that $\npair[n']{\memreg(r_1)}\in\stdvr(W)$
        which we have from step~\ref{step:ftlr-reg-rr}.
      \item $\memreg[\Phi''](r) = \var{c}$ and $\memreg(r) =
        ((\perm',\gl'),\start',\addrend',\addr')$ and $(\perm'',\gl'')\sqsubseteq
        (\perm',\gl')$ and $c = ((\perm'',\gl''),\start',\addrend',\addr')$: \\
        $\npair[n'-1]{\memreg[\Phi''](r)} \in \stdvr(W)$ follows by
        Lemmas~\ref{lem:conds-restrict-suffice} and~\ref{lem:stdvr-dc} using the
        fact that $\npair[n']{\memreg(r)}\in\stdvr(W)$
        which follows from $\npair[n']{\memreg}\in\stdrr(W)$ by
        definition.
      \item $\memreg[\Phi''](r) = \var{c}$ and $\memreg(r) =
        ((\perm',\gl'),\start',\addrend',\addr')$ and $\start' \leq \start''$ and
        $\addrend'' \leq \addrend'$ and $c =
        ((\perm',\gl'),\start'',\addrend'',\addr')$ and $\perm' \neq \entry$: \\
        $\npair[n'-1]{\memreg[\Phi''](r)} \in \stdvr(W)$ follows by
        Lemmas~\ref{lem:conds-subseg-suffice} and~\ref{lem:stdvr-dc} using the
        fact that $\npair[n']{\memreg(r)}\in\stdvr(W)$
        which follows from $\npair[n']{\memreg}\in\stdrr(W)$ by definition.
      \end{enumproof}
    \item $\npair[n'-1]{\memreg[\Phi'']}\in\stdrr(W)$: Follows from the previous
      point by definition of $\stdrr(W)$.
    \item $\npair[n'-1]{\var{newPc}} \in \stder(W)$:
      \begin{enumproof}
      \item Case $\var{newPc}= \updatePcPerm{\memreg(\lv)}$: We distinguish the
        following cases:
        \begin{enumproof}
        \item Case $\memreg(\lv) = ((\entry,\gl'),\start',\addrend',\addr')$:
          \begin{enumproof}
          \item $\npair[n']{\memreg(\lv)} \in \stdvr(W)$.  Follows from
            Step~\ref{step:ftlr-reg-rr}.
          \item $\npair[n']{(\start',\addrend',addr')} \in \entryCond{\gl'}(W)$.
            By definition of $\stdvr(W)$ from the previous point.
          \item $\npair[n'-1]{((\exec,\gl'),\start',\addrend',\addr')} \in
            \stder(W)$: By definition of $\entryCond{\cdot}$ and taking $n' =
            n'-1$ and $W' = W$
          \item $\updatePcPerm{\memreg(\lv)} =
            ((\exec,\gl'),\start',\addrend',\addr')$: by definition of $\updatePcPerm{\cdot}$.
          \end{enumproof}
        \item Case $\memreg(\lv) = ((\perm',\gl'),\start',\addrend',\addr')$
          with $\perm' \in \{\exec,\rwx,\rwlx\}$ and $\withinBounds{\memreg(\lv)}$:
          \begin{enumproof}
          \item $\npair[n']{\memreg(\lv)} \in \stdvr(W)$.  Follows from
            Step~\ref{step:ftlr-reg-rr}.
          \item $\npair[n']{(\perm',\start',\addrend',\addr')} \in \execCond{\gl'}(W)$.
            By definition of $\stdvr(W)$ from the previous point.
          \item $\npair[n'-1]{((\perm',\gl'),\start',\addrend',\addr')} \in
            \stder(W)$: By definition of $\execCond{\cdot}$, taking $n' = n'-1$,
            $W' = W$ and $\addr = \addr'$. Note that $\addr' \in
            [\start',\addrend']$ because we have $\withinBounds{\memreg(\lv)}$.
          \item $\updatePcPerm{\memreg(\lv)} =
            ((\perm',\gl'),\start',\addrend',\addr')$: by definition of
            $\updatePcPerm{\cdot}$.
          \end{enumproof}
        \item Case not ($\memreg(\lv) =
          ((\entry,\gl'),\start',\addrend',\addr')$) and not ($\memreg(\lv) =
          ((\perm',\gl'),\start',\addrend',\addr')$ with $\perm' \in
          \{\exec,\rwx,\rwlx\}$ and $\withinBounds{\memreg(\lv)}$):
          \begin{enumproof}
          \item $\updatePcPerm{\memreg(\lv)} = \memreg(\lv)$: by definition of 
            $\updatePcPerm{\cdot}$.
          \item $(\reg\update{\pcreg}{\memreg(\lv)},\ms)\step\failed$ for any
            $\reg$,$\ms$: by definition of the evaluation relation.
          \item $\npair[n'-1]{\var{newPc}} \in \stder(W)$: by
            Lemma~\ref{lem:failed-obs-stder} using the previous point.
          \end{enumproof}
        \end{enumproof}
      \item Case $\var{newPc} = ((\perm',\gl'),\start',\addrend',\addr' + 1)$ and
        $\memreg[\Phi''](\pcreg) = ((\perm',\gl'),\start',\addrend',\addr')$:
        \begin{enumproof}
        \item Case $\perm' \in \{\exec,\rwx,\rwlx\}$ and $\start'\leq
          \addr'+1\leq \addrend'$:
          \begin{enumproof}
          \item $\npair[n'-1]{\memreg[\Phi''](\pcreg)} \in \stdvr(W)$: by
            Step~\ref{step:phip-regs-stdvr}.
          \item $\npair[n'-1]{((\perm',\gl'),\start',\addrend',\addr' + 1)} \in \stdvr(W)$: by
            Lemma~\ref{lem:conds-lea-suffice} from the previous point.
          \item One of the following holds:
            \begin{itemize}
            \item \[
                \begin{gathered}
                  \perm' = \exec \land\\
                  \npair[n'-1]{(\start',\addrend')} \in \readCond{}(\gl)(W)
                \end{gathered}
              \]
            \item \[
                \begin{gathered}
                  \perm' = \rwx \land \\
                  \npair[n'-1]{(\start',\addrend')} \in \readCond{}(\gl)(W) \land\\
                  \npair[n'-1]{(\start',\addrend')} \in \writeCond{}(\iota^\nwl,\gl)(W)
                \end{gathered}
              \]
            \item \[
                \begin{gathered}
                  \perm' = \rwlx \land\\
                  \npair[n'-1]{(\start',\addrend')} \in \readCond{}(\gl)(W) \land\\
                  \npair[n'-1]{(\start',\addrend')} \in \writeCond{}(\iota^\pwl,\gl)(W),
                \end{gathered}
              \]
            \end{itemize}
           This follows from the previous point by  definition of $\stdvr(W)$
         \item $\npair[n'-1]{((\perm',\gl'),\start',\addrend',\addr' + 1)} \in
           \stder(W)$: By the induction hypothesis of this lemma using the
           previous point.
          \end{enumproof}
        \item Case not ($\perm' \in \{\exec,\rwx,\rwlx\}$ and $\start'\leq
          \addr'+1\leq \addrend'$):
          The result follows by Lemma~\ref{lem:failed-obs-stder}.
        \end{enumproof}
      \end{enumproof}
    \item
      $\npair[n'-1]{(\memreg[\Phi'']\update{\pcreg}{\var{newPc}},\ms'')}\in\observations(W)$:
      by definition of $\stder(W)$ using the above three points.
    \item $\exists W' \futurestr W$, $\hs_r$, $\hs' \ldotp$
      $\mem = \hs' \uplus \hs_r \uplus \ms_f$ and $\heapSat[\hs']{n'-i}{W'}$\\
      By: definition of $\observations(W)$ using the previous step and the
      evaluation $\var{conf'}\step[i'] (\halted,\mem')$ from
      Step~\ref{step:ip-non-zero}.
    \end{enumproof}
  \end{enumproof}
\end{proof}

\subsubsection{Scall macro-instruction correctness}
\begin{definition}
  We say that $(\reg,\ms) \text{ is looking at } [i_0,\cdots,i_n] \text{ followed by } c_{\mathit{next}}$ 
  iff
  \begin{itemize}
  \item $\reg(\pcreg) = ((p,g),b,e,a)$
  \item $p = \rwx$, $p = \exec$, or $p = \rwlx$
  \item $a+n\leq e$, $b\leq a\leq e$
  \item $\ms(a+0,\cdots,a+n) = [i_0,\cdots,i_n]$
  \item $c_{\mathit{next}} = ((p,g),b,e,a+n+1)$
  \end{itemize}
\end{definition}

\begin{definition}
  We say that $\reg \text{ points to stack with $\ms_\stk$ used and $\ms_{\mathit{unused}}$ unused}$
  iff
  \begin{itemize}
  \item $\reg(r_\stk) =((\rwlx,\local),b_\stk,e_\stk,a_\stk)$
  \item $\dom(\ms_{\mathit{unused}}) = [a_\stk+1,\cdots,e_\stk]$
  \item $\dom(\ms_\stk) = [b_\stk,\cdots,a_\stk]$ \lau{Maybe make it clear what happens when $\ms_\stk$ is empty}
  \item $b_\stk - 1\leq a_\stk$
  \end{itemize}
\end{definition}

\begin{lemma}[$\mathtt{scall}$ works]
  \label{lem:scall-works}
  If
  \begin{itemize}
  \item $\memSat[n]{\ms}{\revokeTemp{W}}$ 
  \item $\dom(\ms_f) \cap (\dom(\ms_\stk \uplus \ms_{\mathit{unused}} \uplus \ms)) = \emptyset$
  \item $(\reg,\ms) \text{ is looking at }
    \mathtt{scall}\;r(\overline{r_{\mathit{arg}}},
    \overline{r_{\mathit{priv}}}) \text{ followed by } c_{\mathit{next}}$
  \item $\reg \text{ points to stack with $\ms_\stk$ used and $\ms_{\mathit{unused}}$ unused}$
  \item[Hyp-Callee] If
    \begin{itemize}
    \item $\dom(\ms_{\mathit{unused}}) = \dom(\ms_{\mathit{act}}
      \uplus \ms_{\mathit{unused}}')$,
    \item $W' =
      \revokeTemp{W}[\iota^{\sta}(\temp,\ms_\stk\uplus\ms_{\mathit{act}} \uplus \ms_f),\iota^{\pwl}(\dom(\ms_{\mathit{unused}}'))]$,
    \item $\memSat[n-1]{\ms''}{W'}$
    \item $\reg' \text{ points to stack with $\emptyset$ used and $\ms_{\mathit{unused}}'$ unused}$
    \item $\reg'= \reg_0[\pcreg\mapsto\updatePcPerm{\reg(r)},
      \overline{r_{\mathit{arg}}} \mapsto \reg(\overline{r_{\mathit{arg}}}),r_0
      \mapsto c_{\mathit{ret}}, r_\stk \mapsto c_\stk, r \mapsto \reg(r)]$ 
    \item $\npair[n-1]{c_{\mathit{ret}}} \in \stdvr(W')$
    \item $\npair[n-1]{c_\stk} \in \stdvr(W')$
    \end{itemize}

    then we have that $\npair[n-1]{(\reg',\ms'')} \in \observations(W')$
  \item[Hyp-Cont] If
    \begin{itemize}
    \item $n' \leq n-2$
    \item $W'' \futurewk \revokeTemp{W}$
    \item $\memSat[n']{\ms''}{\revokeTemp{W''}}$ 
    \item for all $r$, we have that:
      \begin{equation*}
        \reg'(r)
        \begin{cases}
          = c_{\mathit{next}} &\text{ if } r = \pcreg\\
          = \reg(r)&\text{ if } r \in \overline{r_{\mathit{priv}}}\\
          \in \stdvr(\revokeTemp{W''}) &\text{ if $\reg'(r)$ is a global capability and } r \not\in \{\pcreg,\overline{r_{\mathit{priv}}}, r_\stk\}
        \end{cases}
      \end{equation*}
    \item $\reg' \text{ points to stack with $\ms_\stk$ used and $\ms_{\mathit{unused}}''$ unused}$ for some $\ms_{\mathit{unused}}''$
    \end{itemize}

    then we have that $\npair[n']{(\reg',\ms'' \uplus \ms_f \uplus \ms_\stk \uplus \ms_{\var{unused}}'')} \in \observations(W'')$
  \end{itemize}
  Then 
  \begin{itemize}
    \item $\npair{(\reg,\ms \uplus \ms_f \uplus \ms_\stk \uplus \ms_{\var{unused}})} \in \observations(W)$
  \end{itemize}

\end{lemma}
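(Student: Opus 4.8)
The plan is to establish $\npair{(\reg,\ms \uplus \ms_f \uplus \ms_\stk \uplus \ms_{\var{unused}})} \in \observations(W)$ by symbolically executing the \texttt{scall} macro down to its final jump and then applying the observation anti-reduction lemma (Lemma~\ref{lem:anti-red-obs}) together with the two hypotheses Hyp-Callee and Hyp-Cont. First I would step through the instructions comprising \texttt{scall}: pushing the words in $\overline{r_{\mathit{priv}}}$, then the restore code, then the return address, then the full stack capability; building the protected return pointer $c_{\mathit{ret}}$ (a $\local$ enter capability pointing at the pushed restore code); restricting $r_\stk$ to the still-unused region to obtain $c_\stk$; clearing that region with \texttt{mclear}; and clearing the non-argument registers with \texttt{rclear}. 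By Lemma~\ref{lem:determinacy} these steps are forced, and they compute the witnesses $\ms_{\mathit{act}}$ (the part of $\ms_{\var{unused}}$ consumed by the pushes), $\ms_{\mathit{unused}}'$ (the cleared remainder), $\reg'$, $c_{\mathit{ret}}$, and $c_\stk$. If $\ms_{\var{unused}}$ is too small to hold the pushed data the machine fails, and the goal is immediate by Lemma~\ref{lem:failed-obs-stder}; otherwise I take $\ms_r = \emptyset$ in Lemma~\ref{lem:anti-red-obs} and the target world $W' = \revokeTemp{W}[\iota^{\sta}(\temp,\ms_\stk\uplus\ms_{\mathit{act}} \uplus \ms_f),\iota^{\pwl}(\dom(\ms_{\mathit{unused}}'))]$, which is a private future of $W$ since revoking temporary regions and extending with fresh region names are both private transitions. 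This reduces the goal to $\npair[n-1]{(\reg',\ms'')}\in\observations(W')$, which is exactly the conclusion of Hyp-Callee (the index $n-1 \geq n-k$ since $k\geq 1$), so it remains to discharge its premises.

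The memory-satisfaction premise $\memSat[n-1]{\ms''}{W'}$ follows by partitioning $\ms''$ into disjoint sub-segments: the unchanged part is handled by the hypothesis $\memSat[n]{\ms}{\revokeTemp{W}}$ and downward closure (Lemma~\ref{lem:heap-sat-dc}); the static region accepts $\ms_\stk\uplus\ms_{\mathit{act}}\uplus\ms_f$ verbatim; and the $\iota^\pwl$ region accepts the cleared remainder, whose cells are $0$ and hence trivially in $\stdvr$. The premise $\npair[n-1]{c_\stk}\in\stdvr(W')$ is the stack-capability case: $c_\stk$ is a $\local$ $\rwlx$ capability for $\dom(\ms_{\mathit{unused}}')$, governed by the freshly added $\iota^\pwl$ region, so the read and write conditions hold by reflexivity and Lemma~\ref{lem:iota-pwl-address-stratified}, and the three execute conditions follow from the FTLR (Theorem~\ref{thm:ftlr}); this is analogous to the stack-capability case in the proof of Lemma~\ref{lem:correctness-f2}.

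The crux is the remaining premise $\npair[n-1]{c_{\mathit{ret}}}\in\stdvr(W')$, i.e.\ that the protected return pointer is a safe enter capability. Because it is $\local$, I must verify the enter condition against \emph{public} future worlds: for $W'' \futurewk W'$ and $n'' < n-1$, show the made-executable $c_{\mathit{ret}}$ lies in $\stder(W'')$. Unfolding $\stder$, I take an arbitrary register file in $\stdrr(W'')$ and memory satisfying $W''$ and symbolically execute the on-stack restore code (reload the full stack capability into $r_\stk$, pop the old $\pcreg$) followed by the caller-side cleanup (pop the restore code and the private words into $\overline{r_{\mathit{priv}}}$). The decisive point is that the static temporary region $\iota^{\sta}(\temp,\ms_\stk\uplus\ms_{\mathit{act}}\uplus\ms_f)$ persists in every public future world, so the restore code, saved return address, saved stack capability, caller frame, and private words are all intact when control returns — this is precisely what enforces well-bracketing. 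After that deterministic run (tied in via another application of Lemma~\ref{lem:anti-red-obs}) I land in the continuation configuration, whose admissibility is delivered by Hyp-Cont, instantiated with $W''$ (so that $W'' \futurewk \revokeTemp{W}$ by transitivity through $W' \futurewk \revokeTemp{W}$) and the restored register file, whose global registers remain safe by monotonicity of $\stdvr$ with respect to $\futurewk$.

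The main obstacle is the management of the world transitions together with the step-index accounting: entering the callee is a \emph{private} move ($\revokeTemp{W}$ plus fresh regions), while returning through the $\local$ return pointer must be reasoned about over \emph{public} future worlds, and it is exactly the persistence of the static region across those public transitions that guarantees the caller's private stack frame survives the adversary's call. Verifying that $c_{\mathit{ret}}$ is in the value relation — rather than merely showing the two hypotheses suffice — is therefore where the real work lies.
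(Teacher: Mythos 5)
Your proposal follows essentially the same route as the paper's proof: symbolically execute the macro to the jump, apply the anti-reduction lemma with the extended world $\revokeTemp{W}[\iota^{\sta}(\temp,\ldots),\iota^{\pwl}(\ldots)]$, discharge Hyp-Callee's memory-satisfaction and stack-capability premises via disjointness and Lemma~\ref{lem:stack-cap-vr}, and locate the real work in showing $\npair[n-1]{c_{\mathit{ret}}}\in\stdvr(W')$ by quantifying over public future worlds, using the persistence of the static temporary region to run the restore code, and closing with Hyp-Cont via transitivity of $\futurewk$. This matches the paper's argument in both structure and in the identification of the crux.
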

\begin{proof}
  Assume $n$ is sufficiently large to execute all the steps up to and including the jump of $\mathtt{scall}\; r(\overline{r_{\var{arg}}},\overline{r_{\var{priv}}})$. If this is not the case, then in any given memory frame the execution will not halt successfully fast enough.

  Further assume
  \begin{enumproof}
  \item $\memSat[n]{\ms}{\revokeTemp{W}}$ \label{pf:scall-works:mem-sat-ms}
  \item $\dom(\ms_f) \cap (\dom(\ms_\stk \uplus \ms_{\mathit{unused}} \uplus \ms)) = \emptyset$
  \item $(\reg,\ms) \text{ is looking at }
    \mathtt{scall}\;r(\overline{r_{\mathit{arg}}},
    \overline{r_{\mathit{priv}}}) \text{ followed by } c_{\mathit{next}}$
  \item $\reg \text{ points to stack with $\ms_\stk$ used and $\ms_{\mathit{unused}}$ unused}$
  \item Hyp-Callee \label{pf:scall-works:item:Hyp-Callee}
  \item Hyp-Cont\label{pf:scall-works:item:Hyp-Cont}
  \end{enumproof}
  Now we wish to apply Lemma~\ref{lem:anti-red-obs}. To this end let $\ms_{\var{frame}}$ be given. Executing the \texttt{scall} gives us
  \[
    (\reg,\ms \uplus \ms_f \uplus \ms_\stk \uplus \ms_{\var{unused}} \uplus \ms_{\var{frame}}) \step[i] (\reg_1,\ms \uplus \ms_f \uplus \ms_\stk \uplus \ms_{\var{act}} \uplus \ms_\unused' \uplus \ms_{\var{frame}})
  \]
  where
  \begin{enumproof}[resume]
  \item $i \leq n$
  \item $\ms_\act$ contains activation record, $\reg(\overline{r_{\var{priv}}})$, the code return capability, and the full stack capability ($\reg(r_\stk)$ with the pointer adjusted).
  \item $\forall a \in \dom(\ms_\unused') \ldotp \ms_\unused'(a) = 0$
  \item $\dom(\ms_\unused) = \dom(\ms_\act \uplus \ms_\unused')$
  \item $\reg_1(r_0) = c_{\var{ret}} = ((\entry,\local),\_,\_,\_)$ where the range of authority is the same as $\reg(r_\stk)$ and it points to the first instruction of the activation code. 
  \item \pointstostack{\reg_1}{\emptyset}{\ms_\unused'} \label{pf:scall-works:item:points-to-empty-and-unusedp}
  \item $\reg_1(\pcreg) = \updatePcPerm{\reg(\pcreg)}$ 
  \item $\reg_1(r) = \reg(r)$
  \item $\reg_1(\overline{r_{\var{args}}}) = \reg(\overline{r_{\var{args}}})$
  \item $\forall r' \in \RegName \setminus \{\pcreg, r_\stk, r, \overline{r_{\var{args}}} \} \ldotp \reg_1(r') = 0$
  \end{enumproof}
  In order to use Lemma~\ref{lem:anti-red-obs}, we now need to show
  \[
    \npair[n_1]{(\reg_1,\ms \uplus \ms_f \uplus \ms_\stk \uplus \ms_{\var{act}} \uplus \ms_\unused')} \in \observations(W_1)
  \]
  where
  \[
    W_1 = \revokeTemp{W}[\iota^{\sta}(\temp,\ms_\stk\uplus\ms_{\mathit{act}} \uplus \ms_f),\iota^{\pwl}(\dom(\ms_{\mathit{unused}}'))]
  \]
  to this end use Hyp-Callee (\ref{pf:scall-works:item:Hyp-Callee}). To use this everything is satisfied directly by assumptions but the following:
  \begin{enumproof}[resume]
    \item $\memSat[n-1]{\ms \uplus \ms_f \uplus \ms_\stk \uplus \ms_{\var{act}} \uplus \ms_\unused'}{W_1}$\\
      Here we apply Lemma~\ref{lem:disj-mem-sat}. By assumption~\ref{pf:scall-works:mem-sat-ms} we have $\memSat[n]{\ms}{\revokeTemp{W}}$. So it suffices to show
      \[
        \memSat[n-1]{\ms_f \uplus \ms_\stk \uplus \ms_{\var{act}} \uplus \ms_\unused'}{[\iota^{\sta}(\temp,\ms_\stk\uplus\ms_{\mathit{act}} \uplus \ms_f),\iota^{\pwl}(\dom(\ms_{\mathit{unused}}'))]}
      \]
      This turns out to be trivial as $\ms_f$, $\ms_\stk$, and $\ms_{\var{act}}$ match the static region. $\ms_\unused'$ is all zeroes, to it trivially satisfies the $\iota^\pwl$ region.
    \item $\npair[n-1]{\reg'(r_\stk)} \in \stdvr(W_1)$ \\
      Use Lemma~\ref{lem:stack-cap-vr} with \ref{pf:scall-works:item:points-to-empty-and-unusedp} and that $W_1$ has region $\iota^{\pwl}(\dom(\ms_{\mathit{unused}}')$.
    \item $\npair[n-1]{c_{\var{ret}}} \in \stdvr(W_1)$ \\
      To this end let 
      \begin{enumproof}
        \item $n' < n-1$ \label{pf:scall-works:np-n1}
        \item $W_2 \futurewk W_1$ \label{pf:scall-works:w2-futurewk-w1}
      \end{enumproof}
      be given and show
      \[
        \npair[n']{\updatePcPerm{c_{\var{ret}}}} \in \stder(W_2)
      \]
      To this assume 
      \begin{enumproof}[resume]
        \item $n'' \leq n'$
        \item $\npair[n'']{\reg_2} \in \stdrr(W_2)$ \label{pf:scall-worsk:reg2-reg-rel}
        \item $\memSat[n'']{\ms'}{W_2}$ \label{pf:scall-works:w2-memsat}
      \end{enumproof}
      be given and show
      \begin{equation}
        \label{pf:scall-works:obsw2}
        \npair[n'']{(\reg_2\update{\pcreg}{\updatePcPerm{c_{\var{ret}}}},\ms')} \in \observations(W_2)
      \end{equation}

      From \ref{pf:scall-works:w2-futurewk-w1} and \ref{pf:scall-works:w2-memsat}, we can deduce that the memory can be split in the following way:
      \[
        \ms' = \ms'' \uplus \ms_r \uplus \ms_\stk \uplus \ms_\act \uplus \ms_\unused'' \uplus \ms_f
      \]
      where $\ms''$ is the "permanent" part of memory we get from Lemma~\ref{lem:priv-mono-like}, $\ms_r$ is the part "revoked" of memory from the same lemma that is not otherwise specified, and $\dom(\ms_\unused') = \dom(\ms_\unused'')$. From Lemma~\ref{lem:priv-mono-like} we also get
      \begin{enumproof}[resume]
        \item $\memSat[n'']{\ms''}{\revokeTemp{W_2}}$ \label{pf:scall-works:rt-w2-memsat}
      \end{enumproof}

      Assume $n''$ is large enough to execute the rest of the \texttt{scall} instructions. If $n''$ is not large enough, then \ref{pf:scall-works:obsw2} is trivial to show. To show \ref{pf:scall-works:obsw2} apply Lemma~\ref{lem:anti-red-obs} again where $\ms_r$ is the revoked part. Let $\ms_{\var{frame}}'$ be given, the execution until just after the \texttt{scall} proceeds as follows:
      \[
        (\reg_2\update{\pcreg}{\updatePcPerm{c_{\var{ret}}}},\ms' \uplus \ms_{var{frame}}') \step[j] (\reg_3,\ms' \uplus \ms_{\var{frame}}')
      \]
      where
      \begin{enumproof}[resume]
      \item \label{pf:scall-works:reg3-contents}
        \[
          \reg_3(r) =
          \begin{cases}
            c_{\var{next}} & r = \pcreg \\
            c_\stk & r = r_\stk \\
            \reg(r) & r \in \{\overline{r_{\var{priv}}} \}\\
            \reg_2(r) & \text{ otherwise}
          \end{cases}
        \]
        \item \pointstostack{\reg_3}{\ms_\stk}{\ms_\act \uplus \ms_\unused''} \label{pf:scall-works:reg3-stack}
      \end{enumproof}
      
      At this point, we use Hyp-Cont (\ref{pf:scall-works:item:Hyp-Cont}) to show the observation predicate condition of Lemma~\ref{lem:anti-red-obs}:
      \[
        \npair[n'']{(reg_3,\ms'' \uplus \ms_\stk \uplus \ms_\act \uplus \ms_\unused'' \uplus \ms_f)} \in \observations(W_2)
      \]
      which 
      \begin{itemize}
      \item $n'' \leq n-2$ \\
        Follows from (\ref{pf:scall-works:np-n1})
      \item $W_2 \futurewk \revokeTemp{W}$ \\
        We have 
        \[
          W_1 \futurewk \revokeTemp{W}
        \]
        and assumption \ref{pf:scall-works:w2-futurewk-w1} we get this by transitivity of $\futurewk$.
      \item $\memSat[n'']{\ms''}{\revokeTemp{W_2}}$ \\
        Exactly \ref{pf:scall-works:rt-w2-memsat}.        
      \item for all $r$, we have that:
        \begin{equation*}
          \reg_3(r)
          \begin{cases}
            = c_{\mathit{next}} &\text{ if } r = \pcreg\\
            = \reg(r)&\text{ if } r \in \overline{r_{\mathit{priv}}}\\
            \in \stdvr(\revokeTemp{W_2}) &\text{ if $\reg_3(r)$ is a global capability and } r \not\in \{\pcreg,\overline{r_{\mathit{priv}}}, r_\stk\} 
          \end{cases}
        \end{equation*}\\
        The two first cases follows from \ref{pf:scall-works:reg3-contents}. The third follow from assumption \ref{pf:scall-worsk:reg2-reg-rel} and \ref{lem:stdvr-glob-priv-mono}.
      \item $\reg' \text{ points to stack with $\ms_\stk$ used and $\ms_\act \uplus \ms_{\mathit{unused}}''$ unused}$\\
        Exactly \ref{pf:scall-works:reg3-stack}.
      \end{itemize}
  \end{enumproof}
\end{proof}

\subsubsection{Malloc macro-instruction correctness}
\begin{definition}
  We say that ``\linksto{(\reg,\ms)}{\var{key}}{j}{c_\malloc}'' 
  iff
  \begin{itemize}
  \item $\reg(pc) = \stdcap$
  \item $\ms(\start) = ((\_,\_),\start_\link,\_,\_)$
  \item $\ms(\start_\link+j) = c$
  \end{itemize}
\end{definition}

\begin{lemma}[$\mathtt{malloc}$ works]
  \label{lem:malloc-works}
  If
  \begin{itemize}
  \item \lookingat{(\reg,\ms)}{\mathtt{malloc}\;r\;k}{c_{\var{next}}}
  \item $k \geq 0$
  \item \linksto{(\reg,\ms)}{\malloc}{k}{c_\malloc}
  \item $c_\malloc$ satisfies the $\malloc$ specification with $\iota_{\malloc,0}$
  \item $W \futurestr [i \mapsto\iota_{\malloc,0}]$
  \item $\memSat{\ms}{W}$
  \item $\ms = \ms' \uplus \ms_{\var{footprint}}$
  \item $\memSat{\ms_{\var{footprint}}}{[i \mapsto W(i)]}$
  \item[Hyp-Cont]
    If
    \begin{itemize}
    \item $n' \leq n-1$
    \item $\iota_\malloc \futurewk W(i)$
    \item $\memSat[n']{\ms_{\var{footprint}}' \uplus \ms'}{W[i \mapsto \iota_\malloc]}$
    \item $\memSat[n']{\ms_{\var{footprint}}'}{[i \mapsto \iota_\malloc]}$
      \begin{equation*}
        \reg'(r') = 
        \begin{cases}
          c_{\var{next}} & r' = \pcreg \\
          ((\rwx,\glob),\start,\addrend,\addr) & r' = r \\
          \reg(r) & r' \not\in \RegName_t \union \{\pcreg, r, r_1\}
        \end{cases}
      \end{equation*}
    \item $\addrend - \start = k - 1$
    \item $\dom(\ms_{\var{alloc}}) = [\start,\addrend]$
    \item $\forall \addr \in [\start,\addrend]\ldotp \ms_{\var{alloc}}(\addr) = 0$
    \end{itemize}
    Then we have $\npair[n']{(\reg',\ms' \uplus \ms_{\var{footprint}}'  \uplus \ms_{\var{alloc}})} \in \observations(W[\iota_\malloc])$
  \end{itemize}
  Then
  \[
    \npair{(\reg,\ms)} \in \observations(W)
  \]
\end{lemma}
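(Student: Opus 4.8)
The plan is to follow the same shape as the proof of Lemma~\ref{lem:scall-works}: peel off the deterministic instruction prefix of the \texttt{malloc} macro with the anti-reduction lemma, invoke the malloc specification for the call itself, run the deterministic tail, and close with Hyp-Cont. First I would handle the prefix \texttt{fetch r malloc}; \texttt{move} $r_1$ $k$; \texttt{move} $r_{t1}$ $r_0$; \texttt{move} $r_0$ $\pcreg$; \texttt{lea} $r_0$ $4$; \texttt{restrict} $r_0$; \texttt{jmp r}. These instructions leave $\ms$ unchanged, and by the $\linksto{(\reg,\ms)}{\malloc}{k}{c_\malloc}$ hypothesis together with $\memSat{\ms}{W}$ the fetch deterministically loads $c_\malloc$ into $r$. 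Hence for every frame $\ms_f$ the machine steps uniformly, say $i_1$ steps, from $(\reg,\ms\uplus\ms_f)$ to $(\reg_1,\ms\uplus\ms_f)$, where $\reg_1(\pcreg)=\updatePcPerm{c_\malloc}$, $\reg_1(r_1)=k$, $\reg_1(r_0)=c_{\mathit{ret}}$ is an enter capability pointing at the \texttt{move r r\_1} instruction following the jump, $\reg_1(r_{t1})=\reg(r_0)$, and every other non-temporary register agrees with $\reg$. By Lemma~\ref{lem:anti-red-obs} (with $\ms_r=\emptyset$, $W'=W$, $n'=n-i_1$) it then suffices to show $\npair[n-i_1]{(\reg_1,\ms)}\in\observations(W)$.

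Next I would unfold $\observations(W)$ directly rather than using anti-reduction, since malloc's choice of fresh addresses depends on the ambient frame and so the call is not uniform in $\ms_f$. So let $\ms_{\mathit{frame}}$, $m'$ and $l\le n-i_1$ be given with $(\reg_1,\ms\uplus\ms_{\mathit{frame}})\step[l](\halted,m')$, and apply Specification~\ref{spec:malloc} with footprint $\ms_{\var{footprint}}$, frame $\ms'\uplus\ms_{\mathit{frame}}$, region $\iota_\malloc=W(i)$ (using $W\futurestr[i\mapsto\iota_{\malloc,0}]$ to get $W(i)\futurestr\iota_{\malloc,0}$), $\size=k\ge 0$, $w_{\var{ret}}=c_{\mathit{ret}}$, and $\heapSat[\ms_{\var{footprint}}]{n-i_1}{[i\mapsto W(i)]}$ (from the hypothesis via Lemma~\ref{lem:heap-sat-dc}). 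This yields $j>0$, segments $\ms_{\var{footprint}}'$ and $\ms_{\var{alloc}}$, a region $\iota_\malloc'\futurewk W(i)$, and a configuration $\Phi'$ with $(\reg_1,\ms\uplus\ms_{\mathit{frame}})\step[j]\Phi'$, $\memheap[\Phi']=\ms_{\var{footprint}}'\uplus\ms_{\var{alloc}}\uplus\ms'\uplus\ms_{\mathit{frame}}$, $\heapSat[\ms_{\var{footprint}}']{(n-i_1)-j}{[i\mapsto\iota_\malloc']}$, $\dom(\ms_{\var{alloc}})=[b',e']$ all-zero with $e'-b'=k-1$, and $\memreg[\Phi']=\reg_1\update{\pcreg}{\updatePcPerm{c_{\mathit{ret}}}}\update{r_1}{((\rwx,\glob),b',e',b')}$. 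By Lemma~\ref{lem:determinacy} the spec-run is a prefix of the assumed halting run, so $j\le l$ and $\Phi'\step[l-j](\halted,m')$.

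Then from $\Phi'$ I would run the deterministic tail \texttt{move r r\_1}; \texttt{move} $r_0$ $r_{t1}$; \texttt{move} $r_1$ $0$; \texttt{move} $r_{t1}$ $0$ ($i_2$ steps, memory unchanged), reaching $(\reg',\memheap[\Phi'])$ with $\reg'(\pcreg)=c_{\var{next}}$, $\reg'(r)=((\rwx,\glob),b',e',b')$, $\reg'(r_0)=\reg(r_0)$, and $\reg'(r'')=\reg(r'')$ for $r''\notin\RegName_t\union\{\pcreg,r,r_1\}$; determinacy again gives $i_2\le l-j$ and $(\reg',\memheap[\Phi'])\step[l-j-i_2](\halted,m')$. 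Now apply Hyp-Cont with $n'=(n-i_1)-j\le n-1$, region $\iota_\malloc'$, footprint $\ms_{\var{footprint}}'$, allocation $\ms_{\var{alloc}}$, and register file $\reg'$; all premises but the world satisfaction are immediate from the above. For $\memSat[n']{\ms_{\var{footprint}}'\uplus\ms'}{W\update{i}{\iota_\malloc'}}$ I would take the partition of $\ms$ witnessing $\memSat{\ms}{W}$ in which region $i$ carries $\ms_{\var{footprint}}$ and $\ms'$ carries the rest, observe that the regions other than $i$ are unchanged and preserved under the public update $W\update{i}{\iota_\malloc'}\futurewk W$, and combine these with $\heapSat[\ms_{\var{footprint}}']{n'}{[i\mapsto\iota_\malloc']}$ via Lemma~\ref{lem:disj-mem-sat} (and downwards closure). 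Hyp-Cont then yields $\npair[n']{(\reg',\ms'\uplus\ms_{\var{footprint}}'\uplus\ms_{\var{alloc}})}\in\observations(W\update{i}{\iota_\malloc'})$. Unfolding this observation with the same $\ms_{\mathit{frame}}$ and the run $(\reg',\memheap[\Phi'])\step[l-j-i_2](\halted,m')$ produces $W''\futurestr W\update{i}{\iota_\malloc'}$, $\ms_r$, $\ms''$ with $m'=\ms''\uplus\ms_r\uplus\ms_{\mathit{frame}}$ and $\heapSat[\ms'']{n'-(l-j-i_2)}{W''}$. Since $\iota_\malloc'\futurestr W(i)$ by Lemma~\ref{lem:future-pub-impl-future-priv} we get $W''\futurestr W$, and Lemma~\ref{lem:heap-sat-dc} lowers the count to $(n-i_1)-l$, which is exactly what $\observations(W)$ demands.

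The hard part will not be any single calculation but the bookkeeping at the call site: because malloc allocates frame-dependent fresh addresses, anti-reduction cannot be threaded across the call, so I must unfold $\observations$ explicitly and use determinacy (Lemma~\ref{lem:determinacy}) to realign the specification's run with the assumed halting run. The second delicate point is rebuilding $\memSat[n']{\ms_{\var{footprint}}'\uplus\ms'}{W\update{i}{\iota_\malloc'}}$ from the spec's footprint satisfaction and the untouched regions, which hinges on the public monotonicity of the surviving region interpretations and on the disjointness supplied by the footprint/frame decomposition.
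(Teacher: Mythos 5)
The paper states Lemma~\ref{lem:malloc-works} without any accompanying proof, so there is no reference argument to compare against; I can only assess your proposal on its own terms, and on those terms it is sound. Your recipe --- peel off the deterministic pre-jump instructions with Lemma~\ref{lem:anti-red-obs}, then unfold $\observations$ explicitly, invoke Specification~\ref{spec:malloc} with $\iota_\malloc = W(i)$, realign the specification's run with the assumed halting run via Lemma~\ref{lem:determinacy}, run the deterministic tail, and discharge the rest with Hyp-Cont --- closely parallels the one fully worked proof in the paper that calls the malloc specification, namely that of Lemma~\ref{lem:malloc-in-vr}, which performs exactly the same unfold-spec-determinacy-recombine dance (there the continuation is handled by $\stder$ of the return pointer rather than by a Hyp-Cont premise, and an extra $\iota^\nwl$ region is installed for the allocated block, which your lemma does not need since Hyp-Cont carries $\ms_{\var{alloc}}$ as raw memory outside the world). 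Your two flagged subtleties are the right ones: the allocation's dependence on the ambient frame genuinely blocks threading Lemma~\ref{lem:anti-red-obs} across the call, and rebuilding $\memSat[n']{\ms_{\var{footprint}}' \uplus \ms'}{W\update{i}{\iota_\malloc'}}$ does rest on $W\update{i}{\iota_\malloc'} \futurewk W$ together with public monotonicity of the surviving regions' interpretations and Lemma~\ref{lem:disj-mem-sat}. The only points I would tighten are cosmetic at the paper's level of rigor: you should note explicitly that if the halting run is too short to complete the prefix or the tail, determinacy rules that out (an intermediate non-halted configuration cannot equal $(\halted,m')$), and that the closing world comparison is really Lemma~\ref{lem:future-priv-pub-trans} applied to $W'' \futurestr W\update{i}{\iota_\malloc'} \futurewk W$ rather than Lemma~\ref{lem:future-pub-impl-future-priv} alone.
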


\subsubsection{Create closure macro-instruction correctness}
\begin{lemma}[$\mathtt{crtcls}$ works]
  \label{lem:crtcls-works}
  If
  \begin{itemize}
  \item \lookingat{(\reg,\ms)}{\mathtt{crtcls}\;\overline{(x,r)}\;r}{c_{\var{next}}}
  \item \linksto{(\reg,\ms)}{\malloc}{k}{c_\malloc}
  \item $c_\malloc$ satisfies the $\malloc$ specification with $\iota_{\malloc,0}$
  \item $W \futurestr [i \mapsto\iota_{\malloc,0}]$
  \item $\memSat{\ms}{W}$
  \item $\ms = \ms' \uplus \ms_{\var{footprint}}$
  \item $\memSat{\ms_{\var{footprint}}}{[i \mapsto W(i)]}$
  \item[Hyp-Cont]
    If
    \begin{itemize}
    \item $n' \leq n$
    \item $\iota_\malloc \futurewk W(i)$
    \item $\memSat[n']{\ms' \uplus \ms_{\var{footprint}}'}{W[i \mapsto \iota_\malloc]}$
    \item $\memSat{\ms_{\var{footprint}}'}{[i \mapsto \iota_\malloc]}$
    \item
      \begin{equation*}
        \reg'(r') =
        \begin{cases}
          c_{\var{next}} & r' = \pcreg \\
          c_\cls = ((\entry,\glob),\start,\addrend,\start+2) & r' = r_1 \\
          \reg(r)      & r' \not\in \{\pcreg, r_1\} \union \RegName_t
        \end{cases}
      \end{equation*}
    \item $\ms_\cls = \ms_\act \uplus \ms_\env$
    \item $c_\cls = ((\entry,\glob),\dots)$
    \item $c_\env = ((\readwrite,\glob),\start_\env,\addrend_\env,\start_\env)$
    \item $\dom(\ms_\env) = [\start_\env,\addrend_\env]$
    \item $\ms_\env(\start_\env,\dots,\addrend_\env) = \reg(\overline{r})$
    \item Hyp-act\\ If
      \begin{itemize}
      \item $\reg''(\pcreg) = \updatePcPerm{c_\cls}$
      \end{itemize}
      Then  $\exists k \ldotp \forall \ms_f \ldotp (\reg'',\ms'' \uplus \ms_\cls \uplus \ms_f) \step[k] (\reg''',\ms'' \uplus \ms_\cls \uplus \ms_f)$
      where
        \begin{equation*}
          \reg'''(r') =
          \begin{cases}
            c_\env & r' = c_\env \\
            \updatePcPerm{\reg(r)} & r' = \pcreg \\
            \reg''(r') & r' \not\in \RegName_t
          \end{cases}
        \end{equation*}
    \end{itemize}
    Then we have $\npair[n']{(\reg',\ms' \uplus \ms_{\var{footprint}} \uplus \ms_\cls)} \in \observations(W[i \mapsto \iota_\malloc])$
  \end{itemize}
  Then 
  \[
    \npair{(\reg,\ms)} \in \observations(W)
  \]
\end{lemma}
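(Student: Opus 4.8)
The plan is to simulate the \texttt{crtcls} macro instruction by instruction, discharging its two internal \texttt{malloc} calls with Lemma~\ref{lem:malloc-works} and the intervening straight-line code with the anti-reduction lemma for $\observations$ (Lemma~\ref{lem:anti-red-obs}), and finally feeding the resulting configuration into the lemma's own Hyp-Cont. Since the machine is looking at \texttt{crtcls}, whose first instruction is \texttt{malloc $r_{t1}$ $n$}, I would first apply Lemma~\ref{lem:malloc-works} with the given $\iota_{\malloc,0}$, the linking and specification premises, the world $W$, and allocation size $n$. This turns the goal into the Hyp-Cont of \texttt{malloc}: showing that the post-allocation configuration, which owns a fresh zeroed block of size $n$ (the future environment) and has region $i$ advanced to some $\iota_\malloc^{1}\futurewk W(i)$, lies in $\observations(W[i\mapsto\iota_\malloc^{1}])$.

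Next I would peel off the environment-filling code. Using Lemma~\ref{lem:anti-red-obs}, the $n$ stores together with \texttt{lea $r_{t1}$ $-n$} and \texttt{restrict $r_{t1}$ (Global,rw)} run deterministically for any frame, transforming the zeroed block into $\ms_\env$ with contents $\reg(\overline{r})$ and placing $c_\env=((\readwrite,\glob),\start_\env,\addrend_\env,\start_\env)$ in $r_{t1}$, while all world-governed memory is carried unchanged as the preserved part. I would then apply Lemma~\ref{lem:malloc-works} a second time to \texttt{malloc $r_1$ $8$}, now at world $W[i\mapsto\iota_\malloc^{1}]$; its premise $W[i\mapsto\iota_\malloc^{1}]\futurestr[i\mapsto\iota_{\malloc,0}]$ follows by transitivity from $\iota_\malloc^{1}\futurewk W(i)$ (hence $\futurestr$, by Lemma~\ref{lem:future-pub-impl-future-priv}) and $W\futurestr[i\mapsto\iota_{\malloc,0}]$. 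This advances region $i$ once more to $\iota_\malloc\futurewk\iota_\malloc^{1}$, so the continuation world is exactly $W[i\mapsto\iota_\malloc]$, matching Hyp-Cont. Crucially, the environment block $\ms_\env$ is inert across this second allocation: nothing writes to it and its only capability lives in $r_{t1}$, which is cleared right after it is stored, so $\ms_\env$ can be threaded as extra preserved memory (an extra preserved summand only ever joins the revoked part in the definition of $\observations$).

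A second use of Lemma~\ref{lem:anti-red-obs} handles the record-filling code: storing the code capability $\reg(r)$ and the environment capability $c_\env$ at the first two cells, clearing $r_{t1}$, writing the six activation instructions $i_1,\dots,i_6$, and finally \texttt{lea $r_1$ $-5$} and \texttt{restrict $r_1$ (e)}. This yields $\ms_\act=[\,\reg(r),c_\env,i_1,\dots,i_6\,]$ and $c_\cls=((\entry,\glob),\start_\act,\addrend_\act,\start_\act+2)$ in $r_1$. The reached configuration now matches Hyp-Cont with $\ms_\cls=\ms_\act\uplus\ms_\env$, the prescribed $\reg'$, $c_\cls$, $c_\env$, and the domain and contents of $\ms_\env$. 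To invoke Hyp-Cont I would discharge its remaining premises: the obligation $\memSat[n']{\ms'\uplus\ms_{\var{footprint}}'}{W[i\mapsto\iota_\malloc]}$ comes from the \texttt{malloc} continuation together with downward closure (Lemma~\ref{lem:heap-sat-dc}), and the nested Hyp-act is established by a finite trace: fixing $\reg''$ with $\reg''(\pcreg)=\updatePcPerm{c_\cls}$ (an $\exec$ capability pointing at $\start_\act+2$), the instructions $i_1,\dots,i_6$ only read from $\ms_\act\subseteq\ms_\cls$ and rewrite registers, so for a suitable $k$ they run uniformly in $\ms''$ and any frame to the stated $\reg'''$ (environment capability in $r_\env$, an executable copy of the code capability in $\pcreg$). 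Finally, the $\observations(W[i\mapsto\iota_\malloc])$ delivered by Hyp-Cont propagates back through the two anti-reduction steps and the two \texttt{malloc} continuations, composing the world orderings by transitivity and Lemma~\ref{lem:future-priv-pub-trans}, to give $\npair{(\reg,\ms)}\in\observations(W)$.

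I expect the main obstacle to be the frame-and-world bookkeeping across the two allocations: keeping the already-built environment block $\ms_\env$ inert and correctly threaded while the closure record is allocated and populated, and arranging the two updates of region $i$ so that the final world is precisely $W[i\mapsto\iota_\malloc]$ with the memory decomposing as $\ms_\cls=\ms_\act\uplus\ms_\env$ demanded by Hyp-Cont. Discharging Hyp-act is conceptually routine but requires getting the closure-record offsets exactly right so that $i_1,\dots,i_6$ genuinely recover the environment capability and the code capability.
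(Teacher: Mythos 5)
The paper states Lemma~\ref{lem:crtcls-works} without giving a proof (as with Lemma~\ref{lem:malloc-works}, it is asserted as a correctness property of the macro), so there is no official argument to compare yours against. Judged on its own terms, your plan is the natural one and matches the style the paper uses where it \emph{does} prove such results (the proof of Lemma~\ref{lem:scall-works} and the uses of Lemma~\ref{lem:anti-red-obs} in the example proofs): simulate the macro, discharge each embedded \texttt{malloc} with Lemma~\ref{lem:malloc-works}, bridge the straight-line segments with anti-reduction, and land in the lemma's own Hyp-Cont. Your handling of the world bookkeeping is right: the two applications of Lemma~\ref{lem:malloc-works} update region $i$ to $\iota_\malloc^{1}$ and then to $\iota_\malloc$ with $\iota_\malloc \futurewk \iota_\malloc^{1} \futurewk W(i)$, so transitivity of $\futurewk$ gives exactly the $\iota_\malloc \futurewk W(i)$ and the world $W[i\mapsto\iota_\malloc]$ that Hyp-Cont expects, and the premise of the second application follows from Lemma~\ref{lem:future-pub-impl-future-priv} as you say.

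The one step you should not wave through is the threading of $\ms_\env$ across the second allocation. Lemma~\ref{lem:malloc-works} requires $\memSat{\ms}{W}$ for the \emph{entire} memory of the configuration it is applied to, and $\ms_\env$ is governed by no region of $W[i\mapsto\iota_\malloc^{1}]$, so it cannot simply sit inside that $\ms$. Your parenthetical that ``an extra preserved summand only ever joins the revoked part'' is not a definition-level fact about $\observations$; what you actually need is a small framing lemma of the shape: if $\npair{(\reg,\ms)}\in\observations(W)$ and $\ms_0$ is disjoint, then $\npair{(\reg,\ms\uplus\ms_0)}\in\observations(W)$ --- provable by pushing $\ms_0$ into the universally quantified frame of $\observations$ and then into the existential $\ms_r$ of the conclusion. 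You need this both to apply Lemma~\ref{lem:malloc-works} the second time and, in the other direction, to pull $\ms_\env$ back out as an explicit summand of $\ms_\cls=\ms_\act\uplus\ms_\env$ when you instantiate Hyp-Cont, whose observation is stated over $\ms'\uplus\ms_{\var{footprint}}\uplus\ms_\cls$. Make that lemma explicit and the rest of your plan goes through; the remaining work (Hyp-act, the register contents of $\reg'$, the exact offsets in the activation record) is routine simulation as you note.
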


\subsubsection{Stack helper lemmas}
\begin{lemma}
  \label{lem:stack-exec-cond-help}
  If
  \begin{itemize}
  \item $\perm \in \{\exec,\rwx,\rwlx \}$
  \item $\npair{(\start,\addrend)} \readCond{}(\local)(W)$
  \item $\npair{(\start,\addrend)} \writeCond{}(\iota^\pwl,\local)(W)$
  \end{itemize}
  then
  \[
    \npair[n]{\perm,\start,\addrend} \in \execCond{}(\local)(W)
  \]
\end{lemma}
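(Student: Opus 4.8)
The plan is to reduce the statement directly to the fundamental theorem of logical relations (Theorem~\ref{thm:ftlr}). First I would unfold the definition of $\execCond{}(\local)(W)$: because the locality is $\local$, the relevant future-world relation is $\futurewk$, so it suffices to fix arbitrary $n' < n$, $W' \futurewk W$, and $\addr \in [\start',\addrend'] \subseteq [\start,\addrend]$, and to show
\[
  \npair[n']{((\perm,\local),\start',\addrend',\addr)} \in \stder(W').
\]
The idea is to obtain this membership from Theorem~\ref{thm:ftlr}, instantiated at step-index $n'$, world $W'$, range $[\start',\addrend']$, and the given permission $\perm$.

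To invoke the theorem I must re-establish the read/write side-conditions, originally assumed at the parameters $(n,W,[\start,\addrend])$, at the new parameters $(n',W',[\start',\addrend'])$. For the read condition this is routine: from $\npair[n]{(\start,\addrend)} \in \readCond{}(\local)(W)$ I would obtain $\npair[n']{(\start',\addrend')} \in \readCond{}(\local)(W')$ by downwards closure (Lemma~\ref{lem:readcond-dc}) to descend to $n'$, public monotonicity (Lemma~\ref{lem:readcond-mono-pub}) to pass to $W'$, and the observation that the existential witness range in the definition of $\readCond{}$ still contains the shrunk range $[\start',\addrend'] \subseteq [\start,\addrend]$, so membership transports immediately to any sub-range. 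These three steps commute, since each preserves the parameters modified by the others.

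I would then split on the three possible values of $\perm$, which are exhaustive by hypothesis. If $\perm = \exec$, Theorem~\ref{thm:ftlr} needs only the read condition, so the previous paragraph already supplies everything. If $\perm = \rwlx$, the theorem requires $\writeCond{}(\iota^\pwl,\local)$, which is exactly the hypothesis, transported to $(n',W',[\start',\addrend'])$ by Lemmas~\ref{lem:writecond-dc} and~\ref{lem:writecond-mono-pub} together with the same sub-range remark. If $\perm = \rwx$, the theorem instead requires $\writeCond{}(\iota^\nwl,\local)$; here I would first weaken the assumed $\iota^\pwl$ write condition to the $\iota^\nwl$ one via Lemma~\ref{lem:wc-pwl-implies-wc-nwl}, and then transport it as before. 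In each case Theorem~\ref{thm:ftlr} delivers the required $\stder(W')$ membership, closing the proof.

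There is no deep obstacle here: the argument is essentially bookkeeping over the preservation properties of $\readCond{}$ and $\writeCond{}$. The only point demanding care is matching the write-condition flavour to the permission---keeping the $\iota^\pwl$ condition for $\rwlx$ but weakening it to $\iota^\nwl$ for $\rwx$, while $\exec$ needs none---and making sure that range restriction, step-index downward closure, and public-future monotonicity are each applied (in any order) so that the hypotheses line up exactly with the form demanded by the FTLR.
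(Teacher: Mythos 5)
Your proposal is correct and follows essentially the same route as the paper's proof: unfold $\execCond{}$, case-split on $\perm$, and discharge each case by the FTLR after transporting the read/write conditions via downwards closure (Lemmas~\ref{lem:readcond-dc}, \ref{lem:writecond-dc}), public monotonicity (Lemmas~\ref{lem:readcond-mono-pub}, \ref{lem:writecond-mono-pub}), and Lemma~\ref{lem:wc-pwl-implies-wc-nwl} for the $\rwx$ case. If anything, you are more careful than the paper in handling the sub-range quantification $[\start',\addrend'] \subseteq [\start,\addrend]$ in the definition of $\execCond{}$ and in matching the correct write-condition flavour to each permission.
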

\begin{proof}[Proof of Lemma~\ref{lem:stack-exec-cond-help}]
Assume
\begin{enumproof}
\item $\perm \in \{\exec,\rwx,\rwlx \}$
\item $\npair{(\start,\addrend)} \readCond{}(\local)(W)$ \label{pf:stack-exec:ass:readcond}
\item $\npair{(\start,\addrend)} \writeCond{}(\iota^\pwl,\local)(W)$ \label{pf:stack-exec:ass:writecond}
\end{enumproof}
Let $W' \futurewk W$, $\addr$, and $n' \leq n$ be given and show
\[
  \npair[n']{((\perm,\local),\start,\addrend,\addr)} \in \stder(W')
\]
Consider each of the three cases for $\perm$:
\begin{enumproof}[resume]
  \item $\perm = \rwlx$\\
    In this case $\iota= \iota^\pwl$. If we use the FTLR (Theorem~\ref{thm:ftlr}), then we are done. It suffices to show:
    \begin{enumproof}
      \item $\npair[n']{(\start,\addrend)} \in \readCond{}(\local)(W')$ \\
        Follows from Lemma~\ref{lem:readcond-mono-pub}, Lemma~\ref{lem:readcond-dc}, and assumption~\ref{pf:stack-exec:ass:readcond}.
      \item $\npair[n']{(\start,\addrend)} \in \writeCond{}(\iota^\pwl,\local)(W')$\\
        Follows from Lemma~\ref{lem:writecond-mono-pub}, Lemma~\ref{lem:readcond-dc}, and assumption~\ref{pf:stack-exec:ass:writecond}.
    \end{enumproof}
  \item $\perm = \exec$\\
    In this case $\iota= \iota^\nwl$. If we use the FTLR (Theorem~\ref{thm:ftlr}), then we are done. It suffices to show:
    \begin{enumproof}
      \item $\npair[n']{(\start,\addrend)} \in \readCond{}(\local)(W')$ \\
        Follows from Lemma~\ref{lem:readcond-mono-pub}, Lemma~\ref{lem:readcond-dc}, and assumption~\ref{pf:stack-exec:ass:readcond}.
      \item $\npair[n']{(\start,\addrend)} \in \writeCond{}(\iota^\nwl,\local)(W')$\\
        Follows from Lemma~\ref{lem:wc-pwl-implies-wc-nwl}, Lemma~\ref{lem:writecond-mono-pub}, Lemma~\ref{lem:readcond-dc}, and assumption~\ref{pf:stack-exec:ass:writecond}.
    \end{enumproof}
  \item $\perm = \rwx$\\
    In this case $\iota= \iota^\nwl$. If we use the FTLR (Theorem~\ref{thm:ftlr}), then we are done. It suffices to show:
    \begin{enumproof}
      \item $\npair[n']{(\start,\addrend)} \in \readCond{}(\local)(W')$ \\
        Follows from Lemma~\ref{lem:readcond-mono-pub}, Lemma~\ref{lem:readcond-dc}, and assumption~\ref{pf:stack-exec:ass:readcond}.
      \end{enumproof}
\end{enumproof}
\end{proof}

\begin{lemma}[Stack capability in value relation]
  \label{lem:stack-cap-vr}
  If
  \begin{itemize}
  \item \pointstostack{\reg}{\emptyset}{\ms}
  \item $\exists r \ldotp W(r) = \iota^\pwl (\dom(\ms))$
  \end{itemize}
  then
  \[
    \npair{\reg(r_\stk)} \in \stdvr(W)
  \]
\end{lemma}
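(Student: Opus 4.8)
The plan is to unfold the stack-pointer hypothesis and then verify, one conjunct at a time, the five membership facts required by the $\rwlx$ clause of the value relation $\stdvr(W)$. Unfolding \pointstostack{\reg}{\emptyset}{\ms} gives $\reg(r_\stk) = ((\rwlx,\local),b_\stk,e_\stk,a_\stk)$ where the used part is empty, which forces $a_\stk = b_\stk - 1$ (the interval $[b_\stk,a_\stk]$ is empty and $b_\stk - 1 \leq a_\stk$), and consequently $\dom(\ms) = [a_\stk+1,e_\stk] = [b_\stk,e_\stk]$. The second hypothesis supplies a region name $r$ with $W(r) = \iota^\pwl(\dom(\ms)) = \iota^\pwl_{b_\stk,e_\stk}$. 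Since $\iota^\pwl$ is a $\temp$ region, $r \in \localityReg(\local,W)$; this is exactly why the locality must be $\local$ and not $\glob$, as global capabilities may only govern permanent regions.

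With $\gl = \local$, $\start = b_\stk$, $\addrend = e_\stk$, first I would establish $\npair{(b_\stk,e_\stk)} \in \readCond{}(\local)(W)$ by choosing the witness region $r$ and the bound $[b_\stk,e_\stk]$ itself, which reduces the goal to $W(r) \nsubsim \iota^\pwl_{b_\stk,e_\stk}$; this is just reflexivity of $\nsubsim$ since $W(r) = \iota^\pwl_{b_\stk,e_\stk}$. Next I would establish $\npair{(b_\stk,e_\stk)} \in \writeCond{}(\iota^\pwl,\local)(W)$ using the same witness; here the two obligations are $W(r) \nsupsim[n-1] \iota^\pwl_{b_\stk,e_\stk}$ (again reflexivity) and that $W(r)$ is address-stratified, which is precisely Lemma~\ref{lem:iota-pwl-address-stratified}.

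Having secured the read condition and the $\iota^\pwl$ write condition at locality $\local$, the three remaining execute conditions, namely $\npair{(\rwlx,b_\stk,e_\stk)}$, $\npair{(\rwx,b_\stk,e_\stk)}$, and $\npair{(\exec,b_\stk,e_\stk)}$ all in $\execCond{}(\local)(W)$, follow immediately from three applications of Lemma~\ref{lem:stack-exec-cond-help}, whose hypotheses are exactly the two conditions I have just derived. Collecting these five facts and feeding them into the $\rwlx$ clause of the definition of $\stdvr(W)$ yields $\npair{\reg(r_\stk)} \in \stdvr(W)$.

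I do not expect a substantive obstacle: every step is a reflexivity instance, an invocation of a prior lemma, or routine unfolding. The only points demanding care are the locality bookkeeping, checking that a $\temp$ region lands in $\localityReg(\local,W)$ so that the $\local$ capability is correctly accounted for, and the degenerate empty-used-part case $a_\stk = b_\stk - 1$, which must be handled so that $\dom(\ms)$ is genuinely the closed interval $[b_\stk,e_\stk]$ governed by the $\iota^\pwl$ region named by $r$.
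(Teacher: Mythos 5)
Your proof follows essentially the same route as the paper's: establish $\readCond{}(\local)(W)$ and $\writeCond{}(\iota^\pwl,\local)(W)$ by reflexivity of $\nsubsim$/$\nsupsim$ against the region $\iota^\pwl(\dom(\ms))$ (with Lemma~\ref{lem:iota-pwl-address-stratified} for address-stratification), then discharge the three execute conditions via Lemma~\ref{lem:stack-exec-cond-help} and assemble the $\rwlx$ clause of $\stdvr(W)$. Your additional bookkeeping on the empty used part and on why the $\temp$ region lands in $\localityReg(\local,W)$ is correct and slightly more explicit than the paper's version.
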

\begin{proof}[Proof of Lemma~\ref{lem:stack-cap-vr}]
  Say 
\[
  \reg(r_\stk) = c_\stk = ((\rwlx,\local),\start,\addrend,\_)
\]
  Show
  \begin{enumproof}
    \item $\npair{(\start,\addrend)} \in \readCond{}(\local)(W)$ : \label{pf:stack-cap-read}\\
      Amounts to
      \[
        \iota^\pwl (\dom(\ms)) \nsubsim \iota^\pwl_{\start,\addrend}
      \]
      which is true as they are even equal.
    \item $\npair{(\start,\addrend)} \in \writeCond{}(\iota^\pwl,\local)(W)$ : \label{pf:stack-cap-write}\\
      Using Lemma~\ref{lem:iota-pwl-address-stratified}, this amounts to
      \[
        \iota^\pwl (\dom(\ms)) \nsupsim \iota^\pwl_{\start,\addrend}
      \]
      which is true as they are even equal.
      \item $\npair{(\rwlx,\start_\stk, \addrend_\stk)} \in \execCond{}(\local)(W)$\\
        Using \ref{pf:stack-cap-write} and \ref{pf:stack-cap-read}, we can use Lemma~\ref{lem:stack-exec-cond-help}.
      \item $\npair{(\rwx,\start_\stk, \addrend_\stk)} \in \execCond{}(\local)(W)$\\
        Using \ref{pf:stack-cap-write} and \ref{pf:stack-cap-read}, we can use Lemma~\ref{lem:stack-exec-cond-help}.
      \item $\npair{(\exec,\start_\stk, \addrend_\stk)} \in \execCond{}(\local)(W)$\\
        Using \ref{pf:stack-cap-read} and \ref{pf:stack-cap-write}, we can use Lemma~\ref{lem:stack-exec-cond-help}.
  \end{enumproof}
\end{proof}

\subsubsection{Memory Segment Satisfaction}
We expect the following lemmas to hold true:


\begin{lemma}[Revoke temporary memory satisfaction]
  \label{lem:priv-mono-like}
  \begin{align*}
    &      \forall \hs, n, W, W' \ldotp \\
    & \quad  \heapSat[\hs]{n}{W} \Rightarrow\\
    & \qquad \exists \hs', \hs_r \ldotp \\
    & \qquad \quad \hs = \hs' \uplus \hs_r \land \heapSat[\hs']{n}{\wrev{W}}
  \end{align*}
\end{lemma}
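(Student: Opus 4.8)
The plan is to extract the partition witnessing $\heapSat[\hs]{n}{W}$ and split it according to whether each active region is permanent or temporary in $W$, keeping the permanent part as $\hs'$ and collecting the temporary part into $\hs_r$. (The spurious $W'$ in the quantifier prefix is unused, so I simply ignore it.)

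First I would unfold $\heapSat[\hs]{n}{W}$ to obtain a partition $P : \activeReg{W} \to \HeapSegments$ with $\memSatPar{\ms}{W}{P}$, recalling that $\activeReg{W} = \dom(\erase{W}{\perma,\temp})$. The key observation is that revoking the temporary regions leaves exactly the permanent ones active, i.e. $\activeReg{\revokeTemp{W}} = \dom(\erase{W}{\perma}) \subseteq \activeReg{W}$, and that for every such surviving region name $r$ we have $\revokeTemp{W}(r) = W(r)$. I would then set $\hs' = \biguplus_{r \in \activeReg{\revokeTemp{W}}} P(r)$ and $\hs_r = \biguplus_{r \in \activeReg{W} \setminus \activeReg{\revokeTemp{W}}} P(r)$. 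Since the $\biguplus$ in $\memSatPar{\ms}{W}{P}$ is a genuine disjoint union, partitioning the index set $\activeReg{W}$ into its permanent and temporary parts yields $\hs = \hs' \uplus \hs_r$ at once.

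It then remains to establish $\heapSat[\hs']{n}{\revokeTemp{W}}$, for which I would use $P' = P\restrictfun{\activeReg{\revokeTemp{W}}}$ as the witnessing partition. The first conjunct of $\memSatPar{\hs'}{\revokeTemp{W}}{P'}$, namely $\hs' = \biguplus_{r \in \activeReg{\revokeTemp{W}}} P'(r)$, holds by construction. For the second conjunct I fix a surviving (hence permanent) region $r$ with $W(r) = \revokeTemp{W}(r) = (\perma, s, \_, \_, H)$; from $\memSatPar{\ms}{W}{P}$ I already have $\npair[n]{P(r)} \in H(s)(\xi^{-1}(W))$, and the goal is to upgrade this to $\npair[n]{P(r)} \in H(s)(\xi^{-1}(\revokeTemp{W}))$.

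This last upgrade is the main obstacle: it amounts to replacing the world argument $W$ of the region interpretation by its revocation $\revokeTemp{W}$. I would close it using three facts. First, $\revokeTemp{W} \futurestr W$: indeed $\dom$ is unchanged, each temporary region steps to $\revoked$ (legal since $\revoked \futurestr (\temp,s,\phi_\pub,\phi,H)$), and every permanent or revoked region is kept, so the pointwise private-future ordering holds. Second, because $r$ is permanent its interpretation lives in $\States \fun (\Wor \monstrnefun \UPred{\HeapSegments})$, so $H(s)$ is monotone with respect to $\futurestr$. Third, by Theorem~\ref{thm:world-existence} the isomorphism $\xi$ (and hence $\xi^{-1}$) preserves $\futurestr$, whence $\xi^{-1}(\revokeTemp{W}) \futurestr \xi^{-1}(W)$. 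Applying monotonicity of $H(s)$ to the membership at $\xi^{-1}(W)$ then delivers the required membership at $\xi^{-1}(\revokeTemp{W})$, establishing $\memSatPar{\hs'}{\revokeTemp{W}}{P'}$ and therefore $\heapSat[\hs']{n}{\revokeTemp{W}}$, which completes the proof.
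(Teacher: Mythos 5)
Your proof is correct. The paper itself leaves the proof of this lemma empty (it sits under the remark ``We expect the following lemmas to hold true''), so there is no official argument to compare against; your argument --- splitting the witnessing partition over $\activeReg{W}$ into its permanent and temporary parts, observing that $\activeReg{\revokeTemp{W}} = \dom(\erase{W}{\perma})$ with $\revokeTemp{W}(r) = W(r)$ on that set, and transporting each permanent region's interpretation along $\revokeTemp{W} \futurestr W$ via the $\futurestr$-monotonicity built into permanent regions together with the order-preservation of $\xi$ from Theorem~\ref{thm:world-existence} --- is exactly the intended one, and in fact also yields the more refined Lemma~\ref{lem:priv-mono-like2}.
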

\begin{proof}[Proof of Lemma~\ref{lem:priv-mono-like}]
\end{proof}

\begin{lemma}[Revoke temporary memory satisfaction 2]
  \label{lem:priv-mono-like2}
  \begin{align*}
    & \forall \ms, n, W, R : \activeReg{W} \fun \MemSegments  \ldotp \\
    & \quad \memSatPar{\ms}{W}{P} \Rightarrow \\
    & \qquad \exists \ms', \ms_r\ldotp \\
    & \qquad \quad \ms = \ms' \uplus \ms_r \land \\
    & \qquad \quad \memSatPar{\ms'}{\wrev{W}}{P|_{\dom(\erase{W}{\perma})}} \land \\
    & \qquad \quad \ms_r = \biguplus_{r \in \erase{W}{\temp}} P(r) \land \\
    & \qquad \quad \ms' = \biguplus_{r \in \erase{W}{\perma}} P(r)
  \end{align*}
\end{lemma}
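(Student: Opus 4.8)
The plan is to take as the witnesses exactly the two memory segments named in the conclusion, namely $\ms' = \biguplus_{r \in \dom(\erase{W}{\perma})} P(r)$ and $\ms_r = \biguplus_{r \in \dom(\erase{W}{\temp})} P(r)$, and then verify the remaining conjuncts. First I would unfold the hypothesis $\memSatPar{\ms}{W}{P}$ to obtain $\ms = \biguplus_{r \in \activeReg{W}} P(r)$ together with, for every $r \in \activeReg{W}$, a decomposition $W(r) = (\_,s,\_,\_,H)$ with $\npair[n]{P(r)} \in H(s)(\xi^{-1}(W))$. Since every region carries exactly one view, $\activeReg{W} = \dom(\erase{W}{\perma,\temp})$ splits as the disjoint union $\dom(\erase{W}{\perma}) \uplus \dom(\erase{W}{\temp})$, so the big disjoint union over $\activeReg{W}$ factors and yields $\ms = \ms' \uplus \ms_r$; this simultaneously discharges the first conjunct and the two explicit equations for $\ms'$ and $\ms_r$.

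It then remains to establish $\memSatPar{\ms'}{\revokeTemp{W}}{P|_{\dom(\erase{W}{\perma})}}$. The key domain observation is that $\revokeTemp{W}$ sends every temporary region to $\revoked$ and leaves permanent (and revoked) regions untouched, so $\activeReg{\revokeTemp{W}} = \dom(\erase{\revokeTemp{W}}{\perma}) = \dom(\erase{W}{\perma})$. Hence the partition requirement $\ms' = \biguplus_{r \in \activeReg{\revokeTemp{W}}} (P|_{\dom(\erase{W}{\perma})})(r)$ is just the defining equation for $\ms'$, and for each such (permanent) $r$ we have $\revokeTemp{W}(r) = W(r) = (\_,s,\_,\_,H)$, so the state $s$ and interpretation $H$ are literally unchanged. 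What is left is to move the membership $\npair[n]{P(r)} \in H(s)(\xi^{-1}(W))$ to $\npair[n]{P(r)} \in H(s)(\xi^{-1}(\revokeTemp{W}))$.

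This last step is the main obstacle, and it is where the asymmetry between permanent and temporary regions is used. I would first check, using the private-future rules for regions, that $\revokeTemp{W} \futurestr W$: a temporary region moves to $\revoked$ by the rule $r \futurestr (\temp,s,\phi_\pub,\phi,H)$, a permanent region stays fixed by reflexivity of $\phi$, and a revoked region satisfies $\revoked \futurestr \revoked$; combined with the world-level rule for $\futurestr$ (equal domains, pointwise future) this gives $\revokeTemp{W} \futurestr W$, and by Theorem~\ref{thm:world-existence} the monotonicity of $\xi^{-1}$ yields $\xi^{-1}(\revokeTemp{W}) \futurestr \xi^{-1}(W)$. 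Since $r$ is permanent, its interpretation $H(s)$ is monotone with respect to $\futurestr$ (this is exactly the $\monstrnefun$ annotation in the codomain of permanent regions), so $H(s)(\xi^{-1}(W)) \nsubeq H(s)(\xi^{-1}(\revokeTemp{W}))$ and the membership transfers at level $n$. Collecting the three conjuncts then completes the proof; the only genuine content is the observation that revocation of temporaries is a private future step and that permanent interpretations are closed under such steps, while everything else is bookkeeping over disjoint unions and domains.
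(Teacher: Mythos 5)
Your proof is correct. The paper in fact leaves the proof of this lemma empty (it sits in the list of memory-satisfaction lemmas the authors merely ``expect to hold true''), so there is no official argument to compare against; your choice of witnesses and the one non-trivial step---$\revokeTemp{W} \futurestr W$ combined with the monotonicity of permanent region interpretations with respect to $\futurestr$, transported through $\xi^{-1}$ via Theorem~\ref{thm:world-existence}---are exactly what the surrounding development presupposes, so this correctly fills the gap.
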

\begin{proof}[Proof of Lemma~\ref{lem:priv-mono-like2}]
\end{proof}

\begin{lemma}[Revoke temporary memory with stack]
  \label{lem:revoke-temp-stack}
  \begin{align*}
    & \forall n, \ms, W, \reg, r_\stk,\gl,\start,\addrend,\addr \ldotp\\
    & \quad \memSat{\ms}{W} \land \npair{\reg} \in \stdrr(W) \land \\
    & \quad \reg(r_\stk) = ((\rwlx,\gl),\start,\addrend,\addr) \land b \leq e \\
    & \qquad \exists \ms', \ms_r \ldotp \\
    & \qquad \quad \memSat{\ms'}{\revokeTemp{W}} \land \ms = \ms' \uplus \ms_r
  \end{align*}
\end{lemma}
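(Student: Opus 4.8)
The plan is to observe that the conclusion of this lemma is \emph{literally} the conclusion of Lemma~\ref{lem:priv-mono-like}, which already produces a split $\ms = \ms' \uplus \ms_r$ with $\memSat{\ms'}{\revokeTemp{W}}$ from the single hypothesis $\memSat{\ms}{W}$. Hence the shortest sound route is simply to feed $\memSat{\ms}{W}$ into Lemma~\ref{lem:priv-mono-like} and return the $\ms'$, $\ms_r$ it hands back. Under this reading the remaining hypotheses ($\npair{\reg} \in \stdrr(W)$ and the shape of $\reg(r_\stk)$) are never used, and no obstacle remains.

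The name of the lemma and the stack hypotheses, however, suggest the intended content is sharper: that the \emph{stack memory itself} should land in the revoked part $\ms_r$ rather than surviving in $\ms'$. To make that precise I would first derive $\npair{\reg(r_\stk)} \in \stdvr(W)$ from $\npair{\reg} \in \stdrr(W)$ — legitimate because $r_\stk \neq \pcreg$, so the register relation immediately yields the value relation for $\reg(r_\stk)$. Combining this with $\memSat{\ms}{W}$, the hypothesis $\start \leq \addrend$, and $\perm = \rwlx \in \{\rwlx,\rwx\}$, Lemma~\ref{lem:pwl-stack} then supplies a region name $r$ and an interval $[\start',\addrend'] \supseteq [\start,\addrend]$ with $W(r) \nequal \iota^\pwl_{\start',\addrend'}$. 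Since every $\iota^\pwl$ region has view $\temp$, we get $r \in \erase{W}{\temp}$.

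Next I would invoke Lemma~\ref{lem:priv-mono-like2} rather than Lemma~\ref{lem:priv-mono-like}: unfolding $\memSat{\ms}{W}$ to a partition $P$ and applying that lemma gives $\ms = \ms' \uplus \ms_r$ with $\ms_r = \biguplus_{r' \in \erase{W}{\temp}} P(r')$, $\ms' = \biguplus_{r' \in \erase{W}{\perma}} P(r')$, and $\memSat{\ms'}{\revokeTemp{W}}$. Because the stack region $r$ is temporary, its slice $P(r)$ is collected into $\ms_r$; and by memory-segment satisfaction against $H^\pwl_{\start',\addrend'}$ this slice is exactly the chunk of $\ms$ with domain $[\start',\addrend'] \supseteq [\start,\addrend]$. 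This is precisely the refinement the stack hypotheses are meant to buy, and it subsumes the stated (weaker) conclusion.

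The only genuine work, and therefore the main obstacle, is the application of Lemma~\ref{lem:pwl-stack}: one must verify its side conditions and correctly extract the temporary $\pwl$ region governing the stack, taking care at $n = 0$ where the whole statement degenerates (one may take $\ms_r = \emptyset$). Everything after the region is identified is bookkeeping about which partition slices are aggregated into $\ms_r$ versus $\ms'$, which Lemma~\ref{lem:priv-mono-like2} already organizes for us.
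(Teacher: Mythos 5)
Your proposal is sound, and in fact the paper offers nothing to compare it against: the proof environment for Lemma~\ref{lem:revoke-temp-stack} is left empty (it sits in the list of memory-satisfaction lemmas the authors merely "expect to hold true"). Your first observation is the key one and is correct: the conclusion as literally stated is identical to that of Lemma~\ref{lem:priv-mono-like}, so a one-line appeal to that lemma discharges it from the single hypothesis $\memSat{\ms}{W}$, leaving the hypotheses $\npair{\reg} \in \stdrr(W)$ and $\reg(r_\stk) = ((\rwlx,\gl),\start,\addrend,\addr)$ (and the $\start \leq \addrend$ condition, written with a harmless variable-name mismatch as $b \leq e$ in the statement) entirely unused --- which correctly signals that the statement is weaker than intended. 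Your reconstruction of the intended content is also the right one: extract $\npair{\reg(r_\stk)} \in \stdvr(W)$ from the register-file relation (valid since $r_\stk \neq \pcreg$), apply Lemma~\ref{lem:pwl-stack} to obtain a region name $r$ with $W(r) \nequal \iota^\pwl_{\start',\addrend'}$ for some $[\start',\addrend'] \supseteq [\start,\addrend]$, note that for $n > 0$ this forces $W(r)$ to have view $\temp$ (the $n=0$ case being degenerate, as you flag), and then use Lemma~\ref{lem:priv-mono-like2} so that the slice $P(r)$, whose domain is $[\start',\addrend']$ by satisfaction against $H^\pwl_{\start',\addrend'}$, is aggregated into $\ms_r$. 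This yields the sharper conclusion that the stack memory lands in the revoked part, which subsumes the stated one. The only thing I would add is that this gap in the statement is worth reporting rather than silently repairing: as written the lemma is a strict consequence of Lemma~\ref{lem:priv-mono-like} and should either be strengthened to assert $\ms|_{[\start,\addrend]} \subseteq \ms_r$ or deleted.
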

\begin{proof}[Proof of Lemma~\ref{lem:revoke-temp-stack}]
\end{proof}

\begin{lemma}[Disjoint memory satisfaction]
  \label{lem:disj-mem-sat}
  \begin{align*}
    & \forall n \ldotp \forall \ms, \ms', \ms'' \ldotp \forall W, W', W'' \ldotp\\
    & \quad \ms'' = \ms \uplus \ms' \land W'' = W \uplus W' \land \memSat{\ms}{W} \land \memSat{\ms'}{W'} \Rightarrow \\
    & \qquad \heapSat[\ms'']{n}{W''}
  \end{align*}
\end{lemma}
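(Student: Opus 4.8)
The plan is to unfold both memory-satisfaction hypotheses into their underlying partitions and then glue those partitions together. First I would apply the definition of $\heapSat{}{}$ to $\memSat{\ms}{W}$ and $\memSat{\ms'}{W'}$ to obtain partitions $P : \activeReg{W} \fun \HeapSegments$ with $\memSatPar{\ms}{W}{P}$ and $P' : \activeReg{W'} \fun \HeapSegments$ with $\memSatPar{\ms'}{W'}{P'}$. Since $W'' = W \uplus W'$ is a disjoint union of worlds, the region-name domains of $W$ and $W'$ are disjoint; consequently $\activeReg{W''} = \activeReg{W} \uplus \activeReg{W'}$, because a region is active in $W''$ exactly when it is active in whichever of $W$, $W'$ contains it (the view of $W''(r)$ equals that of $W(r)$ or $W'(r)$). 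I would therefore take as witness the combined partition $P'' : \activeReg{W''} \fun \HeapSegments$ defined by $P''(r) = P(r)$ for $r \in \activeReg{W}$ and $P''(r) = P'(r)$ for $r \in \activeReg{W'}$.

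Next I would discharge the two conjuncts of $\memSatPar{\ms''}{W''}{P''}$. The disjoint-union condition $\ms'' = \biguplus_{r \in \activeReg{W''}} P''(r)$ follows immediately from $\ms = \biguplus_{r\in\activeReg{W}}P(r)$, $\ms' = \biguplus_{r\in\activeReg{W'}}P'(r)$, the hypothesis $\ms'' = \ms \uplus \ms'$, and the disjointness of the two index sets. For the per-region condition, let $r \in \activeReg{W''}$ and write $W''(r) = (\_,s,\_,\_,H)$. Assume without loss of generality that $r \in \activeReg{W}$ (the case $r \in \activeReg{W'}$ is symmetric); then $W''(r) = W(r)$ and $P''(r) = P(r)$, so $\memSatPar{\ms}{W}{P}$ already gives $\npair[n]{P''(r)} \in H(s)(\xi^{-1}(W))$, and it only remains to transport this membership from $\xi^{-1}(W)$ to $\xi^{-1}(W'')$.

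The key observation carrying this transport is that $W''$ is a domain extension of $W$ that leaves $W$ untouched on its own domain. By the world-extension rules, together with reflexivity of both region future relations (which holds because every $(\phi_\pub,\phi) \in \Rels$ is reflexive, so each active region is $\futurewk$- and $\futurestr$-below itself), we obtain both $W'' \futurewk W$ and $W'' \futurestr W$; by Theorem~\ref{thm:world-existence} these orderings are mirrored through $\xi^{-1}$ on $\Wor$. Now I split on the view of $W(r)$: if $r$ is a temporary region, then $H(s)$ is monotone with respect to $\futurewk$ (by the type of $\Regions$), and $W'' \futurewk W$ yields $\npair[n]{P(r)} \in H(s)(\xi^{-1}(W''))$; if $r$ is permanent, then $H(s)$ is monotone with respect to $\futurestr$, and $W'' \futurestr W$ yields the same conclusion. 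Revoked regions are not active, so do not arise.

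The main obstacle is precisely this last transport step: one must match the monotonicity direction available for each region class ($\futurewk$ for temporary, $\futurestr$ for permanent) against the corresponding future-world relation between $W''$ and $W$, and verify that a disjoint domain extension genuinely produces a future world in \emph{both} senses simultaneously. Once that is in place, everything else is routine bookkeeping about disjoint unions of partitions and of active-region sets.
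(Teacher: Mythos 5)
The paper itself leaves the proof of this lemma blank (it sits under the remark ``We expect the following lemmas to hold true''), so there is no authorial proof to compare against; judged on its own, your argument is correct and is the natural one. You correctly isolate the only non-routine step --- transporting $\npair[n]{P(r)} \in H(s)(\xi^{-1}(W))$ to $\xi^{-1}(W'')$ --- and discharge it soundly: a disjoint domain extension is a future world in both senses by reflexivity of the region relations, and the case split on the view supplies the matching monotonicity from the type of $\Regions$. A minor streamlining: since $\futurewk$ is contained in $\futurestr$ (Lemma~\ref{lem:future-pub-impl-future-priv}), interpretations of permanent regions are monotone w.r.t.\ $\futurewk$ as well, so establishing $W'' \futurewk W$ alone covers both cases without the split.
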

\begin{proof}[Proof of Lemma~\ref{lem:disj-mem-sat}]
\end{proof}

\begin{lemma}[Memory satisfaction and static regions]
  \label{lem:mem-sat-static}
  \[
    \memSat{\ms}{[i \mapsto \iota^\sta(v,\ms)]}
  \]
\end{lemma}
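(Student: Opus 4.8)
The plan is to unfold the definition of memory-segment satisfaction and supply the only sensible witness: the one-block partition that assigns all of $\ms$ to region $i$. First I would fix $n$, $v$, $i$ and $\ms$ and abbreviate $W = [i \mapsto \iota^\sta(v,\ms)]$. Recalling that $\iota^\sta(v,\ms) = (v,1,=,=,H^\sta\;\ms)$, its view is $v \in \{\perma,\temp\}$ (the only values for which it is a region, by Lemma~\ref{lem:iotasta-is-a-region}), so erasing everything but the permanent and temporary views leaves region $i$ intact and hence $\activeReg{W} = \{i\}$.

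Next I would define $P : \activeReg{W} \to \HeapSegments$ by $P(i) = \ms$ and verify $\memSatPar{\ms}{W}{P}$. The disjoint-union requirement $\ms = \biguplus_{r \in \{i\}} P(r)$ holds by construction. For the second conjunct there is only the case $r = i$ to treat: reading off $W(i) = (v,1,=,=,H^\sta\;\ms)$ yields state $s = 1$ and heap predicate $H = H^\sta\;\ms$, so the obligation reduces to showing $\npair{\ms} \in H^\sta\;\ms\;1\;(\xi^{-1}(W))$.

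Finally I would discharge this membership directly from the definition $H^\sta\;\ms\;s\;\hat{W} = \{\npair{\ms} \mid n > 0\} \union \{\npair[0]{\ms'} \mid \ms' \in \Mems\}$, noting that it ignores both its state and world arguments. If $n > 0$ then $\npair{\ms}$ lies in the first set; if $n = 0$ then $\npair[0]{\ms}$ lies in the second set by taking $\ms' = \ms$. Either way the membership holds, completing the verification. There is no substantial obstacle here: the only point that needs care is the $n = 0$ boundary, which is precisely why $H^\sta$ is padded with the clause admitting every memory segment at step $0$, and the fact that $H^\sta$ depends on neither $s$ nor $\hat{W}$ means no reasoning about the recursive world structure encoded in $\xi^{-1}(W)$ is required.
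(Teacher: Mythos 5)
Your proof is correct. The paper omits the proof of this lemma entirely (it appears under ``We expect the following lemmas to hold true'' with an empty proof body), and your direct unfolding --- taking the singleton partition $P(i) = \ms$, observing $\activeReg{W} = \{i\}$, and discharging $\npair{\ms} \in H^\sta\;\ms\;1\;(\xi^{-1}(W))$ by the $n>0$ / $n=0$ case split built into the definition of $H^\sta$ --- is exactly the intended trivial argument.
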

\begin{proof}[Proof of Lemma~\ref{lem:mem-sat-static}]
\end{proof}

\begin{lemma}[Data only memory and standard regions]
  \label{lem:mem-sat-data-only-std-regions}
  If
  \begin{itemize}
  \item $\forall \addr \in \dom(\ms)\ldotp \ms(a) \in \nats$
  \item $\iota \in \{\iota^\pwl,\iota^\nwl,\iota^{\nwl,p} \}$
  \end{itemize}
  then
  \[
    \memSat{\ms}{[i \mapsto \iota(\dom(\ms))]}
  \]
\end{lemma}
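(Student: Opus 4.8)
The plan is to unfold the definition of memory‑segment satisfaction for the singleton world $W = [i \mapsto \iota(\dom(\ms))]$ and supply the only possible partition. Since $\iota(\dom(\ms))$ is one of $\iota^\pwl(\dom(\ms))$, $\iota^\nwl(\dom(\ms))$, or $\iota^{\nwl,p}(\dom(\ms))$, its view is $\temp$ or $\perma$ but never $\revoked$, so $\activeReg{W} = \{i\}$. Hence a partition $P : \activeReg{W} \fun \HeapSegments$ is forced to be $P(i) = \ms$, and the condition $\ms = \biguplus_{r \in \activeReg{W}} P(r)$ holds trivially. It therefore suffices to show $\npair{\ms} \in H(1)(\xi^{-1}(W))$, where $H = H^\pwl(\dom(\ms))$ in the first case and $H = H^\nwl(\dom(\ms))$ in the other two, recalling that all three regions sit in state $1$.

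Next I would dispatch the step‑index cases. For $n = 0$ the claim is immediate from the $\{\npair[0]{\ms}\}$ clause present in both $H^\pwl$ and $H^\nwl$. For $n > 0$ I would check the membership conditions pointwise: $\dom(\ms) = \dom(\ms)$ is immediate, and for each $a \in \dom(\ms)$ the first hypothesis gives $\ms(a) \in \nats \subseteq \ints$. By the integer clause $\{ \npair{i} \mid i \in \ints \union \{\infty\}\}$ of $\stdvr$ we then get $\npair[n-1]{\ms(a)} \in \stdvr(W)$, and since $\xi$ is an isomorphism (Theorem~\ref{thm:world-existence}) we have $\stdvr(\xi(\xi^{-1}(W))) = \stdvr(W)$, which is exactly what the handler demands. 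For the two $H^\nwl$ cases I must additionally establish $\nonlocal{\ms(a)}$; this is immediate from Definition~\ref{def:non-local-cap}, as $\ms(a)$ is an integer.

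There is essentially no obstacle here: the only points requiring any care are remembering to treat the $n = 0$ boundary via the extra $\{\npair[0]{\ms}\}$ clause in each handler, and silently using $\xi \circ \xi^{-1} = \id$ to line up $\stdvr(\xi(\xi^{-1}(W)))$ with $\stdvr(W)$. Everything else follows directly from the fact that integers inhabit the value relation in every world and are non‑local.
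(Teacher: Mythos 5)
Your proof is correct: the paper leaves this proof blank (it appears under ``We expect the following lemmas to hold true''), and your direct unfolding --- singleton active region forcing $P(i) = \ms$, the integer clause of $\stdvr$ together with $\nonlocal{\ms(a)}$ for the $H^\nwl$ cases, and the $\npair[0]{\ms}$ clause for the base index --- is exactly the routine verification the authors intended.
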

\begin{proof}[Proof of Lemma~\ref{lem:mem-sat-data-only-std-regions}]
\end{proof}

\subsubsection{Future worlds}
\begin{lemma}[World public future world of revoked world]
  \label{lem:rt-w-pub-future-w}
  \[
    \forall W \ldotp \revokeTemp{W} \futurewk W
  \]
\end{lemma}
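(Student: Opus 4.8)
The plan is to prove the statement directly from the world-level public future rule, which gives $W' \futurewk W$ whenever $\dom(W') \supseteq \dom(W)$ and $W'(r) \futurewk W(r)$ for every $r \in \dom(W)$. Instantiating $W' = \revokeTemp{W}$, the domain obligation is immediate: by its definition $\revokeTemp{\cdot}$ acts pointwise, replacing each region name either by $\revoked$ or by its old value, so $\dom(\revokeTemp{W}) = \dom(W)$ and in particular $\dom(\revokeTemp{W}) \supseteq \dom(W)$. It then remains to establish the region-wise obligation $\revokeTemp{W}(r) \futurewk W(r)$ for an arbitrary $r \in \dom(W)$, which I would carry out by a case analysis on the shape of $W(r)$, following exactly the case split in the definition of $\revokeTemp{\cdot}$.

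First I would dispatch the two cases in which $\revokeTemp{\cdot}$ leaves the region untouched. If $W(r)$ is not temporary — that is, it is a permanent region or $\revoked$ — then $\revokeTemp{W}(r) = W(r)$, so the goal is reflexivity of the region-level $\futurewk$ at $W(r)$. For a permanent region $(\perma,s,\phi_\pub,\phi,H)$ this follows from the first public-future region rule instantiated with $s' = s$, using that $\phi_\pub$ is reflexive (guaranteed by the definition of $\Rels$). For $W(r) = \revoked$ it follows directly from the rule $\revoked \futurewk \revoked$.

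The remaining, and genuinely delicate, case is $W(r) = (\temp,s,\phi_\pub,\phi,H)$. Here $\revokeTemp{\cdot}$ sets $\revokeTemp{W}(r) = \revoked$, so the region-wise obligation becomes relating $\revoked$ to the temporary region $W(r)$ by a public transition; this is the step governed by the public-future region rule connecting temporary regions with $\revoked$, and it is the only point at which the definition of $\revokeTemp{\cdot}$ and the public/private distinction actually interact — everything else is bookkeeping. I expect this temporary-region case to be the crux of the argument, precisely because it is where the weak (public) future relation must absorb the revocation of a temporary invariant, in contrast to the read/write-condition lemmas (Lemmas~\ref{lem:revoketemp-readcond} and~\ref{lem:revoketemp-writecond}), whose witnesses are always permanent regions and which therefore never revisit a revoked temporary region at all. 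Once the three cases are assembled, feeding them back through the world-level public future rule yields $\revokeTemp{W} \futurewk W$, completing the proof.
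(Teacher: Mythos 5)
Your proof breaks down at exactly the step you yourself identify as the crux, and it cannot be repaired as stated. For a temporary region $W(r) = (\temp,s,\phi_\pub,\phi,H)$, the statement as printed requires $\revokeTemp{W}(r) \futurewk W(r)$, i.e.\ $\revoked \futurewk (\temp,s,\phi_\pub,\phi,H)$. But the only public-future rule connecting temporary regions with $\revoked$ goes the \emph{other} way: $(\temp,s,\phi_\pub,\phi,H) \futurewk \revoked$, meaning a revoked region may publicly be \emph{reinstated} as temporary; nothing lets a temporary region be publicly revoked. Revocation of temporary regions is precisely what the public relation forbids and the private relation permits --- that is the content of the companion Lemma~\ref{lem:rt-w-priv-future-w}, which rests on the private rule $r \futurestr (\temp,s,\phi_\pub,\phi,H)$. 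This is not a technicality: if revocation were a public transition, temporary (stack) invariants would not survive public future worlds, and the whole treatment of local capabilities would collapse.

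What the paper's own proof actually establishes --- and what the lemma's title (``World public future world of revoked world'') and its uses require --- is the converse inequality $W \futurewk \revokeTemp{W}$: for temporary $W(r)$ one has $W(r) = (\temp,s,\phi_\pub,\phi,H) \futurewk \revoked = \revokeTemp{W}(r)$ by the reinstatement rule, and all other regions are handled by reflexivity, exactly as in your permanent/revoked cases. The displayed formula in the lemma statement has its two sides transposed; the converse direction is also what Lemma~\ref{lem:revoketemp-execcond} needs, since from $W' \future W$ and $W \future \revokeTemp{W}$ transitivity yields $W' \future \revokeTemp{W}$, which is how one instantiates the hypothesis $\execCond{}(\gl)(\revokeTemp{W})$. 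So your bookkeeping (domains, and the permanent and revoked cases by reflexivity) is fine, but the temporary-region case is not ``governed by'' the temp/revoked rule in the direction you invoke it --- the correct move is to prove $W \futurewk \revokeTemp{W}$ instead.
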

\begin{proof}[Proof of Lemma~\ref{lem:rt-w-pub-future-w}]
  For all $r$ where $W(r) = (\temp,s,\phi_\pub,\phi,H)$, we have $\revokeTemp{W} = \revoked$. By the public future region relation we have
  \[
    W(r) = (\temp,s,\phi_\pub,\phi,H) \futurewk \revokeTemp{W}(r) = \revoked
  \]
  all other regions remain unchanged, so this follows by reflexivity of the public future region relation.
\end{proof}

\begin{lemma}[World private future world of revoked world]
  \label{lem:rt-w-priv-future-w}
  \[
    \forall W \ldotp \wrev{W} \futurestr W
  \]
\end{lemma}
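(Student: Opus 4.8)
The plan is to mirror the proof of Lemma~\ref{lem:rt-w-pub-future-w} exactly, but using the \emph{private} future region relation instead of the public one. To show $\wrev{W} \futurestr W$, I would invoke the world-level rule for $\futurestr$, which requires two things: that $\dom(\wrev{W}) \supseteq \dom(W)$, and that $\wrev{W}(r) \futurestr W(r)$ for every $r \in \dom(W)$. The domain condition is immediate, since $\revokeTemp{}$ only rewrites the \emph{contents} of each region (turning $\temp$ regions into $\revoked$) and leaves the domain unchanged, so in fact $\dom(\wrev{W}) = \dom(W)$.

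For the pointwise condition I would case-split on the view of $W(r)$. If $W(r) = (\temp,s,\phi_\pub,\phi,H)$ is a temporary region, then by definition $\wrev{W}(r) = \revoked$, and I apply the private future region rule
\[
  \inferrule{ r \in \Regions }{ r \futurestr (\temp,s,\phi_\pub,\phi,H) }
\]
instantiated with the region $\revoked \in \Regions$, giving $\revoked \futurestr (\temp,s,\phi_\pub,\phi,H)$ as required. If $W(r)$ is not temporary (either permanent or already revoked), then $\wrev{W}(r) = W(r)$, so it suffices to establish reflexivity of $\futurestr$ at the region level.

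Reflexivity is the only step needing a moment of care. For a region $(v,s,\phi_\pub,\phi,H)$ I would use the first private future region rule with $s' = s$ and the identity on the remaining components; its side condition $(s,s) \in \phi$ holds because $\phi$ is reflexive by the definition of $\Rels$. For the case $W(r) = \revoked$, reflexivity follows directly from the rule $\revoked \futurestr \revoked$. Combining the two cases gives $\wrev{W}(r) \futurestr W(r)$ for all $r$, and hence $\wrev{W} \futurestr W$.

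I do not expect any real obstacle here: the argument is entirely structural and parallels Lemma~\ref{lem:rt-w-pub-future-w}, the only difference being that the private relation makes the temporary-to-revoked transition available via the stronger rule $r \futurestr (\temp,\dots)$ rather than the public rule (which only allowed revoking temporary regions to $\revoked$). If anything, the proof is slightly easier than the public case, since $\futurestr$ permits collapsing \emph{any} temporary region to an arbitrary future region, so the revocation is trivially admissible.
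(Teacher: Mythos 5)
Your proof is correct and is exactly the intended argument; the paper in fact leaves the proof of this lemma blank, so your write-up supplies the missing details. The structure --- domain preservation, the rule $r \futurestr (\temp,s,\phi_\pub,\phi,H)$ instantiated with $r = \revoked$ for temporary regions, and region-level reflexivity elsewhere (via reflexivity of $\phi$ from the definition of $\Rels$, together with $\revoked \futurestr \revoked$) --- mirrors the paper's proof of the public analogue, Lemma~\ref{lem:rt-w-pub-future-w}.
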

\begin{proof}[Proof of Lemma~\ref{lem:rt-w-priv-future-w}]
\end{proof}

\begin{lemma}[Public future world relation included in private future world relation]
  \label{lem:future-pub-impl-future-priv}
  \[
   W' \futurewk W \Rightarrow W' \futurestr W
  \]
\end{lemma}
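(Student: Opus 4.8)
The plan is to peel off the world-level quantifiers and reduce everything to a region-by-region comparison. Both future world relations are governed by the same two clauses: $W' \future W$ holds precisely when $\dom(W') \supseteq \dom(W)$ and $W'(r) \future W(r)$ for every $r \in \dom(W)$, with $\future$ read as the matching future \emph{region} relation on the right. Consequently, from the hypothesis $W' \futurewk W$ the domain inclusion $\dom(W') \supseteq \dom(W)$ is immediately reusable for $\futurestr$, and the whole statement follows once I establish the pointwise region-level implication $\iota' \futurewk \iota \Rightarrow \iota' \futurestr \iota$ and instantiate it at each $r \in \dom(W)$ with $\iota' = W'(r)$ and $\iota = W(r)$.

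To prove the region-level implication I would case on which of the three rules defining $\futurewk$ on $\Regions$ witnesses $\iota' \futurewk \iota$. If it is the third rule, then $\iota = \iota' = \revoked$ and the conclusion $\revoked \futurestr \revoked$ is exactly the third private rule. If it is the second rule, then $\iota = \revoked$ and $\iota' = (\temp,s,\phi_\pub,\phi,H)$ is some temporary region; the third private rule $r \futurestr \revoked$ (valid for any $r \in \Regions$) gives $\iota' \futurestr \iota$ at once. The remaining case is the first rule, where $\iota = (v,s,\phi_\pub,\phi,H)$ and $\iota' = (v,s',\phi_\pub,\phi,H)$ agree on every component except the state and satisfy $(s,s') \in \phi_\pub$.

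For this last case the key observation is that the pair $(\phi_\pub,\phi)$ attached to any region lies in $\Rels$, so by the very definition of $\Rels$ we have $\phi_\pub \subseteq \phi$. Hence $(s,s') \in \phi_\pub \subseteq \phi$ gives $(s,s') \in \phi$, and since the non-state components already coincide, the first private rule yields $\iota' \futurestr \iota$. I expect this inclusion $\phi_\pub \subseteq \phi$ to be the only substantive ingredient: the two revoked cases are immediate from the shape of the private rules, and the reduction from worlds to regions is purely bookkeeping.
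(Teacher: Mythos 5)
Your proof is correct and is exactly the intended argument (the paper leaves this proof as an omitted routine verification): reduce to the region level via the shared shape of the two world-level rules, invert the three public region rules, and in the only nontrivial case use that $(\phi_\pub,\phi)\in\Rels$ forces $\phi_\pub\subseteq\phi$. Nothing is missing.
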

\begin{proof}[Proof of Lemma~\ref{lem:future-pub-impl-future-priv}]
\end{proof}

\begin{lemma}[Transitivity proberties between private and public future worlds]
  \label{lem:future-priv-pub-trans}
\[
  W'' \futurestr W' \land W' \futurewk W \Rightarrow W'' \futurestr W
\]
and
\[
  W'' \futurewk W' \land W' \futurestr W \Rightarrow W'' \futurestr W
\]
\end{lemma}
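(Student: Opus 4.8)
The plan is to avoid a direct case analysis on the region-level future relations and instead reduce both implications to a single fact: that the public future world relation is contained in the private one. Concretely, I would use Lemma~\ref{lem:future-pub-impl-future-priv}, which gives $W' \futurewk W \Rightarrow W' \futurestr W$, together with transitivity of $\futurestr$. Transitivity is available because Theorem~\ref{thm:world-existence} asserts that $(\Wor,\futurestr)$ is a preordered \cofe{}, and a preorder is by definition reflexive and transitive.

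For the first implication, I would assume $W'' \futurestr W'$ and $W' \futurewk W$. Applying Lemma~\ref{lem:future-pub-impl-future-priv} to the second hypothesis upgrades it to $W' \futurestr W$. Composing this with $W'' \futurestr W'$ via transitivity of $\futurestr$ yields $W'' \futurestr W$, as required.

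The second implication is symmetric: from $W'' \futurewk W'$ and $W' \futurestr W$, I would first apply Lemma~\ref{lem:future-pub-impl-future-priv} to the former hypothesis to obtain $W'' \futurestr W'$, and then compose with $W' \futurestr W$ by transitivity to conclude $W'' \futurestr W$.

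The only genuine content lies in Lemma~\ref{lem:future-pub-impl-future-priv} and the transitivity of $\futurestr$, both of which are already established, so there is no real obstacle beyond recognising that both mixed-transitivity statements collapse once the public relation is weakened to the private one. If one instead wanted a self-contained argument at the level of individual regions, the point to be careful about would be the inclusion $\phi_\pub \subseteq \phi$ from the definition of $\Rels$: it is exactly this inclusion that makes every public region step also a private region step, and it is what underlies Lemma~\ref{lem:future-pub-impl-future-priv} in the first place. Handling the $\revoked$ region in that more granular approach would also require noting that $r \futurewk \revoked$ only holds for temporary (or revoked) regions while $r \futurestr \revoked$ holds for all regions, but this asymmetry never causes trouble because we are always weakening public to private.
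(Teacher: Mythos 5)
Your proposal is correct. The paper in fact leaves the proof of this lemma empty, so there is nothing to compare against, but your reduction --- upgrade the public hypothesis to a private one via Lemma~\ref{lem:future-pub-impl-future-priv} and then close with transitivity of $\futurestr$, which Theorem~\ref{thm:world-existence} supplies by declaring $(\Wor,\futurestr)$ a preorder --- is exactly the intended argument, and both implications collapse to the same two steps. Your closing remark correctly identifies where the real content would live in a region-level argument (the inclusion $\phi_\pub \subseteq \phi$ and the asymmetry of the $\revoked$ cases); the only caveat worth recording is that Lemma~\ref{lem:future-pub-impl-future-priv} is itself left unproven in the paper, so your proof inherits that dependency rather than discharging it.
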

\begin{proof}[Proof of Lemma~\ref{lem:future-priv-pub-trans}]
\end{proof}

\begin{lemma}
  \label{lem:wk-nequal-and-future}
  \begin{align*}
    &\forall n,W_1,W_2,W_1' \ldotp
    &\quad W_1 \nequal W_2 \land W_1' \futurewk W_1 \Rightarrow \exists W_2' \ldotp W_2' \nequal W_1' \land W_2' \futurewk W_2
  \end{align*}
\end{lemma}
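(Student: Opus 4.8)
The plan is to build $W_2'$ region by region, transporting the public-future step $W_1' \futurewk W_1$ across the $n$-equality $W_1 \nequal W_2$. First I would dispose of the degenerate index: when $n = 0$ the relation $\nequal[0]$ is the total relation on the \cofe{} $\Wor$, so taking $W_2' = W_2$ already gives $W_2' \nequal W_1'$, and $W_2' \futurewk W_2$ holds by reflexivity of $\futurewk$. So from now on assume $n \geq 1$.

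By the future-world rule, $W_1' \futurewk W_1$ gives $\dom(W_1') \supseteq \dom(W_1)$ together with $W_1'(r) \futurewk W_1(r)$ for every $r \in \dom(W_1)$. Unfolding $W_1 \nequal W_2$ through $\xi$ and the $\blater$ (so that regions are compared at the shifted index $m = n-1$), the two worlds share a domain and $W_1(r) \nequal[m] W_2(r)$ for each $r$. I would then define $W_2'$ with domain $\dom(W_1')$ by setting $W_2'(r) = W_1'(r)$ for the freshly added names $r \in \dom(W_1') \setminus \dom(W_1)$, and by the region-level transport below for $r \in \dom(W_1) = \dom(W_2)$.

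The crux is a region-level statement: if $\iota_1 \nequal[m] \iota_2$ and $\iota_1' \futurewk \iota_1$, then there is $\iota_2' \futurewk \iota_2$ with $\iota_2' \nequal[m] \iota_1'$, proved by case analysis on the public-future region rule used. In the main case $\iota_1 = (v,s,\phi_\pub,\phi,H)$ and $\iota_1' = (v,s',\phi_\pub,\phi,H)$ with $(s,s') \in \phi_\pub$; for $m \geq 1$ the discrete components $v$, $s$, $\phi_\pub$, $\phi$ are forced equal by $m$-equality, so $\iota_2 = (v,s,\phi_\pub,\phi,H_2)$ with $H \nequal[m] H_2$, and I set $\iota_2' = (v,s',\phi_\pub,\phi,H_2)$; since the state advances along the same $\phi_\pub$, $\iota_2' \futurewk \iota_2$, and $\iota_2'$ agrees with $\iota_1'$ on every discrete component while reusing $H_2 \nequal[m] H$, giving $\iota_2' \nequal[m] \iota_1'$. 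In the two revoked cases ($\iota_1 = \revoked$ with $\iota_1'$ either a $\temp$ region or $\revoked$), $m$-equality at index $\geq 1$ forces $\iota_2 = \revoked$, and taking $\iota_2' = \iota_1'$ works because $\revoked$ may step publicly to any $\temp$ region and to $\revoked$. When the shifted index $m$ is $0$, the region claim is immediate: take $\iota_2' = \iota_2$, using reflexivity of $\futurewk$ and totality of $\nequal[0]$.

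Finally I would assemble the result. We have $\dom(W_2') = \dom(W_1') \supseteq \dom(W_1) = \dom(W_2)$, and on $\dom(W_2)$ the transport gives $W_2'(r) \futurewk W_2(r)$, while the fresh names lie outside $\dom(W_2)$ and so are irrelevant to the future-world rule; hence $W_2' \futurewk W_2$. Likewise $W_2' \nequal W_1'$, since both have domain $\dom(W_1')$ and each region is $m$-equal. I expect the main obstacle to be the bookkeeping of the $\blater$-shift, ensuring that the discrete components ($v$, the state $s$, and the transition systems $\phi_\pub$, $\phi$) are genuinely forced equal at the index at which regions are compared; this is precisely what makes the ``state advanced along $\phi_\pub$'' case go through, and it must be paired with the trivial treatment at shifted index $0$. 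A secondary point to verify is that the constructed $\iota_2'$ is a bona fide element of $\Regions$, which holds because it differs from the region $\iota_2$ only in its state component, and region well-formedness (non-expansiveness and monotonicity of $H$) does not depend on the chosen state.
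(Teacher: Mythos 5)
Your proposal is correct and follows essentially the same route as the paper: the paper's proof simply exhibits the construction that takes the view and state of each region from $W_1'$ and the transition systems and interpretation from $W_2$ (copying $W_1'$ verbatim on names outside $\dom(W_2)$), which is exactly your region-level transport. You additionally spell out the verification (the $n=0$ case, the forced equality of discrete components under the $\blater$-shift, the $\revoked$ cases, and well-formedness of the transported region), which the paper leaves implicit.
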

\begin{proof}[Proof of Lemma~\ref{lem:wk-nequal-and-future}]
  Construct $W_2'$ as follows:
  \[
    W_2(r) =
    \begin{cases}
      (v_1',s_1',\phi_{\pub2},\phi_2,H_2) & \arraycolsep=0pt
      \begin{array}[t]{l}
        \text{ if $r \in \dom(W_2)$ and $W_1'(r) = (v_1',s_1',\_,\_,\_)$}\\
        \quad \text{ and $W_2(r) = (\_,\_,\phi_{\pub2},\phi_2,H_2)$}\\
      \end{array} \\
      W_1'(r) & \text{ otherwise}
    \end{cases}
  \]
  Notice $\dom(W_2') = \dom(W_1')$.
\end{proof}

\begin{lemma}
  \label{lem:str-nequal-and-future}
  \begin{align*}
    &\forall n,W_1,W_2,W_1' \ldotp
    &\quad W_1 \nequal W_2 \land W_1' \futurestr W_1 \Rightarrow \exists W_2' \ldotp W_2' \nequal W_1' \land W' \futurestr W_2
  \end{align*}
\end{lemma}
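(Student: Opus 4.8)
The plan is to reuse the explicit-construction strategy of Lemma~\ref{lem:wk-nequal-and-future}: build $W_2'$ by taking the \emph{view} and \emph{state} from $W_1'$ (this is what makes $W_2' \nequal W_1'$) while retaining the transition systems and region predicate of $W_2$ wherever this is forced (this is what makes $W_2' \futurestr W_2$). The essential difference from the public case is that the private future region relation lets an \emph{arbitrary} region succeed any $\temp$ or $\revoked$ region, so the splicing only has to be performed on the \emph{permanent} regions of $W_2$; on all other region names I may simply copy $W_1'$.

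Concretely, for $n=0$ the statement is immediate, since $n$-equality is the total relation and $W_2' = W_2$ works by reflexivity of $\futurestr$. For $n>0$, recall that $W_1 \nequal W_2$ forces $\dom(W_1)=\dom(W_2)$, equal views, states, and transition relations region-wise, and $n$-equal region predicates; in particular $W_2(r)$ is permanent iff $W_1(r)$ is. I would then set
\[
  W_2'(r) =
  \begin{cases}
    (v_1', s_1', \phi_{\pub2}, \phi_2, H_2) &
    \begin{array}[t]{l}
      \text{if } r \in \dom(W_2) \text{ and } W_2(r) = (\perma,\_,\phi_{\pub2},\phi_2,H_2) \\
      \quad \text{and } W_1'(r) = (v_1',s_1',\_,\_,\_) \\
    \end{array} \\
    W_1'(r) & \text{otherwise,}
  \end{cases}
\]
observing $\dom(W_2') = \dom(W_1') \supseteq \dom(W_2)$. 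The first clause is well-posed: when $W_1(r)$ is permanent, the only private future rule that can justify $W_1'(r) \futurestr W_1(r)$ is the first one, so $W_1'(r)$ carries exactly $W_1(r)$'s region data with its state advanced along $\phi$, and is never $\revoked$ there.

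Finally I would discharge the two conjuncts. For $W_2' \nequal W_1'$: on the otherwise-clause $W_2'(r) = W_1'(r)$; on the splice clause the views and states agree by construction, and $\phi_{\pub2},\phi_2,H_2$ are $n$-equal to the corresponding components of $W_1'(r)$ by chaining ``$W_1'(r)$'s data equals $W_1(r)$'s data'' (first future rule) with ``$W_1(r) \nequal W_2(r)$''. For $W_2' \futurestr W_2$: on permanent $r$ the first future rule applies, because $W_2'(r)$ carries $W_2(r)$'s data and its state $s_1'$ is reachable from $W_2(r)$'s state (equal to $W_1(r)$'s state) along $\phi_2 = \phi_{W_1(r)}$; on temporary or revoked $r$ the second or third future rule applies outright, since any region is a private future of a $\temp$ or a $\revoked$ region. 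I expect the most delicate step to be the $H$-component $n$-equality in the splice clause: it relies on the first private future rule preserving the region predicate \emph{exactly}, so that the only residual inexactness is the $n$-closeness inherited from $W_1 \nequal W_2$, together with the usual step-index bookkeeping needed to pass from $n$-equality of worlds to genuine equality of their discrete components.
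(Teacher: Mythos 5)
Your proof is correct, and it follows the same basic strategy as the paper's (which, as for Lemma~\ref{lem:wk-nequal-and-future}, consists solely of an explicit definition of $W_2'$ with no verification), but with one substantive difference: the paper splices $W_1'(r)$'s view and state onto $W_2(r)$'s transition systems and interpretation for \emph{every} $r \in \dom(W_2)$ where both regions are non-$\revoked$, whereas you restrict the splice to the $\perma$ regions of $W_2$ and copy $W_1'(r)$ verbatim everywhere else. This restriction is not cosmetic. If $W_2(r)$ is a $\temp$ region, the second private-future rule lets $W_1'(r)$ be an arbitrary region whose $\phi_{\pub}$, $\phi$ and $H$ are unrelated to those of $W_1(r)$, and then the paper's spliced region (which carries $W_2(r)$'s transition systems and interpretation) is in general not $n$-equal to $W_1'(r)$, so the conjunct $W_2' \nequal W_1'$ would fail; copying $W_1'(r)$ outright on such $r$ costs nothing, since any region is a private future of a $\temp$ or $\revoked$ region, and it makes the $n$-equality trivial there. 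On the $\perma$ regions your argument is exactly the intended one: the first future rule pins $W_1'(r)$ to $W_1(r)$'s data, $W_1 \nequal W_2$ transfers that data to $W_2(r)$'s, and the states of $W_1(r)$ and $W_2(r)$ agree, so the advanced state $s_1'$ is reachable along $\phi_2$. In short, your proof both supplies the verification the paper omits and quietly repairs its construction; the only loose end, which you already flag, is the low-index bookkeeping (for very small $n$ the $n$-equality of worlds does not yet force the discrete components to coincide, but there the goal is correspondingly weak and taking $W_2' = W_2$ suffices).
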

\begin{proof}[Proof of Lemma~\ref{lem:str-nequal-and-future}]
  Construct $W_2'$ as follows:
  \[
    W_2(r) =
    \begin{cases}
      (v_1',s_1',\phi_{\pub2},\phi_2,H_2) & \arraycolsep=0pt
      \begin{array}[t]{l}
        \text{ if $r \in \dom(W_2)$ and $W_1'(r) = (v_1',s_1',\_,\_,\_)$}\\
        \quad \text{ and $W_2(r) = (\_,\_,\phi_{\pub2},\phi_2,H_2)$}\\
      \end{array} \\
      W_1'(r) & \text{ otherwise}
    \end{cases}
  \]

\end{proof}

\subsubsection{Value relation}
\begin{lemma}[Value relation downwards closed]
  \label{lem:stdvr-dc}
  \[
    n' \leq n \land \npair{w} \in \stdvr(W) \Rightarrow \npair[n']{w} \in \stdvr(W)
  \]
\end{lemma}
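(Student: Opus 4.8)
The plan is to prove the statement by a case analysis on the word $w$, following the disjuncts in the definition of $\stdvr(W)$, and in each case reducing downward closure of $\stdvr$ to the downward-closure lemmas already established for the constituent permission-based conditions. Conceptually, the quickest justification is simply that $\stdvr(W)$ is, by construction, an element of $\UPred{\Words}$, and every such uniform predicate is downward closed in the step index; but since the surrounding development treats the condition-level closure properties as separate lemmas, I would give the self-contained case analysis that invokes them.

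First I would assume $n' \leq n$ and $\npair{w} \in \stdvr(W)$, and observe that membership is witnessed by exactly one disjunct, selected by the shape of $w$. The two cases $w \in \ints \union \{\infty\}$ and $w = \stdcap[(\noperm,\gl)]$ are immediate: the corresponding membership conditions impose no constraint mentioning the step index, so $\npair[n']{w} \in \stdvr(W)$ holds with no further work.

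For the capability cases I would peel off each conjunct of the defining clause and apply the matching closure lemma together with $n' \leq n$. For $w = \stdcap[(\readonly,\gl)]$, downward closure of $\readCond{}(\gl)(W)$ is Lemma~\ref{lem:readcond-dc}. For $\readwrite$ (resp. $\readwritel$) I would additionally use downward closure of $\writeCond{}(\iota^\nwl,\gl)(W)$ (resp. $\writeCond{}(\iota^\pwl,\gl)(W)$) from Lemma~\ref{lem:writecond-dc}. For $\exec$ I would combine Lemma~\ref{lem:readcond-dc} with downward closure of $\execCond{}(\gl)(W)$, i.e. Lemma~\ref{lem:execcond-dc}; for $\entry$ I would use downward closure of $\entryCond{}(\gl)(W)$, Lemma~\ref{lem:entrycond-dc}. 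The compound cases $\rwx$ and $\rwlx$ are handled by applying the appropriate combination of Lemmas~\ref{lem:readcond-dc}, \ref{lem:writecond-dc}, and \ref{lem:execcond-dc} to each of their read, write, and (two resp. three) execute conjuncts. In every case the witnessing data for $\npair{w}$ descends to witnessing data for $\npair[n']{w}$, giving $\npair[n']{w} \in \stdvr(W)$.

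I do not expect a genuine obstacle here: the argument is entirely mechanical once the definitional unfolding is set up. The only point requiring a moment's care is that $\stdvr$ is defined by a guarded fixed point mutually recursive with the four conditions, so one must be sure that the cited closure lemmas for $\readCond{}$, $\writeCond{}$, $\execCond{}$, and $\entryCond{}$ are available independently of this lemma (they are, being proved earlier in the excerpt), rather than attempting to establish all of them in one simultaneous induction.
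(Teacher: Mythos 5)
Your proposal is correct and matches the paper's own proof, which is exactly ``by definition of $\stdvr(W)$ using Lemmas~\ref{lem:readcond-dc}, \ref{lem:writecond-dc}, \ref{lem:execcond-dc} and \ref{lem:entrycond-dc}''; you have merely spelled out the case analysis that the paper leaves implicit. No gaps.
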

\begin{proof}
  By definition of $\stdvr(W)$ using Lemma~\ref{lem:readcond-dc},
  \ref{lem:writecond-dc}, \ref{lem:execcond-dc} and \ref{lem:entrycond-dc}.
\end{proof}

\begin{lemma}[Register relation downwards closed]
  \label{lem:stdrr-dc}
  \[
    n' \leq n \land \npair{w} \in \stdrr(W) \Rightarrow \npair[n']{w} \in \stdrr(W)
  \]
\end{lemma}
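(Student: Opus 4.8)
The plan is to reduce this immediately to the already-established downwards closure of the value relation, Lemma~\ref{lem:stdvr-dc}. Recall that by definition $\stdrr(W) = \{\npair{\reg} \mid \forall r \in \RegName \setminus \{\pcreg\} \ldotp \npair{\reg(r)} \in \stdvr(W)\}$ (so the bound variable written $w$ in the statement is really a register file $\reg$). Membership in $\stdrr(W)$ is thus just a pointwise conjunction of value-relation memberships, one per non-program-counter register, and downwards closure of a finite conjunction of downwards-closed predicates is again downwards closed.

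Concretely, I would first assume $n' \leq n$ and $\npair{\reg} \in \stdrr(W)$, and unfold the definition of $\stdrr(W)$ to obtain, for every $r \in \RegName \setminus \{\pcreg\}$, that $\npair{\reg(r)} \in \stdvr(W)$. The goal, after unfolding $\stdrr(W)$ again, is to show $\npair[n']{\reg(r)} \in \stdvr(W)$ for an arbitrary such $r$. I would then fix $r \in \RegName \setminus \{\pcreg\}$, take the hypothesis $\npair{\reg(r)} \in \stdvr(W)$ for that register, and apply Lemma~\ref{lem:stdvr-dc} with the word $\reg(r)$ and the inequality $n' \leq n$ to conclude $\npair[n']{\reg(r)} \in \stdvr(W)$. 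Since $r$ was arbitrary, this gives $\npair[n']{\reg} \in \stdrr(W)$, as required.

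There is essentially no obstacle here: the only ingredient is Lemma~\ref{lem:stdvr-dc}, and the step-index $n$ appears in $\stdrr$ only through the value-relation conditions it imposes on the register contents, so lowering the index is transparent. This mirrors the proof of Lemma~\ref{lem:stdvr-dc} itself, which simply dispatches the step-index decrease to the downwards closure of $\readCond{}$, $\writeCond{}$, $\execCond{}$, and $\entryCond{}$; here we are one layer up and dispatch to $\stdvr$ directly.

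\begin{proof}[Proof of Lemma~\ref{lem:stdrr-dc}]
  Assume $n' \leq n$ and $\npair{\reg} \in \stdrr(W)$. By the definition of $\stdrr(W)$, for all $r \in \RegName \setminus \{\pcreg\}$ we have $\npair{\reg(r)} \in \stdvr(W)$. To show $\npair[n']{\reg} \in \stdrr(W)$, let $r \in \RegName \setminus \{\pcreg\}$ be arbitrary and show $\npair[n']{\reg(r)} \in \stdvr(W)$. This follows from $\npair{\reg(r)} \in \stdvr(W)$, the assumption $n' \leq n$, and Lemma~\ref{lem:stdvr-dc}.
\end{proof}
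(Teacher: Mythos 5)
Your proof is correct and takes exactly the paper's route: the paper's own proof is the one-liner ``by definition of $\stdrr(W)$ using Lemma~\ref{lem:stdvr-dc}'', which is precisely the pointwise dispatch to value-relation downwards closure that you spell out.
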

\begin{proof}
  By definition of $\stdrr(W)$ using Lemma~\ref{lem:stdvr-dc}.
\end{proof}

\begin{lemma}[Value relation monotone wrt $\futurewk$]
  \label{lem:stdvr-mono-wk}
  \[
    W' \futurewk W \land \npair{w} \in \stdvr(W) \Rightarrow \npair{w} \in \stdvr(W')
  \]
\end{lemma}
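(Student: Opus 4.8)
The plan is to unfold the definition of $\stdvr$ and argue by cases on the shape of the word $w$. Assume $W' \futurewk W$ and $\npair{w} \in \stdvr(W)$; I must show $\npair{w} \in \stdvr(W')$. The step index $n$ is the same on both sides, so no downwards-closure is needed at this level. The self-reference in the value relation (the read/write conditions mention $\iota^\pwl$ and $\iota^\nwl$, whose interpretations $H^\pwl$, $H^\nwl$ refer back to $\stdvr$) is already dealt with inside the condition-level monotonicity lemmas, so here I only invoke those lemmas rather than reasoning about the fixpoint directly.

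First the trivial cases: if $w \in \ints \union \{\infty\}$ or $w = \stdcap[(\noperm,\gl)]$, then membership in $\stdvr(W')$ holds by definition, independently of the world. For each remaining capability case, the membership clause in $\stdvr(W)$ is a conjunction of read, write, execute and/or enter conditions evaluated at $W$, and I transport each conjunct to $W'$ by the matching ``monotone w.r.t.\ $\futurewk$'' lemma, then reassemble the conjunction into the corresponding clause of $\stdvr(W')$:
\begin{itemize}
\item $\readonly$: one use of Lemma~\ref{lem:readcond-mono-pub}.
\item $\readwrite$ (resp.\ $\readwritel$): Lemma~\ref{lem:readcond-mono-pub} for the read condition and Lemma~\ref{lem:writecond-mono-pub} for $\writeCond{}(\iota^\nwl,\gl)$ (resp.\ $\writeCond{}(\iota^\pwl,\gl)$); both $\iota^\nwl$ and $\iota^\pwl$ lie in the set of regions covered by Lemma~\ref{lem:writecond-mono-pub}.
\item $\exec$: Lemma~\ref{lem:readcond-mono-pub} together with Lemma~\ref{lem:execcond-mono-pub}.
\item $\entry$: Lemma~\ref{lem:entrycond-mono-pub}.
\item $\rwx$: Lemma~\ref{lem:readcond-mono-pub}, Lemma~\ref{lem:writecond-mono-pub} for $\iota^\nwl$, and Lemma~\ref{lem:execcond-mono-pub} applied to each of the two execute conditions.
\item $\rwlx$: Lemma~\ref{lem:readcond-mono-pub}, Lemma~\ref{lem:writecond-mono-pub} for $\iota^\pwl$, and Lemma~\ref{lem:execcond-mono-pub} applied to each of the three execute conditions.
\end{itemize}

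The only point requiring genuine care is the behaviour of the execute and enter conditions on \emph{global} capabilities. For $\gl = \glob$, $\execCond{}(\glob)(W)$ quantifies over \emph{private} future worlds $W'' \futurestr W$, whereas the hypothesis gives only the weaker $W' \futurewk W$. Monotonicity w.r.t.\ $\futurewk$ still holds because any $W'' \futurestr W'$ also satisfies $W'' \futurestr W$: this uses $\futurewk \subseteq \futurestr$ (Lemma~\ref{lem:future-pub-impl-future-priv}) together with the mixed transitivity of Lemma~\ref{lem:future-priv-pub-trans}. Since that argument is precisely what Lemma~\ref{lem:execcond-mono-pub} and Lemma~\ref{lem:entrycond-mono-pub} establish, I simply cite them here. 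Consequently the main obstacle is not conceptual but bookkeeping: for each permission I must invoke exactly the conditions appearing in its clause of $\stdvr$, with the correct choice of $\iota^\nwl$ versus $\iota^\pwl$, and then recombine them to land in the intended clause of $\stdvr(W')$.
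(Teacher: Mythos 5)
Your proof is correct and follows essentially the same route as the paper, which simply cites Lemmas~\ref{lem:readcond-mono-pub}, \ref{lem:writecond-mono-pub}, \ref{lem:execcond-mono-pub} and \ref{lem:entrycond-mono-pub}; you merely make the per-permission case analysis and the global-capability subtlety explicit. Nothing is missing.
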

\begin{proof}[Proof of lemma~\ref{lem:stdvr-mono-wk}]
  Follows from Lemma~\ref{lem:readcond-mono-pub}, Lemma~\ref{lem:writecond-mono-pub}, Lemma~\ref{lem:execcond-mono-pub}, and Lemma~\ref{lem:entrycond-mono-pub}.
\end{proof}

\begin{lemma}
  \label{lem:revoketemp-vr-implies-vr}
  If
  \[
    \npair{w} \in \stdvr(\revokeTemp{W})
  \]
  then
  \[
    \npair{w} \in \stdvr(W)
  \]
\end{lemma}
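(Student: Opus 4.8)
The plan is to proceed by a case analysis on the shape of the word $w$, mirroring the clauses in the definition of $\stdvr$. The key observation is directional: we do have $\revokeTemp{W} \futurewk W$ (Lemma~\ref{lem:rt-w-pub-future-w}), but the value-relation monotonicity lemma (Lemma~\ref{lem:stdvr-mono-wk}) points the wrong way here, carrying membership in $\stdvr(W)$ into $\stdvr(\revokeTemp{W})$ rather than the reverse. So instead of a one-line monotonicity appeal, I will transfer each permission-based side condition individually from $\revokeTemp{W}$ to $W$, using the family of revoke-temp condition lemmas proved just above.

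The cases $w \in \ints \cup \{\infty\}$ and $w = \stdcap[(\noperm,\gl)]$ are immediate, since the matching clauses of $\stdvr$ impose no condition on the world. For each remaining capability clause I unfold $\npair{w} \in \stdvr(\revokeTemp{W})$ to obtain the relevant conditions evaluated at $\revokeTemp{W}$, and then rebuild $\npair{w} \in \stdvr(W)$ by re-establishing exactly those conditions at $W$. Concretely: the $\readonly$ clause needs only $\readCond{}(\gl)$, transferred by Lemma~\ref{lem:revoketemp-readcond}; the $\readwrite$ and $\rwx$ clauses additionally need $\writeCond{}(\iota^\nwl,\gl)$ (Lemma~\ref{lem:revoketemp-writecond}) and, for $\rwx$, the $\execCond{}(\gl)$ conditions for $\rwx$ and $\exec$ (Lemma~\ref{lem:revoketemp-execcond}); the $\readwritel$ and $\rwlx$ clauses are analogous but with $\writeCond{}(\iota^\pwl,\gl)$; the $\exec$ clause needs $\readCond{}(\gl)$ together with $\execCond{}(\gl)$; and the $\entry$ clause needs $\entryCond{}(\gl)$, transferred by Lemma~\ref{lem:revoketemp-entercond} (whose statement is the enter-condition transfer). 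In each case the collection of transferred conditions is precisely what the matching clause of $\stdvr(W)$ requires, so the clause reassembles.

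The argument is essentially bookkeeping once the condition lemmas are in hand, so there is no single hard step; the only point deserving care concerns the $\local$ capabilities. For a local capability the read and write conditions may in principle be witnessed by a temporary region, and $\revokeTemp{\cdot}$ destroys exactly such regions. The transfer lemmas handle precisely this subtlety: membership at $\revokeTemp{W}$ forces the witnessing region to be a permanent one, and permanent regions are left untouched by $\revokeTemp{\cdot}$, so the same witness still serves at $W$. Consequently I do not reprove this phenomenon here; I simply invoke Lemmas~\ref{lem:revoketemp-readcond}--\ref{lem:revoketemp-entercond} clause by clause and reassemble the definition of $\stdvr(W)$.
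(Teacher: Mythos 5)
Your proposal is correct and matches the paper's own proof, which simply cites Lemmas~\ref{lem:revoketemp-readcond}, \ref{lem:revoketemp-writecond}, \ref{lem:revoketemp-execcond}, and \ref{lem:revoketemp-entercond}; you have just made explicit the case analysis on the clauses of $\stdvr$ that the paper leaves implicit. Your observation that the witnessing region must be permanent (and hence survives $\revokeTemp{\cdot}$) is exactly the content of those transfer lemmas, so nothing further is needed.
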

\begin{proof}[Proof of Lemma~\ref{lem:revoketemp-vr-implies-vr}]
  Follows from Lemma~\ref{lem:revoketemp-readcond}, Lemma~\ref{lem:revoketemp-writecond}, Lemma~\ref{lem:revoketemp-execcond}, and Lemma~\ref{lem:revoketemp-entercond}.
\end{proof}

\begin{lemma}[Global capabilities monotone wrt $\futurestr$]
  \label{lem:stdvr-glob-priv-mono}
  \begin{align*}
    & \forall n, \perm, \start, \addrend, \addr, W, W' \ldotp \\
    & \quad  \npair{\stdcap[(\perm,\glob)]} \in \stdvr(W) \land W' \futurestr W\\
    & \qquad \Rightarrow \npair{\stdcap[(\perm,\glob)]} \in \stdvr(W') 
  \end{align*}
\end{lemma}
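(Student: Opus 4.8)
The statement needs no induction: it follows by a case analysis on $\perm$ according to which clause of the definition of $\stdvr$ the global word $w = \stdcap[(\perm,\glob)]$ inhabits, propagating each constituent permission-based condition from $W$ to $W'$. The plan is to treat the world-independent clauses first, then the clauses handled by the already-proved private-monotonicity lemmas, and finally the two ``permit-write-local'' clauses, which are the real obstacle.

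For the clauses that do not mention the world — $w \in \ints \cup \{\infty\}$ and $\perm = \noperm$ — there is nothing to do. For $\perm = \readonly$ I would carry $\readCond{}(\glob)(W)$ to $\readCond{}(\glob)(W')$ by Lemma~\ref{lem:readcond-mono-priv}; for $\perm = \entry$, transport $\entryCond{}(\glob)(W)$ by Lemma~\ref{lem:entrycond-mono-priv}. For $\perm = \readwrite$ I combine Lemma~\ref{lem:readcond-mono-priv} with Lemma~\ref{lem:writecond-mono-priv}, whose side condition $\iota^\nwl \in \{\iota^\nwl,\iota^{(\nwl,p)}\}$ is met. For $\perm = \exec$ I use Lemma~\ref{lem:readcond-mono-priv} together with Lemma~\ref{lem:execcond-mono-priv}, and for $\perm = \rwx$ the same three lemmas, noting that the two execute conditions involved concern permissions $\rwx$ and $\exec$, both in the admissible set $\{\exec,\rwx\}$ of Lemma~\ref{lem:execcond-mono-priv}.

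The hard part is the two clauses $\perm = \readwritel$ and $\perm = \rwlx$, because they invoke $\writeCond{}(\iota^\pwl,\glob)$, which lies outside the range of Lemma~\ref{lem:writecond-mono-priv}, and (for $\rwlx$) $\execCond{}(\glob)$ at permission $\rwlx$, outside the range of Lemma~\ref{lem:execcond-mono-priv}. Rather than cite the restricted lemmas, I would re-run their underlying arguments, which in fact never use the restrictions. For $\writeCond{}(\iota^\pwl,\glob)(W)$, the witnessing region $r$ lies in $\localityReg(\glob,W)$, i.e.\ among the permanent regions of $W$; furthermore $W(r) \nsupsim[n-1] \iota^\pwl_{\start',\addrend'}$ forces $W(r)$ to have state $1$ and transition system $=$, so any $\futurestr$-move leaves it fixed and $W'(r) = W(r)$, whence the identical witness establishes $\writeCond{}(\iota^\pwl,\glob)(W')$. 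For $\execCond{}(\glob)$ at any permission, monotonicity w.r.t.\ $\futurestr$ is immediate from the definition, since for global locality it quantifies universally over worlds $\futurestr$-later than its argument: given $W' \futurestr W$ and any $W'' \futurestr W'$ we have $W'' \futurestr W$ by transitivity, and conclude exactly as in Lemma~\ref{lem:execcond-mono-priv} but with no constraint on $\perm$. Assembling these, every clause of $\stdvr$ that $w$ can inhabit is preserved from $W$ to $W'$, yielding $\npair{\stdcap[(\perm,\glob)]} \in \stdvr(W')$.
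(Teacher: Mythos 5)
Your proof is correct, and for the six permissions other than $\readwritel$ and $\rwlx$ it coincides with the paper's: a case split on which clause of $\stdvr$ the capability inhabits, discharging each condition with Lemmas~\ref{lem:readcond-mono-priv}, \ref{lem:writecond-mono-priv}, \ref{lem:execcond-mono-priv} and~\ref{lem:entrycond-mono-priv}. The divergence is exactly where you locate the difficulty. The paper's own proof opens by \emph{assuming} $\perm \notin \{\readwritel,\rwlx\}$ and never treats those two clauses, even though the lemma as stated quantifies over all $\perm$; presumably the authors had in mind that a global permit-write-local capability never really arises (cf.\ Lemma~\ref{lem:wl-local}, which however needs memory satisfaction and $\start \leq \addrend$ as extra hypotheses, so the clauses are not vacuous in general). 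You instead prove these cases directly, and your argument is sound: the witness $r$ for $\writeCond{}(\iota^\pwl,\glob)(W)$ lies in $\localityReg(\glob,W)$, hence is a $\perma$ region, and $W(r) \nsupsim[n-1] \iota^\pwl_{\start',\addrend'}$ forces its state to be $1$ and its transition relations to be $=$, so the only $\futurestr$-successor of $W(r)$ is $W(r)$ itself and the same witness works in $W'$; likewise $\execCond{}(\glob)$ is $\futurestr$-monotone for \emph{any} permission by transitivity of $\futurestr$, the restriction to $\{\exec,\rwx\}$ in Lemma~\ref{lem:execcond-mono-priv} being unused in its proof. Your version therefore closes a gap between the paper's proof and its statement, at the cost of inlining two small arguments the restricted lemmas do not quite cover. (Two trivial remarks: the integer clause you mention is not needed, since the statement only concerns capabilities of the form $\stdcap[(\perm,\glob)]$; and your observation that no induction is required matches the paper.)
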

\begin{proof}[Proof of Lemma~\ref{lem:stdvr-glob-priv-mono}]
  Assume 
  \begin{enumproof}
  \item $\perm \not\in \{\rwl, \rwlx\}$
  \item $W' \futurestr W$
  \item $\npair{\stdcap[(\perm,\glob)]} \in \stdvr(W)$
  \end{enumproof}
  and show
  \[
    \npair{\stdcap[(\perm,\glob)]} \in \stdvr(W')
  \]
  to this end consider the possible cases of $\perm$ and show that each of the necessary conditions hold:
  \begin{enumproof}
    \item $\perm = \noperm$\\
      Trivial
    \item $\perm = \readonly$ \\
      Follows from Lemma~\ref{lem:readcond-mono-priv}.
    \item $\perm = \readwrite$ \\
      Follows from Lemma~\ref{lem:readcond-mono-priv} and Lemma~\ref{lem:writecond-mono-priv}.
    \item $\perm = \exec$ \\
      Follows from Lemma~\ref{lem:readcond-mono-priv} and Lemma~\ref{lem:execcond-mono-priv}.
    \item $\perm = \rwx$ \\
      Follows from Lemma~\ref{lem:readcond-mono-priv}, Lemma~\ref{lem:writecond-mono-priv}, and Lemma~\ref{lem:execcond-mono-priv}.
    \item $\perm = \entry$ \\
      Lemma~\ref{lem:entrycond-mono-priv}
  \end{enumproof}
\end{proof}

\begin{lemma}[Non local words monotone wrt $\futurestr$]
  \label{lem:stdvr-non-loc-priv-mono}
  \begin{align*}
    & \forall n, \perm, \start, \addrend, \addr, W, W',w \ldotp \\
    & \quad \nonlocal{w} \land \\
    & \quad \npair{w} \in \stdvr(W) \land W' \futurestr W \\
    & \qquad \Rightarrow \npair{w} \in \stdvr(W') 
  \end{align*}
\end{lemma}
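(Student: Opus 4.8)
The plan is to unfold the non-locality hypothesis on $w$ using Definition~\ref{def:non-local-cap}, which yields two cases: either $w \in \ints$, or $w = \stdcap[(\perm,\glob)]$ is a global capability. Throughout I fix $W' \futurestr W$ and assume $\npair{w} \in \stdvr(W)$, the goal being $\npair{w} \in \stdvr(W')$.

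If $w \in \ints$ (or $w = \infty$), then $\npair{w} \in \stdvr(W'')$ holds for every world $W''$ by the first clause of the definition of $\stdvr$, so the conclusion is immediate and does not even use $W' \futurestr W$.

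For the global capability case I would split on $\perm$. When $\perm \notin \{\readwritel,\rwlx\}$ the result is exactly Lemma~\ref{lem:stdvr-glob-priv-mono}, whose statement covers all the remaining permissions, so nothing further is needed. When $\perm \in \{\readwritel,\rwlx\}$ I would instead verify each defining clause of $\stdvr$ directly: the $\readCond{}(\glob)(W')$ clause follows from Lemma~\ref{lem:readcond-mono-priv}; the $\writeCond{}(\iota^\pwl,\glob)(W')$ clause follows because a global write condition selects its witnessing region from $\localityReg(\glob,W) = \dom(\erase{W}{\perma})$, i.e. a permanent region $r$, and permanent regions are carried unchanged along $\futurestr$ (so $W'(r) = W(r)$), whence the same address-stratified region still witnesses the condition at $W'$; and the $\execCond{}(\glob)(W')$ clauses (for $\rwlx$ also the $\rwx$ and $\exec$ clauses) follow directly, since $\execCond{}(\glob)$ quantifies over $\futurestr$-future worlds and transitivity of $\futurestr$ lets us reuse the clause already established at $W$.

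I expect the permit-write-local permissions $\readwritel$ and $\rwlx$ to be the only real obstacle, precisely because the pre-packaged monotonicity Lemmas~\ref{lem:writecond-mono-priv} and~\ref{lem:execcond-mono-priv} are stated only for $\iota \in \{\iota^\nwl,\iota^{(\nwl,p)}\}$ and for $\perm \in \{\exec,\rwx\}$, deliberately excluding exactly these cases. The resolution sketched above — permanent witness regions persist under $\futurestr$, and the global execute condition is closed under $\futurestr$ by transitivity — supplies exactly the reasoning those lemmas omit; every other clause of $\stdvr$ reduces to routine unfolding together with downwards closure.
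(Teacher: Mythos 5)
Your proposal is correct and, at the top level, takes the same route as the paper: unfold Definition~\ref{def:non-local-cap}, dispatch the integer case by the first clause of $\stdvr$, and reduce the global-capability case to Lemma~\ref{lem:stdvr-glob-priv-mono}. The paper stops there, citing that lemma for all permissions; you go further, and rightly so, because the paper's \emph{proof} of Lemma~\ref{lem:stdvr-glob-priv-mono} opens by assuming $\perm \not\in \{\readwritel,\rwlx\}$ and its case analysis covers only $\noperm,\readonly,\readwrite,\exec,\rwx,\entry$, so a blanket citation leaves the permit-write-local permissions unjustified. Your direct argument for those two cases is sound: $\readCond{}(\glob)$ is Lemma~\ref{lem:readcond-mono-priv}; for $\writeCond{}(\iota^\pwl,\glob)$ the witnessing region $r$ lies in $\dom(\erase{W}{\perma})$, and while a permanent region's \emph{state} may in general still move along $\phi$ under $\futurestr$, here $W(r) \nsupsim[n-1] \iota^\pwl_{\start',\addrend'}$ forces $\phi$ to be equality, so $W'(r) = W(r)$ and the same address-stratified witness works (your phrase ``carried unchanged'' is true only for this reason, not for permanent regions in general); and the $\execCond{}(\glob)$ clauses follow by transitivity of $\futurestr$ exactly as in the proof of Lemma~\ref{lem:execcond-mono-priv}, whose restriction to $\perm \in \{\exec,\rwx\}$ plays no role in that argument. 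What your version buys is completeness for the lemma as literally stated; the authors presumably felt entitled to ignore global write-local capabilities because Lemma~\ref{lem:wl-local} shows they cannot coexist with memory satisfaction, but since $\stdvr$ itself does not exclude them, your extra cases are the honest way to close the proof.
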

\begin{proof}[Proof of Lemma~\ref{lem:stdvr-non-loc-priv-mono}]
  If $w = \stdcap[(\perm,\glob)]$, then let follows from Lemma~\ref{lem:stdvr-glob-priv-mono}.

  If $w \in \ints$, then it follows from the fact that $i \in \stdvr(W'')$ for all $i \in \ints$ and $W'' \in \Worlds$.
\end{proof}

\section{Other examples and applications}
\label{sec:other_apps}
This section contains some ideas about other examples and applications than the
ticket dispenser example.

\subsection{Stack and return pointer handling without OS involvement using local
  capabilities}
The idea of this example would be to work out and prove a calling convention
that enforces well-bracketed control flow and encapsulation of local variables
using CHERI's local capabilities.

When one function invokes another function, the essential idea is that:
\begin{itemize}
\item Stack pointer is passed as a local and store-local capability.
\item Return pointer is passed as a local capability.
\end{itemize}

Since local pointers cannot leave the registers except into regions for which a
store-local capability is available, this basic idea seems to enforce a number
of useful properties: well-bracketedness of control flow and encapsulation of
private state stored on the stack. On the other hand, it also seems to validate
the standard C treatment of the stack: the stack can be reused after a function
returns, even between distrusting parties. However, safety/security of this
design is very non-trivial and seems to rely on some non-trivial reasoning:

\paragraph{Only stack is store-local?}
A critical assumption is that adversary code has no way to \emph{store} local
capabilities except on the stack. The reason that it is fine to store local
capabilities on the stack is that the adversary only has a \emph{local}
capability to the stack and cannot usefully store that capability anywhere.
However, this means that we need to rely on the runtime system of our
programming language to be careful when handing out store-local capabilities:
only the libc startup code should initialise the stack as store-local and malloc
should \emph{not} produce them. This basically means that the libc
initialisation code (or whatever component produces the initial stack pointer)
is part of our TCB.

\paragraph{Requirement for clearing the stack}
Imagine the following trusted C function:

\begin{verbatim}
void myfunction(){
  advfunction1();
  advfunction2();
}
\end{verbatim}

where advfunction1() and advfunction2() are adversary functions. In the standard
C treatment of the stack, advfunction2() would get the same stack pointer as
advfunction1(). This is supposed to be safe since advfunction1() cannot have
kept capabilities for the stack after its execution. But what if we require that
the two functions have no way of communicating with each other? Concretely,
advfunction1() has access to some secrets that must not be leaked to
advfunction2(). How can we prevent advfunction1() from storing the secret
somewhere on the stack and relying on advfunction2() from receiving the same
stack pointer where it can read the secret? The most obvious solution seems to
be that we should fully clear the stack (overwrite it with zeros) after the
return of any adversary function, but this could cause an important overhead.
Perhaps the processor should accommodate this with a special instruction that can
zero the entire array that a capability points to?

\paragraph{What do return pointers look like?}
An important question is what return pointers look like? Since we want to
protect the caller from the callee, it's important that the return pointer is
opaque, i.e. an entry pointer. The entry pointer will point to a closure that
contains the next instruction to execute, as well as the previous stack pointer.
But since stack pointers are local, this means that the return pointer closure
should be stored in a region of memory for which we have store-local permission,
i.e. on the stack. This means we need the following in our calling convention:
before invoking a function, we push the stack pointer and the instruction
pointer after invocation on the stack, we construct a return pointer by
copying the stack pointer, limiting it to these two entries and making it an
entry pointer.  Then we shrink the stack pointer to the unused part of the stack
and jump. 

\paragraph{Only one-way protection in higher-order settings?}
Another important point is that, in a sense, local capabilities provide only
one-way protection: the caller is protected from the callee but not vice-versa.
Concretely: when invoking a function with some arguments marked as local, the
caller is guaranteed that the callee will not have been able to store the
capabilities anywhere (except perhaps on the stack, see above). However, the
callee seems to have more limited guarantees: Particularly, the caller may have
kept its own stack capability and this stack capability may (and typically will)
also cover the part of the stack that is ``owned'' by the callee.  In this
sense, the guarantees are more limited than in a linear language.

So what does this mean? In a first-order language, this is all fine, but what if
we are in a higher-order language. Imagine the following (in some ML-like language):

\begin{verbatim}
let f = fun callback =>
          let ... in 
          let ret = callback() in
          ...
//adversary top function
let advtop = f( (fun y => ...) )
\end{verbatim}

Our trusted function f is invoked by the adversary (from function advtop()) and
wants to invoke an untrusted callback received from the adversary. When invoking
the closure, we don't want it to be able to access f's local variables which it
has stored on the stack. To achieve this, we only give it a stack pointer that
covers the part of the stack that is unused by f. However, the callback may be
implemented as an entry pointer that carries capabilities, particularly the
capability to advtop's stack pointer, which includes the part of the stack that
is now used by f and contains f's local variables.

So how do we deal with this? Perhaps we should use the fact that this is only
possible when f's callback argument is allocated to some part of the memory to
which advtop has store-local permissions (since the callback contains a
reference to the stack to which advtop only has a local capability). I see
basically three ways to do this, all based on the idea of enforcing that the
callback should be constructed in a part of memory for which no store-local
permissions are available:
\begin{itemize}
\item One way to exclude the scenario is to require that callbacks are provided
  as non-local capabilities. The downside of this is that local callbacks can be
  useful for the caller to prevent the callee from storing them.
\item Another way to exclude the scenario is to require that the stack is
  allocated in a fixed part of the address space and to check that callbacks
  point outside of this region before invoking them.
\item Perhaps we should require that store-local permissions cannot be removed
  from a capability and simply require that callback pointers do not have
  store-local set. Perhaps we can allow store-local permissions to be given up,
  but only if the corresponding part of memory is fully zeroed in the process
  (or at least all local capabilities stored in the region).
\end{itemize}

\subsection{A result to prove...}
\label{sec:os-less-stack-property}

The simplest thing that comes to mind as a formal result for all of the above is
to look at a concrete program that clearly relies on properties like
well-bracketed control flow and encapsulation of local variables and prove it
correct. As a concrete example: we might show an assembly program that
corresponds to the following (a higher-order program that crosses trust
boundaries and relies on local variable encapsulation and well-bracketed control
flow):
\begin{verbatim}
let trustedCode = fun adversary =>
                    let x = ref 0 in
                    let callback = fun adv2 => 
                      x := !x + 1; 
                      let y = ref (!x) in 
                      adv2 unit; 
                      assert (!x == !y); 
                      x := !x - 1)
                    let _ = adversary callback
                    assert (!x == 0)
\end{verbatim}
\lau{I have inserted some line breaks for readability. not sure what is going on here (the parenthesis after y is unmatched.)}
\section{Related reading}
\label{sec:related-reading}

This is a list of related work that might be interesting to read in the context
of this project.

\subsection{Capability machines}
\label{sec:rw-cap-machines}

\subsubsection{M-Machine}
More than 20 years ago, \cite{Carter:1994:HSF:195473.195579} have described the
use of capabilities in the M-Machine. They do seem to have a reference for the
instruction set after all~\citep{Dally1997Memo59}; it seems like the server was
just temporarily down when we were looking for this the first time...

\subsubsection{CHERI}

The CHERI processor is a much more recent capability machine, described
by~\cite{Woodruff:2014:CCM:2665671.2665740,Watson2015Cheri}.

Another result of this project is also CheriBSD: an adaptation of FreeBSD to the
CHERI
processor.\footnote{\url{http://www.cl.cam.ac.uk/research/security/ctsrd/cheri/cheribsd.html}}
It is not separately described in a published paper, but mentioned in the papers
cited above and in some tech reports (see url). This work includes a
pure-capability ABI that could provide some interesting examples.

The CHERI team also has a webpage with all of their CHERI-related publications
(including TRs and
such)\footnote{\url{http://www.cl.cam.ac.uk/research/security/ctsrd/cheri/}}.

\subsection{Logical Relations}
\label{sec:rw-log-rel}

Some papers on logical relations that are relevant for this work are the
following:

\cite{Hur:2011:KLR:1926385.1926402} describe a logical relation between ML and
a (standard) assembly language for expressing compiler correctness.  Relevant
because they target an assembly language, and they use biorthogonality.

\cite{Dreyer:2010:IHS:1863543.1863566} describe a logical relation for a ML-like
language and use public/private transitions to reason about well-bracketed
control flow. Relevant because we are considering to cover an example of
enforcing well-bracketed control flow in a capability machine.

\cite{Devriese:2016ObjCap} describe a logical relation for a JavaScript-like
language with object capabilities.  Relevant because it treats object
capabilities, albeit in a JavaScript-like lambda calculus. It also deals with an
untyped language, using a semantic unitype.

\bibliographystyle{plainnat}
\bibliography{refs}

\end{document}